\providecommand{\algorithmname}{Algorithm}
\theoremstyle{plain}
\newtheorem{thm}{\protect\theoremname}[section]
  \theoremstyle{plain}
  \newtheorem{lem}[thm]{\protect\lemmaname}
  \theoremstyle{plain}
  \newtheorem{prop}[thm]{\protect\propositionname}
  \theoremstyle{definition}
  \newtheorem{defn}[thm]{\protect\definitionname}
  \theoremstyle{remark}
  \newtheorem{claim}[thm]{\protect\claimname}
  \theoremstyle{plain}
  \newtheorem{cor}[thm]{\protect\corollaryname}
  \providecommand{\claimname}{Claim}
  \providecommand{\corollaryname}{Corollary}
  \providecommand{\definitionname}{Definition}
  \providecommand{\lemmaname}{Lemma}
  \providecommand{\propositionname}{Proposition}
\providecommand{\theoremname}{Theorem}
\begin{document}

\title{On the effect of randomness on planted 3-coloring models}
\author{
	Roee David\thanks{
		Department of Computer Science and Applied Mathematics, Weizmann Institute of Science, Rehovot 76100,
		Israel. E-mail: {\tt roee.david@weizmann.ac.il}.}
	\and Uriel Feige\thanks{
		Department of Computer Science and Applied Mathematics, Weizmann Institute of Science, Rehovot 76100,
		Israel. E-mail: {\tt uriel.feige@weizmann.ac.il}.}
	}

\maketitle
\begin{abstract}
We present the {\emph hosted coloring} framework for studying algorithmic
and hardness results for the $k$-coloring problem. There is a class
${\cal H}$ of host graphs. One selects a graph $H\in{\cal H}$ and
plants in it a balanced $k$-coloring (by partitioning the vertex
set into $k$ roughly equal parts, and removing all edges within each
part). The resulting graph $G$ is given as input to a polynomial
time algorithm that needs to $k$-color $G$ (any legal $k$-coloring
would do -- the algorithm is not required to recover the planted $k$-coloring).
Earlier planted models correspond to the case that ${\cal H}$ is
the class of all $n$-vertex $d$-regular graphs, a member $H\in{\cal H}$
is chosen at random, and then a balanced $k$-coloring is planted
at random. Blum and Spencer~[1995] designed algorithms
for this model when $d=n^{\delta}$ (for $0<\delta\le1$), and Alon
and Kahale~[1997] managed to do so even when $d$
is a sufficiently large constant.


The new aspect in our framework is that it need not involve randomness.
In one model within the framework (with $k=3$)
$H$ is a $d$ regular spectral expander (meaning that except for the largest eigenvalue of its adjacency
matrix, every other eigenvalue has absolute value much smaller than $d$) chosen by an adversary, and the planted 3-coloring is random.
We show that the 3-coloring algorithm of Alon and Kahale~[1997] can be modified to apply to this case.
In another model $H$ is a random $d$-regular graph but the planted balanced $3$-coloring is chosen by an adversary,
after seeing $H$. We show that for a certain range of average degrees somewhat below $\sqrt{n}$, finding a 3-coloring is NP-hard.
Together these results (and other results that we have) help clarify which aspects of randomness in the planted coloring model are the key to successful 3-coloring algorithms.


\end{abstract}
\newpage{}

\section{Introduction}

A $k$-coloring of a graph $G(V,E)$ is an assignment $\chi:V\longrightarrow[k]$
of colors to vertices such that for every edge $(u,v)\in E$ one has
$\chi(u)\not=\chi(v)$. The problem of deciding whether a given graph
is $k$-colorable is NP-hard for every $k\ge3$~\cite{DBLP:conf/coco/Karp72,garey1976some}.
Moreover, even the most sophisticated coloring algorithms known require
(on worst case instances) $|V|^{\delta}$ colors (for some $\delta\simeq0.2$)
in order to properly color a 3-colorable graph~\cite{kawarabayashi2014coloring}.

An approach for coping with NP-hardness is by restricting the class
of input instances in a way that either excludes the most difficult
instances, or makes them unlikely to appear (in models in which there
is a probability distribution over inputs). Along this line, a model
that is very relevant to our current work is the so called {\em
random planted coloring model} $G_{n,k,p}$ in which the vertex set
(of cardinality $n$) is partitioned at random into $k$ parts, and
edges between vertices in different parts are placed independently
with probability $p$. Such graphs are necessarily $k$-colorable.
Following initial work by Blum and Spencer~\cite{JALGO::BlumS1995},
it was shown by Alon and Kahale~\cite{alon1997spectral} that for
every $k$ there is a polynomial time algorithm that with high probability
$k$-colors such input graphs (the probability is taken over random
choice of input graphs) provided that $p>\frac{c_{k}}{n}$, where
$c_{k}$ is some constant that depends only on $k$.

In the current work we propose a framework that contains several different
models for generating instances of $k$-colorable graphs. We call
this framework the {\em hosted coloring} framework. The random
planted coloring model is one of the models that is contained in the
hosted coloring framework. We consider several other planted coloring
models within our framework, and obtain both new algorithmic results
and new hardness results. In particular, our results help clarify
the role that randomness plays in the random planted model.

\subsection{The hosted coloring framework}

We describe our framework for generating instances with planted solutions.
In the current manuscript, the framework is described only in the
special case of the $k$-coloring problem, though it is not difficult
to extend it to other NP-hard problems.

The hosted coloring framework is a framework for generating $k$-colorable
graphs. We alert the reader that graphs within this framework are
{\em labeled}, meaning that every $n$-vertex graph is given together
with a naming of its vertices from~1 to~$n$. A model within this
framework involves two components:
\begin{enumerate}
\item A class ${\cal {H}}$ of host graphs. Let ${\cal H}_{n}$ denote the
set of graphs in ${\cal {H}}$ that have $n$ vertices. 

\item A class ${\cal {P}}$ of planted solutions. Formally, in the context
of $k$-coloring, a planted solution can be thought of as a complete
$k$-partite graph. Let ${\cal P}_{n}$ denote the set of planted
solutions in ${\cal {P}}$ that have $n$ vertices. 

\end{enumerate}
To generate a $k$-colorable graph with $n$ vertices, one selects
one graph $H$ from ${\cal H}_{n}$, and plants in it one solution
$P$ from ${\cal P}_{n}$. Formally, the planting can be described
as generating the graph $G(V,E)$ whose edge set is the intersection
of the edge sets of the host graph $H$ and of the complete $k$-partite
graph $P$. Namely, the vertex set of $G(V,E)$ is $V=[n]$, and $(u,v)\in E$
iff both $(u,v)\in E(H)$ and $(u,v)\in E(P)$.

To complete the description of the hosted framework, we explain how
the host graph $H\in{\cal H}_{n}$ is selected, and how the planted
solution $P\in{\cal P}_{n}$ is selected. Here, the framework allows
for four {\em selection rules}:
\begin{enumerate}
\item Adversarial/adversarial. An adversary selects $H\in{\cal H}_{n}$
and $P\in{\cal P}_{n}$.
\item Random/random. The class of host graphs is equipped with a probability
distribution (typically simply the uniform distribution) and likewise
for the class of planted solutions. The selections of host graph and
planted solution are done independently at random, each according
to its own distribution. We use the notation $H\in_{R}{\cal H}_{n}$
and $P\in_{R}{\cal P}_{n}$ to describe such selections.
\item Adversarial/random. An adversary first selects $H\in{\cal H}_{n}$,
and then $P\in_{R}{\cal P}_{n}$ is selected at random.
\item Random/adversarial. A host graph $H\in_{R}{\cal H}_{n}$ is selected
at random, and then an adversary, upon seeing $H$, selects $P\in{\cal P}_{n}$.
\end{enumerate}
The hosted $k$-coloring framework allows for many different planted
$k$-coloring models, depending on the choice of ${\cal H}$, ${\cal P}$
and the selection rule. The random planted $G_{n,k,p}$ model can
be described within the hosted $k$-coloring framework by taking ${\cal H}_{n}$
to be the class of all $n$-vertex graphs equipped with the Erdos-Renyi
probability distribution $G_{n,p}$, taking ${\cal P}_{n}$ to be
the class of all $n$-vertex complete $k$-partite graphs equipped
with the uniform distribution, and using the random/random selection
rule.

One of the goals of our work is to remove randomness from planted
models. The hosted $k$-coloring framework allows us to do this. Moreover,
one can control separately different aspects of randomness. Let us
explain how this is done in our work.

\textbf{Regular expanders as host graphs.} Randomness is eliminated
from the choice of host graph by allowing the adversary to select
an arbitrary host graph from the class of $d$-regular $\lambda$-expanders,
for given $d$ (that may be a function of $n$) and $\lambda$ (which
is a function of $d$). The term $\lambda$-expander refers to the
spectral manifestation of graph expansion. Namely, let $\lambda_{1}\ge\ldots\ge\lambda_{n}$
denote the eigenvalues of the adjacency matrix of an $n$ node graph
$G$, and let $\lambda=\max[\lambda_{2},|\lambda_{n}|]$. As is well
known, a $d$-regular graph has $\lambda_{1}=d$ and $\lambda\ge\Omega(\sqrt{d})$~\cite{nilli1991second}.
A $d$-regular graph is referred to as a spectral expander if $\lambda$
is significantly smaller than $d$ -- the smaller $\lambda$ is the
better the guaranteed expansion properties are~\cite{hoory2006expander}.

Random $d$-regular graphs are essentially the best possible spectral
expanders, satisfying $\lambda=O(\sqrt{d})$ almost surely~\cite{feige2005spectral,furedi1981eigenvalues,friedman1989second}.
The same holds for random graphs in the $G_{n,p}$ model, taking $d$
to be the average degree (roughly $pn$). We remark that our results
extend to graphs that are approximately $d$-regular (e.g., with degree
distribution similar to $G_{n,p}$), and regularity is postulated
only so as to keep the presentation simple.

\textbf{Balanced coloring.} Randomness is eliminated from the choice
of planted coloring (the choice of $P$) by allowing the adversary,
after seeing $H$, to plant an arbitrary {\em balanced} $k$-coloring,
namely, to select an arbitrary $k$-partite graph in which all parts
are of size $\frac{n}{k}$. (Also here, our results extend to having
part sizes of roughly $\frac{n}{k}$ rather than exactly, and exact
balance is postulated only for simplicity.) Observe that a random
$k$-partition is nearly balanced almost surely.

\subsection{Main results}

For simplicity we focus here on the special case of $k=3$, namely,
3-coloring. Extensions of our results to $k>3$ are discussed in Section~\ref{sec:k>3}.
In our main set of results, we shall consider four related planted
models, all within the hosted $k$-coloring framework.

The four models will have selection rules referred to as $H_{A}/P_{A}$,
$H_{A}/P_{R}$, $H_{R}/P_{A}$, and $H_{R}/P_{R}$, where $H$ and
$P$ refer to host graph and planted coloring respectively, and $A$
and $R$ stand for {\em adversarial} and {\em random} respectively.
\begin{itemize}
\item $H_{A}$ means that the adversary chooses an arbitrary $d$-regular
$\lambda$-expander (for an appropriate choice of $d$ and $\lambda$)
host graph -- we refer to this as an {\em adversarial expander};
\item $H_{R}$ means that the host graph is chosen as a random Erdos-Renyi
random graph $G_{n,p}$ (for an appropriate choice of $p$) -- we
refer to this as a {\em random host graph}.
\item $P_{R}$ refers to a random planted coloring (complete tripartite
graph) chosen uniformly at random -- we refer to this is {\em random
planting};
\item $P_{A}$ refers to a balanced planted coloring chosen adversarially
(after the adversary sees $H$) -- we refer to this is {\em adversarial
planting}.
\end{itemize}

In all cases, $n$ denotes the number of vertices in the graph, and
$d$ denotes the average degree of the host graph (where $d\simeq pn$
for random host graphs). We shall say that two colorings of the same
set of vertices are identical if the partitions that the color classes
induce on the vertices are the same (the actual names of colors are
irrelevant).

In presenting our results it will be instructive to consider the following
notions of coloring:
\begin{itemize}
\item The {\em planted} 3-coloring $P$.
\item A {\em legal} 3-coloring (but not necessarily the planted one).
\item For a given $b<n$, a {\em $b$-approximated} coloring is a 3-coloring
that is not necessarily legal, but it is identical to the planted
3-coloring on a set of at least $n-b$ vertices.
\item For a given $b<n$, a {\em $b$-partial} coloring is a 3-coloring
of $n-b$ vertices from the graph that is identical to the planted
coloring on these vertices. The remaining $b$ vertices are left uncolored
and are referred to as {\em free}.
\end{itemize}
Our first theorem offers a unifying theme for all four models. For
the $H_{R}/P_{R}$ model a similar theorem was known~\cite{alon1997spectral}.
(Recall that $\lambda$, the second largest in absolute value eigenvalue,
is a measure of expansion and satisfies $\lambda=\Theta(\sqrt{d})$
for random graphs.)
\begin{thm}
\label{thm:partial} For a sufficiently large constant $c$, let the
average degree in the host graph satisfy $c<d<n$. Then in all four
models ($H_{A}/P_{A}$, $H_{A}/P_{R}$, $H_{R}/P_{A}$, $H_{R}/P_{R}$)
there is a polynomial time algorithm that finds a $b$-partial coloring
for $b=O(\left(\frac{\lambda}{d}\right)^{2}n)$. For the models with
random host graphs ($H_{R}$) and/or random planted colorings ($P_{R}$),
the algorithm succeeds with high probability over choice of random
host graph $H$ and/or random planted coloring $P$.
\end{thm}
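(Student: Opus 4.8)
The plan is to collapse all four models to a single deterministic statement and then run a robust version of the Alon--Kahale algorithm on it. The point is that the argument for Theorem~\ref{thm:partial} uses only two facts about how $G$ was generated: that the host $H$ is a $d$-regular $\lambda$-expander, and that the planted colouring splits $[n]$ into three (nearly) equal classes $C_1,C_2,C_3$; write $c(v)$ for the class containing $v$. The first property holds by hypothesis when the host is $H_A$, and with high probability, with $\lambda=O(\sqrt d)$, when it is $H_R$ (so the hypothesis $d>c$ yields $\lambda\le\varepsilon_0 d$ for a small absolute constant $\varepsilon_0$; approximate regularity of $G_{n,p}$ affects only lower-order terms). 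The second holds by fiat when the planting is $P_A$ and with high probability, up to negligible imbalance, when it is $P_R$. Since spectral expansion is a statement about \emph{every} pair of vertex sets, it is unaffected by an adversary who chooses the balanced partition after seeing $H$; that is exactly what lets the adversarial-planting models be handled together with the random one. So it suffices to prove: if $H$ is a $d$-regular $\lambda$-expander with $\lambda\le\varepsilon_0 d$ and the planted colouring is a balanced $3$-partition, then one can find in polynomial time a $b$-partial colouring with $b=O((\lambda/d)^2 n)$.

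First I would extract the deterministic structure. By the expander mixing lemma, applied to the sets of vertices of $C_i$ whose number of $H$-neighbours in $C_j$ deviates from $d/3$ by more than $d/20$, the set $B$ of such \emph{atypical} vertices satisfies $|B|=O((\lambda/d)^2 n)$; every vertex outside $B$ has at least $d/5$ $G$-neighbours in each of its two foreign classes and, by the definition of the planting, none in its own. The mixing lemma also supplies the bookkeeping bound used throughout the cleanup: for any $S$ with $|S|\le n/1000$, at most $O(|S|+(\lambda/d)^2 n)$ vertices have $\ge d/20$ $G$-neighbours in $S$. For the spectral part, decompose $A_G=A_H-\sum_i P_{C_i}A_HP_{C_i}$ (with $P_{C_i}$ the diagonal projection onto $C_i$) and substitute $A_H=\tfrac dn J+E_H$, $\|E_H\|\le\lambda$, to get $A_G=\tfrac dn T+F$, where $T$ is the $0/1$ adjacency matrix of the complete tripartite graph on $(C_1,C_2,C_3)$ and $\|F\|\le 2\lambda$ (the correction $\sum_i P_{C_i}E_HP_{C_i}$ is block-diagonal, hence of norm $\le\lambda$). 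The spectrum of $\tfrac dn T$ is $\{\tfrac{2d}3,\,-\tfrac d3\ (\text{twice}),\,0\ (n-3\text{ times})\}$, with the $-\tfrac d3$-eigenspace equal to the space $U$ of vectors constant on each $C_i$ and orthogonal to $\mathbf 1$. By Weyl's inequality $A_G$ has exactly two eigenvalues within $2\lambda$ of $-\tfrac d3$, at distance $\Omega(d)$ from all others, so by Davis--Kahan the projection $\Pi_W$ onto the corresponding (polynomial-time computable) $2$-dimensional bottom eigenspace satisfies $\|\Pi_W-\Pi_U\|_F=O(\lambda/d)$. For $v\in C_i$ the ideal point $\Pi_U e_v$ equals a fixed vector $w_i$, and $\|w_i-w_j\|=\Omega(1/\sqrt n)$ for $i\ne j$; since $\sum_v\|\Pi_W e_v-w_{c(v)}\|^2=\|\Pi_W-\Pi_U\|_F^2=O((\lambda/d)^2)$, all but $O((\lambda/d)^2 n)$ of the points $\Pi_W e_v$ lie within $\tfrac1{10}\min_{i\ne j}\|w_i-w_j\|$ of the correct $w_i$. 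Clustering these $n$ points into three groups (their sizes and pairwise separation are essentially known) yields a colouring $\chi_0$ agreeing with the planted colouring outside a set $B_1$ with $|B_1|=O((\lambda/d)^2 n)$.

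It remains to upgrade the approximate colouring to a correct partial one. Recolour each $v$ to the colour occurring least often among $\{\chi_0(u):u\sim_G v\}$, obtaining $\chi_1$. For every $v$ outside $B$ and outside $\{w:|N_G(w)\cap B_1|\ge d/20\}$ this forces $\chi_1(v)=c(v)$, \emph{no matter} how $\chi_0$ behaves on $B_1$ (such a $v$ sees $\ge d/10$ correctly coloured $G$-neighbours of each foreign colour and fewer than $d/20$ neighbours in $B_1$ altogether); so $\chi_1$ is correct outside a set of size $O((\lambda/d)^2 n)$, and a constant number of further recolouring rounds drives the error set to a set $B_\infty$ with the self-referential property $B_\infty\subseteq B\cup\{w:|N_G(w)\cap B_\infty|\ge d/20\}$, which by expander mixing forces $|B_\infty|=O((\lambda/d)^2 n)$. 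Finally, uncolour every vertex that is locally inconsistent with $\chi_1$: any $v$ with $\ge d/20$ $G$-neighbours of colour $\chi_1(v)$, or with $G$-degree outside $\tfrac{2d}3\pm\tfrac d{20}$, or whose $G$-neighbourhood does not contain $\ge d/10$ vertices of each of the two colours other than $\chi_1(v)$ --- all checkable from $G$ and $\chi_1$ alone. By the bookkeeping bound, only $O((\lambda/d)^2 n)$ correctly coloured typical vertices are pushed past any of these thresholds by their few miscoloured neighbours, so almost all survive; a miscoloured \emph{typical} vertex is always uncoloured, because the two colours other than its current one occur among its neighbours in wildly unbalanced proportions; and the atypical vertices that get uncoloured number at most $|B|$. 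Hence the uncoloured set has size $b=O((\lambda/d)^2 n)$ and the surviving colouring equals the planted colouring on the surviving vertices, i.e.\ it is a $b$-partial colouring.

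The step I expect to be most delicate is this last one: calibrating the observable local tests and the order of operations so that (i) genuinely correctly coloured typical vertices provably survive, (ii) every miscoloured vertex is provably uncoloured, and (iii) the self-referential size bound on the error/uncoloured set closes with a workable constant --- which is why the thresholds must be fixed constant fractions of $d$ and why the regime $\lambda\le\varepsilon_0 d$ (i.e.\ $d$ above a constant for random hosts) is needed. The spectral half is essentially the original Alon--Kahale analysis with the expander mixing lemma replacing the concentration estimates available in the purely random model; the only genuinely new point there is that it does not care who chose $H$ and the planted colouring, so long as $H$ expands and the colouring is balanced.
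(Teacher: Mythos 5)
Your overall plan --- reduce all four models to the single deterministic hypothesis that $H$ is a $d$-regular $\lambda$-expander and the planting is balanced, then run spectral clustering, iterated minority-recolouring, and local uncolouring --- is the same three-stage architecture as the paper's proof (Lemmas~\ref{lem:step1}--\ref{lem:step3}, instantiated for $H_A/P_A$ via Lemma~\ref{lem: eigen values after arbitrary planting} and Algorithm~\ref{alg:Iterative Coloring and Uncoloring}). Your spectral step is in fact cleaner and sharper than the paper's: writing $A_G=\frac dn T+F$ with $\|F\|\le 2\lambda$ (via the block-diagonal restriction of $E_H=A_H-\frac dn J$) and invoking Davis--Kahan gives $\|\Pi_W-\Pi_U\|_F=O(\lambda/d)$, hence $O((\lambda/d)^2 n)$ misclassified vertices already after the clustering stage. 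The paper instead bounds the residual $\|A_{G}\bar x+\tfrac d3\bar x\|_2$ coordinate-by-coordinate and loses a square root (Claim~\ref{claim:AdvPlanting1} yields only $\|\vec\epsilon_{\bar x}\|_2=O(\sqrt{\lambda/d})$), reaching the $O((\lambda/d)^2 n)$ bound only after the recolouring stage. Your recolouring step and its fixed-point bound on the error set $B_\infty$ are also fine.

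The gap is in the uncolouring step. You uncolour \emph{once}, testing every vertex against the fixed colouring $\chi_1$, and assert that ``a miscoloured typical vertex is always uncoloured.'' That assertion does not follow from your criteria: what they force is that a surviving miscoloured $v$ (say $\chi_1(v)=2$, planted colour $1$) must have $\ge d/10$ $G$-neighbours $u$ with $\chi_1(u)=1$. All such $u$ are themselves miscoloured (no $G$-neighbour of $v$ has planted colour $1$), but nothing in a one-shot test forces those $u$ to be \emph{uncoloured}. The bookkeeping bound then gives $|M'|=O((\lambda/d)^4 n)$ for the surviving miscoloured set $M'$, which is small but not zero --- and a $b$-partial colouring, by definition, may not contain \emph{any} miscoloured vertex. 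The paper closes this precisely by making the uncolouring iterative (``repeatedly uncolor'' in Algorithm~\ref{alg:Cautious Uncoloring}), re-evaluating the counting thresholds against the \emph{current} partial colouring. A surviving miscoloured $v$ is then forced to have $\Omega(d)$ \emph{still-coloured} neighbours of its planted colour, which are again surviving miscoloured vertices, so the survivors induce a subgraph of minimum degree $\Omega(d)$; by expander mixing such a set has size either $0$ or $\Omega(n)$, and since it is already known to be $O((\lambda/d)^2 n)=o(n)$ it must be empty. Replacing your one-shot uncolouring by this iterated version, and letting your three tests count only \emph{currently coloured} neighbours, closes the argument; as written, the final step does not guarantee a valid partial colouring.
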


Given that Theorem~\ref{thm:partial} obtains a $b$-partial coloring,
the task that remains is to 3-color the set of $b$ free vertices,
in a way that is both internally consistent (for edges between free
vertices) and externally consistent (for edges with only one endpoint
free). From~\cite{alon1997spectral} it is known that this task can
be completed in the $H_{R}/P_{R}$ model. Here are our main results
for the other planted models. In all cases, $d$ is the average degree
of the host graph.
\begin{thm}
\label{thm:A/A} In the $H_{A}/P_{A}$ model, for every $d$ in the
range $C<d<n^{1-\epsilon}$ (where $C$ is a sufficiently large constant
and $\epsilon>0$ is arbitrarily small), it is NP-hard to 3-color
a graph with a planted 3-coloring, even when $\lambda=O(\sqrt{d})$.
\end{thm}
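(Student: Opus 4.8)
The plan is to give a polynomial-time reduction from the NP-hard problem of 3-coloring a graph of maximum degree at most $4$~\cite{garey1976some}. Fix such an input $G_0$ on $m$ vertices; the reduction will output a graph $G$ on $n=\mathrm{poly}(m)$ vertices so that (a) whenever $G_0$ is 3-colorable, $G$ is a legitimate instance of the $H_A/P_A$ model (with the stated parameters, in particular $\lambda=O(\sqrt d)$), and (b) any legal 3-coloring of $G$, restricted to a fixed set $S\subseteq V(G)$, is a legal 3-coloring of $G_0$. Two conceptual points drive the argument. First, the reduction must be \emph{oblivious} to any 3-coloring of $G_0$ — it will only ever need to \emph{argue}, from the mere existence of such a coloring, that $G$ lies in the model — so there is no circularity. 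Second, when $G_0$ is \emph{not} 3-colorable, $G$ is simply not a model instance, a hypothetical model algorithm may then misbehave, but whatever it outputs fails to be a legal coloring of $G$ and this is detectable.

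Here is the construction. Let $V(G)=S\cup T$ with $|S|=m$, and choose $n$ to be a fixed polynomial in $m$ large enough that $d$ falls in the target range $C<d<n^{1-\epsilon}$ and, crucially, $md\le n/3$; this last inequality is achievable precisely because the bound $d<n^{1-\epsilon}$ forces $m$ to be polynomially smaller than $n$ (e.g.\ $n\simeq m^{1/\epsilon}$). Put a copy of $G_0$ on $S$. For the bulk, take any $d$-regular graph $\hat H$ on $T$ with $\lambda(\hat H)=O(\sqrt d)$ (an explicit near-Ramanujan family, or a uniformly random $d$-regular graph if one accepts a randomized reduction), fix an arbitrary balanced 3-partition of $T$, and let $G_T$ be the subgraph of $\hat H$ consisting of the edges that cross this partition. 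Set $G:=G_0\sqcup G_T$, adding at most two isolated vertices so that $3\mid n$. Since $G_T$ is 3-colorable by construction, $G$ is 3-colorable iff $G_0$ is.

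The heart of the proof is to show that if $G_0$ is 3-colorable then $G=H\cap P$ for some balanced complete tripartite $P$ and some $d$-regular $H$ with $\lambda(H)=O(\sqrt d)$. Non-constructively fix a legal 3-coloring $\gamma$ of $G_0$, with color classes of sizes $m_1,m_2,m_3$. Let $P$ color $S$ by $\gamma$ and color $T$ by the chosen partition of $T$, after resizing its three parts to $n/3-m_1,\ n/3-m_2,\ n/3-m_3$ (legitimate as each $m_i\le m\ll n/3$, and $\lambda(\hat H)=O(\sqrt d)$ does not depend on how $T$ is partitioned). Build $H$ by taking $\hat H$ on $T$, $G_0$ on $S$, and then, for each $s\in S$, adding $d-\deg_{G_0}(s)$ new edges from $s$ to $T$-vertices in the $\gamma$-color class of $s$, routed so that each $T$-vertex receives at most one such edge (possible since $md\le n/3$), and deleting one $\hat H$-edge at each such $T$-vertex to keep $H$ exactly $d$-regular. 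Every edge of $G_0\sqcup G_T$ crosses a color class of $P$, while every added or deleted edge lies inside one, so $H\cap P=G$. Finally, writing $A(H)=A(\hat H)+E$, the matrix $E$ collects the $O(md)$ edges touching $S$ together with the deleted edges; every vertex of $S$ has $E$-degree $\le d+4$ and every vertex of $T$ has $E$-degree $\le 2$, so a standard bound gives $\|E\|=O(\sqrt d)$, whence Weyl's inequality yields $\lambda(H)\le\lambda(\hat H)+\|E\|=O(\sqrt d)$ (and $H$ is connected and $d$-regular, so its top eigenvalue is $d$). Thus $G$ is a valid model instance.

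With this in place the reduction is immediate: run a hypothetical polynomial-time coloring algorithm for the $H_A/P_A$ model on $G$, restrict the returned coloring to $S$, and declare $G_0$ 3-colorable iff that restriction is a legal 3-coloring of $G_0$. If $G_0$ is 3-colorable, $G$ is a model instance, the algorithm returns a legal 3-coloring of $G$, and its restriction to the induced subgraph $G[S]=G_0$ is legal; if $G_0$ is not 3-colorable, no coloring of $S$ is legal, so the answer ``no'' is correct. This decides 3-colorability in polynomial time, forcing $\mathrm{P}=\mathrm{NP}$, which proves the theorem. I expect the main obstacle to be the spectral step in the previous paragraph: making $H$ simultaneously $d$-regular, $O(\sqrt d)$-spectrally expanding, \emph{and} exactly compatible with the planted structure ($H\cap P=G$) while it is forced to contain the arbitrary sparse graph $G_0$ and to have three planted independent sets. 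This is exactly where both ends of the degree range are used — $d<n^{1-\epsilon}$ to make $S$ a negligible, low-congestion attachment, and $d>C$ for a sufficiently large constant $C\ (\ge 4)$ so that $\hat H$ genuinely has $\lambda=O(\sqrt d)$ and the per-vertex degree corrections are feasible. A minor auxiliary issue is the availability of explicit $d$-regular $O(\sqrt d)$-expanders $\hat H$ for every relevant $d$; if this is inconvenient one falls back on a random regular graph, giving a randomized reduction, which still establishes NP-hardness.
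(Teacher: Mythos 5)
Your reduction is essentially the paper's: present the algorithm with a disjoint union of a sparse NP-hard graph $G_0$ and a planted spectral expander, and argue (non-constructively, from a hypothetical 3-coloring $\gamma$ of $G_0$) that this union is a legitimate $H_A/P_A$ instance by attaching each $G_0$-vertex to bulk vertices of its own $\gamma$-color and then controlling the spectrum of the resulting host graph by a Weyl-type perturbation bound, using that a disjoint union of stars $S_d$ has operator norm $\sqrt d$. Up to that point the two proofs coincide in both structure and in the specific spectral estimate.

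The one genuine gap is the step ``delete one $\hat H$-edge at each such $T$-vertex to keep $H$ exactly $d$-regular.'' Deleting an $\hat H$-edge incident to $t\in T'$ also decrements the degree of its other endpoint $t'$, which either drops a non-$T'$-vertex to degree $d-1$ or forces further deletions; there is no guarantee you can choose the deleted edges so that each vertex is hit at most once (e.g.\ $T'$ could be an independent set in $\hat H$, in which case every deletion damages some vertex outside $T'$). This also invalidates your claim that every $T$-vertex has $E$-degree $\le 2$: a vertex all of whose $\hat H$-neighbors lie in $T'$ can lose up to $d$ edges, pushing $\|E\|$ to $\Theta(d)$ rather than $O(\sqrt d)$. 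The paper sidesteps this entirely: in the overview it builds the expander $Z$ so that the ``connector'' vertices start at degree $d-1$ and then receive exactly one new edge, and in the appendix it simply accepts a host graph with degrees in $\{d,d+1\}$ (invoking the stated convention that the results extend to approximately $d$-regular graphs). Your proof is repaired the same way: either drop the deletion step and accept degree $d+1$ on the $O(md)$ bulk vertices touched by $S$, or pre-build $\hat H$ with the target set $T'$ at degree $d-1$. With that change, $E$ is $G_0$ plus a vertex-disjoint union of stars of degree $\le d$, $\|E\|\le 4+\sqrt d$, and the rest of your argument goes through. (Two small loose ends worth tightening: you need $m(d+1)\le n/3$ rather than $md\le n/3$ so that each color class of $T$ has room for the attachments, and $E$-degrees on $S$ are $\le d$, not $\le d+4$.)
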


\begin{thm}
\label{thm:A/R} In the $H_{A}/P_{R}$ model, for some constants $0<c<1$
and $C>1$ there is a polynomial time algorithm with the following
properties. For every $d$ in the range $C<d\le n-1$ and every $\lambda\le cd$,
for every host graph within the model, the algorithm with high probability
(over the choice of random planted coloring) finds a legal 3-coloring.
\end{thm}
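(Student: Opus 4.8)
The plan is to run the algorithm of Theorem~\ref{thm:partial} and then complete the $b$-partial colouring it produces, this time exploiting the randomness of the planted colouring. Theorem~\ref{thm:partial} gives a $b$-partial colouring with $b = O((\lambda/d)^2 n) = O(c^2 n)$ (using $\lambda \le cd$); let $F$ be the set of free vertices. Every colored vertex already carries its planted colour (the names of colours being irrelevant, take this literally), so since the planted colouring is legal, every $G$-neighbour of a free vertex $v$ has planted colour different from the planted colour $c_v$ of $v$; hence the colours visible on $v$'s colored neighbours lie among the two colours other than $c_v$, assigning $v$ its planted colour is always consistent with those neighbours, and the only obstruction to finishing is internal to $G[F]$. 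I would then run to convergence the \emph{propagation} step ``colour any free vertex two of whose already-colored neighbours have distinct colours'' (the new colour is forced, and an easy induction shows it equals the planted one, so the colored part stays correct throughout). Let $F'\subseteq F$ be the set still free at the end.

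The key claim is: with high probability over the plant and for \emph{every} host graph in the model, either $F'=\emptyset$, or every connected component of $G[F']$ has size $O(\log n)$. Granting it, one finishes in polynomial time by colouring each component of $G[F']$ by brute force over its $3^{O(\log n)}=\mathrm{poly}(n)$ colourings, keeping one that is proper and agrees with the (correctly colored) boundary of the component --- such a colouring exists, namely the planted one restricted to the component --- and combining these per-component colourings with the existing partial colouring yields a legal $3$-colouring of $G$, since distinct components of $G[F']$ are non-adjacent and each agrees with the same fixed boundary. (In fact the only delicate free vertices are those all of whose $G$-neighbours are free; every other $v\in F'$ has only two feasible colours.)

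For $d\ge K\log n$ with $K$ a large constant, the claim holds in the strong form $F'=\emptyset$. A Chernoff and union bound give that, with high probability over the plant, every vertex has at least $d/4$ $H$-neighbours of each of the three planted colours. Then if $S:=F'$ were nonempty, every $v\in S$ sees no colored neighbour of some colour $\gamma_v\ne c_v$, so its (at least $d/4$) $H$-neighbours of planted colour $\gamma_v$ --- all of them $G$-neighbours, as $\gamma_v\ne c_v$ --- lie in $S$; thus $G[S]$ has minimum degree at least $d/4$, hence at least $d|S|/8$ edges, while $G\subseteq H$ and the expander mixing lemma give $e(G[S])\le e(H[S])\le\tfrac{d|S|}{2}\bigl(\tfrac{|S|}{n}+\tfrac{\lambda}{d}\bigr)\le\tfrac{d|S|}{2}(\epsilon+c)$, where $|S|\le|F|\le\epsilon n$. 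Choosing $\epsilon+c<\tfrac14$ (and $c$ small enough that $b\le\epsilon n$) contradicts this unless $F'=\emptyset$; so in this regime propagation already recovers the entire planted colouring.

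For the remaining range $C<d<K\log n$, with $C$ a sufficiently large constant, I would invoke the Alon--Kahale analysis of the endgame: the expected number of connected vertex sets $S$ with $|S|=s\in[C'\log n,\epsilon n]$ all of whose vertices survive propagation is at most $n(ed)^s$ times the probability, over the plant, that a fixed such set survives, and the randomness of the plant makes this tend to $0$, so all components of $G[F']$ have size $O(\log n)$. The point of the reduction is that the ingredients that in $G_{n,3,p}$ come from the graph being sparse with no dense small subgraphs are here supplied by the expander mixing lemma on the adversarial host, while those that come from the colouring being random are supplied by the random plant. The step I expect to be the main obstacle is making this last estimate fully rigorous: $F'$ depends on the plant and may be a constant fraction of the vertices, so one must reveal the plant in stages, using its randomness only on edges incident to a candidate component, and check that the part of the plant already consumed by the algorithm of Theorem~\ref{thm:partial} does not destroy the required independence. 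The complementary regime $d\ge K\log n$, which is the bulk of the range, is handled cleanly by the mixing argument above.
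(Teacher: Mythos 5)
Your plan coincides with the paper's: obtain a $b$-partial coloring from Theorem~\ref{thm:partial}, run what the paper calls \emph{safe recoloring} (your propagation step), argue that the remaining free set $F'$ decomposes into $O(\log n)$-size components, and finish by brute force. Your treatment of the regime $d\ge K\log n$ --- Chernoff on the plant so that every vertex has at least $d/4$ host-neighbors of each color, then expander mixing to contradict the minimum degree $d/4$ of $G[F']$ once $|F'|/n+\lambda/d<1/4$ --- is correct and clean, and reproduces in a self-contained way the paper's observation that in this regime the partial coloring is already total.

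The content of the theorem lives in $C<d<K\log n$, though, and there your sketch would not close as written. Bounding the number of connected $s$-sets in the host by $n(ed)^{s}$ and multiplying by the per-set survival probability does not yield a vanishing quantity: the savings obtainable from the plant (each free vertex's colored boundary must miss one of the two non-planted colors) amounts to a fixed constant per unit of \emph{boundary}, so roughly a constant less than one raised to the power $s$ when the boundary has size $\Theta(s)$, and $(ed)^{s}$ times that diverges for any $d$ above a small absolute constant. The missing ingredient is expansion. The paper's Lemma~\ref{lem:Safe Recoloring} enlarges a connected free set $C$ to a host-connected $\bar C$ whose entire $G$-boundary $D$ is colored, invokes spectral-to-vertex expansion (Lemma~\ref{lem:=00005BSpectral to vertex expansion.=00005D}) to force $t:=|D|\ge\tfrac12(d/\lambda)^{2}|\bar C|\gg|\bar C|$, bounds the survival probability by $2^{-\eta t}$ \emph{as a function of $t$} rather than of $|\bar C|$ (Claim~\ref{claim:sets with large neighborhod are not in H}), and counts (connected set, boundary) pairs by $n\binom{s+t}{s}$ rather than by $(ed)^{s}$ (Lemma~\ref{lem:=000023connected com' with neighborhood t}). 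It is the multiplicative gap $t\gg s$ supplied by expansion that lets $2^{-\eta t}$ beat the combinatorial count; beyond your per-boundary-vertex constant, no further power is squeezed out of the plant's randomness. As for the independence worry you flag: staged revelation is not what is needed, and the paper does not use it. For each fixed candidate $\bar C$ one bounds the probability of a deterministic necessary condition on the plant restricted to $D$, a bound uniform over the colors assigned to $\bar C$, so nothing about the Theorem~\ref{thm:partial} stage need be conditioned on.
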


\begin{thm}
\label{thm:R/A} In the $H_{R}/P_{A}$ model:
\begin{description}
\item [{a}] There is a polynomial time algorithm that with high probability
(over the choice of host graph) finds a legal 3-coloring whenever
$d\ge Cn^{2/3}$ (for a sufficiently large constant $C$).
\item [{b}] There is a constant $\frac{1}{3}<\delta_{0}<\frac{1}{2}$ such
that for $\delta_{0}<\delta<\frac{1}{2}$, 
 no polynomial time algorithm has constant probability (over the random
choice of host graph of average degree $n^{\delta}$, for adversarially
planted coloring) to produce a legal 3-coloring, unless NP has expected
polynomial time algorithms.
\end{description}
\end{thm}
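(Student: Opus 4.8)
The plan is to run the algorithm of Theorem~\ref{thm:partial} and then finish off the free vertices by unit propagation followed by a reduction to 2-SAT. On a random host graph $\lambda=\Theta(\sqrt d)$, so Theorem~\ref{thm:partial} produces with high probability a $b$-partial coloring with $b=O(n/d)$, and since $d\ge Cn^{2/3}$ this is $b=O(n^{1/3})$; moreover every colored vertex agrees with the planted coloring. For each free vertex $v$ let $L(v)$ be the set of colors not used by a colored neighbor of $v$. Since no colored neighbor of $v$ carries $v$'s planted color we always have $\mathrm{planted}(v)\in L(v)$, so lists are nonempty, and whenever $|L(v)|=1$ the forced color is exactly $\mathrm{planted}(v)$; hence unit propagation never empties a list and keeps the coloring consistent with the planted one. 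The role of the hypothesis $b=O(n^{1/3})$ is that a free vertex with $|L(v)|\ge 2$ must have all its colored neighbors in a single color class, hence at most $O(b)$ cross-color neighbors, and a union bound over balanced tripartitions shows that with high probability over $H$, for every adversarial planting only $O(n/d)=O(n^{1/3})$ vertices can be this cross-color--poor (to make $v$ cross-color--poor the adversary must bury almost all of $N_H(v)$ inside $v$'s own class, and the $(n/3)$-subsets of $G_{n,p}$ are far too sparse to absorb the neighborhoods of more than $O(n/d)$ vertices at once). Consequently, after unit propagation essentially every remaining free vertex has $|L(v)|=2$, which makes the residual a 2-SAT instance (variables = free vertices, clauses = the edges of $G$ among them); it is satisfiable because the planted coloring satisfies it, and 2-SAT is solved in polynomial time. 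The one delicate point I expect is the handful of exceptions --- free vertices with no colored neighbor, i.e.\ $|L(v)|=3$ --- which one must show, again using $b=O(n^{1/3})$ and sparsity of $G_{n,p}$ on $O(n^{1/3})$-size sets, induce only small, separately $3$-colorable pieces of $G$ (this is where $C$ needs to be a large constant); the rest of the argument is routine.

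\textbf{Part (b), the reduction.} I would prove the hardness by a randomized Karp reduction from a worst-case NP-hard search problem --- say, finding a $3$-coloring of a graph promised to be $3$-colorable --- to finding a legal $3$-coloring in the $H_R/P_A$ model. Given a hard instance $\Gamma$ on $t$ vertices, set $n$ to be a fixed power of $t$ chosen so that the target average degree $d=n^{\delta}$ and the ``unit scale'' $1/p=n^{1-\delta}$ are the right polynomials in $t$; this is possible exactly in a window of exponents whose left endpoint is the constant $\delta_0\in(\tfrac13,\tfrac12)$. The reduction samples $H\sim G_{n,n^{\delta-1}}$ (so the host distribution is precisely the prescribed one) and, as a function of $H$ and $\Gamma$, builds a balanced tripartition $P$; the instance handed to the algorithm is $G=H\cap P$. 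The design of $P$ embeds $\Gamma$ into $G$ \emph{rigidly}: via a bundle-and-gadget construction, each vertex and each edge of $\Gamma$ is represented by groups of vertices of $G$ and by the random $H$-edges among them --- dense random bipartite graphs, hence expanders, usable as constraint-propagation gadgets --- arranged so that with high probability over $H$ every legal $3$-coloring of $G$ reads off, on the gadgets, a legal $3$-coloring of $\Gamma$; the leftover vertices are padding, used only to make the three parts exactly equal and placed so that their few connections to gadget vertices impose no new constraint.

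\textbf{Part (b), concluding.} Granting such a construction, run the hypothesized polynomial-time algorithm $\mathcal A$ on $G=H\cap P$: with probability at least a constant $\epsilon$ over $H$ it returns a legal $3$-coloring, and with high probability over $H$ the construction is rigid, so with probability $\ge\epsilon/2$ we recover and then verify a legal $3$-coloring of $\Gamma$. Because $n$ is polynomial in $t$ and $\Gamma$ genuinely has a $3$-coloring, repeating this with independent fresh $H$ until success yields a zero-error algorithm for worst-case $3$-coloring whose expected running time is polynomial; hence NP has expected polynomial time algorithms. (The exponents must be reconciled: the bundles have to be large enough, roughly $\widetilde\Theta(n^{1-\delta})$ vertices, for their internal planted colorings and the inter-bundle constraints to be forced by the available random edges, yet $t$ bundles plus padding must fit into $n$ vertices with each part of size exactly $n/3$ --- and also one must stay clear of the regime where the planted coloring is essentially forced everywhere and therefore easy to find, which is precisely the phenomenon exploited in Theorem~\ref{thm:partial}.)

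\textbf{Main obstacle.} The technical heart of part~(b), and the reason $\delta$ cannot be pushed down to $\tfrac13$, is realizing a rigid embedding using \emph{only the random edges supplied by $H$}: against Theorem~\ref{thm:A/A} an adversary may hand-pick an expander host, but here the connective tissue between gadget components is whatever $G_{n,n^{\delta-1}}$ happens to provide, so one must simultaneously make the bundles large enough to be rigid, small enough to fit together under the exact-balance constraint, and structured enough that the random padding creates no spurious $3$-coloring that decodes to a non-solution of $\Gamma$. I expect this balancing act --- rather than the decoding step or the amplification --- to be the main difficulty, mirroring on the hardness side the role that the bound $b=O(n^{1/3})$ plays in the algorithm of part~(a).
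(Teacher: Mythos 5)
Your proposal starts correctly (run the partial-coloring algorithm of Theorem~\ref{thm:partial} to get a $b$-partial coloring with $b=O(n/d)$, then finish by list-coloring and 2-SAT), but it is missing the single observation that drives the paper's proof. The paper notes that when $d \ge Cn^{2/3}$, every pair of vertices in the random host $H$ has $\Theta(d^2/n) \gg b$ common neighbors, so every pair of \emph{free} vertices $u,v$ has at least one \emph{colored} common neighbor $w$ in $H$. If neither $u$ nor $v$ has any colored neighbor in the planted graph $G$, then both must have lost their edge to $w$ during planting, forcing $P(u)=P(w)=P(v)$. Hence the set $F_0$ of free vertices with no colored neighbor is monochromatic in the planted coloring, one simply \emph{guesses} its color (three tries), and after that every remaining free vertex has at most two candidate colors, which is exactly what makes the residual a 2-SAT instance. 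Your text instead proposes to argue that $F_0$ ``induces only small, separately $3$-colorable pieces,'' which is a different (Alon--Kahale--style, component-based) argument that belongs to the $H_A/P_R$ analysis and is not justified here: even if the pieces were small, you would still have 3-valued list variables and could not reduce to 2-SAT, and you would still need to color them consistently with the surrounding 2-list vertices. The intermediate ``cross-color--poor'' union-bound digression does not fill this gap; once you know $b=O(n/d)$, the number of cross-color--poor vertices is trivially $O(n/d)$, and what is really needed is the monochromaticity of $F_0$, which you never establish.

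\textbf{Part (b).} Here your approach is genuinely different from the paper's, and I do not believe it can be made to work as stated. The paper takes a class $\mathcal Q$ of balanced graphs of average degree $3.75$ on which 3-coloring is NP-hard (Lemma~\ref{lem:3.75}), shows that a fixed such $Q$ on $n^\epsilon$ vertices is likely to appear as a vertex-induced subgraph of $G_{n,p}$ whose vertices share no common neighbor outside $Q$ when $n^{\delta_0} < d < n^{1/2-\epsilon}$ (Lemma~\ref{lem:Q}), and has the adversary plant a coloring that \emph{disconnects} this copy of $Q$ from the rest; the produced instance is thus a disjoint union $Q \sqcup Z_3'$, and the crux of the proof is showing that $Z_3'$ (which depends on $Q$ through the adversary's unbounded computation) is statistically close to a distribution $Z_3$ that the reduction can sample in polynomial time without knowing anything about $Q$'s colorability --- this is the part inspired by~\cite{Juels98hidingcliques}, and it is where essentially all the work lies. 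Your bundle-and-gadget construction differs on every one of these points: it produces a connected instance, it asks for a ``rigidity'' property (every legal 3-coloring of $G$ decodes to a 3-coloring of $\Gamma$) that the paper never needs, and --- most critically --- it never confronts the central subtlety that a polynomial-time reduction cannot itself run the exponential-time adversary. If your $P$ is built from $H$ and $\Gamma$ in polynomial time, you have not shown the resulting $G$ is distributed like an $H_R/P_A$ instance that the hypothesized algorithm is guaranteed to solve; if it is built so that $G$ is a bona fide $H_R/P_A$ instance, you have not shown the reduction is efficient. This is exactly the gap the paper's statistical-closeness lemma closes, and your sketch does not engage with it. Your proposed bundle sizes $\widetilde\Theta(n^{1-\delta})$ are also much larger than the paper's $Q$ on $n^\epsilon$ vertices, and the exact-balance ``padding'' you invoke is, in the paper, handled by a delicate conditional-uniformity argument over balanced colorings of $Z_3'$, not by hand-placing padding. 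Finally, the location of $\delta_0$ above $\frac{1}{3}$ is determined in the paper by the second-moment computation showing when $Q$ appears in $G_{n,p}$ (Proposition~\ref{pro:sparseQ} and Lemma~\ref{lem:Q}); your intuition about bundle sizes does not reproduce this threshold.
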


Let us briefly summarize our main findings as to the role of randomness
in planted 3-coloring models. For partial coloring, randomness in
the model can be replaced by degree and expansion requirements for
the host graph, and balance requirements for the planted coloring
(see Theorem~\ref{thm:partial}). For finding a legal (complete)
3-coloring, randomness of the planted coloring is the key issue, in
which case it suffices that the host graph is an arbitrary spectral
expander, and in fact, quite a weak one ($\lambda$ can even be linear
in $d$ -- see Theorem~\ref{thm:A/R}). If the planted 3-coloring
is not random, then spectral expansion does not suffice (not even
$\lambda=O(\sqrt{d})$ -- see Theorem~\ref{thm:A/A}), and moreover,
even randomness of the host graph does not suffice (for some range
of degrees -- see Theorem~\ref{thm:R/A}b).
Finally, comparing Theorem~\ref{thm:R/A}a to Theorem~\ref{thm:A/A}
shows that spectral expansion cannot always replace randomness of
the host graph.







\subsection{Related work}


There is a vast body of work on models with planted solutions, and here we shall survey only a sample of it that suffices in order to understand the context for our results.

In our framework we ask for algorithms that $k$-color a graph that has a planted $k$-coloring, and allow the algorithm to return any legal $k$-coloring, not necessarily the planted one. We refer to this as the {\em optimization} version. The optimization version regards planted models as a framework for studying possibly tractable instances for otherwise NP-hard problems. In certain other contexts (signal processing, statistical inference) the goal in planted models is to recover the planted object, either exactly, or approximately. We refer to this as the {\em recovery} version. It is often motivated by practical needs (e.g., to recover a a true signal from a noisy version, to cluster noisy data, etc.). Our Theorem~\ref{thm:partial} addresses the approximate recovery question, but for many of our models and settings of parameters, exact recovery of the planted $k$-coloring is information theoretically impossible (one reason being that the input graph might have multiple legal $k$-coloring with no indication which is the planted one). In general, optimization becomes more difficult when exact recovery is impossible, but still in many of our models we manage to exactly solve the optimization problem (e.g., in Theorem~\ref{thm:A/R}). In the context of random planted $k$-coloring, the $k$-coloring algorithms of~\cite{JALGO::BlumS1995} could work only in the regime in which exact recovery is possible, whereas the algorithms of~\cite{alon1997spectral} work also in regimes where exact recovery is not possible.

There are planted models and corresponding algorithms for many other optimization problems, including graph bisection~\cite{boppana1987eigenvalues}, 3-SAT~\cite{flaxman2003spectral}, general graph partitioning~\cite{mcsherry2001spectral}, and others. A common algorithmic theme used in many of this works is that (depending on the parameters) exact or approximate solutions can be found using spectral techniques. The reason why the class of host graphs that we consider is that of spectral expanders is precisely because we can hope that spectral techniques will be applicable in this case. Indeed, the first step of our algorithm (in the proof of Theorem~\ref{thm:partial}) employs spectral techniques. Nevertheless, its proof differs from previous proofs in some of its parts, because it uses only weak assumptions on the planted model (there is no randomness involved, the host graph is an arbitrary spectral expander, and moreover not necessarily a very good one, and the k-coloring is planted in a worst case manner in the host graph).

Our hosted coloring framework allows the input to be generated in a way that is partly random and partly adversarial. Such models are often referred to as semi-random models~\cite{JALGO::BlumS1995,feige2001heuristics}. Among the motivations to semi-random models one can mention attempts to capture real life instances better than purely random models ({\em smoothed analysis}~\cite{spielman2004smoothed} is a prominent example of this line of reasoning, but there are also other recent attempts such as~\cite{makarychev2012approximation} and~\cite{makarychev2014constant}), and attempts to understand better how worst case input instances to a problem may look like (e.g., see the semirandom models for unique games in~\cite{kolla2011play}, which consider four different aspects of an input instance, and study all combinations in which three of these aspects are adversarial and only one is random). Another attractive aspect of semi-random models is the possibility of matching algorithmic results by NP-hardness results, which become possible (in principle) due to the presence of an adversary (there are no known NP-hardness results in purely random planted models). NP-hardness results for certain semi-random models of $k$-coloring (in which an adversary is allowed to add arbitrary edges between color classes in a random planted $G_{n,k,p}$ graph) have been shown in~\cite{feige2001heuristics}, thus explaining why the algorithmic results that were obtained there for certain values of $p$ cannot be pushed to considerably lower values of $p$. Typically these NP-hardness results are relatively easy to prove, due to a strong adversarial component in the planted model. In contrast, our NP-hardness result in Theorem~\ref{thm:R/A}b is proved in a context in which the adversary seems to have relatively little power (has no control whatsoever over the host graph and can only choose the planted coloring). Its proof appears to be different from any previous NP-hardness proof that we are aware of.

In \cite{arora2011new} an algorithm that outputs an $\Omega(n)$ size independent set for $d$-regular 3-colorable graphs is designed. The running time of the algorithm is $n^{O(D)}$, where $D$ is the \emph{threshold rank} of the input graph (namely, the adjacency matrix of the input graph has at most $D$ eigenvalues more negative than $-t$, where $t = \Omega(d)$). Graphs generated in our $H_A/P_R$ model are nearly regular and have low threshold rank. Graphs generated in our $H_A/P_A$ model might have vertices of degree much lower than the average degree, but they have large subgraphs of low threshold rank in which all degrees are within constant multiplicative factors of each other (and presumably the algorithms of \cite{arora2011new} can be adapted to such graphs). The hardness results presented in the current paper do not directly give regular graphs, but one can modify the hardness results for $k$-coloring in the $H_A/P_A$ model (specifically, Theorem~\ref{thm:A/A-k}, for $k > 3$) to obtain hardness of $k$-coloring regular graphs of low threshold rank.


Let us end this survey with two unpublished works (available online) that are related to the line of research presented in the current paper. A certain model for planted 3SAT was studied in~\cite{alina}. It turns out that in that model a $b$-partial solution (even for a very small value of $b$) can be found efficiently, but it is not known whether a satisfying assignment can be found. As that model is purely random, it is unlikely that one can prove that finding a satisfying assignment is NP-hard. In~\cite{palntedColoring} a particular model within the hosted coloring framework was introduced. In that model the class of host graphs is that of so called anti-geometric graphs, and both the choice of host graph and planted coloring are random. The motivation for choosing the class of anti-geometric graphs as host graphs is that these are the graphs on which~\cite{FeiLanSch04} showed integrality gaps for the semi-definite program of~\cite{karger1998approximate}. Hence spectral algorithms appear to be helpless in these planted model settings. Algorithms for 3-coloring were presented in~\cite{palntedColoring} for this class of planted models when the average degree is sufficiently large (above $n^{0.29}$), and it is an interesting open question whether this can be pushed down to lower degrees. If so, this may give 3-coloring algorithms that do well on instances on which semidefinite programming seems helpless.

\section{Overview of proofs}

In this section we explain the main ideas in the proof. The full proofs,
which often include additional technical content beyond the ideas
overviewed in this section, appear in the appendix.

\subsection{An algorithm for partial colorings}

\label{sec:partial}

Here we explain how Theorem~\ref{thm:partial} (an algorithm for
partial coloring) is proved. Our algorithm can be thought of as having
the following steps, which mimic the steps in the algorithm of Alon
and Kahale~\cite{alon1997spectral} who addressed the $H_{R}/P_{R}$
model.
\begin{enumerate}
\item {\em Spectral clustering.} Given an input graph $G$, compute the
eigenvectors corresponding to the two most negative eigenvalues of
the adjacency matrix of $G$. The outcome can be thought of as describing
an embedding of the vertices of $G$ in the plane (the coordinates
of each vertex are its corresponding entries in the eigenvectors).
Based on this embedding, use a distance based clustering algorithm
to partition the vertices into three classes. These classes form the
(not necessarily legal) coloring $\chi_{1}$.
\item {\em Iterative recoloring.} Given some (illegal) coloring, a {\em
local improvement} step moves a vertex $v$ from its current color
class to a class where $v$ has fewer neighbors, thus reducing the
number of illegally colored edges. Perform local improvement steps
(in parallel) until no longer possible. At this point one has a new
(not necessarily legal) coloring $\chi_{2}$.
\item {\em Cautious uncoloring.} Uncolor some of the vertices, making
them {\em free}. Specifically, using an iterative procedure, every
{\em suspect} vertex is uncolored, where a vertex $v$ is suspect
if it either has significantly less than $\frac{2d}{3}$ colored neighbors,
or there is a color class other than $\chi_{2}(v)$ with fewer than
$\frac{d}{6}$ neighbors of $v$. The resulting partial coloring is
referred to as $\chi_{3}$.
\end{enumerate}
The analysis of the three steps of the algorithm is based on that
of~\cite{alon1997spectral}, but with modifications due to the need
to address adversarial settings. Consequently, the values that we
obtain for the parameter $b$ after the iterative recoloring and cautious
uncoloring steps are weaker than the corresponding bounds in~\cite{alon1997spectral}.
\begin{lem}
\label{lem:step1} The coloring $\chi_{1}$ is a $b$-approximated
coloring for $b\le O(\frac{\lambda}{d}n)$.
\end{lem}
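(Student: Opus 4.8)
The plan is to show that the two most negative eigenvalues of $G$'s adjacency matrix, together with their eigenvectors, encode the planted tripartition up to an error set of size $O(\frac{\lambda}{d}n)$, and that a distance-based clustering in the resulting plane embedding recovers this tripartition up to the same error.

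First I would analyze the "ideal" graph $\hat{G}$ one would obtain if the host graph were the complete $d$-regular-like object respecting the planted partition — more precisely, I would compare the adjacency matrix $A$ of $G$ to the matrix $B$ that would arise from planting the 3-coloring in a perfect expander. Let $V_1,V_2,V_3$ be the planted color classes, each of size $n/3$. On the "ideal" side, the relevant matrix has a 3-dimensional invariant subspace spanned by the indicator vectors $\mathbf{1}_{V_1},\mathbf{1}_{V_2},\mathbf{1}_{V_3}$; within it, the all-ones vector $\mathbf{1}$ is the top eigenvector (eigenvalue $\simeq \frac{2d}{3}$), and the two-dimensional space of balanced vectors orthogonal to $\mathbf{1}$ (i.e. vectors constant on each $V_i$ and summing to zero) is an eigenspace with eigenvalue $\simeq -\frac{d}{3}$. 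The key point is that $-\frac{d}{3}$ is the \emph{most negative} eigenvalue of the ideal matrix, with multiplicity two, and its eigenvectors exactly reveal the partition: the embedding sends all of $V_i$ to a single point $p_i$, and $p_1,p_2,p_3$ are three distinct points forming a triangle (centered at the origin, by the balance/orthogonality to $\mathbf{1}$).

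Next I would quantify the perturbation. Writing $A = B + E$, I need a bound on $\|E\|$ (spectral norm). Here the adversarial/expander hypotheses enter: because $H$ is a $d$-regular $\lambda$-expander and the planting only deletes the within-class edges, the "noise" matrix $E$ — the difference between the actual graph and its expected/ideal version — has spectral norm $O(\lambda)$. (In the random models one invokes the standard $\lambda = O(\sqrt d)$ bound; in the adversarial-expander models it is a hypothesis, and one must be careful that deleting the intra-class edges of an expander still leaves something whose deviation from the ideal is controlled by $\lambda$, using that each $V_i$ induces an expander-subgraph-like object.) Then by the Davis–Kahan / $\sin\Theta$ theorem, since the ideal matrix has a spectral gap of order $d$ between the eigenvalue $-\frac{d}{3}$ and the rest of its spectrum, the two-dimensional bottom eigenspace of $A$ is within angle $O(\lambda/d)$ of the ideal balanced eigenspace. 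Consequently the computed plane embedding of $G$ is, up to an orthogonal change of coordinates, within total squared $\ell_2$-distance $O((\lambda/d)^2 n)$ of the ideal embedding that places each $V_i$ at the triangle vertex $p_i$. A Markov-type counting argument then shows that all but $O((\lambda/d)^2 n)$ vertices land within, say, one-tenth of the inter-vertex triangle distance of their correct $p_i$; a distance-based clustering (e.g. pick farthest-apart centers, or any standard 3-means-style rounding) assigns each such vertex to the class of the planted partition. This yields a $b$-approximated coloring with $b = O((\lambda/d)^2 n)$, which is certainly $O(\frac{\lambda}{d}n)$; stating it as $O(\frac{\lambda}{d}n)$ just keeps the bound uniform with the weaker bounds from later steps.

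The main obstacle I anticipate is \emph{not} the spectral perturbation bookkeeping (that is standard Davis–Kahan plus a norm bound on $E$), but rather controlling $\|E\|$ in the fully adversarial setting: when the host expander and the planted partition are both adversarial, the within-class edge sets that get removed are not random, so one cannot simply appeal to concentration. One has to argue that for \emph{any} balanced tripartition of \emph{any} $d$-regular $\lambda$-expander, the matrix of removed edges (restricted to each class) has operator norm $O(\lambda)$ — essentially because the bipartite-complement structure between classes, plus the expander mixing lemma applied inside each $V_i$, forces the removed-edge matrix to be close to its expectation $\frac{d}{n}J$ on each block. A secondary subtlety is that the triangle $p_1,p_2,p_3$ in the ideal embedding must be shown to be \emph{well-spread} (pairwise distances $\Omega(1/\sqrt n)$ in the normalized eigenvectors) so that an $O((\lambda/d)^2 n)$ perturbation budget really does leave most vertices unambiguously classified; this follows from exact balance ($|V_i| = n/3$) but must be checked, and it is the reason balance is assumed rather than near-balance in the clean statement.
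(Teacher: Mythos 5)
Your proposal is correct and its core is a genuinely different (and in one respect sharper) argument than the one the paper uses, so it is worth comparing the two carefully.

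The paper does not go through an explicit ``ideal matrix'' decomposition. Instead, it works directly with the vectors $\bar{x},\bar{y}$ (encoding the planted partition) and shows they are approximate eigenvectors of $A_{G'}$ with eigenvalue $-d/3$. In the adversarial case (Lemma~\ref{lem: eigen values after arbitrary planting}) the residual $\|A_{G'}\bar{x}+\tfrac{d}{3}\bar{x}\|_2$ is bounded vertex-by-vertex: the expander mixing lemma is used to bound, for a threshold $\epsilon d$, the number of vertices whose degree into another color class deviates from $d/3$ by more than $\epsilon d$; these ``outliers'' contribute $\sim d^2$ each to the squared residual and everyone else contributes $\sim(\epsilon d)^2$, and optimizing $\epsilon\approx\sqrt{\lambda/d}$ yields a residual $O(\sqrt{d\lambda})$. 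A one-vector version of Davis--Kahan (their Lemma~\ref{lem:Almost Eigan Vectors implies more}), with spectral gap $\Theta(d)$, then gives $\|\vec{\epsilon}_{\bar{x}}\|_2=O(\sqrt{\lambda/d})$, and the clustering step turns this into $b=O(n\lambda/d)$, exactly as in the statement of Lemma~\ref{lem:step1}.

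Your route is to set $B=\tfrac{d}{n}(J-\sum_j J_{V_j})$ and bound $\|A_{G'}-B\|$ in operator norm. This is in fact cleaner than you suspect: you don't need a separate mixing-lemma argument inside each $V_i$. Since $G$ is a $\lambda$-expander with $\mathbf{1}$ as its top eigenvector, $\|A_G-\tfrac{d}{n}J\|\le\lambda$, and each block $A_{G_j}-\tfrac{d}{n}J_{V_j}$ is a principal submatrix of $A_G-\tfrac{d}{n}J$, so it also has norm at most $\lambda$ (restrict the quadratic form to vectors supported on $V_j$). Since $\bigoplus_j(A_{G_j}-\tfrac{d}{n}J_{V_j})$ is block diagonal, its norm is the max over blocks, and the triangle inequality gives $\|A_{G'}-B\|\le 2\lambda$. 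This is true for \emph{every} balanced partition of \emph{every} $d$-regular $\lambda$-expander, with no concentration argument and no ``hope that deleting the intra-class edges leaves something controlled''; the worry you flagged as the main obstacle is not actually an obstacle. Feeding $\|E\|\le 2\lambda$ and gap $\Theta(d)$ into Davis--Kahan gives $\sin\Theta=O(\lambda/d)$, hence total squared embedding error $O((\lambda/d)^2)$ and, by the Markov/counting step (using that the three ideal centers are at pairwise distance $\Theta(1/\sqrt{n})$ thanks to exact balance), $b=O((\lambda/d)^2 n)$. This is \emph{quadratically stronger} than the paper's $O(\lambda n/d)$ for the same step, and it also improves the residual bound on the approximate eigenvector from the paper's $O(\sqrt{d\lambda})$ to $O(\lambda)$. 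So your approach is both valid and in one respect tighter; the only thing left implicit is fixing a specific rounding rule (the paper iterates over candidate triplets of centers, which avoids the issue that ``pick farthest-apart centers'' could in principle grab two outlier vertices from the same class), but that is a presentational choice, not a gap.
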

For the proof of Lemma~\ref{lem:step1}, the underlying idea is that
$d$-regular $\lambda$-expander graphs do not have any eigenvalues
more negative than $-\lambda$. On the other hand, planting a 3-coloring
can be shown to create exactly two eigenvalues of value roughly $-\frac{d}{3}$.
This is quite easy to show in the random planting model such as the
one used in~\cite{alon1997spectral} because the resulting graph
is nearly $\frac{2d}{3}$-regular. We show that this also holds in
the adversarial planted model. Thereafter, if $\frac{d}{3}>\lambda$,
it makes sense (though of course it needs a proof) that the eigenvalues
corresponding to the two most negative eigenvalues contain some information
about the planted coloring. An appropriate choice of clustering algorithm
can be used to extract this information. We remark that our choice
of clustering algorithm differs from and is more efficient than that
of~\cite{alon1997spectral}, a fact that is of little importance
in the context of planted 3-coloring, but does offer significant advantages
for planted $k$-coloring when $k$ is large. 

\begin{lem}
\label{lem:step2} The coloring $\chi_{2}$ is a $b$-approximated
coloring for $b\le O(\frac{\lambda^{2}}{d^{2}}n)$.
\end{lem}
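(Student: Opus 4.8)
The plan is to start from the $b$-approximated coloring $\chi_1$ supplied by Lemma~\ref{lem:step1}, which errs on at most $b_1 = O(\tfrac{\lambda}{d}n)$ vertices, and show that the parallel local-improvement process drives the error down quadratically, to $b_2 = O(\tfrac{\lambda^2}{d^2}n)$. Call a vertex \emph{wrong} under a coloring if its color differs from the planted color. The first step is to control how wrong vertices can survive a local-improvement step. A vertex $v$ that is planted in color class $i$ has, in the \emph{planted} graph, no neighbors in class $i$; all its neighbors lie in the other two classes. If very few of $v$'s neighbors are themselves wrong, then under the current coloring $v$ still sees essentially no neighbors of color $i$, hence a local-improvement step will move $v$ into color $i$ (or leave it there), making it correct. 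So a vertex can only remain wrong after the step if it has "many" wrong neighbors — quantitatively, at least some constant fraction of $\tfrac{2d}{3}$, say at least $\tfrac{d}{6}$ wrong neighbors. Actually the cleaner formulation I would use is: after a local-improvement step, every vertex that is wrong has at least $\Omega(d)$ neighbors that were wrong \emph{before} the step. Since the process is run in parallel to a fixed point, the final set $W$ of wrong vertices under $\chi_2$ must satisfy: every $v \in W$ has $\Omega(d)$ neighbors in $W$ (this needs a small amount of care because "before" and "after" differ; the honest version tracks one step and then argues the fixed point $W$ has the property that each of its vertices has $\Omega(d)$ neighbors in $W \cup W_{\mathrm{prev}}$, and one controls $W_{\mathrm{prev}}$ inductively — the key point is that the set of ever-wrong vertices is contained in a bounded-size set).

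The second step is the expansion/counting argument. Let $S$ be the set of vertices that are wrong under $\chi_1$, so $|S| \le b_1 = O(\tfrac{\lambda}{d}n)$. The local-improvement dynamics can only make $v$ wrong at some later step if $v$ accumulates $\Omega(d)$ wrong neighbors, so by induction the set $T$ of vertices that are \emph{ever} wrong (over all steps) is contained in the set of vertices reachable from $S$ by this "needs $\Omega(d)$ infected neighbors" rule. One shows, using the spectral expander property via the expander mixing lemma, that any vertex set $U$ in which every vertex has at least $\tfrac{d}{6}$ neighbors inside $U$ must have size $\Omega(\tfrac{d}{\lambda^2}\cdot n) \cdot (\text{something})$ — more precisely, expander mixing gives $e(U,U) \le \tfrac{d|U|^2}{n} + \lambda|U|$, while the degree condition forces $e(U,U) \ge \tfrac{1}{2}\cdot\tfrac{d}{6}|U|$, so $\tfrac{d}{12}|U| \le \tfrac{d|U|^2}{n} + \lambda|U|$, i.e. either $|U| \ge \Omega(n)$ or $\lambda \ge \Omega(d)$ (contradiction in the interesting regime). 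That dichotomy by itself only says such $U$ is either empty or huge, which is too weak; the quantitative bound $b_2 = O(\tfrac{\lambda^2}{d^2}n)$ comes instead from a "seeded" version: because $T$ grows out of the small seed $S$, I would bound $|T|$ by tracking that the number of new infections is always dominated by the number of previously infected vertices that have enough infected neighbors, and the expander mixing lemma bounds the number of vertices with $\ge \tfrac{d}{6}$ neighbors in a set of size $m$ by $O(\tfrac{\lambda^2}{d^2} \cdot n)$ once $m$ itself is below that threshold — giving a closed bootstrap whose fixed point is $O(\tfrac{\lambda^2}{d^2}n)$, provided $b_1 = O(\tfrac{\lambda}{d}n)$ is already below this (which it is, since $\tfrac{\lambda}{d} \le (\tfrac{\lambda}{d})^2 \cdot \tfrac{d}{\lambda}$ and $\lambda \le d$... this needs the bound on $\lambda$, and is exactly where the factor improvement is consumed). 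Finally, since $W \subseteq T$, we get $b_2 = |W| \le |T| = O(\tfrac{\lambda^2}{d^2}n)$, which is the claim.

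The main obstacle I expect is the bookkeeping of the parallel dynamics together with the self-referential expansion bound: one must simultaneously argue (i) that the set of ever-wrong vertices stays small, so that the "few wrong neighbors" hypothesis needed in Step~1 is actually satisfied for most vertices at every step, and (ii) extract from the expander mixing lemma a bound of the \emph{right order} $\tfrac{\lambda^2}{d^2}n$ rather than merely the crude dichotomy "empty or $\Omega(n)$." The resolution is to phrase everything in terms of a single set $U$ — the vertices wrong at the fixed point $\chi_2$ — prove directly that every $v \in U$ has $\Omega(d)$ neighbors in $U \cup S$ where $S$ is the original error set of $\chi_1$, and then apply expander mixing to $U \cup S$: this gives $\tfrac{d}{12}|U| \le e(U, U\cup S) \le \tfrac{d|U|(|U|+|S|)}{n} + \lambda\sqrt{|U|(|U|+|S|)}$, and solving this inequality with $|S| = O(\tfrac{\lambda}{d}n)$ yields $|U| = O(\tfrac{\lambda^2}{d^2}n)$ as long as $c$ is large enough that $\tfrac{d}{n}|S| < \tfrac{d}{24}$, i.e. $|S| < \tfrac{n}{24}$, which holds. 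The adversarial nature of the planting (versus the random planting in~\cite{alon1997spectral}) enters only in that one cannot assume the intermediate colorings are near-balanced or that neighborhoods of wrong vertices are "typical"; the argument above is robust to this because it uses only the deterministic expander mixing lemma and the combinatorial definition of a local-improvement fixed point.
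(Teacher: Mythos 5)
There is a genuine gap in your proposal, and it sits exactly at the step you flag as needing "a small amount of care." The claim ``every $v \in U$ has $\Omega(d)$ neighbors in $U \cup S$'' (or, in the one-step form, ``a vertex that is wrong after a local-improvement step had $\Omega(d)$ wrong neighbors before the step'') is \emph{false} in the adversarial planting model. For this claim you implicitly use that $v$ has roughly $d/3$ host-graph neighbors in each color class other than $\mathrm{col}_P(v)$ --- without that, there is no way to conclude that a vertex with few wrong neighbors gets pushed back to its planted color (indeed, a vertex all of whose $G$-neighbors were planted in $v$'s own class has degree zero in $G'$ and cannot ever be corrected, yet has zero wrong neighbors). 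That balance condition on neighborhoods holds w.h.p. for \emph{random} planting, but an adversary planting a balanced coloring can violate it on a set of vertices of size $\Theta\bigl(\tfrac{\lambda^2}{d^2}n\bigr)$. The paper names this set $SB$ (Definition~\ref{def: Statistically Bad}), proves $|SB|=O\bigl(\tfrac{\lambda^2}{d^2}n\bigr)$ via the expander mixing lemma (Lemma~\ref{lem: sb bound adversary}), and carries $|SB|$ as an additive term through the one-step contraction (Lemma~\ref{lem:approx from f<1/d to f(lambda/d)^2}: one refinement step maps an $f$-approximation to a $\bigl(\bigl(\tfrac{24\lambda}{d}\bigr)^2 f + |SB|\bigr)$-approximation). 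After $\Omega(d)$ iterations the geometric term dies and the fixed point is $O(|SB|)=O\bigl(\tfrac{\lambda^2}{d^2}n\bigr)$. Note that $|SB|$ is not a cosmetic error term --- it is exactly of the order of the lemma's conclusion, so it is the dominant contribution; an argument that ignores it, such as yours, would in fact prove the false bound $|U|=O\bigl(\tfrac{\lambda^3}{d^3}n\bigr)$ (you can check that solving your own inequality with $|S|=O(\tfrac{\lambda}{d}n)$ gives that, not $O(\tfrac{\lambda^2}{d^2}n)$ as you state).

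A second, more minor, issue: at a genuine fixed point, the relevant ``previous wrong set'' equals $U$ itself, so expander mixing applied to $U$ alone yields only the dichotomy you already noticed is too weak; the union with $S$ does not repair this, because the intermediate wrong sets $W_t$ are not contained in $U\cup S$. The paper sidesteps this by not arguing at a fixed point at all --- Algorithm~\ref{alg:iterative recoloring} runs a fixed $\Omega(d)$ rounds and tracks the contraction $f_i \le \bigl(\tfrac{24\lambda}{d}\bigr)^2 f_{i-1} + |SB|$ step by step. If you want to salvage your approach you should (a) replace $U\cup S$ by $U\cup SB$ in the key claim, (b) restrict the expander-mixing count to $U\setminus SB$, and (c) either argue round by round as the paper does, or give a careful argument that parallel local improvement converges and that the ever-wrong set stays below $\tfrac{n}{24}$.
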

In~\cite{alon1997spectral} a statement similar to Lemma~\ref{lem:step2}
was proved using probabilistic arguments (their setting is equivalent
to $H_{R}/P_{R}$). Our setting (specifically, that of $H_{A}/P_{A}$)
involves no randomness. We replace the proof of~\cite{alon1997spectral}
by a proof that uses only deterministic arguments. Specifically, we
use the well known expander mixing lemma~\cite{alon1988explicit}.
\begin{lem}
\label{lem:step3} The partial coloring $\chi_{3}$ is a $b$-partial
coloring for $b\le O(\frac{\lambda^{2}}{d^{2}}n)$.
\end{lem}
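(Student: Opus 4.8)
I want to show that after the cautious uncoloring step, the number of free vertices $b$ is still bounded by $O\left(\frac{\lambda^2}{d^2}n\right)$, the same order as after iterative recoloring (Lemma~\ref{lem:step2}). The strategy is to charge newly uncolored vertices to the vertices that were already miscolored in $\chi_2$. Let $W$ be the set of vertices where $\chi_2$ disagrees with the planted coloring $P$; by Lemma~\ref{lem:step2}, $|W| \le c_0 \frac{\lambda^2}{d^2}n$ for some constant $c_0$. The uncoloring procedure runs in rounds: a vertex becomes suspect (and is uncolored) if, looking only at its still-colored neighbors, either it has far fewer than $\frac{2d}{3}$ of them, or one of the two color classes $\ne \chi_2(v)$ contains fewer than $\frac{d}{6}$ of its neighbors. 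I want to argue that any vertex that ever gets uncolored lies in a set $U$ that is not much bigger than $W$.

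\textbf{Key steps.} First I would establish a purely combinatorial (deterministic) fact about the host graph coming from the expander mixing lemma: for every subset $S$ of vertices with $|S| \le \alpha n$ (for a suitable small constant $\alpha$), the number of vertices having at least, say, $\frac{d}{12}$ neighbors in $S$ is at most $O\left(\frac{\lambda^2}{d^2}\cdot\frac{|S|}{\alpha}\right)$ — more precisely, the number of such ``high-degree-into-$S$'' vertices $T$ satisfies $|T| \cdot \frac{d}{12} \le e(T,S) \le \frac{d|T||S|}{n} + \lambda\sqrt{|T||S|}$, and solving this inequality forces $|T| = O\left(\frac{\lambda^2 n}{d^2}\cdot\frac{|S|}{n}\cdot(\text{const})\right)$ whenever $|S| = O\left(\frac{\lambda^2 n}{d^2}\right)$, i.e. $|T| = O(|S|)$ in that regime. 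Second, I would show by induction on the round number that the set $F_t$ of vertices uncolored by the end of round $t$ satisfies $F_t \subseteq W \cup N^{\ge d/12}(W \cup F_{t-1})$: a correctly-colored vertex $v \notin W$ that is not adjacent to many already-uncolored-or-miscolored vertices still sees, in $\chi_2$ restricted to its colored neighbors, essentially its planted neighborhood, which (because $v$'s neighborhood is ``balanced enough'' — this is where one invokes the degree and the expander mixing lemma again, to say $v$ has close to $\frac{d}{3}$ neighbors planted in each of the two other classes and close to $\frac{2d}{3}$ total non-own-class neighbors) keeps it from being suspect. Hence such a $v$ is never uncolored. Third, I would close the recursion: since at every round the newly uncolored set is contained in $N^{\ge d/12}$ of a set of size $O\left(\frac{\lambda^2}{d^2}n\right)$, the step-one combinatorial bound says it is again of size $O\left(\frac{\lambda^2}{d^2}n\right)$, and in fact one sets up the constants so that $|F_t| \le C\frac{\lambda^2}{d^2}n$ for all $t$ with a fixed $C$ (a geometric-type argument: the ``blow-up factor'' from $S$ to $T$ is a constant $<1$ once $|S|$ is below the critical scale, so the union over all rounds stays bounded). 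Taking $t \to \infty$ gives $b = |F_\infty| = O\left(\frac{\lambda^2}{d^2}n\right)$.

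\textbf{Main obstacle.} The delicate point is making the induction in the second step genuinely go through with a \emph{fixed} constant in front of $\frac{\lambda^2}{d^2}n$, rather than a constant that grows with the round number. The naive recursion $|F_t| \le |W| + (\text{const})\cdot|F_{t-1}|$ is useless if the constant exceeds $1$; the whole argument hinges on the fact that in the regime $|S| = O\left(\frac{\lambda^2}{d^2}n\right)$ the expander mixing lemma gives a \emph{contraction}, $|N^{\ge d/12}(S)| \le \theta |S|$ with $\theta < 1$ (or at worst $|N^{\ge d/12}(S)| = O(|W|)$ directly, independent of which round we are in, by always measuring against the \emph{fixed} seed $W$ plus a controlled remainder). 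I would handle this by defining the ``bad'' set in a round-independent way — roughly, $U = W \cup \{v : v \text{ has} \ge \frac{d}{12}\text{ neighbors in } U\}$ as a fixed-point / closure of $W$ under the high-degree-neighbor operation — proving this closure has size $O\left(\frac{\lambda^2}{d^2}n\right)$ by a single application of the mixing-lemma inequality to the closure itself, and then showing every uncolored vertex lies in $U$. A secondary technical nuisance is that the suspect condition is stated in terms of the $\frac{2d}{3}$ and $\frac{d}{6}$ thresholds, so I must carefully track how large the ``slack'' in the clustering/recoloring guarantees needs to be (what ``significantly less than $\frac{2d}{3}$'' means quantitatively) and check that a correctly colored vertex outside $U$ really does clear all thresholds with room to spare; this is where the near-regularity of the planted graph and the balance of the planted coloring are used, again via the expander mixing lemma to control the planted-degree profile of each vertex.
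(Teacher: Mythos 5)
There is a genuine gap: your proposal addresses only half of what a $b$-partial coloring requires. By definition, the coloring $\chi_3$ must agree with the planted coloring $P$ on all $n-b$ vertices that remain colored, not merely leave few vertices uncolored. Your argument shows that correctly colored vertices outside a small ``closure'' set $U$ never get uncolored, which bounds the number of free vertices; but you never show that the miscolored vertices of $\chi_2$ all \emph{do} get uncolored (nor, equivalently, that the colored part of $\chi_3$ contains no miscolored vertex). Without that, $\chi_3$ could end with some $v$ still colored $\chi_2(v)\neq P(v)$, and then it simply is not a partial coloring. The paper closes this with a separate argument: if a miscolored vertex $v$ (with $\chi_2(v)\neq P(v)$) survives uncoloring, then the suspect criterion forces $v$ to have at least $\approx d/6$ colored neighbors whose $\chi_2$-color equals $P(v)$; but every $G'$-neighbor $u$ of $v$ has $P(u)\neq P(v)$, so all these $u$ are themselves colored-and-miscolored. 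Hence the set of surviving miscolored vertices induces a subgraph of minimum degree $\approx d/6$, and the expander mixing lemma then forces this set to have size either $0$ or $\Omega\bigl(\tfrac{d-\lambda}{d}n\bigr)=\Omega(n)$; since it is also bounded by $n-|CC|=O(|SB|)=o(n)$, it must be empty. This step has no analogue in your sketch.

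A secondary issue: in the inductive step you assert that a correctly colored $v\notin W$ with few bad neighbors ``has close to $\tfrac{d}{3}$ neighbors planted in each of the two other classes,'' citing the expander mixing lemma. But the mixing lemma is an averaging statement over sets, not a per-vertex guarantee; it cannot promise this for every $v$. The paper handles this by explicitly isolating the set $SB$ of ``statistically bad'' vertices whose planted-neighborhood profile is unbalanced, bounding $|SB|$ separately ($O((\lambda/d)^2n)$ deterministically for adversarial planting via the mixing lemma, or $n2^{-\Omega(d)}$ whp for random planting), and then running the core/closure argument starting from the seed $S=SB\cup B$ rather than from $B$ alone. Your seed $W$ should be $W\cup SB$; as written, the inductive claim ``$v\notin W$ with few bad neighbors is not suspect'' fails for $v\in SB$, which need not have $\tfrac{d}{3}$ planted neighbors of each other color. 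Once both of these are repaired, your fixed-point closure idea is essentially the same as the paper's Lemma on the existence of a large high-minimum-degree core after deleting a small set, so the size-bound half of the argument is sound in spirit.
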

For the proof of Lemma~\ref{lem:step3}, the definition of {\em
suspect} vertex strikes the right balance between 
two conflicting requirements. One is ensuring that no colored vertex
remaining is wrongly colored. The other is that most of the graph
should remain remain colored. In~\cite{alon1997spectral} a statement
similar to Lemma~\ref{lem:step2} was proved using probabilistic
arguments, whereas our proof uses only deterministic arguments.

The full proof of Theorem~\ref{thm:partial} appears in Section~\ref{sec:Coloring Expander Graphs With Adversarial Planting}.

\subsection{Adversarial expanders with adversarial planting}

\label{sec:A/A}

Here we sketch how Theorem~\ref{thm:A/A} is proved, when the average
degree of the host graph is $d=n^{\delta}$ for some $0<\delta<1$.
Suppose (for the sake of contradiction) that there is a polynomial
time 3-coloring algorithm $\mbox{ALG}$ for the planted $H_{A}/P_{A}$ model.
We show how $\mbox{ALG}$ could be used to solve NP-hard problems, thus implying
P=NP.

Let ${\cal {Q}}$ be a class of sparse graphs on which the problem
of 3-coloring is NP-hard. For concreteness, we can take ${\cal {Q}}$
to be the class of 4-regular graphs. For simplicity, assume further
that if a graph in ${\cal {Q}}$ is 3-colorable, all color classes
are of the same size. (This can easily be enforced, e.g., by making
three copies of the graph.)

Given a graph $Q\in{\cal Q}$ on $n_{1}\simeq n^{1-\delta}<\frac{n}{4d}$
vertices for which one wishes to determine 3-colorability, do the
following. Construct an arbitrary spectral expander $Z$ on $n_{2}=n-n_{1}$
vertices, in which $n_{1}(d-4)$ vertices (called {\em connectors})
have degree $d-1$ and the rest of the vertices have degree $d$.
Plant an arbitrary balanced 3-coloring in $Z$ (each color class has
a third of the connector vertices and a third of the other vertices),
obtaining a graph that we call $Z_{3}$. Now give the graph $G$ that
is a disjoint union of $Q$ and $Z_{3}$ as input to $\mbox{ALG}$.
If $\mbox{ALG}$
finds a 3-coloring in $G$ declare $Q$ to be 3-colorable, and else declare
$Q$ as not having a 3-coloring.

Let us now prove correctness of the above procedure. If $Q$ is not 3-colorable, then clearly $\mbox{ALG}$ cannot 3-color $G$.
It remains to show that if $Q$ is 3-colorable,
then we can trust $\mbox{ALG}$ to find a 3-coloring of $G$. Namely, we need to show the existence of an expander host graph
$H$ and a planted 3-coloring in $H$ that after the removal of the
monochromatic edges produces exactly the graph $G$. An adversary
with unlimited computation power can derive $H$ from $Q$ and $Z$
as follows. It finds a balanced 3-coloring $\chi$ in $Q$. Then it
connects each vertex $v$ of $Q$ to $d-4$ distinct connector vertices
that have exactly the same color as $v$ (under $Z_{3}$). This gives
the graph $H$ which is $d$-regular, and for which planting the 3-coloring
$\chi$ on its $Q$ part and $Z_{3}$ on its $Z$ part gives the graph
$G$. It only remains to prove that $H$ is a spectral expander, but
this is not difficult.

The full proof of Theorem~\ref{thm:A/A} appears in Section~\ref{sec:On the hardness of coloring expander graphs with adversarial planting}.

\subsection{Adversarial expanders with random planting}

\label{sec:A/R}

Here we sketch the proof of Theorem~\ref{thm:A/R} (concerning $H_{A}/P_{R}$).
It would be instructive to first recall how~\cite{alon1997spectral}
completed the 3-coloring algorithm in the $H_{R}/P_{R}$ case. First,
in this case Theorem~\ref{thm:partial} can be considerably strengthened,
showing that one gets a $b$-partial coloring with $\frac{b}{n}$ exponentially
small in $d$. Hence when $d>>\log n$ this by itself recovers the
planted 3-coloring. The difficult case that remains is when $d$ is
sublogarithmic (e.g., $d$ is some large constant independent of $n$).
In this case it is shown in~\cite{alon1997spectral} that the subgraph
induced on the free vertices decomposes into connected components
each of which is smaller than $\log n$. Then each component by itself
can be 3-colored in polynomial time by exhaustive search, finding
a legal 3-coloring (not necessarily the planted one) for the whole
graph.

In the $H_{A}/P_{R}$ model it is still true that Theorem~\ref{thm:partial}
can be strengthened to show that one gets a $b$-partial coloring
with $\frac{b}{n}$ exponentially small in $d$. However, we do not know if
it is true that the subgraph induced on the free vertices decomposes
into connected components smaller than $\log n$. To overcome this,
we add another step to the algorithm (which is not required in the~\cite{alon1997spectral}
setting), which we refer to as {\em safe recoloring}. In this step,
iteratively, if an uncolored vertex $v$ has neighbors colored by
two different colors, then $v$ is colored by the remaining color.
Clearly, if one starts with a $b$-partial coloring, meaning that
all colored vertices agree with the planted coloring, this property
is maintained by safe recoloring. We prove that with high probability
(over choice of random planting), after the recoloring stage the remaining
free vertices break up into connected components of size $O\left(\frac{\lambda^{2}}{d^{2}}\log n\right)$.
Thereafter, a legal 3-coloring can be obtained in polynomial time
using exhaustive search.

The full proof of Theorem~\ref{thm:A/R} appears in Section~\ref{sec:Coloring expander graphs with random color planting}.

\subsection{Algorithm for random graphs with adversarial planting}

\label{sec:R/Aalg}

Here we sketch the proof of Theorem~\ref{thm:R/A}a (concerning an
algorithm for $H_{R}/P_{A}$). Given the negative result for $H_{A}/P_{A}$,
our algorithm must use a property that holds for random host graphs
but need not hold for expander graphs. The property that we use is
that when the degree $d$ is very large, the number of common neighbors
of every two vertices is larger than $O(\frac{n}{d})$. For random
graphs this holds (w.h.p.) whenever $d\ge Cn^{2/3}$ for a sufficiently
large constant $C$, but for expander graphs this property need not
hold. Recall that $b\le O(\frac{n}{d})$ in the $b$-partial coloring
from Theorem~\ref{thm:partial} (because $\lambda = O(\sqrt{d})$). Hence the pigeon-hole principle
implies that every two free vertices $u$ and $v$ had at least one
common neighbor $w$ (common neighbor in the host graph $H$) that
is not free, namely, it is colored. Hence if in the input graph $G$
neither of them have a colored neighbor in the $b$-partial coloring,
it must be that both of them lost their edge to $w$ because of the
planted coloring, meaning that $u$ and $v$ have the same color in
the planted coloring. Consequently, the set $F_{0}$ of all free vertices
with no colored neighbor must be monochromatic in the planted coloring.
This leads to the following algorithm for legally 3-coloring the free
vertices. Guess the color that should be given to the set $F_{0}$.
There are only three possibilities for this. Each of the remaining
free vertices has at least one colored neighbor, and hence at most two
possible colors. Hence we are left with a list-coloring problem with
at most two colors per list. This problem can be solved in polynomial
time by reduction to 2SAT.

The full proof of Theorem~\ref{thm:R/A}a appears in Section~\ref{sec:F 3-coloring random graphs with adversarial color planting}.

\subsection{Hardness for random graphs with adversarial planting}

\label{R/Ahard}

Here we sketch the proof of Theorem~\ref{thm:R/A}b (hardness for
$H_{R}/P_{A}$). The proof plan is similar to that of the proof of
Theorem~\ref{thm:A/A} (hardness for $H_{A}/P_{A}$, see Section~\ref{sec:A/A}),
but making this plan work is considerably more difficult.

Let us start with the proof plan. Suppose (for the sake of contradiction)
that there is a polynomial time 3-coloring algorithm $\mbox{ALG}$ for the
planted $H_{R}/P_{A}$ model with host graphs coming from $G_{n,p}$,
and hence of average degree $d\simeq pn$. We show how $\mbox{ALG}$ could
be used to solve NP-hard problems, thus implying P=NP.

Let ${\cal {Q}}$ be a (carefully chosen, a point that we will return
to later) class of sparse graphs on which the problem of 3-coloring
is NP-hard. As in the proof of Theorem~\ref{thm:A/A}, we may assume
that if a graph in ${\cal {Q}}$ is 3-colorable, all color classes
are of the same size.

Given a graph $Q\in{\cal Q}$ on $n_{1}=n^{\epsilon}$ vertices (for
some small $\epsilon>0$ to be determined later) for which one wishes
to determine 3-colorability, do the following. Construct a random
graph $Z$ on $n_{2}=n-n^{\epsilon}$ vertices, distributed like $G_{n_{2},p}$.
Plant a random balanced 3-coloring in $Z$, obtaining a graph
that we call $Z_{3}$. Now give the graph $G$ that is a disjoint
union of $Q$ and $Z_{3}$ as input to $\mbox{ALG}$. If $\mbox{ALG}$ finds a 3-coloring
declare $Q$ to be 3-colorable, and else declare $Q$ as not having
a 3-coloring.

If $Q$ is not 3-colorable, then clearly $\mbox{ALG}$ cannot 3-color $G$.
What we need to prove is that if $Q$ is 3-colorable, then $\mbox{ALG}$ will
indeed find a 3-coloring of $G$. For this we need to show that the
distribution over graphs $G$ constructed in the above manner (we
speak of distributions because $Z$ is a random) is the same (up to
some small statistical distance) as a distribution that can be generated
by an adversary in the $H_{R}/P_{A}$ model (otherwise we do not know
what $\mbox{ALG}$ would answer given $G$). The difficulty is that the adversary
does not control the host graph (which is random) and its only power
is in choosing the planted coloring. But still, given a random $G_{n,p}$
host graph $H$, we propose the following three step procedure for
the (exponential time) adversary.
\begin{enumerate}
\item If $Q$ is not a vertex induced subgraph of $H$ then {\em fail},
and plant a random 3-coloring in $H$.
\item Else, pick a random vertex-induced copy of $Q$ in $H$ (note that
$H$ could contain more than one copy of $Q$). Let $Z'$ denote the
graph induced on the remaining part of $H$. Let $\chi$ be a balanced
3-coloring of $Q$. If there are two vertices $u,v\in Q$ with $\chi(u)\not=\chi(v)$
which have a common neighbor $w\in Z'$, then {\em fail}, and plant
a random 3-coloring in $H$.
\item Else, extend $\chi$ to a balanced planted 3-coloring on the whole
of $H$, while taking care that if a vertex $w\in Z'$ has a neighbor
$v\in Q$, then $\chi(w)=\chi(v)$. After dropping the monochromatic
edges, one gets a graph $G'$ composed of two components: $Q$, and
another component that we call $Z_{3}'$.
\end{enumerate}
What needs to be shown is that the probability (over choice of $H\in_{R}G_{n,p}$)
of failing in either step~1 or~2 is small, and that conditioned
on not failing, $G'$ has a distribution similar to that of $G$ (equivalently,
$Z_{3}'$ has a distribution similar to $Z_{3}$).

For the first step not to fail, one needs graphs $Q\in{\cal Q}$ to
occur in a random $G_{n,p}$ graph by chance rather than by design.
This requires average degree $d>n^{1/3}$, as shown by the following
proposition (the proof can be found in Section~\ref{sec:Proof-of-Proposition sparce copy}).
\begin{prop}
\label{pro:sparseQ} Let ${\cal Q}$ be an arbitrary class of graphs.
Then either there is a polynomial time algorithm for solving 3-colorability
on every graph in ${\cal Q}$, or ${\cal Q}$ contains graphs that
are unlikely to appear as subgraphs of a random graph from $G_{n,p}$,
if $p\le n^{-\frac{2}{3}}$ (namely, if the average degree
is $d\le n^{\frac{1}{3}}$).
\end{prop}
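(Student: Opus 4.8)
\emph{Proof plan.} I would prove the contrapositive: if $3$-colorability admits no polynomial time algorithm on $\mathcal Q$, then $\mathcal Q$ contains graphs that are unlikely to occur even as (not necessarily induced) subgraphs of $G_{n,p}$ when $p\le n^{-2/3}$. The organizing notion is the \emph{$3$-core} of a graph --- its unique maximal induced subgraph of minimum degree at least $3$, obtained by iteratively deleting vertices of current degree at most $2$. First I would record that, for every fixed constant $t$, there is a polynomial time algorithm that on any graph whose $3$-core has at most $t$ vertices decides $3$-colorability and, if possible, outputs a legal $3$-coloring: peel the graph down to its core while pushing the deleted vertices on a stack; $3$-color the core by brute force in $3^{t}$ time (if the core is not $3$-colorable then neither is the input, the core being an induced subgraph); then pop the stack, colouring each restored vertex in turn, which is always possible because a vertex deleted at degree at most $2$ has at most two already-restored neighbours. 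Therefore, if no polynomial time algorithm handles $\mathcal Q$, then for \emph{every} $t$ the class $\mathcal Q$ must contain a graph $Q$ whose $3$-core $Q'$ has more than $t$ vertices (otherwise the algorithm above, with that value of $t$, would solve $\mathcal Q$ in polynomial time).

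Next I would convert largeness of the core into a density bound and plug it into a first-moment estimate. The handshake inequality gives $|E(Q')|\ge \tfrac{3}{2}|V(Q')|$ for any graph $Q'$ of minimum degree at least $3$. A union bound over all ways to place $V(Q')$ into $[n]$ then yields
\[
\Pr\bigl[\,Q'\subseteq G_{n,p}\,\bigr]\;\le\;n^{|V(Q')|}p^{|E(Q')|}\;\le\;\bigl(n\,p^{3/2}\bigr)^{|V(Q')|}.
\]
For $p\le n^{-2/3}$, i.e. average degree $d\le n^{1/3}$, the base $n\,p^{3/2}$ is at most $1$, and is bounded strictly below $1$ once $p$ is smaller than $n^{-2/3}$ by any fixed constant factor (or any $n^{-\Omega(1)}$ factor). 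Since $t$, hence $|V(Q')|$, may be taken arbitrarily large, the right-hand side tends to $0$; as $Q'\subseteq Q$, so does $\Pr[Q\subseteq G_{n,p}]$. This is exactly the second alternative in the statement.

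The step I expect to be the main obstacle is squeezing a meaningful non-appearance bound out of the regime where $p$ is \emph{equal} to $n^{-2/3}$, at which point $n\,p^{3/2}=1$ and the plain first-moment bound collapses to the trivial value $1$. This is precisely why the core-based dichotomy has to carry a growing parameter: a single dense subgraph in $\mathcal Q$ would not be enough, whereas dense subgraphs on arbitrarily many vertices make the geometric factor $(n\,p^{3/2})^{|V(Q')|}$ vanish even when $n\,p^{3/2}$ sits just barely below $1$. A secondary, purely bookkeeping matter is lining up the size of the hard instance $Q$ (which in the intended application of the proposition has about $n^{\epsilon}$ vertices) with the order $n$ of the random host graph; this influences only constants and the exact sense in which ``unlikely'' is read.
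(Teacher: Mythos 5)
Your peel-to-the-$3$-core dichotomy is close in spirit to the paper's argument, but the case split you chose is not tight enough to handle the boundary $p = n^{-2/3}$, and you correctly flag but do not resolve this. A graph whose $3$-core has minimum degree $\ge 3$ only guarantees $|E(Q')| \ge \tfrac{3}{2}|V(Q')|$, so the first-moment bound $\bigl(n p^{3/2}\bigr)^{|V(Q')|}$ collapses to $1$ at $p = n^{-2/3}$ no matter how large $|V(Q')|$ grows; making $t$ arbitrary does not fix a base equal to $1$. The further concern is that ``$\mathcal Q$ has unbounded $3$-cores'' does not actually exhaust the hard side of the dichotomy: a $3$-core of minimum degree $\ge 3$ and average degree $\le 3$ is $3$-regular, and $3$-colorability of $3$-regular graphs is decidable in polynomial time by Brooks' theorem (only $K_4$ components obstruct). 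So ``no polynomial-time algorithm for $\mathcal Q$'' forces something strictly stronger than large cores.

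The paper exploits exactly this: it calls $\mathcal Q$ \emph{$3$-sparse} if no graph in $\mathcal Q$ has a subgraph of average degree above~$3$. If $\mathcal Q$ is $3$-sparse, peeling degree-$<3$ vertices leaves a $3$-regular (or empty) core, which Brooks' theorem $3$-colors in polynomial time for cores of \emph{any} size; one then extends greedily. If $\mathcal Q$ is not $3$-sparse, some $Q\in\mathcal Q$ has a subgraph $Q'$ on $k$ vertices with strictly more than $\tfrac{3k}{2}$ edges, so $|E(Q')|\ge \tfrac{3k}{2}+1$ (when $k$ is even; $+\tfrac12$ otherwise, which suffices), and the first-moment bound becomes $n^{k}p^{3k/2+1} = \bigl(np^{3/2}\bigr)^{k}\cdot p$, which at $p=n^{-2/3}$ is $n^{-2/3}\to 0$. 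That extra factor of $p$ beyond the ``density exactly $3$'' threshold is precisely what your $3$-core analysis cannot produce. To repair your argument you would need to replace ``bounded $3$-core'' with ``all subgraphs of average degree $\le 3$,'' invoke Brooks' theorem on the resulting $3$-regular core, and then in the complementary case extract a subgraph of average degree strictly above~$3$ rather than merely minimum degree $\ge 3$.
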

Moreover, it is not hard to show that if the average degree is too
large, namely, $d>n^{\frac{1}{2}}$, the second step is likely
to fail (there are likely to be pairs of vertices in $Q$ that have
common neighbors in $Z'$). Hence we restrict attention to average
degrees satisfying $n^{\frac{1}{3}}<d<n^{\frac{1}{2}-\epsilon}$.
Consequently, the class of 4-regular graphs cannot serve as the NP-hard
class ${\cal Q}$ (in contrast to the proof of Theorem~\ref{thm:A/A}),
because any particular 4-regular graph is expected not to be a subgraph
of a random graph of average degree below $\sqrt{n}$. Given the above,
we choose ${\cal Q}$ to be the class of {\em balanced} graphs
of average degree $3.75$. (The choice of $3.75$ is for convenience. With extra work in the proof of Lemma~\ref{lem:3.75}, one can replace $3.75$ by any constant larger than $10/3$ and consequently decrease the value of $\delta_0$ in Lemma~\ref{lem:Q} to any constant above $2/5$.)
\begin{defn}
\label{def:balancedgraph} A graph $Q$ is {\em balanced} if no
subgraph of $Q$ has average degree larger than the average degree
of $Q$.
\end{defn}
This choice of ${\cal Q}$ is justified by the combination of Lemmas~\ref{lem:3.75}
and~\ref{lem:Q}.
\begin{lem}
\label{lem:3.75} The problem of 3-coloring is NP-hard on the class
of balanced graphs of average degree $3.75$.
\end{lem}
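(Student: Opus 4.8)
The plan is to give a polynomial-time reduction from $3$-coloring of $4$-regular graphs, a standard NP-hard problem (the same class is used as ${\cal Q}$ in the proof of Theorem~\ref{thm:A/A}). Starting from a $4$-regular graph $G$ on $N$ vertices, hence with $2N$ edges, I would build a graph $G'$ that (i) is $3$-colorable if and only if $G$ is, (ii) has average degree exactly $3.75$, and (iii) is balanced in the sense of Definition~\ref{def:balancedgraph}. The tension is between (ii) and (iii): a $4$-regular graph already has average degree $4>3.75$, so one cannot reach $3.75$ by hanging sparse gadgets off the outside of $G$, as that would leave $G$ itself as a subgraph of density $4$ and destroy balance. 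The edges of $G$ must instead be diluted ``from the inside'', by substituting for each vertex a sparser gadget that still faithfully carries the inequality constraints of the edges at that vertex.

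Concretely, I would replace each vertex $v$ of $G$ by a gadget $\Gamma_v$ on eight vertices: two ``hubs'' $b^v, x^v$ joined by an edge, together with six further vertices each joined to both hubs, four of which are designated ``ports'' $p^v_1,\dots,p^v_4$ and two of which are inert; so $\Gamma_v$ has $13$ edges (it is $K_{2,6}$ plus one edge inside the side of size two). The point, immediate to check, is that in any proper $3$-coloring the two hubs receive distinct colors and then every vertex adjacent to both of them --- in particular all four ports --- is forced to the unique remaining color, while conversely any prescribed color can be realized at all the ports; thus $\Gamma_v$ pins its four ports to a common but otherwise free color. Now, fixing for each $v$ a bijection between the $4$ edges of $G$ at $v$ and the ports of $\Gamma_v$, add for each edge $uv$ of $G$ a single edge joining the corresponding port of $\Gamma_u$ to the corresponding port of $\Gamma_v$, and call the result $G'$. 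Reading colors back and forth between $v$ and the ports of $\Gamma_v$ shows $G'$ is $3$-colorable iff $G$ is: a proper coloring of $G$ extends by coloring all ports of $\Gamma_v$ with $c(v)$ and the hubs with the other two colors, and then each added edge joins distinct colors $c(u)\ne c(v)$; conversely a proper coloring of $G'$ assigns a single color $c(v)$ to the ports of $\Gamma_v$, and the added edges force $c(u)\ne c(v)$ along every edge of $G$. Finally $G'$ has $8N$ vertices and $13N+2N=15N$ edges, so its average degree is exactly $30/8 = 3.75$; the two inert vertices per gadget are precisely what tunes the count to hit $3.75$ on the nose.

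The step that requires genuine care, and the one I expect to be the main obstacle, is checking that $G'$ is balanced, i.e.\ $|E_{G'}(T)| \le \tfrac{15}{8}|T|$ for every vertex set $T$, where $E_{G'}(T)$ is the set of edges of $G'$ with both ends in $T$. I would argue gadget by gadget. Writing $T_v = T\cap V(\Gamma_v)$, the structure of $\Gamma_v$ gives the elementary bounds $|E_{\Gamma_v}(T_v)| \le 2|T_v|-3$ if both hubs lie in $T_v$, $\le |T_v|-1$ if exactly one does, and $=0$ otherwise. Each port carries exactly one added edge, so charging half of each added edge inside $T$ to each of its two endpoints' gadgets charges every gadget at most $2$ in total, and the total charge equals the number of added edges with both endpoints in $T$; hence $|E_{G'}(T)| = \sum_v\bigl(|E_{\Gamma_v}(T_v)| + \mathrm{charge}_v\bigr)$. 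It then remains to verify, in each of the three cases, that $|E_{\Gamma_v}(T_v)| + \mathrm{charge}_v \le \tfrac{15}{8}|T_v|$; the binding case is a full gadget with all four ports matched, where the left side is $13+2 = 15 = \tfrac{15}{8}\cdot 8$ --- which is exactly why the gadget has $8$ vertices and $13$ internal edges. Summing over $v$ yields $|E_{G'}(T)| \le \tfrac{15}{8}|T|$ with equality for $T = V(G')$, so $G'$ is balanced of average degree $3.75$, and since $G'$ is $3$-colorable iff $G$ is, $3$-coloring is NP-hard on this class. Pushing the constant below $3.75$ toward the value $10/3$ mentioned earlier amounts to using a sparser port-equalizing gadget and redoing this balance calculation more carefully; the simple gadget above is all that is needed for the stated constant.
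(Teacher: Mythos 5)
Your proof follows the same general blueprint as the paper's -- a polynomial-time reduction from $3$-coloring $4$-regular graphs via vertex-substitution by a color-equalizing gadget, followed by a verification that the resulting graph is balanced -- but the gadget itself is genuinely different. The paper replaces each vertex of the $4$-regular graph by a $16$-vertex, $28$-edge gadget and verifies balance by partitioning it into four disjoint $4$-vertex blocks $\{v_i,u_i,A_i,B_i\}$, each of which contributes average degree at most $\frac{3\cdot 4 + 3}{4} = 3.75$. You instead use an $8$-vertex gadget ($K_{2,6}$ with an added hub--hub edge, two of the six pendants left inert to tune the edge count) and verify balance by a per-gadget edge-charging argument. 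Both choices hit average degree exactly $3.75$; your gadget is smaller and the charging scheme makes the balance check quite transparent. You also correctly observe why gadget substitution (rather than attaching gadgets externally) is forced by the balance constraint.

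One small repair is needed in the balance check. You state that the charge to each gadget is "at most $2$" and then verify the three cases under that bound, but the uniform bound $\mathrm{charge}_v\le 2$ is not strong enough for small intersections: with $T_v$ a single hub you would get $0 + 2 > \frac{15}{8}\cdot 1$, and with $T_v$ a single non-hub you would get $0+2 > \frac{15}{8}\cdot 1$. The verification goes through once you use the bound
\[
\mathrm{charge}_v \;\le\; \tfrac12\,\bigl|\,T_v\cap\{\text{ports of }\Gamma_v\}\,\bigr| \;\le\; \tfrac12\min\bigl(|T_v|-h_v,\,4\bigr),
\]
where $h_v\in\{0,1,2\}$ is the number of hubs in $T_v$, instead of the uniform $2$. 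With this refinement each of the three cases (both hubs in $T_v$, one hub, no hub) checks out for every admissible $|T_v|\in\{0,\dots,8\}$, with equality exactly at the full-gadget case you already flag as binding ($13+2 = \frac{15}{8}\cdot 8$). The rest of the argument -- the equivalence of $3$-colorability between $G$ and $G'$, the arithmetic giving $|E(G')|/|V(G')| = 15/8$, and the identification of the tight case -- is correct. (The paper additionally arranges, by taking three disjoint copies, that a $3$-colorable output has a balanced $3$-coloring; that is used downstream but is not part of the lemma as stated, so your omission of it is fine here.)
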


\begin{lem}
\label{lem:Q} Let $Q$ be an arbitrary balanced graph on $n^{\epsilon}$
vertices and of average degree $3.75$. 
Then a random graph of degree
$d>n^{\delta_0}$ is likely to contain $Q$ as a vertex induced subgraph, for $\delta_0 = \frac{7 + 16\epsilon}{15}$.
\end{lem}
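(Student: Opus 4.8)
The plan is to run the second moment method on the number of copies of $Q$ in the random host graph $H\sim G_{n,p}$, where $p=d/n\ge n^{\delta_0-1}$; in fact we think of $p=n^{\delta-1}$ with $\delta>\delta_0$ bounded away from $\delta_0$ (set $\eta:=\delta-\delta_0>0$), which is the regime of Theorem~\ref{thm:R/A}b, and we pass to an \emph{induced} copy at the end essentially for free, using that $p$ is so small that a random set of $n^\epsilon$ vertices spans almost no edges. Write $v:=n^\epsilon$ for the number of vertices of $Q$; since $Q$ has average degree $3.75$ it has exactly $e:=\tfrac{15}{8}v$ edges, and balancedness says that every subgraph satisfies $e(Q[A])\le\tfrac{15}{8}|A|$, i.e.\ the maximum density of $Q$ is $m(Q)=e/v=\tfrac{15}{8}$. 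The exponent $\delta_0$ is tuned so that $p$ sits above the containment threshold $n^{-1/m(Q)}=n^{-8/15}$: indeed $\delta-1=\tfrac{16\epsilon-8}{15}+\eta>-\tfrac{8}{15}$. Let $Y$ be the number of labelled copies of $Q$ in $H$, i.e.\ the number of injections $\phi\colon V(Q)\hookrightarrow[n]$ carrying every edge of $Q$ to an edge of $H$; then $\mathbb{E}[Y]=n^{\underline v}p^{e}=(1+o(1))\,(np^{15/8})^{v}$ (using $n^{\underline v}=n^{v}(1-o(1))$ since $v^2/n\to0$), and $np^{15/8}=n^{2\epsilon+(15/8)\eta}\to\infty$.

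For the second moment I would write $\mathbb{E}[Y^2]=\sum_{\phi,\psi}\Pr[\phi,\psi\text{ both copies}]$ and split according to the overlap size $a=|\phi(V(Q))\cap\psi(V(Q))|$. Put $A=\phi^{-1}(\phi(V(Q))\cap\psi(V(Q)))$; the set of pairs that both $\phi$ and $\psi$ force to be edges injects into $E(Q[A])$, so by balancedness that number is at most $\tfrac{15}{8}a$ and hence $\Pr[\phi,\psi\text{ both copies}]\le p^{2e-(15/8)a}$. The number of ordered pairs $(\phi,\psi)$ with overlap exactly $a$ is at most $n^{\underline v}\binom{v}{a}v^{a}n^{v-a}$ (pick $\phi$; pick the $a$ vertices of $Q$ that land in the overlap on the $\psi$-side; pick their images inside $\phi(V(Q))$; pick the remaining $v-a$ images in $[n]\setminus\phi(V(Q))$). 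Combining and factoring out $\mathbb{E}[Y]^2=(n^{\underline v}p^{e})^2$ gives
\[
\mathbb{E}[Y^2]\ \le\ (1+o(1))\,\mathbb{E}[Y]^2\sum_{a=0}^{v}\binom{v}{a}\Bigl(\frac{v}{np^{15/8}}\Bigr)^{a}\ =\ (1+o(1))\,\mathbb{E}[Y]^2\Bigl(1+\frac{v}{np^{15/8}}\Bigr)^{v}.
\]
The key point is that $\dfrac{v^2}{np^{15/8}}=\dfrac{n^{2\epsilon}}{n^{2\epsilon+(15/8)\eta}}=n^{-(15/8)\eta}\to0$, so $\bigl(1+\tfrac{v}{np^{15/8}}\bigr)^{v}\le\exp\!\bigl(n^{-(15/8)\eta}\bigr)=1+o(1)$; therefore $\operatorname{Var}(Y)=o(\mathbb{E}[Y]^2)$ and Chebyshev gives $\Pr[Y<\tfrac12\mathbb{E}[Y]]\le 4\operatorname{Var}(Y)/\mathbb{E}[Y]^2=o(1)$.

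It remains to upgrade a subgraph copy to an induced one. Because $\delta<\tfrac12$ and $\epsilon$ is small we have $\binom{v}{2}p\le\tfrac12 n^{2\epsilon+\delta-1}=o(1)$. For a fixed $\phi$, conditioning on $\phi$ being a copy leaves the other $\binom v2-e$ pairs inside $\phi(V(Q))$ i.i.d.\ $\mathrm{Bernoulli}(p)$, so the conditional probability of an extra edge is $1-(1-p)^{\binom v2-e}\le\binom v2 p$. Hence, with $X$ the number of \emph{induced} labelled copies, $\mathbb{E}[Y-X]\le\binom v2 p\cdot\mathbb{E}[Y]$, and Markov gives $\Pr[Y-X\ge\tfrac13\mathbb{E}[Y]]\le 3\binom v2 p=o(1)$. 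Intersecting with the event $Y\ge\tfrac12\mathbb{E}[Y]$ yields $X\ge\tfrac16\mathbb{E}[Y]>0$ with probability $1-o(1)$, i.e.\ $H$ contains an induced copy of $Q$ w.h.p. (If $d>n^{\delta_0}$ is only assumed without a polynomial gap, the same computation still gives a positive constant probability.)

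\textbf{Main obstacle.} The standard second-moment threshold argument is for a \emph{fixed} forbidden graph, whereas here $v(Q)=n^{\epsilon}\to\infty$, so the overlap sum has $v+1$ terms and one must verify it is dominated by the $a=0,1$ contributions; this is precisely what forces the strict inequality $\delta>\delta_0$ (equivalently the ``$+16\epsilon$'' correction to the threshold exponent $\tfrac{7}{15}$) and is where balancedness is essential, since it bounds $e(Q[A])$ uniformly over all overlap sets $A$. The rest — the falling-factorial estimates and the passage to an induced copy — is routine asymptotics.
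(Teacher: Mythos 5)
Your argument is correct and coincides with the paper's approach: both run a second moment computation on the count of (labelled, or partition-obeying) copies of $Q$ in $G_{n,p}$, using balancedness to bound the edges that two overlapping copies share so as to show $\operatorname{Var}(Y)=o(\mathbb{E}[Y]^2)$, and both pass to an induced copy with a cheap additional step. The paper's Theorem~\ref{thm:Hardness of 3-coloring random graphs with adversarial planted 3-coloring} proves the slightly stronger statement that the copy also has no two vertices with a common neighbor outside it (which Theorem~\ref{thm:R/A}b needs), and it organizes the overlap sum by splitting at $t=k/2$ rather than your cleaner closed-form binomial bound, but these are minor technical variations on the same calculation.
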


The above implies that if ${\cal Q}$ is the class of balanced graphs
of average degree $3.75$ and if $n^{\delta_0}<d<n^{\frac{1}{2}-\epsilon}$ (this interval is nonempty for $\epsilon < \frac{1}{62}$),
then the adversary is likely not to fail in steps~1 and~2. What
remains to prove is that the distribution over $Z_{3}'$ is statistically
close to the distribution over $Z_{3}$. (It may seem strange that this is needed, given that $Q$ is isolated from $Z_3/Z_3'$.
However, as $Z_3'$ is constructed by a procedure that depends on $Q$ and is not known to run in polynomial time, we need to argue that
it does not contain information that can be used by a 3-coloring algorithm for $Q$. Being statistically close to the polynomial time constructible $Z_3$ serves this purpose.)
Proving this claim is nontrivial. Our proof for this claim is partly inspired
by work of~\cite{Juels98hidingcliques} that showed that the distribution
of graphs from $G_{n,\frac{1}{2}}$ in which one plants a random clique
of size $\frac{3}{2}\log n$ is statistically close to the $G_{n,\frac{1}{2}}$
distribution (with no planting).

Full proofs for this section appear in Section~\ref{sec:G Hardness of 3-coloring random graphs with adversarial planted 3-coloring}
(see Theorem~\ref{thm:Hardnes adv planting on gnp full}).

\subsection{Some open questions}

An intriguing question left open by Theorem~\ref{thm:R/A}b is the
following:

{\em Is there a polynomial time 3-coloring algorithm for the $H_{R}/P_{A}$
model at average degrees significantly below $n^{\frac{1}{3}}$?}

We believe that the hosted coloring framework (and similar frameworks
for other NP-hard problems) is a fertile ground for further research.
A fundamental question is whether the choice of host graph matters
at all. Specifically:

{\em Does Theorem~\ref{thm:A/R} (existence of 3-coloring algorithms
for $H_{A}/P_{R}$) continue to hold if the class of host graphs is
that of all regular graphs (with no restriction on $\lambda$)?}

The answer to the above question is positive for host graphs of minimum degree linear in $n$ (left as exercise to the reader), and it will be very surprising if the answer is positive in general.

\subsection*{Acknowledgements}
Work supported in part by the Israel Science Foundation (grant No.
621/12) and by the I-CORE Program of the Planning and Budgeting Committee
and the Israel Science Foundation (grant No. 4/11).

\bibliographystyle{plain}
\bibliography{bib}

\appendix

\section{\label{sec:Organization of the technical sections}Organization of the technical sections }

The technical sections are organized in an order somewhat different than that of the main text. Specifically, we consider the models $H_A/P_R$, $H_A/P_A$ and $H_R/P_A$ one by one (the model $H_R/P_R$ need not be considered because it is handled in~\cite{alon1997spectral}). Consequently, different parts of the proof of Theorem~\ref{thm:partial} appear in different sections.


Section~\ref{app:AR} contains all proofs for the model with an adversarial expander host graph and a random planted coloring, namely, $H_A/P_R$.
It is divided into subsections as follows.
Section~\ref{sec:The eigenvalues of an expander with a planted coloring} analyzes the spectrum of an expander graph with a random
planted coloring (this relates to Lemma~\ref{lem:step1} in the $H_{A}/P_{R}$ model). Specifically,
we show a generalization to Proposition~2.1 in \cite{alon1997spectral},
see Theorem~\ref{thm:Main}.
Section~\ref{sec:Coloring expander graphs with random color planting}
presents in a unified manner the algorithm that 3-color graphs in the $H_{A}/P_{R}$ model, combining into one algorithm the parts of Theorem~\ref{thm:partial} related to the $H_A/P_R$ model and Theorem~\ref{thm:A/R}.
The proofs related to this algorithm are presented in Section~\ref{sub:Proof-Of-Theorem MAin Alg}.
Our first
step is to show (using Theorem~\ref{thm:Main}) that a spectral clustering
algorithm gives a good approximation to the planted coloring. We provide
a new spectral clustering algorithm that is based on sampling within the lowest eigenvectors, leading to improved efficiency
(in the case of random $k$-color planting,
the algorithm's expected running time is roughly $e^k n$ as
opposed to $n^{k}$ in \cite{alon1997spectral}) -- see
Lemma~\ref{lem:Eigenvector approximation}. In the rest of the section
we show how to attain a full legal coloring. Our proofs of Lemmas~\ref{lem:step2} and~\ref{lem:step3} are formulated in terms of a parameter $sb$ that upper bounds the number of vertices with ``bad statistics". The actual value of this parameter depends on whether
the planted coloring is random (as in Lemma~\ref{lem:SB bound the random planting Case}) or adversarial, but other than that the proofs do not assume that the planting is random. Instead they use the expander mixing
lemma. The place where we do use the randomness of the planted coloring
is in Lemma~\ref{lem:Safe Recoloring} that shows that after the safe
recoloring stage the remaining free vertices break up into connected
components of size $O\left(\frac{\lambda^{2}}{d^{2}}\log n\right)$.
(This bound is smaller and hence better than the bound of $\log_{d}n$ in~\cite{alon1997spectral}).
The analysis of Lemma~\ref{lem:Safe Recoloring} is tailored to the
randomness of the planted coloring and to the recoloring step that we
added.

Section~\ref{app:AA} contains all proofs for the model with an adversarial expander host graph and an adversarial planted balanced coloring, namely, $H_A/P_A$.
Section~\ref{sec:Coloring Expander Graphs With Adversarial Planting} shows how to obtain a $b$-partial coloring for graphs in the $H_{A}/P_{A}$ model (Theorem~\ref{thm:partial}). For $H_A/P_A$, Lemma~\ref{lem: eigen values after arbitrary planting} replaces Theorem~\ref{thm:Main} in the spectral analysis. Section~\ref{sec:On the hardness of coloring expander graphs with adversarial planting}
proves Theorem~\ref{thm:A/A}
(hardness result for the $H_{A}/P_{A}$ model).

Section~\ref{app:RA} contains all proofs for the model with a random host graph and an adversarial planted balanced coloring, namely, $H_R/P_A$.
Section~\ref{sec:F 3-coloring random graphs with adversarial color planting}
proves Theorem~\ref{thm:R/A}a (a 3-coloring algorithm for $H_R/P_A$). Section~\ref{sec:G Hardness of 3-coloring random graphs with adversarial planted 3-coloring}
proves Theorem~\ref{thm:R/A}b (hardness for $H_R/P_A$).

Section~\ref{sec:k>3} discusses extensions of our results to $k > 3$. The main new proof there is in Section~\ref{sub:G.2 Hardness result for random graphs with adversarial 4-color planting} which shows a hardness result for $k$-coloring in the $H_R/P_A$ model (for
$k>3$). The range of degrees for this proof is not upper bounded (unlike the case of 3-coloring in Theorem~\ref{thm:R/A}). This implies that
Theorem~\ref{thm:R/A}a does not extend to $k\ge4$,
unless $P=NP$.

Section~\ref{sec:counterintuitive} demonstrates (within our context of planted models) that even if a graph $G$ contains a vertex induced subgraph that is difficult to 3-color, this by itself does not imply that it is difficult to 3-color $G$. It is useful to bear this fact in mind when trying to prove NP-hardness results within our framework.

Section~\ref{sec:Useful-Facts} contains some useful facts (Chernoff bounds, the expander mixing lemma, and more) that are used throughout this manuscript.

\subsection{Extended notations and definitions}
\label{sec:notation}

Given a graph $G$ we denote its adjacency matrix by $A_{G}$. We
denote $A_{G}$'s normalized eigenvectors by $e_{1}\left(G\right),e_{2}\left(G\right),..,e_{n}\left(G\right)$
and the corresponding real eigenvalues by $\lambda_{1}\left(G\right)\ge\lambda_{2}\left(G\right)\ge...\ge\lambda_{n}\left(G\right)$.
We denote by $V\left(G\right)$ the vertex set of the graph and by
$E\left(G\right)$ the edge set of the graph. We denote by $d_{v}$
the degree of $v\in V\left(G\right)$. For any $H\subseteq V\left(G\right)$
let $G_{H}$ be the induced sub-graph of $G$ on the vertices of $H$.
Given a vertex $v\in V\left(G\right)$ we denote by $N\left(v\right)$
the neighborhood of $v$, excluding $v$. For a set $S\subseteq V$,
$N\left(S\right)$ denotes the neighborhood of $S$, i.e. , $N\left(S\right)=\cup_{s\in S}N\left(s\right)$
(note that with this definition we can have $S\cap N\left(S\right)\neq\emptyset$).
For $S,T\subseteq V\left(G\right)$, $E_{G}\left(S,T\right)$ denotes
the number of edges between $S$ and $T$ in $G$. If $S,T$ are not
disjoint then the edges in the induced sub-graph of $S\cap T$ are
counted twice.

Given a vector $\vec{v}\in\mathbb{R}^{n}$ we denote by $\left(\vec{v}\right)_{i}$
the $i$-th coordinate of $\vec{v}$.

\begin{defn}
{[}Graph Sparsity{]}. \label{def:=00005BGraph-Sparsity=00005D}Let
$G=\left(V,E\right)$ be a $d$-regular graph, $S\subseteq V$. The
sparsity of $S$ is
\[
\phi\left(S\right):=\frac{E_{G}\left(S,V-S\right)}{\frac{d}{n}\left|S\right|\left|V-S\right|},
\]
and the sparsity of the graph $G$ is
\[
\phi\left(G\right):=min_{S\subseteq V}\phi\left(S\right)
\]

\end{defn}

\begin{defn}
{[}Vertex expander{]}.\label{def:=00005BVertex-expander=00005D.}
A graph $G$ is an $\left(s,\alpha\right)$ vertex expander if for
every $S\subseteq V\left(G\right)$ of size at most $s$ it holds
that the neighborhood of $S$ is of size at least $\alpha\left|S\right|$.
\end{defn}

\begin{defn}
{[}$\lambda-expander${]}. \label{def:expander}A graph $G$ is a
$\lambda$-expander if $\max\left(\lambda_{2}\left(G\right),\left|\lambda_{n}\left(G\right)\right|\right) \le \lambda$.
\end{defn}

\begin{defn}
Let $P_{3}\left(G\right):=\left(V_{1}\cup V_{2}\cup V_{3},E'\right)$
be the following random graph:
\begin{enumerate}
\item Each vertex of $V$ is in each $V_{i}$ with probability $\frac{1}{3}$,
independently over all the vertices.
\item Each edge of $\left(u,v\right)\in E$ is in $E'$ iff $u\in V_{i}$
and $v\in V_{j}$ for $i\ne j$.
\end{enumerate}
\end{defn}

\begin{defn}
For $v\in V_{i}$ we say $v$ is colored with $i$. We call this coloring
of $P_{3}\left(G\right)$ the planted coloring.
\end{defn}

\begin{defn}
Given $P_{3}\left(G\right)$ and $i \in \{1,2,3\}$ we denote by $G_{i}$ the subgraph of
$G$ induced on the vertex set $V_{i}$.
\end{defn}

\begin{defn}
\label{Def:x,y}Define the following vectors in $\mathbb{R}^{n}$ (where $i$ ranges from~1 to~$n$).

$\left(\vec{1}_{n}\right)_{i}=1$, $\left(\vec{p_1}\right)_{i}:=\begin{cases}
1 & v_{i}\in V_1\\
0 & v_{i}\notin V_1
\end{cases}$, $\left(\vec{p_2}\right)_i:=\begin{cases}
1 & v_{i}\in V_2\\
0 & v_{i}\notin V_2
\end{cases}$, $\left(\vec{p_3}\right)_i:=\begin{cases}
1 & v_{i}\in V_3\\
0 & v_{i}\notin V_3
\end{cases}$,

$\left(\vec{x}\right)_{i}:=\begin{cases}
2 & v_{i}\in V_{1}\\
-1 & v_{i}\in V_{2}\\
-1 & v_{i}\in V_{3}
\end{cases}$ and $\left(\vec{y}\right)_{i}:=\begin{cases}
0 & v_{i}\in V_{1}\\
1 & v_{i}\in V_{2}\\
-1 & v_{i}\in V_{3}
\end{cases}$.

We denote by $\bar{v}$ a normalized vector, i.e., $\bar{x}:=\vec{x}/\left|\vec{x}\right|_{2}$.
\end{defn}

\section{Adversarial host and random planting, $H_A/P_R$}
\label{app:AR}

Section~\ref{app:AR} contains the proofs for the model with an adversarial expander host graph and a random planted coloring, namely, $H_A/P_R$.
Section~\ref{sec:The eigenvalues of an expander with a planted coloring} analyzes the spectrum of an expander graph with a random
planted coloring (this relates to Lemma~\ref{lem:step1} in the $H_{A}/P_{R}$ model).
Section~\ref{sec:Coloring expander graphs with random color planting}
presents in a unified manner the algorithm that 3-colors graphs in the $H_{A}/P_{R}$ model, combining into one algorithm the parts of Theorem~\ref{thm:partial} related to the $H_A/P_R$ model and Theorem~\ref{thm:A/R}.
The proofs related to this algorithm are presented in Section~\ref{sub:Proof-Of-Theorem MAin Alg}.

\subsection{\label{sec:The eigenvalues of an expander with a planted coloring}The
eigenvalues of an expander with a random planted coloring}

\begin{thm}
\label{thm:Main}Let $G$ be a $d$ regular $\lambda$-expander, where
$\lambda\le\left(\frac{1}{3}-\epsilon\right)d$ (for $0<\epsilon<\frac{1}{3}$).
\textup{Let $G'=P_{3}\left(G\right)$ (the graph $G$ after a random
3-color planting).  Then with high probability the following holds.}

\begin{enumerate}
\item The eigenvalues of $G'$ have the following spectrum.

\begin{enumerate}
\item $\lambda_{1}\left(G'\right)\ge\left(1-2^{-\Omega(d)}\right)\frac{2}{3}d$.
\item $\lambda_{n}\left(G'\right)\le\lambda_{n-1}\left(G'\right)\le-\frac{1}{3}d\left(1-\frac{3}{\sqrt{d}}\right)$.
\item $|\lambda_{i}\left(G'\right)|\le2\lambda+O\left(\sqrt{d}\right)$
for all $2\le i\le n-2$.
\end{enumerate}
\item The following vectors exist.

\begin{enumerate}
\item $\vec{\epsilon}_{\bar{x}}$ such that $\left\Vert \vec{\epsilon}_{\bar{x}}\right\Vert _{2}=O\left(\frac{1}{\sqrt{d}}\right)$
and $\bar{x}+\vec{\epsilon}_{\bar{x}}\in\text{span}\left(\left\{ e_{i}\left(G'\right)\right\} _{i\in\left\{ n-1,n\right\} }\right)$.
\item $\vec{\epsilon}_{\bar{y}}$ such that $\left\Vert \vec{\epsilon}_{\bar{y}}\right\Vert _{2}=O\left(\frac{1}{\sqrt{d}}\right)$
and $\bar{y}+\vec{\epsilon}_{\bar{y}}\in\text{span}\left(\left\{ e_{i}\left(G'\right)\right\} _{i\in\left\{ n-1,n\right\} }\right)$.
\end{enumerate}
\end{enumerate}
\end{thm}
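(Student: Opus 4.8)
The plan is to work with the decomposition $A_{G'} = A_G - M$, where $M := A_{G_1}\oplus A_{G_2}\oplus A_{G_3}$ is the adjacency matrix of the monochromatic edges deleted by the planting; it is block diagonal with respect to the random partition $V = V_1\cup V_2\cup V_3$. The virtue of this split is that the expander hypothesis governs $A_G$ (largest eigenvalue $d$ with eigenvector $\vec{1}_n$, all other eigenvalues in $[-\lambda,\lambda]$) while the randomness is confined to $M$. I would begin by recording the concentration facts produced by the random planting: by Chernoff bounds and a union bound over the $n$ vertices, with high probability $|V_i| = \frac n3(1\pm o(1))$, every vertex of $G$ has $\frac d3 \pm O(\sqrt{d\log n})$ neighbours in each color class, and --- since one vertex's color affects at most $d$ edges --- the number of deleted edges equals $\frac{|E(G)|}{3}\bigl(1\pm 2^{-\Omega(d)}\bigr)$. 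The last statement already gives part~1(a): the Rayleigh quotient of $\vec{1}_n$ for $A_{G'}$ is the average degree of $G'$, which is $\frac23 d\bigl(1-2^{-\Omega(d)}\bigr)$.

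Next I would pin down the structure of $M$. Since $(A_{G_i}\vec{1}_{V_i})_v$ is the degree of $v$ inside $G_i$, each $\vec{1}_{V_i}$ is an approximate eigenvector of $A_{G_i}$ with eigenvalue $\tfrac d3 + O(\sqrt{d\log n})$; using the gap $\tfrac d3 - \lambda \ge \epsilon d$, the top eigenvector of $A_{G_i}$ is then $O\bigl(\tfrac1{\epsilon\sqrt d}\bigr)$-close to $\overline{\vec{1}_{V_i}}$ (bounding the sum of the per-vertex variances rather than using a crude worst-case estimate). By Cauchy's interlacing theorem applied to the principal submatrix $A_{G_i}$ of $A_G$, every other eigenvalue of $A_{G_i}$ lies in $[\lambda_n(G),\lambda_2(G)]\subseteq[-\lambda,\lambda]$. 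Hence $M = M_{\mathrm{big}} + M_{\mathrm{small}}$ with $\operatorname{rank} M_{\mathrm{big}} = 3$, its range spanned by the slightly perturbed vectors $\vec{p_1},\vec{p_2},\vec{p_3}$, its three nonzero eigenvalues all $\approx \tfrac d3$, and $\|M_{\mathrm{small}}\|_2 \le \lambda$. In parallel, for $v\in V_j$ the entry $(A_{G'}\vec{p_i})_v$ is $0$ when $i=j$ and equals $|N_G(v)\cap V_i| = \tfrac d3 \pm O(\sqrt{d\log n})$ otherwise, so $A_{G'}$ maps $S := \operatorname{span}(\vec{p_1},\vec{p_2},\vec{p_3})$ into a small $\ell_2$-neighbourhood of itself and acts on $S$ approximately as $\tfrac d3(J-I)$ in the $\vec{p}$-coordinates, whose eigenvalues are $\tfrac{2d}3$ on $\vec{1}_n$ and $-\tfrac d3$ (multiplicity two) on the nearly orthogonal directions $\vec{x}$ and $\vec{y}$. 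A second-moment computation then gives $\|(A_{G'}+\tfrac d3 I)\bar{x}\|_2 = O(\sqrt d)$ and $\|(A_{G'}+\tfrac d3 I)\bar{y}\|_2 = O(\sqrt d)$ with high probability, which is the quantitative form of the assertion that $\bar x,\bar y$ are approximate eigenvectors of eigenvalue $-\tfrac d3$.

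For the remaining eigenvalues I would write $A_{G'} = B - M_{\mathrm{big}}$ with $B := A_G - M_{\mathrm{small}}$. Weyl's inequality gives $\lambda_1(B)\le d+\lambda$ and $\lambda_i(B)\in[-2\lambda,2\lambda]$ for all $i\ge 2$; subtracting the rank-$3$ matrix $M_{\mathrm{big}}$ shifts eigenvalue indices by at most three, so $|\lambda_i(G')|\le 2\lambda + O(\sqrt d)$ for $2\le i\le n-2$, where the two or three extreme indices are handled directly by Rayleigh-quotient estimates on $S$ and on $S^\perp := \{\,w : \sum_{v\in V_i}w_v = 0\ \text{for all } i\,\}$ (augmented by $\vec{1}_n$). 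This establishes part~1(c). For part~1(b): the Rayleigh quotient of $A_{G'}$ on the two-dimensional space $\operatorname{span}(\bar x,\bar y)$ is at most $-\tfrac d3 + O(\sqrt d)$ by the previous paragraph, so Courant--Fischer gives $\lambda_{n-1}(G') \le -\tfrac d3 + O(\sqrt d)$, which with the explicit constants is at most $-\tfrac d3\bigl(1-\tfrac3{\sqrt d}\bigr)$.

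Finally, part~2 follows by combining the two ingredients. Expanding $\bar{x} = \sum_i c_i\, e_i(G')$, the bound $\|(A_{G'}+\tfrac d3 I)\bar x\|_2 = O(\sqrt d)$ becomes $\sum_i c_i^2\bigl(\lambda_i(G')+\tfrac d3\bigr)^2 = O(d)$, while part~1(c) shows $|\lambda_i(G')+\tfrac d3| = \Omega(d)$ for every $i\notin\{n-1,n\}$ --- this is where the hypothesis $\lambda\le(\tfrac13-\epsilon)d$ enters, since it keeps the bulk of the spectrum bounded away from $-\tfrac d3$. Dividing, $\sum_{i\notin\{n-1,n\}}c_i^2 = O(1/d)$, so $\vec{\epsilon}_{\bar x} := -\sum_{i\notin\{n-1,n\}}c_i\, e_i(G')$ satisfies $\|\vec{\epsilon}_{\bar x}\|_2 = O(1/\sqrt d)$ and $\bar x + \vec{\epsilon}_{\bar x}\in\operatorname{span}\bigl(e_{n-1}(G'),e_n(G')\bigr)$, giving part~2(a); part~2(b) is the same argument with $\bar y$. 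I expect the main obstacle to be the control of the bulk of the spectrum --- the statement that planting a random $3$-coloring in an \emph{adversarially chosen} expander creates exactly two new very negative eigenvalues and leaves all others within the $\approx 2\lambda$ window. This plays the role that $G_{n,p}$ random-matrix concentration plays in~\cite{alon1997spectral}, and here it must be obtained from the interplay between Cauchy interlacing (for the rank bound on $M$) and the balance and degree concentration of the random partition (for the error terms).
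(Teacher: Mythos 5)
Your proof is correct in its essentials and rests on the same skeleton as the paper's: the decomposition $A_{G'}=A_G-M$ with $M=A_{G_1}\oplus A_{G_2}\oplus A_{G_3}$, the observation that the planting vectors $\bar 1,\bar x,\bar y$ (and $\bar p_i$) are $O(\sqrt d)$-approximate eigenvectors in $\ell_2$, Cauchy interlacing to localize the spectrum of each $A_{G_i}$, and the generic ``approximate eigenvector $\Rightarrow$ small projection onto far eigenspaces'' computation that you use for part~2 (the paper isolates this as Lemma~\ref{lem:Almost Eigan Vectors implies more}). Where you genuinely diverge is in establishing part~1(c). The paper picks an eigenvector $e_i(G')$ with $\lambda_i(G')$ far from $-d/3$ and $2d/3$, expands $\|A_G e_i(G')\|_2$ and $\|\sum_j A_{G_j}e_i(G')\|_2$ in the corresponding eigenbases, uses that $e_i(G')$ is essentially orthogonal to $\bar 1,\bar p_1,\bar p_2,\bar p_3$ to kill the large terms, and finally proves $|S_\delta|=3$ by a contradiction argument. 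You instead split $M$ into a rank-$3$ part $M_{\mathrm{big}}$ and a residual with $\|M_{\mathrm{small}}\|_2\le\lambda$, apply Weyl to $B=A_G-M_{\mathrm{small}}$, and then treat the rank-$3$ subtraction by interlacing. This is a clean and arguably more standard packaging. Two caveats you should address when filling this in: (i) the rank-$3$ interlacing only shifts indices one-sidedly --- it bounds $\lambda_i(A_{G'})$ from above for all $i$ and from below only for $i\le n-3$, so the lower bound on $\lambda_{n-2}(A_{G'})$ must come from the direct Rayleigh estimate on $S^\perp+\mathrm{span}(\vec 1_n)$ that you allude to, and it needs to be carried out carefully because the top eigenvectors of $A_{G_i}$ are only $O(1/\sqrt d)$-close to $\bar p_i$, not equal; and (ii) the pointwise Chernoff statement ``every vertex has $\frac d3\pm O(\sqrt{d\log n})$ neighbours in each colour class'' is vacuous for $d=O(\log n)$, a regime the theorem is meant to cover --- the right tool (which you later gesture at) is the second-moment/McDiarmid bound on $\sum_v((A_{G'}\vec 1)_v-\frac{2d}{3})^2$, which is what yields $\|\vec\delta_{\bar 1}\|_2\le\sqrt d$ rather than $\sqrt{d\log n}$, and analogously for $\bar x,\bar y,\bar p_i$.
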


In the rest of this section we prove Theorem~\ref{thm:Main}.

\begin{lem}
\label{lem:Existance of almost eiganvectors}Let $G'$ be as in Theorem~\ref{thm:Main}.
With high probability the vectors $\bar{1},\bar{x},\bar{y}$, see
definition~\ref{Def:x,y}, satisfy $A_{G'}\bar{x}=\left(-\frac{1}{3}d\right)\bar{x}+\vec{\delta}_{\bar{x}}$,
$A_{G'}\bar{y}=\left(-\frac{1}{3}d\right)\bar{y}+\vec{\delta}_{\bar{y}}$
and $A_{G'}\bar{1}=\left(\frac{2}{3}d\right)\bar{1}+\vec{\delta}_{\bar{1}}$.

Here $\vec{\delta}_{\bar{x}},\vec{\delta}_{\bar{y}},\vec{\delta}_{\bar{1}}$
are vectors with $\ell_{2}$ norm of at most $\sqrt{d}$.\end{lem}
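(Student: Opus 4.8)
The plan is to obtain the three almost-eigenvector identities by a direct coordinate-wise computation, and then to control the $\ell_2$ norm of the error by a variance computation plus a concentration argument over the random partition. The starting point is the identity $A_{G'} = A_{G} - \sum_{i=1}^{3} P_{i} A_{G} P_{i}$, where $P_{i}$ is the diagonal $0/1$ matrix projecting onto the coordinates of $V_{i}$; this simply records that $G'$ is $G$ with exactly the edges internal to some $V_{i}$ deleted. Since each of $\vec{1}_{n},\vec{x},\vec{y}$ is constant on each part, applying $A_{G'}$ to such a vector is easy to evaluate entrywise: for $v\in V_{i}$, the $v$-th coordinate of $A_{G'}\vec{w}$ depends only on the counts $|N(v)\cap V_{j}|$ for $j\neq i$ and on the constant values of $\vec{w}$ on those parts.

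Carrying this out, for $\vec{x}=3\vec{p_1}-\vec{1}_{n}$ and $v\in V_{1}$ one gets $(A_{G'}\vec{x})_{v}=-|N(v)\setminus V_{1}|=|N(v)\cap V_{1}|-d=-\tfrac13 d\, x_{v}+\bigl(|N(v)\cap V_{1}|-\tfrac13 d\bigr)$, and for $v\in V_{2}$ one gets $(A_{G'}\vec{x})_{v}=2|N(v)\cap V_{1}|-|N(v)\cap V_{3}|=-\tfrac13 d\, x_{v}+\bigl(2|N(v)\cap V_{1}|-|N(v)\cap V_{3}|-\tfrac13 d\bigr)$, and symmetrically for $v\in V_{3}$. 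Hence $A_{G'}\vec{x}=-\tfrac13 d\,\vec{x}+\vec{D}_{x}$ with $(\vec{D}_{x})_{v}$ equal to the displayed centered expression; the analogous computations give $A_{G'}\vec{y}=-\tfrac13 d\,\vec{y}+\vec{D}_{y}$ and $A_{G'}\vec{1}_{n}=\tfrac23 d\,\vec{1}_{n}+\vec{D}_{1}$, where for instance $(\vec{D}_{1})_{v}=\tfrac13 d-|N(v)\cap V_{i}|$ for $v\in V_{i}$. Because each vertex lands in $V_{1},V_{2},V_{3}$ independently and uniformly, every such count is distributed as $\mathrm{Bin}(d,\tfrac13)$; thus each coordinate of each error vector is a fixed $O(1)$-coefficient linear combination of these counts minus its mean, so it has mean $0$ and variance $O(d)$.

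It then remains to pass from coordinates to the $\ell_{2}$ norm of the \emph{normalized} vectors and to upgrade an ``in expectation'' bound to a ``with high probability'' one. Summing the per-coordinate variances over all $n$ vertices gives $\mathrm{E}\bigl[\|\vec{D}_{x}\|_{2}^{2}\bigr]=O(dn)$ (an explicit accounting gives $\tfrac{10}{9}dn$), and dividing by $\|\vec{x}\|_{2}^{2}=4|V_{1}|+|V_{2}|+|V_{3}|=2n$ shows that $\vec{\delta}_{\bar{x}}=\vec{D}_{x}/\|\vec{x}\|_{2}$ satisfies $\mathrm{E}\bigl[\|\vec{\delta}_{\bar{x}}\|_{2}^{2}\bigr]=\tfrac{5}{9}d$; the same bookkeeping for $\bar{y}$ (with $\|\vec{y}\|_{2}^{2}=\tfrac{2n}{3}$) and $\bar{1}$ (with $\|\vec{1}_{n}\|_{2}^{2}=n$) again yields expectation at most $d$, with constant slack to spare. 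To convert this into a high-probability statement I would prove that $\|\vec{D}\|_{2}^{2}=\sum_{v}(\vec{D})_{v}^{2}$ concentrates around its mean: reassigning a single vertex $u$ to a different part changes $(\vec{D})_{v}$ only for $v$ in the closed neighborhood of $u$, so $\|\vec{D}\|_{2}^{2}$ is a bounded-differences function of the $n$ independent part-assignments, and a McDiarmid-type inequality gives the desired concentration; in the regime of very large degree one may instead apply the Chernoff bounds of Section~\ref{sec:Useful-Facts} to each count $|N(v)\cap V_{j}|$ and bound the contribution of the few atypically large coordinates. The main obstacle is exactly this last step: the coordinates of $\vec{D}$ are only weakly independent (neighborhoods overlap), so the concentration argument — and checking that the constants are good enough to give the clean bound $\|\vec{\delta}\|_{2}\le\sqrt{d}$ rather than merely $O(\sqrt{d})$ — is where the care is needed; the algebra of the first two paragraphs is routine.
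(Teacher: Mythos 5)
Your proposal takes essentially the same route as the paper's proof: write each coordinate of the error vector as a centered linear combination of the counts $|N(v)\cap V_j|$, sum per-coordinate variances to get $\mathbb{E}\bigl[\|\vec D\|_2^2\bigr]=O(dn)$ (the paper does this explicitly for $\bar 1$ with $\tfrac{2}{9}nd$ and leaves $\bar x,\bar y$ as ``similar''), then invoke McDiarmid's bounded-differences inequality over the $n$ independent color assignments to turn the expectation bound into a high-probability bound, exploiting that recoloring one vertex perturbs at most the $d+1$ coordinates in its closed neighborhood by a controlled amount. The paper, like you, explicitly restricts the McDiarmid argument to $d=o(\sqrt n)$ and defers the large-$d$ regime to a (omitted) union-bound proof, and it uses generous constant slack (e.g.\ $c_1 nd$ with $c_1=\tfrac19$), so your worry about obtaining a ``clean'' constant is already resolved in the paper by the same margin-to-spare bookkeeping you describe.
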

\begin{proof}
We prove the lemma for $\bar{1},\vec{\delta}_{\bar{1}}$ (the other cases have
a similar proof). We assume $d=o\left(\sqrt{n}\right)$ since for large
values of $d$ one can show a different proof using the union bound,
details omitted.

Consider the vector $A_{G'}\vec{1}-\left(\frac{2}{3}d\right)\vec{1}$.
We give an upper bound (that holds with high probability) on the sum of squares of its coordinates.

Let $i\in\left[n\right]$. As $\mathbb{E}\left[\left(\left(A_{G'}\vec{1}\right)_{i}-\frac{2}{3}d\right)^{2}\right]$
can be seen as the variance of $d$ independent Bernoulli variables,
(each with variance $\frac{2}{3}\left(1-\frac{2}{3}\right)$), it
holds that
\[
\mathbb{E}\left[\sum_{i}\left(\left(A_{G'}\vec{1}\right)_{i}-\frac{2}{3}d\right)^{2}\right]=n\mathbb{E}\left[\left(\left(A_{G'}\vec{1}\right)_{1}-\frac{2}{3}d\right)^{2}\right]=nd\frac{2}{9}.
\]

Let $x_{i}$ be a random variable that indicates the color of a the
$i$-th vertex of the graph. Set $f=\sum_{i} \left( \left(A_{G'}\vec{1}\right)_{i}-\frac{2}{3}d\right)^{2}$
to be a function of $x_{i}$'s. Consider the terms of $f$ that are effected by a particular random variable $x_{i}$. One such term is $\left(\left(A_{G'}\vec{1}\right)_{i}-\frac{2}{3}d\right)^{2}$. Its value is bounded between~0 and $\frac{4}{9}d^2$, and hence $x_i$ can effect it by at most $\frac{4}{9}d^2$. Other terms that are affected by $x_i$ are $\left(\left(A_{G'}\vec{1}\right)_{j}-\frac{2}{3}d\right)^{2}$, where $j$ ranges over neighbors of $i$ in $G$. As $i$ has degree $d$ in $G$, there are $d$ such terms. For each neighbor $j$, the value $\left(A_{G'}\vec{1}\right)_{j}-\frac{2}{3}d$ (which lies between $\frac{d}{3}$ and $-\frac{2d}{3}$) changes by at most~1 by $x_i$, implying that $\left(\left(A_{G'}\vec{1}\right)_{j}-\frac{2}{3}d\right)^{2}$ changes by at most $\frac{2d}{3}$. Overall, the effect of $x_i$ on $\sum_{j\in N(i)}\left(\left(A_{G'}\vec{1}\right)_{j}-\frac{2}{3}d\right)^{2}$ is at most $\frac{2d^2}{3}$. As $x_i$ does not effect any other term in $f$, its total effect on $f$ is at most $\frac{2d^2}{3} + \frac{4d^2}{9} = \frac{10d^2}{9} < 2d^2$.

 By the above we can apply McDiarmid's Inequality (Theorem~\ref{thm:=00005BMcDiarmid=00005D})
with $c=2d^{2}$ to deduce
\[
\Pr\left[\left|\sum_{i}\left(\left(A_{G'}\vec{1}\right)_{i}-\frac{2}{3}d\right)^{2}-nd\frac{2}{9}\right|\ge c_{1}nd\right]\le2\exp\left(-\Omega\left(\frac{n}{d^{2}}\right)\right)\,,
\]
where $c_{1}$ is a constant, say $\frac{1}{9}$. Overall , with high
probability, $\left\Vert A_{G'}\vec{1}-\left(\frac{2}{3}d\right)\vec{1}\right\Vert _{2}\le\sqrt{nd}$
and thus, by normalization, $\left\Vert A_{G'}\bar{1}-\left(\frac{2}{3}d\right)\bar{1}\right\Vert _{2}\le\sqrt{d}$.

\end{proof}
\begin{lem}
\label{lem:Almost Eigan Vectors implies more}Let $A$ be a symmetric
matrix of order $n$ with eigenvalues $\lambda_{1},\ldots,\lambda_{n}$
and associated eigenvectors $v_{1}\ldots,v_{n}$. For a unit vector
$\bar{u}$ and scalar $\lambda$, suppose that $A\bar{u}=\lambda\bar{u}+\vec{\epsilon}$,
where the $\ell_{2}$ norm of $\vec{\epsilon}$ is at most $\epsilon$.

Let $S_{\delta}\subset\{1,\ldots,n\}$ be the set of those indices
$i$ for which $|\lambda_{i}-\lambda|\le\delta$. Then there exists
a vector $\vec{\epsilon}_{\bar{u}}$ such that $\bar{u}+\vec{\epsilon}_{\bar{u}}$
is a vector spanned by the eigenvectors associated with $S_{\delta}$
and $\left\Vert \vec{\epsilon}_{\bar{u}}\right\Vert _{2}\le\frac{\epsilon}{\delta}$. \end{lem}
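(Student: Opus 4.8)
The plan is to work in the orthonormal eigenbasis $v_1,\dots,v_n$ of the symmetric matrix $A$ (which exists by the spectral theorem) and simply split the expansion of $\bar u$ into the part supported on $S_\delta$ and the part supported on its complement. Concretely, first I would write $\bar u=\sum_{i=1}^n c_i v_i$, so that $A\bar u=\sum_i c_i\lambda_i v_i$ and hence
\[
\vec\epsilon \;=\; A\bar u-\lambda\bar u \;=\;\sum_{i=1}^n c_i(\lambda_i-\lambda)\,v_i .
\]
By orthonormality (Parseval), this gives the key identity $\sum_{i=1}^n c_i^2(\lambda_i-\lambda)^2=\|\vec\epsilon\|_2^2\le\epsilon^2$.

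Next I would define the correction vector explicitly as $\vec\epsilon_{\bar u}:=-\sum_{i\notin S_\delta}c_i v_i$, so that $\bar u+\vec\epsilon_{\bar u}=\sum_{i\in S_\delta}c_i v_i$ is, by construction, a vector in the span of the eigenvectors associated with $S_\delta$, as required.

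Finally I would bound the norm of $\vec\epsilon_{\bar u}$: for every index $i\notin S_\delta$ we have $|\lambda_i-\lambda|>\delta$, so $c_i^2<\frac{1}{\delta^2}c_i^2(\lambda_i-\lambda)^2$. Summing over $i\notin S_\delta$ and using the Parseval identity above,
\[
\|\vec\epsilon_{\bar u}\|_2^2 \;=\;\sum_{i\notin S_\delta}c_i^2 \;\le\;\frac{1}{\delta^2}\sum_{i\notin S_\delta}c_i^2(\lambda_i-\lambda)^2 \;\le\;\frac{1}{\delta^2}\|\vec\epsilon\|_2^2 \;\le\;\frac{\epsilon^2}{\delta^2},
\]
which yields $\|\vec\epsilon_{\bar u}\|_2\le\epsilon/\delta$ and completes the proof.

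There is no real obstacle here: the only thing being used beyond elementary algebra is that a real symmetric matrix admits an orthonormal eigenbasis, which makes the Pythagorean identity for the coordinates $c_i$ valid. The argument is self-contained and does not need any of the planted-coloring structure; it is a general linear-algebraic fact that an approximate eigenvector with small residual must lie close to the span of the eigenvectors whose eigenvalues are near the tested value $\lambda$, with the closeness controlled by the spectral gap $\delta$ to the remaining eigenvalues.
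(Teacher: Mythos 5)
Your proof is correct and takes essentially the same approach as the paper: expand $\bar u$ in the orthonormal eigenbasis, extract the identity $\sum_i c_i^2(\lambda_i-\lambda)^2\le\epsilon^2$, define the correction as minus the out-of-$S_\delta$ part, and bound its squared norm by $\epsilon^2/\delta^2$. The only difference is cosmetic: you spell out the Markov/averaging step ($c_i^2<\delta^{-2}c_i^2(\lambda_i-\lambda)^2$ for $i\notin S_\delta$) that the paper invokes with the words ``by averaging.''
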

\begin{proof}
Write $\bar{u}=\sum_{i=1}^{n}\alpha_{i}v_{i}$, with $\sum(\alpha_{i})^{2}=1$.
Then
\[
A\bar{u}=\sum\lambda_{i}\alpha_{i}v_{i}=\lambda\bar{u}+\sum(\lambda_{i}-\lambda)\alpha_{i}v_{i}.
\]
Hence $\vec{\epsilon}=\sum(\lambda_{i}-\lambda)\alpha_{i}v_{i}$,
implying $\sum(\lambda_{i}-\lambda)^{2}(\alpha_{i})^{2}\le\epsilon^{2}$.
By averaging, it follows that $\sum_{i\not\in S_{\delta}}(\alpha_{i})^{2}<\frac{\epsilon^{2}}{\delta^{2}}$.
Let $\vec{\epsilon}_{\bar{u}}:=-\sum_{i\not\in S_{\delta}}\alpha_{i}v_{i}$.
\end{proof}

\begin{lem}
\label{lem:distance to S_delta}Let $G'$ be as in Theorem~\ref{thm:Main},
and let the unite vectors $\bar{1},\bar{x},\bar{y}$ be as in definition~\ref{Def:x,y}.
Define $S_{\delta}\left(x\right)\subset\{1,\ldots,n\}$ to be the
set of those indices $i$ for which $|\lambda\left(G'\right)_{i}-\left(x\right)|\le\delta$.
The following vectors exist with high probability.
\begin{itemize}
\item $\vec{\epsilon}_{\bar{1}}$ such that $\left\Vert \vec{\epsilon}_{\bar{1}}\right\Vert _{2}\le\frac{\sqrt{d}}{\delta}$
and $\bar{1}+\vec{\epsilon}_{\bar{1}}\in\text{span}\left(\left\{ e_{i}\left(G'\right)\right\} _{i\in S_{\delta}\left(\frac{2}{3}d\right)}\right)$.
\item $\vec{\epsilon}_{\bar{x}}$ such that $\left\Vert \vec{\epsilon}_{\bar{x}}\right\Vert _{2}\le\frac{\sqrt{d}}{\delta}$
and $\bar{x}+\vec{\epsilon}_{\bar{x}}\in\text{span}\left(\left\{ e_{i}\left(G'\right)\right\} _{i\in S_{\delta}\left(-\frac{1}{3}d\right)}\right)$.
\item $\vec{\epsilon}_{\bar{y}}$ such that $\left\Vert \vec{\epsilon}_{\bar{y}}\right\Vert _{2}\le\frac{\sqrt{d}}{\delta}$
and $\bar{y}+\vec{\epsilon}_{\bar{y}}\in\text{span}\left(\left\{ e_{i}\left(G'\right)\right\} _{i\in S_{\delta}\left(-\frac{1}{3}d\right)}\right)$.
\end{itemize}
\end{lem}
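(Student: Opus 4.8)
The plan is simply to combine the two preceding lemmas. Lemma~\ref{lem:Existance of almost eiganvectors} says that $\bar{1}$, $\bar{x}$, $\bar{y}$ are approximate eigenvectors of $A_{G'}$ with approximate eigenvalues $\frac{2}{3}d$, $-\frac{1}{3}d$, $-\frac{1}{3}d$ respectively and an error term whose $\ell_{2}$ norm is at most $\sqrt{d}$; and Lemma~\ref{lem:Almost Eigan Vectors implies more} turns such an approximate-eigenvector statement into the statement that the vector, after a small correction of $\ell_{2}$ norm at most $\epsilon/\delta$, lies in the span of exactly those true eigenvectors whose eigenvalues are within $\delta$ of the approximate eigenvalue.

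Concretely, first I would invoke Lemma~\ref{lem:Existance of almost eiganvectors} to fix, with high probability, vectors $\vec{\delta}_{\bar{1}},\vec{\delta}_{\bar{x}},\vec{\delta}_{\bar{y}}$ of $\ell_{2}$ norm at most $\sqrt{d}$ such that $A_{G'}\bar{1}=\left(\frac{2}{3}d\right)\bar{1}+\vec{\delta}_{\bar{1}}$, $A_{G'}\bar{x}=\left(-\frac{1}{3}d\right)\bar{x}+\vec{\delta}_{\bar{x}}$ and $A_{G'}\bar{y}=\left(-\frac{1}{3}d\right)\bar{y}+\vec{\delta}_{\bar{y}}$. (Note that $\bar{1},\bar{x},\bar{y}$ are unit vectors by Definition~\ref{Def:x,y}, as required by Lemma~\ref{lem:Almost Eigan Vectors implies more}.) Then I would apply Lemma~\ref{lem:Almost Eigan Vectors implies more} with $A=A_{G'}$ and $\epsilon=\sqrt{d}$ three times: with $\bar{u}=\bar{1}$ and $\lambda=\frac{2}{3}d$, yielding $\vec{\epsilon}_{\bar{1}}$ with $\left\Vert\vec{\epsilon}_{\bar{1}}\right\Vert_{2}\le\frac{\sqrt{d}}{\delta}$ and $\bar{1}+\vec{\epsilon}_{\bar{1}}\in\text{span}\left(\left\{e_{i}\left(G'\right)\right\}_{i\in S_{\delta}\left(\frac{2}{3}d\right)}\right)$; with $\bar{u}=\bar{x}$ and $\lambda=-\frac{1}{3}d$; and with $\bar{u}=\bar{y}$ and $\lambda=-\frac{1}{3}d$. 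These three applications give exactly the three bullet points of the lemma. Since Lemma~\ref{lem:Almost Eigan Vectors implies more} is purely deterministic, the only probabilistic ingredient is the single high-probability event from Lemma~\ref{lem:Existance of almost eiganvectors}, so no further union bound is needed.

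The statement is essentially bookkeeping once the two lemmas are in hand, so I do not anticipate a genuine obstacle. The one point requiring a moment's care is matching up notation: here $S_{\delta}(x)$ is indexed by the anchor value $x$, which is precisely the quantity playing the role of the approximate eigenvalue ``$\lambda$'' in Lemma~\ref{lem:Almost Eigan Vectors implies more}, so the substitutions $x=\frac{2}{3}d$ and $x=-\frac{1}{3}d$ align the two definitions with no adjustment.
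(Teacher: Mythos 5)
Your proposal is correct and matches the paper's own proof exactly: the paper also obtains the result by combining Lemma~\ref{lem:Existance of almost eiganvectors} (which gives the approximate-eigenvector equations with error norm at most $\sqrt{d}$) with Lemma~\ref{lem:Almost Eigan Vectors implies more} (which projects onto the eigenvectors in $S_\delta$ with correction of norm at most $\sqrt{d}/\delta$). The paper states this in a single sentence; your writeup just spells out the same bookkeeping.
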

\begin{proof}
By Lemma~\ref{lem:Existance of almost eiganvectors} and Lemma~\ref{lem:Almost Eigan Vectors implies more}
the statement of the lemma holds with high probability.
\end{proof}

\begin{lem}
\label{lem:Random Induced Graph is almost regular.}Let $G$ be a
$d$-regular $\lambda$-expander. Let $G_{1}$ be a random induced
graph of $G$ where each vertex of $G$ is in $G_{1}$ independently
with probability $\frac{1}{3}$. With high probability the vector
$\bar{1}_{\left|V\left(G_{1}\right)\right|}$, see definition~\ref{Def:x,y},
satisfies $A_{G_{1}}\bar{1}_{\left|V\left(G_{1}\right)\right|}=\left(\frac{1}{3}d\right)\bar{1}_{\left|V\left(G_{1}\right)\right|}+\vec{\delta}_{\bar{1}}$.

Here $\vec{\delta}_{\bar{1}}$ is a vector with $\ell_{2}$ norm of
at most $\sqrt{d}$.\end{lem}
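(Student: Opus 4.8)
The plan is to repeat the proof of Lemma~\ref{lem:Existance of almost eiganvectors}, with the random tripartition replaced by a single randomly sampled vertex class. For $i\in V(G)$ let $x_{i}\in\{0,1\}$ be the indicator that $i$ is chosen into $V(G_{1})$, so that the $x_{i}$ are i.i.d.\ Bernoulli with parameter $\frac{1}{3}$. For a vertex $i$ with $x_{i}=1$, the $i$-th coordinate of $A_{G_{1}}\vec{1}_{|V(G_{1})|}$ is its degree $\sum_{j\in N(i)}x_{j}$ in $G_{1}$, and since $i\notin N(i)$ this is a sum of $d$ i.i.d.\ Bernoulli$(\frac{1}{3})$ variables that is independent of $x_{i}$ and has mean $\frac{d}{3}$. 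I would therefore analyze
\[
f:=\left\Vert A_{G_{1}}\vec{1}_{|V(G_{1})|}-\tfrac{d}{3}\,\vec{1}_{|V(G_{1})|}\right\Vert _{2}^{2}=\sum_{i\in V(G)}x_{i}\left(\sum_{j\in N(i)}x_{j}-\tfrac{d}{3}\right)^{2},
\]
viewed as a function of the $n$ independent variables $x_{1},\dots,x_{n}$; by independence of $x_{i}$ from $\{x_{j}:j\in N(i)\}$ and the variance $\frac{2d}{9}$ of a Binomial$(d,\frac{1}{3})$ one gets $\mathbb{E}[f]=n\cdot\frac{1}{3}\cdot\frac{2d}{9}=\frac{2nd}{27}$.

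The second step is to bound the effect of a single coordinate so that McDiarmid's inequality (Theorem~\ref{thm:=00005BMcDiarmid=00005D}) applies. Flipping $x_{k}$ changes the $i=k$ summand by at most $\left(\frac{2d}{3}\right)^{2}=\frac{4d^{2}}{9}$ (its bracketed quantity lies in $[-\frac{d}{3},\frac{2d}{3}]$), and changes each of the $d$ summands with $i\in N(k)$ by at most $\left|2\left(\sum_{j\in N(i)}x_{j}-\frac{d}{3}\right)+1\right|\le\frac{4d}{3}+1\le 2d$; no other summand depends on $x_{k}$. Hence each bounded-difference constant is at most $\frac{4d^{2}}{9}+2d^{2}<3d^{2}$, and McDiarmid gives, for any constant $c_{1}$,
\[
\Pr\!\left[\,f-\mathbb{E}[f]\ge c_{1}nd\,\right]\le 2\exp\!\left(-\Omega\!\left(\tfrac{n}{d^{2}}\right)\right),
\]
which is $o(1)$ when $d=o(\sqrt{n})$. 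Taking $c_{1}=\frac{1}{9}$ yields $f\le\frac{2nd}{27}+\frac{nd}{9}=\frac{5nd}{27}$ with high probability. (For larger $d$ one argues directly, exactly as in the ``details omitted'' remark in the proof of Lemma~\ref{lem:Existance of almost eiganvectors}, via a union bound over vertices.)

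Finally I would normalize. Since $|V(G_{1})|$ is distributed as $\mathrm{Bin}(n,\frac{1}{3})$, a Chernoff bound gives $|V(G_{1})|\ge\frac{n}{4}$ with high probability; intersecting the two good events,
\[
\left\Vert A_{G_{1}}\bar{1}_{|V(G_{1})|}-\tfrac{d}{3}\,\bar{1}_{|V(G_{1})|}\right\Vert _{2}=\frac{\sqrt{f}}{\sqrt{|V(G_{1})|}}\le\frac{\sqrt{5nd/27}}{\sqrt{n/4}}=\sqrt{\tfrac{20}{27}d}<\sqrt{d},
\]
so $\vec{\delta}_{\bar{1}}:=A_{G_{1}}\bar{1}_{|V(G_{1})|}-\frac{d}{3}\bar{1}_{|V(G_{1})|}$ has the claimed norm. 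I do not anticipate a genuine obstacle, since this is a direct analogue of Lemma~\ref{lem:Existance of almost eiganvectors}; the only points needing care are (i) writing $f$ as a function over all $n$ vertices with the $x_{i}$ prefactor, because the index set of $\vec{1}_{|V(G_{1})|}$ is itself random, and (ii) keeping the constants tight enough that the normalized error is at most $\sqrt{d}$ rather than a larger multiple of it.
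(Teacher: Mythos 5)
Your proof is correct and follows exactly the approach the paper indicates (the paper's own "proof" is the one-line remark that it is similar to Lemma~\ref{lem:Existance of almost eiganvectors}). You have filled in the details faithfully: writing $f$ as a sum over all of $V(G)$ with the $x_i$ prefactor to handle the random index set, verifying the bounded-difference constant $O(d^2)$, applying McDiarmid, and accounting for the random normalization factor $\sqrt{|V(G_1)|}$ via a Chernoff bound.
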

\begin{proof}
The proof is similar to the proof of Lemma~\ref{lem:Existance of almost eiganvectors}.\end{proof}
\begin{lem}
\label{lem:Random Induced Graph is expander}Let $G$ be a $d$-regular
$\lambda$-expander. Let $G_{1}$ be a random induced graph of $G$
where each vertex of $V\left(G\right)$ is in $V\left(G_{1}\right)$
independently with probability $\frac{1}{3}$. It holds that $G_{1}$
is a $\lambda$-expander.\end{lem}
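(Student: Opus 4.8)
The plan is to observe that this statement is in fact deterministic: we will show that \emph{every} vertex-induced subgraph of a $\lambda$-expander is a $\lambda$-expander, regardless of how its vertex set is selected (so the ``random'' hypothesis is not used in the proof, it is merely the context in which the lemma is applied). The one tool needed is the Cauchy interlacing theorem for eigenvalues of principal submatrices of a symmetric matrix.

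First I would note that if $S=V(G_1)$ and $m=|S|$, then $A_{G_1}$ is precisely the principal submatrix of $A_G$ indexed by $S$. Writing $\lambda_1(G)\ge\cdots\ge\lambda_n(G)$ for the eigenvalues of $A_G$ and $\lambda_1(G_1)\ge\cdots\ge\lambda_m(G_1)$ for those of $A_{G_1}$, Cauchy interlacing gives $\lambda_k(G)\ge\lambda_k(G_1)\ge\lambda_{k+n-m}(G)$ for every $1\le k\le m$. Taking $k=2$ yields $\lambda_2(G_1)\le\lambda_2(G)\le\lambda$, the last step by the hypothesis that $G$ is a $\lambda$-expander (Definition~\ref{def:expander}). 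Taking $k=m$ yields $\lambda_m(G_1)\ge\lambda_n(G)\ge-\lambda$, again by the hypothesis.

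It then remains only to bound $\lambda_m(G_1)$ from above, so as to control $|\lambda_m(G_1)|$. For this I would use that $A_{G_1}$ has zero diagonal, hence trace zero, so its eigenvalues sum to $0$; consequently its smallest eigenvalue satisfies $\lambda_m(G_1)\le 0\le\lambda$ (with equality to $0$ in the degenerate case that $G_1$ has no edges). Combining, $-\lambda\le\lambda_m(G_1)\le 0$, so $|\lambda_m(G_1)|\le\lambda$; together with $\lambda_2(G_1)\le\lambda$ this gives $\max\!\big(\lambda_2(G_1),|\lambda_m(G_1)|\big)\le\lambda$, i.e., $G_1$ is a $\lambda$-expander. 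There is essentially no obstacle here; the only point worth a moment's care is that the definition of a $\lambda$-expander constrains the smallest eigenvalue from \emph{both} sides, while interlacing supplies only the lower bound $-\lambda$ — the matching upper bound is free from the zero-trace observation.
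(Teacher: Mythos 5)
Your proposal is correct and uses the same tool as the paper: the paper's proof is the one-liner ``The proof follows by Cauchy interlacing theorem,'' and you have simply spelled out the interlacing bounds. Your zero-trace observation is a clean way to get the upper bound $\lambda_m(G_1)\le\lambda$, though one could equally well get it from interlacing directly (for $m\ge 2$, $\lambda_m(G_1)\le\lambda_m(G)\le\lambda_2(G)\le\lambda$); either way the argument matches the paper's intent, and your remark that randomness is irrelevant here is accurate.
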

\begin{proof}
The proof follows by Cauchy interlacing theorem, (Theorem~\ref{thm:=00005BCauchy-interlacing-theorem=00005D.}).\end{proof}
\begin{lem}
\label{lem:distance to S_delta in G_1}Let $G'$ be as in Theorem~\ref{thm:Main},
the vector $\bar{p}_{1}$ be as in Definition~\ref{Def:x,y}, and
$\delta\le\frac{d}{3}-\sqrt{d}-\lambda$ . The following vector exists
with high probability.
\begin{itemize}
\item \textup{$\vec{\epsilon}_{\bar{p}_{1}}$}such that $\left\Vert \vec{\epsilon}_{\bar{p}_{1}}\right\Vert _{2}\le\frac{\sqrt{d}}{\delta}$
and $\bar{p}_{1}+\vec{\epsilon}_{\bar{p}_{1}}\in\text{span}\left(\left\{ e_{1}\left(G_{1}\right)\right\} \right)$.
\end{itemize}
\end{lem}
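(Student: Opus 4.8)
The plan is to derive this lemma by combining the three immediately preceding lemmas about the induced graph $G_1$, mirroring the way Lemma~\ref{lem:distance to S_delta} was obtained from Lemmas~\ref{lem:Existance of almost eiganvectors} and~\ref{lem:Almost Eigan Vectors implies more}. First I would observe that, under the natural embedding $\mathbb{R}^{|V_1|}\hookrightarrow\mathbb{R}^n$ that pads a vector with zeros off $V_1$, the normalized indicator $\bar{p}_1$ of $V_1$ coincides with the normalized all-ones vector $\bar{1}_{|V(G_1)|}$ on the vertex set of $G_1$ (since $V(G_1)=V_1$). Thus Lemma~\ref{lem:Random Induced Graph is almost regular.} applies directly and tells us that, with high probability, $A_{G_1}\bar{p}_1=\frac{d}{3}\bar{p}_1+\vec{\delta}_{\bar{1}}$ with $\left\Vert\vec{\delta}_{\bar{1}}\right\Vert_2\le\sqrt{d}$; that is, $\bar{p}_1$ is an approximate eigenvector of $A_{G_1}$ for the eigenvalue $\frac{d}{3}$ with error at most $\sqrt{d}$.

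Next I would invoke Lemma~\ref{lem:Random Induced Graph is expander}, which (deterministically, via Cauchy interlacing) says that $G_1$ is itself a $\lambda$-expander, so every eigenvalue of $A_{G_1}$ other than the largest lies in $[-\lambda,\lambda]$. With these two facts in hand I would apply Lemma~\ref{lem:Almost Eigan Vectors implies more} to the symmetric matrix $A=A_{G_1}$, the unit vector $\bar{u}=\bar{p}_1$, the scalar $\frac{d}{3}$, and the error vector $\vec{\epsilon}=\vec{\delta}_{\bar{1}}$ of norm at most $\epsilon=\sqrt{d}$, using the value $\delta$ from the hypothesis. Since $\delta\le\frac{d}{3}-\sqrt{d}-\lambda<\frac{d}{3}-\lambda$, for every index $i\ge2$ we get $\frac{d}{3}-\lambda_i(G_1)\ge\frac{d}{3}-\lambda>\delta$, hence $|\lambda_i(G_1)-\frac{d}{3}|>\delta$; therefore the index set $S_\delta(\frac{d}{3})$ of that lemma satisfies $S_\delta(\frac{d}{3})\subseteq\{1\}$. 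Lemma~\ref{lem:Almost Eigan Vectors implies more} then produces $\vec{\epsilon}_{\bar{p}_1}$ with $\left\Vert\vec{\epsilon}_{\bar{p}_1}\right\Vert_2\le\frac{\epsilon}{\delta}=\frac{\sqrt{d}}{\delta}$ and $\bar{p}_1+\vec{\epsilon}_{\bar{p}_1}\in\text{span}\left(\{e_i(G_1):i\in S_\delta(\tfrac{d}{3})\}\right)\subseteq\text{span}\left(\{e_1(G_1)\}\right)$, which is exactly the claim. The only event we conditioned on is the high-probability conclusion of Lemma~\ref{lem:Random Induced Graph is almost regular.}, so the whole statement holds with high probability.

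I do not expect any genuine obstacle here: the mathematical content sits entirely in the three cited lemmas, and what remains is bookkeeping. The one point that needs care is the spectral-separation step, namely checking that the window $\frac{d}{3}\pm\delta$ is disjoint from the bulk spectrum $[-\lambda,\lambda]$ of $A_{G_1}$, and this is precisely what the quantitative hypothesis $\delta\le\frac{d}{3}-\sqrt{d}-\lambda$ buys. It is also worth recording along the way that $\lambda\le(\frac{1}{3}-\epsilon)d$ forces $\lambda_1(G_1)>\lambda$ (indeed a Rayleigh-quotient bound gives $\lambda_1(G_1)\ge\bar{p}_1^{\top}A_{G_1}\bar{p}_1\ge\frac{d}{3}-\sqrt{d}$), so the top eigenvalue of $A_{G_1}$ is simple and $e_1(G_1)$ is unambiguous up to sign; the same estimate shows that when $\delta$ is taken near its maximal allowed value one has $1\in S_\delta(\frac{d}{3})$, so the right-hand span is genuinely one-dimensional rather than trivial.
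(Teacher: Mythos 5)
Your proposal is correct and follows essentially the same route as the paper: identify $\bar{p}_1$ with the padded all-ones vector on $V(G_1)$, apply Lemma~\ref{lem:Random Induced Graph is almost regular.} to get the approximate-eigenvector relation with error $\sqrt{d}$, apply Lemma~\ref{lem:Random Induced Graph is expander} (Cauchy interlacing) to see that all eigenvalues of $A_{G_1}$ other than the top one lie in $[-\lambda,\lambda]$ and hence outside the window of half-width $\delta\le\frac{d}{3}-\sqrt{d}-\lambda$ around $\frac{d}{3}$, and then invoke Lemma~\ref{lem:Almost Eigan Vectors implies more}. The paper's write-up states this in three lines and simply asserts $S_\delta(G_1)=\{1\}$; you are somewhat more careful and get $S_\delta(\frac{d}{3})\subseteq\{1\}$, which is all Lemma~\ref{lem:Almost Eigan Vectors implies more} actually needs (if $S_\delta$ were empty that lemma forces $\delta<\sqrt{d}$ and the error bound $\frac{\sqrt{d}}{\delta}>1$ is then vacuous), and your closing Rayleigh-quotient remark that $\lambda_1(G_1)\ge\frac{d}{3}-\sqrt{d}$ is a nice sanity check though not strictly required.
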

\begin{proof}
Let $S_{\delta}\left(G_{1}\right)$ be the set of those indices $i$
for which $|\lambda\left(G_{1}\right)_{i}-\frac{1}{3}d|\le\delta$.
By Lemma~\ref{lem:Random Induced Graph is expander} it follows that
$S_{\delta}\left(G_{1}\right)=\left\{ 1\right\} $. By Lemma~\ref{lem:Random Induced Graph is almost regular.}
and Lemma~\ref{lem:Almost Eigan Vectors implies more} the following
vector exists with high probability. $\vec{\epsilon}_{\bar{p}_{1}}$such
that $\left\Vert \vec{\epsilon}_{\bar{p}_{1}}\right\Vert _{2}\le\frac{\sqrt{d}}{\delta}$
and
\[
\bar{p}_{1}+\vec{\epsilon}_{\bar{p}_{1}}\in\text{span}\left(\left\{ e_{i}\left(G_{1}\right)\right\} _{i\in S_{\delta}\left(G_{1}\right)}\right)=\text{span}\left(\left\{ e_{1}\left(G_{1}\right)\right\} \right)\,.
\]

\end{proof}

Now we prove Theorem~\ref{thm:Main}.
\begin{proof}
{[}Theorem~\ref{thm:Main}{]}. Let $\delta=\frac{\epsilon}{2}d$
and $\vec{\epsilon}_{\bar{1}},\vec{\epsilon}_{\bar{x}},\vec{\epsilon}_{\bar{y}},\vec{\epsilon}_{\bar{p}_{1}}$
be the vectors that were defined in Lemma~\ref{lem:distance to S_delta}
and Lemma~\ref{lem:distance to S_delta in G_1}. Define $S_{\delta}\subset\{1,\ldots,n\}$
to be the set of those indices $i$ for which either $|\lambda\left(G'\right)_{i}-\left(-\frac{1}{3}d\right)|\le\delta$
or $|\lambda\left(G'\right)_{i}-\left(\frac{2}{3}d\right)|\le\delta$.
We conclude the proof by the following two claims.
\begin{claim}
\label{Claim:Bounding eiganvalues}For $\left\{ i\right\} \notin S_{\delta}$
it holds that $|\lambda\left(G'\right)_{i}|\le2\lambda+O\left(\frac{d^{1.5}}{\delta}\right)$.\end{claim}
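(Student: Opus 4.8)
The plan is to show that if an eigenvalue $\lambda(G')_i$ with $\{i\}\notin S_\delta$ had absolute value larger than $2\lambda + O(d^{1.5}/\delta)$, then its eigenvector $e_i(G')$ would be forced to have a large component inside a low-dimensional space that is "already occupied" by the near-eigenvectors $\bar 1,\bar x,\bar y$, leading to a contradiction. Concretely, recall the decomposition of the adjacency matrix $A_{G'}$ of the planted graph. Write $A_{G'} = A_G - M$, where $M$ is the adjacency matrix of the monochromatic edges removed by the planting; equivalently, $M = A_{G_1}\oplus A_{G_2}\oplus A_{G_3}$ acting on the three color classes. For any unit eigenvector $e_i(G')$ with eigenvalue $\mu_i := \lambda(G')_i$, we have $A_G e_i - M e_i = \mu_i e_i$. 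Now decompose $e_i = c_0\bar 1 + c_1\bar x + c_2\bar y + w$, where $w$ is orthogonal to $\mathrm{span}(\bar 1,\bar x,\bar y)$ (note $\bar 1,\bar x,\bar y$ are pairwise orthogonal by Definition~\ref{Def:x,y}).

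The key observations to assemble are: (i) $A_G e_i = \mu_i e_i$ up to the expander bound — since $G$ is a $\lambda$-expander and $\bar 1$ is its top eigenvector, $\|A_G w\|_2 \le \lambda\|w\|_2$ and $\|A_G(c_1\bar x + c_2\bar y)\|_2 \le \lambda\|c_1\bar x + c_2\bar y\|_2$, because $\bar x,\bar y \perp \bar 1$; (ii) the matrix $M$ restricted to each color class $V_j$ is close to a scalar on the all-ones vector of that class, by Lemma~\ref{lem:distance to S_delta in G_1} (and its analogues for $V_2,V_3$): $A_{G_j}\bar 1_{|V_j|} = \frac{d}{3}\bar 1_{|V_j|} + \vec\delta_j$ with $\|\vec\delta_j\|_2 \le \sqrt d$; (iii) consequently $M$ acts like $\frac{d}{3}$ on $\mathrm{span}(\vec p_1,\vec p_2,\vec p_3) = \mathrm{span}(\bar 1,\bar x,\bar y)$ up to an error of $\ell_2$ norm $O(\sqrt d)$, and more importantly, for $w$ orthogonal to this span, we can bound $\|M w\|_2$. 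Here is the crucial point: each $A_{G_j}$ is (by Lemma~\ref{lem:Random Induced Graph is expander}) a $\lambda$-expander whose top eigenvector is $\frac{1}{|V_j|^{1/2}}\bar 1_{|V_j|}$, which lies in $\mathrm{span}(\vec p_j)\subseteq\mathrm{span}(\bar 1,\bar x,\bar y)$; hence on the orthogonal complement of $\mathrm{span}(\vec p_1,\vec p_2,\vec p_3)$ the operator $M$ has spectral norm at most $\lambda + O(\sqrt d/\delta)$, using Lemma~\ref{lem:distance to S_delta in G_1} to say the genuine top eigenvector of $G_j$ differs from $\vec p_j/\|\vec p_j\|$ by an error vector of norm $O(\sqrt d/\delta)$.

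Putting these together: apply $A_{G'} = A_G - M$ to $e_i$, project onto the orthogonal complement $W := \mathrm{span}(\bar 1,\bar x,\bar y)^\perp$, and use that the projection of $A_G e_i$ onto $W$ equals the projection of $A_G w$ plus the projection of $A_G(c_1\bar x+c_2\bar y)$ onto $W$. Since $\bar x,\bar y$ are (by Lemma~\ref{lem:distance to S_delta}) within $O(1/\sqrt d)$ of $\mathrm{span}(e_{n-1}(G'),e_n(G'))$, and since $A_{G'}$ on that two-dimensional eigenspace is $\approx -\frac d3$, one gets control on how far $A_G(c_1\bar x + c_2\bar y)$ can stick out of the span. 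Altogether the $W$-component of the identity $A_G e_i - M e_i = \mu_i e_i$ yields $|\mu_i|\,\|\Pi_W e_i\|_2 \le (\lambda + \lambda + O(\sqrt d) + O(\sqrt d/\delta))\|e_i\|_2 + (\text{small cross terms})$, i.e. $|\mu_i|\,\|\Pi_W e_i\|_2 \le 2\lambda + O(d^{1.5}/\delta)$ after clearing through the normalizations (the $d^{1.5}/\delta$ arises because an error of $\ell_2$-norm $\sqrt d$ in an almost-eigenvector of eigenvalue $\sim d$ translates, via Lemma~\ref{lem:Almost Eigan Vectors implies more}, to a component of size $\sqrt d/\delta$, and this gets multiplied by $|\mu_i| = O(d)$). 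Finally, if $\{i\}\notin S_\delta$, then $e_i(G')$ is orthogonal to all eigenvectors whose eigenvalues are within $\delta$ of $-\frac d3$ or $\frac 23 d$, and hence — using that $\bar 1 + \vec\epsilon_{\bar 1}, \bar x + \vec\epsilon_{\bar x}, \bar y + \vec\epsilon_{\bar y}$ all lie in $\mathrm{span}(\{e_j(G')\}_{j\in S_\delta})$ with the $\vec\epsilon$'s of norm $O(\sqrt d/\delta)$ — we get $\|\Pi_{\mathrm{span}(\bar 1,\bar x,\bar y)} e_i\|_2 = O(\sqrt d/\delta)$, so $\|\Pi_W e_i\|_2 \ge 1 - O(\sqrt d/\delta) \ge \frac12$. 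Dividing through gives $|\mu_i| \le 2\lambda + O(d^{1.5}/\delta)$, as claimed.

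The main obstacle I anticipate is bookkeeping the error terms so that everything collapses cleanly into the single bound $O(d^{1.5}/\delta)$ — in particular, making sure the interaction between the $O(\sqrt d)$-size "almost eigenvector" defects (from Lemmas~\ref{lem:Existance of almost eiganvectors} and~\ref{lem:Random Induced Graph is almost regular.}) and the factor $|\mu_i| = O(d)$ does not produce a worse-than-$d^{1.5}/\delta$ term, and handling the fact that $\bar x,\bar y$ are only \emph{approximately} (not exactly) in the bottom eigenspace of $G'$. A secondary technical point is to verify that $M$'s action on $W$ genuinely has norm $O(\lambda + \sqrt d/\delta)$: this needs the three statements of Lemma~\ref{lem:distance to S_delta in G_1} type (one per color class) together with the observation that $W$ is orthogonal to each $\vec p_j$, so that on $W$ the dominant eigenvalue $\frac d3$ of each $A_{G_j}$ is suppressed. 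Once these are in place the rest is a routine triangle-inequality argument.
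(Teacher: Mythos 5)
Your proposal uses the same decomposition $A_{G'}=A_G-\sum_{j}A_{G_j}$ and the same key ingredients as the paper's proof: the expander bound on $A_G$ restricted to $\bar 1^\perp$, the expander bounds on the $A_{G_j}$'s away from their top eigenvectors (via Lemma~\ref{lem:Random Induced Graph is expander} and Lemma~\ref{lem:distance to S_delta in G_1}), and the near-orthogonality of $e_i(G')$ to $\bar 1,\bar x,\bar y$ that comes from the fact that $\bar 1+\vec\epsilon_{\bar 1}$, $\bar x+\vec\epsilon_{\bar x}$, $\bar y+\vec\epsilon_{\bar y}$ all lie in $\text{span}\left(\{e_j(G')\}_{j\in S_\delta}\right)$. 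So this is essentially the paper's approach. The difference is the detour through the projection $\Pi_W$: the paper simply bounds $\|A_{G'}e_i(G')\|_2$ (which equals $|\lambda_i(G')|$ exactly, since $e_i(G')$ is a unit eigenvector) by the triangle inequality $\|A_G e_i(G')\|_2 + \|\sum_j A_{G_j} e_i(G')\|_2$, and bounds each term directly by expanding in the relevant eigenbasis. Your route --- projecting the eigenvalue equation onto $W$, lower-bounding $\|\Pi_W e_i(G')\|_2$, then dividing --- is more roundabout, and as written actually gives $4\lambda + O(d^{1.5}/\delta)$ rather than $2\lambda + O(d^{1.5}/\delta)$, since dividing by a quantity only lower-bounded by $1/2$ doubles the constant. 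You could repair this by using the sharper bound $\|\Pi_W e_i(G')\|_2 \ge 1-O(\sqrt{d}/\delta)$ and absorbing the resulting $O(\lambda\sqrt{d}/\delta)$ correction into the $O(d^{1.5}/\delta)$ term, but it is cleaner to drop $\Pi_W$ altogether and observe $\|A_{G'}e_i(G')\|_2 = |\lambda_i(G')|$ from the start.
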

\begin{proof}
Let $A_{\ensuremath{G_{j}}}'$ be the adjacency matrix of the induced
sub-graph on $j$-colored vertices, i.e.,
\[
\left(A_{\ensuremath{G_{j}}}'\right)_{i,k}=\begin{cases}
\left(A_{\ensuremath{G_{j}}}\right)_{i,k} & v_{i}\in V(G_{j}),v_{k}\in V(G_{j})\\
0 & \text{Otherwise.}
\end{cases}.
\]
Let
\[
\left(e_{i}\left(G_{j}\right)'\right)_{k}=\begin{cases}
\left(e_{i}\left(G_{j}\right)\right)_{k} & v_{k}\in V\left(V\left(G_{j}\right)\right)\\
0 & \text{Otherwise.}
\end{cases}.
\]
For brevity, the last two notation are used without the apostrophe.
Note that $A_{G'}=A_{G}-\left(\Sigma_{j=1}^{3}A_{G_{j}}\right)$.

~

Let $i\notin S_{\delta}$, by the triangle inequality it holds that,

\begin{equation}
\left\Vert A_{G'}e_{i}\left(G'\right)\right\Vert _{2}\le\left\Vert A_{G}e_{i}\left(G'\right)\right\Vert _{2}+\left\Vert \sum_{j=1}^{3}A_{G_{j}}e_{i}\left(G'\right)\right\Vert _{2}.\label{eq:1}
\end{equation}

First we show $\left\Vert A_{G}e_{i}\left(G'\right)\right\Vert _{2}\le\lambda+O\left(\frac{d^{1.5}}{\delta}\right)$.
\begin{align*}
\left\Vert A_{G}e_{i}\left(G'\right)\right\Vert _{2}^{2} & =\Sigma_{j=1}^{n}\lambda_{j}^{2}\left(G\right)\left\langle e_{i}\left(G'\right),e_{j}\left(G\right)\right\rangle ^{2}\\
 & =\lambda_{1}^{2}\left(G\right)\left\langle e_{i}\left(G'\right),e_{1}\left(G\right)\right\rangle ^{2}+\Sigma_{j=2}^{n}\lambda_{j}^{2}\left(G\right)\left\langle e_{i}\left(G'\right),e_{j}\left(G\right)\right\rangle ^{2}\\
 & \le\lambda_{1}^{2}\left(G\right)\left\langle e_{i}\left(G'\right),e_{1}\left(G\right)\right\rangle ^{2}+\lambda^{2}.
\end{align*}

The last inequality follows since $G$ is a $\lambda$-expander.
\begin{align}
 & =d^{2}\left\langle e_{i}\left(G'\right),\bar{1}\right\rangle ^{2}+\lambda^{2}\nonumber \\
 & =d^{2}\left\langle e_{i}\left(G'\right),\bar{1}+\vec{\epsilon}_{\bar{1}}-\vec{\epsilon}_{\bar{1}}\right\rangle ^{2}+\lambda^{2}\nonumber \\
 & =d^{2}\left\langle e_{i}\left(G'\right),\vec{\epsilon}_{\bar{1}}\right\rangle ^{2}+\lambda^{2}\label{eq:2}
\end{align}

The last equality holds since $\bar{1}+\vec{\epsilon}_{\bar{1}}$
and $e_{i}\left(G'\right)$ are orthogonal.
\[
\le d^{2}\left\Vert \vec{\epsilon}_{\bar{1}}\right\Vert _{2}^{2}+\lambda^{2}
\]

The last inequality follows by Cauchy\textendash Schwarz inequality
and now we use the assumption on $\left\Vert \vec{\epsilon}_{\bar{1}}\right\Vert _{2}$.

\[
\le d^{2}\left(\frac{\sqrt{d}}{\delta}\right)^{2}+\lambda^{2}.
\]

We left to show that with high probability $\left\Vert \sum_{j=1}^{3}A_{G_{j}}e_{i}\left(G'\right)\right\Vert _{2}\le\lambda+O\left(\frac{d^{1.5}}{\delta}\right)$.
Denote by $n_{j}$ the $G_{j}$'s number of vertices. Note that $\sum_{j=1}^{3}A_{G_{j}}$
is the adjacency matrix of a graph with at least three connected components,
($\left\{ G_{j}\right\} _{j\in\left\{ 1,2,3\right\} }$), thus $\left\{ e_{i}\left(G_{j}\right)|1\le i\le n_{j},j\in\left\{ 1,2,3\right\} \right\} $
are the eigenvectors of $\sum_{j=1}^{3}A_{G_{j}}$, (up to appropriate
padding with zeros).

\begin{align*}
\left\Vert \sum_{j=1}^{3}A_{G_{j}}e_{i}\left(G'\right)\right\Vert _{2}^{2} & =\sum_{j=1}^{3}\sum_{k=1}^{n_{j}}\lambda_{k}^{2}\left(G_{j}\right)\left\langle e_{i}\left(G'\right),e_{k}\left(G_{j}\right)\right\rangle ^{2}\\
 & =\sum_{j=1}^{3}\lambda_{1}^{2}\left(G_{j}\right)\left\langle e_{i}\left(G'\right),e_{1}\left(G_{j}\right)\right\rangle ^{2}+\sum_{j=1}^{3}\sum_{j=2}^{n_{j}}\lambda_{j}^{2}\left(G_{1}\right)\left\langle e_{i}\left(G'\right),e_{j}\left(G_{1}\right)\right\rangle ^{2}\\
 & \le d^{2}\sum_{j=1}^{3}\left\langle e_{i}\left(G'\right),e_{1}\left(G_{j}\right)\right\rangle ^{2}+\lambda^{2}
\end{align*}

The last inequality follows since by Lemma~\ref{lem:Random Induced Graph is expander}
it follows that $\left\{ G_{j}\right\} _{j\in\left\{ 1,2,3\right\} }$
are $\lambda$-expanders with degree at most $d$.

\begin{align*}
=d^{2}\sum_{j=1}^{3}\left\langle e_{i}\left(G'\right),\bar{p}_{j}+\vec{\epsilon}_{\bar{p}_{j}}\right\rangle ^{2}+\lambda^{2} & =d^{2}\sum_{j=1}^{3}\left(\left\langle e_{i}\left(G'\right),\vec{\epsilon}_{\bar{p}_{j}}\right\rangle +\left\langle e_{i}\left(G'\right),\bar{p}_{j}\right\rangle \right)^{2}+\lambda^{2}\\
 & \le d^{2}\sum_{j=1}^{3}\left(\left\Vert \vec{\epsilon}_{\bar{p}_{j}}\right\Vert _{2}+\left\langle e_{i}\left(G'\right),\bar{p}_{j}\right\rangle \right)^{2}+\lambda^{2}
\end{align*}

The last inequality follows by Cauchy\textendash Schwarz inequality.

\[
\le d^{2}\sum_{j=1}^{3}\left(\frac{\sqrt{d}}{\delta}+\left\langle e_{i}\left(G'\right),\bar{p}_{j}\right\rangle \right)^{2}+\lambda^{2}
\]

The last inequality follows by the assumption on $\left|\vec{\epsilon}_{\bar{p}_{1}}\right|_{2}$.

\[
=d^{2}\sum_{j=1}^{3}\left(\frac{\sqrt{d}}{\delta}+\left\langle e_{i}\left(G'\right),c_{1}^{j}\bar{1}+c_{2}^{j}\bar{x}+c_{3}^{j}\bar{y}\right\rangle \right)^{2}+\lambda^{2}
\]

With high probability$\left\{ c_{k}^{j}\right\} _{k,j\in\left\{ 1,2,3\right\} }$
are universal constants. To see that such constants exist with high
probability note that for example $\vec{p_{1}}=\frac{1}{3}\vec{1}+\frac{1}{3}\vec{x}$,
(In general $\vec{1},\vec{x},\vec{y}$ spans $\vec{p}_{1},\vec{p}_{2},\vec{p}_{3}$
using universal constants), thus by normalization $\bar{p}_{1}=\frac{1}{3\sqrt{\left|V_{1}\right|}}\vec{1}+\frac{1}{\sqrt{\left|V_{1}\right|}}\vec{x}=\frac{\sqrt{n}}{3\sqrt{\left|V_{1}\right|}}\bar{1}+\frac{\left\Vert \vec{x}\right\Vert _{2}}{\sqrt{\left|V_{1}\right|}}\bar{x}$
and with high probability $\frac{\sqrt{n}}{3\sqrt{\left|V_{1}\right|}},\frac{\left\Vert \vec{x}\right\Vert _{2}}{\sqrt{\left|V_{1}\right|}}$
are constants.

\begin{align}
 & =d^{2}\sum_{j=1}^{3}\left(\frac{\sqrt{d}}{\delta}+\left\langle e_{i}\left(G'\right),c_{1}^{j}\left(\bar{1}+\vec{\epsilon}_{\bar{1}}-\vec{\epsilon}_{\bar{1}}\right)+c_{2}^{j}\left(\bar{x}+\vec{\epsilon}_{\bar{x}}-\vec{\epsilon}_{\bar{x}}\right)+c_{3}^{j}\left(\bar{y}+\vec{\epsilon}_{\bar{y}}-\vec{\epsilon}_{\bar{y}}\right)\right\rangle \right)^{2}+\lambda^{2}\nonumber \\
 & =d^{2}\sum_{j=1}^{3}\left(\frac{\sqrt{d}}{\delta}+\left\langle e_{i}\left(G'\right),c_{1}^{j}\left(-\vec{\epsilon}_{\bar{1}}\right)+c_{2}^{j}\left(-\vec{\epsilon}_{\bar{x}}\right)\right\rangle +c_{3}^{j}\left(-\vec{\epsilon}_{\bar{y}}\right)\right)^{2}+\lambda^{2}\label{eq:3}
\end{align}

The last equality holds since $\bar{1}+\vec{\epsilon}_{\bar{1}},\bar{x}+\vec{\epsilon}_{\bar{x}},\bar{y}+\vec{\epsilon}_{\bar{y}}$
are orthogonal to $e_{i}\left(G'\right)$.
\[
\le d^{2}\sum_{j=1}^{3}\left(\frac{\sqrt{d}}{\delta}+\left(\sum_{k=1}^{3}\left|c_{k}^{j}\right|\right)\left(\frac{\sqrt{d}}{\delta}\right)\right)^{2}+\lambda^{2}=O\left(\frac{d^{3}}{\delta^{2}}\right)+\lambda^{2}
\]

The last inequality follows by Cauchy\textendash Schwarz inequality
and the assumption on $\left\Vert \vec{\epsilon}_{\bar{1}}\right\Vert _{2},\left\Vert \vec{\epsilon}_{\bar{x}}\right\Vert _{2},\left\Vert \vec{\epsilon}_{\bar{y}}\right\Vert _{2}$.\end{proof}
\begin{claim}
\label{claim:.s_delta=00003D3}$\left|S_{\delta}\right|=3$.\end{claim}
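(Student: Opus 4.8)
The plan is to show $|S_\delta|=3$ by combining an upper bound and a lower bound. For the lower bound, observe that $\bar{1}, \bar{x}, \bar{y}$ are three orthogonal unit vectors (this follows from Definition~\ref{Def:x,y}, since $\vec{1}_n, \vec{x}, \vec{y}$ are pairwise orthogonal; with high probability the $V_i$ have size close to $n/3$ so normalization does not disturb this). By Lemma~\ref{lem:distance to S_delta}, each of $\bar{1}, \bar{x}, \bar{y}$ is within $\ell_2$-distance $\frac{\sqrt{d}}{\delta} = O(1/\sqrt{d})$ of $\mathrm{span}(\{e_i(G')\}_{i \in S_\delta})$ (note $\bar 1$ lands in the part of $S_\delta$ near $\frac{2}{3}d$, while $\bar x, \bar y$ land in the part near $-\frac{1}{3}d$, but both parts are inside $S_\delta$). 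Three mutually orthogonal unit vectors that are each very close to a subspace force that subspace to have dimension at least~3; hence $|S_\delta| \ge 3$. I would make this precise by a short perturbation argument: if the span had dimension $\le 2$, projecting $\bar 1, \bar x, \bar y$ onto it would give three vectors of norm $\ge 1 - O(1/\sqrt d)$ that are ``almost orthogonal'' (inner products $O(1/\sqrt d)$), and the Gram matrix of three such vectors cannot have rank $\le 2$ for $d$ large enough — its determinant is bounded away from~0.

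For the upper bound $|S_\delta| \le 3$, I would use Claim~\ref{Claim:Bounding eiganvalues}, which has just been proved: for every index $i \notin S_\delta$ we have $|\lambda_i(G')| \le 2\lambda + O(d^{1.5}/\delta)$. With the choice $\delta = \frac{\epsilon}{2}d$, this bound is $2\lambda + O(\sqrt d)$, and since $\lambda \le (\frac{1}{3}-\epsilon)d$, this is at most $(\frac{2}{3} - 2\epsilon)d + O(\sqrt d)$, which is strictly smaller in absolute value than both $\frac{2}{3}d - \delta = (\frac{2}{3} - \frac{\epsilon}{2})d$ and $\frac{1}{3}d - \delta = (\frac{1}{3} - \frac{\epsilon}{2})d$ once $d$ is a large enough constant. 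In other words, every eigenvalue \emph{not} in $S_\delta$ is bounded away (on both sides) from the windows around $\frac{2}{3}d$ and $-\frac{1}{3}d$ that define $S_\delta$ — which is automatic — but more importantly, contrapositively, if $|S_\delta|$ were~$\ge 4$ then there would be (at least) four eigenvalues lying in those two windows. I then rule this out: the window around $\frac{2}{3}d$ can contain only one eigenvalue, since $\lambda_1(G') \ge (1-2^{-\Omega(d)})\frac{2}{3}d$ while $\lambda_2(G') \le 2\lambda + O(\sqrt d) < (\frac{2}{3}-\frac{\epsilon}{2})d$ (this uses that $\bar 1+\vec\epsilon_{\bar 1}$ is essentially the Perron vector, and an argument analogous to Claim~\ref{Claim:Bounding eiganvalues} applied to $e_2(G')$, which is orthogonal to the near-Perron direction). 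Similarly, the window around $-\frac{1}{3}d$ contains at most two eigenvalues: if it contained three, say $e_{n}, e_{n-1}, e_{n-2}$, then by Lemma~\ref{lem:distance to S_delta} the two-dimensional space $\mathrm{span}(\bar x+\vec\epsilon_{\bar x}, \bar y+\vec\epsilon_{\bar y})$ would have to account for a three-dimensional eigenspace, and one can find a unit vector in that eigenspace orthogonal to both $\bar x+\vec\epsilon_{\bar x}$ and $\bar y+\vec\epsilon_{\bar y}$; applying $A_G$ and the induced-subgraph decomposition as in Claim~\ref{Claim:Bounding eiganvalues} would bound its eigenvalue by $2\lambda + O(\sqrt d)$ in absolute value, contradicting that it lies within $\delta$ of $-\frac{1}{3}d$.

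Putting the two inequalities together gives $|S_\delta| = 3$, with the three indices being $\{1\}$ (near $\frac 23 d$) and $\{n-1, n\}$ (near $-\frac 13 d$); this also immediately yields parts~1 and~2 of Theorem~\ref{thm:Main}, since $S_\delta(\frac23 d) = \{1\}$ forces $\lambda_2(G') \le 2\lambda + O(\sqrt d)$ and $S_\delta(-\frac13 d) = \{n-1,n\}$ forces $\lambda_{n-2}(G') \ge -(2\lambda + O(\sqrt d))$, while the near-eigenvector statements of Lemma~\ref{lem:distance to S_delta} then refine to membership in $\mathrm{span}(\{e_i(G')\}_{i\in\{n-1,n\}})$ exactly as in part~2.

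I expect the main obstacle to be the upper bound direction, and specifically the step where I argue that the window around $-\frac13 d$ cannot contain more than two eigenvalues. The clean way to do it is the orthogonal-complement trick: inside any putative $\ge 3$-dimensional eigenspace sitting in the $\delta$-window, pick a unit vector $u$ orthogonal to both $\bar x+\vec\epsilon_{\bar x}$ and $\bar y+\vec\epsilon_{\bar y}$, and then run the estimate $\|A_{G'}u\|_2 \le \|A_G u\|_2 + \|\sum_j A_{G_j}u\|_2$ exactly as in Claim~\ref{Claim:Bounding eiganvalues}, using that $u$ is then also (nearly) orthogonal to $\bar 1, \bar x, \bar y$ and to the $\bar p_j$'s. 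Verifying that $u$'s small overlap with these reference vectors (inherited from the $\vec\epsilon$ terms, each $O(1/\sqrt d)$) does not blow up the bound is the delicate bookkeeping, but it is the same kind of Cauchy--Schwarz accounting already carried out in equations~\eqref{eq:2} and~\eqref{eq:3}, so no genuinely new idea is needed — only care with constants, for which choosing $d$ a sufficiently large constant (depending on $\epsilon$) suffices.
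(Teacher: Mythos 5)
Your proof takes essentially the same route as the paper's. For the lower bound $|S_\delta|\ge 3$ the paper uses exactly your observation that $\bar 1+\vec\epsilon_{\bar 1},\bar x+\vec\epsilon_{\bar x},\bar y+\vec\epsilon_{\bar y}$ are three (near-)independent vectors in $\text{span}(\{e_i(G')\}_{i\in S_\delta})$, and for the upper bound it uses precisely your orthogonal-complement trick — the paper takes $W=\text{span}\{\bar 1+\vec\epsilon_{\bar 1},\bar x+\vec\epsilon_{\bar x},\bar y+\vec\epsilon_{\bar y}\}$, picks a nonzero $v$ in the complement of $W$ inside the $S_\delta$-eigenspace, lower-bounds $\|A_{G'}v\|_2$ by $(\tfrac13-\tfrac{\epsilon}{2})d$ via the spectral decomposition, and upper-bounds both $\|A_G v\|_2$ and $\|\sum_j A_{G_j}v\|_2$ by rerunning the Claim~\ref{Claim:Bounding eiganvalues} estimates, yielding the contradiction. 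Your only deviation is organizational: you treat the $\tfrac23 d$-window and the $-\tfrac13 d$-window separately, whereas the paper handles them jointly in one orthogonal complement (which is cleaner, since it never needs to argue that a vector in the $-\tfrac13 d$ eigenspace is automatically orthogonal to $\bar 1+\vec\epsilon_{\bar 1}$), and your Gram-determinant justification of the lower bound is a slightly more explicit rendering of the paper's one-line ``they are independent'' remark.
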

\begin{proof}
Clearly $\bar{1}+\vec{\epsilon}_{\bar{1}},\bar{x}+\vec{\epsilon}_{\bar{x}},\bar{y}+\vec{\epsilon}_{\bar{y}}\in\text{span}\left(\left\{ e_{i}\left(G'\right)\right\} _{i\in S_{\delta}}\right)$
and since they are independent, (the vectors $\bar{1},\bar{x},\bar{y}$
are nearly orthogonal, adding to them the vectors $\vec{\epsilon}_{\bar{1}},\vec{\epsilon}_{\bar{x}},\vec{\epsilon}_{\bar{y}}$,
each of small $\ell_{2}$ norm, doesn't change this.), it follows
that $\left|S_{\delta}\right|\ge3$.

Now we show $\left|S_{\delta}\right|=3$. Let $W=\text{span}\left\{ \bar{1}+\vec{\epsilon}_{\bar{1}},\bar{x}+\vec{\epsilon}_{\bar{x}},\bar{y}+\vec{\epsilon}_{\bar{y}}\right\} $
and let
\[
W^{\perp}:=\left\{ v\in\text{span}\left(\left\{ e_{i}\left(G'\right)\right\} _{i\in S_{\delta}}\right)|\forall w\in W\,,\,\left\langle w,v\right\rangle =0\right\} .
\]
 Assume, towards a contradiction, that exists $0\neq v\in W^{\perp}$.
Note that we can write
\[
v=\sum_{i\in S_{\delta}}\alpha_{i}e_{i}\left(G'\right)\,,
\]
 where $\sum_{i\in S_{\delta}}\alpha_{i}^{2}=1$. Now we give a lower
bound on $\left\Vert A_{G'}v\right\Vert _{2}$.
\begin{align*}
\left\Vert A_{G'}v\right\Vert _{2}^{2} & =\left\langle A_{G'}v,A_{G'}v\right\rangle \\
 & =\left\langle A_{G'}\left(\sum_{i\in S_{\delta}}\alpha_{i}e_{i}\left(G'\right)\right),A_{G'}\left(\sum_{i\in S_{\delta}}\alpha_{i}e_{i}\left(G'\right)\right)\right\rangle \\
 & =\left\langle \sum_{i\in S_{\delta}}\lambda_{i}\left(G'\right)\alpha_{i}e_{i}\left(G'\right),\sum_{i\in S_{\delta}}\lambda_{i}\left(G'\right)\alpha_{i}e_{i}\left(G'\right)\right\rangle \\
 & =\sum_{i\in S_{\delta}}\lambda_{i}^{2}\left(G'\right)\alpha_{i}^{2}\\
 & \ge\left(\left(\frac{1}{3}-\epsilon\right)d\right)^{2}\sum_{i\in S_{\delta}}\alpha_{i}^{2}=\left(\left(\frac{1}{3}-\frac{\epsilon}{2}\right)d\right)^{2}\,.
\end{align*}

The last inequality uses the fact $\delta=\frac{\epsilon}{2}d$. By
the triangle inequality at least one of the following is $\left(\frac{1}{3}-\frac{\epsilon}{2}\right)d$:

$\left\Vert A_{G}v\right\Vert _{2}$ and $\left\Vert \sum_{j=1}^{3}A_{G_{j}}v\right\Vert _{2}$,
see Equation~\ref{eq:1}. But applying the proof Claim~\ref{Claim:Bounding eiganvalues}
shows that $\left\Vert A_{G}v\right\Vert _{2}=\lambda+O\left(\frac{d^{1.5}}{\delta}\right)$
and $\left\Vert \sum_{j=1}^{3}A_{G_{j}}v\right\Vert _{2}=\lambda+O\left(\frac{d^{1.5}}{\delta}\right)$,
(The only requirement from the vector $e_{i}\left(G'\right)$ is to
be orthogonal to $\bar{1}+\vec{\epsilon}_{\bar{1}},\bar{x}+\vec{\epsilon}_{\bar{x}},\bar{y}+\vec{\epsilon}_{\bar{y}}$,
see Equation~\ref{eq:2} and Equation~\ref{eq:3}. $v\in W^{\perp}$
satisfies these requirements by definition). As $\lambda<\left(\frac{1}{3}-\epsilon\right)d$
and since $\delta=\frac{\epsilon}{2}d$ this is a contradiction (for
large enough $d$). Since $W^{\perp}=\phi$ Claim~\ref{claim:.s_delta=00003D3}
follows.
\end{proof}
By Claim~\ref{Claim:Bounding eiganvalues} and Claim~\ref{claim:.s_delta=00003D3}
items~1.(a) and 1.(c) of Theorem~\ref{thm:Main} hold and $-\left(1+\epsilon\right)\frac{1}{3}d\le\lambda_{n}\left(G'\right)\le\lambda_{n-1}\left(G'\right)\le-\left(1-\epsilon\right)\frac{1}{3}d$.
Items~2(a),2(b) and 2(c) hold with respect to the vectors $\vec{\epsilon}_{\bar{1}},\vec{\epsilon}_{\bar{x}},\vec{\epsilon}_{\bar{y}}$
defined above. To establish Item~1.(b) of Theorem~\ref{thm:Main}
note that by Lemma~\ref{lem:Existance of almost eiganvectors}
and by the Courant-Fischer theorem the following holds with high probability
\begin{align*}
\lambda_{n}\left(G'\right)\le\lambda_{n-1}\left(G'\right) & =\min_{\begin{array}{c}
S\in\mathbb{R}^{n}\\
\dim\left(S\right)=2
\end{array}}\max_{v\in S}\frac{v^{t}Av}{v^{t}v}\\
 & \le\max_{v\in\text{span}\left\{ \vec{x},\vec{y}\right\} }\frac{v^{t}Av}{v^{t}v}\\
 & \le-\frac{1}{3}d+\sqrt{d}\,.
\end{align*}

\end{proof}

\subsection{\label{sec:Coloring expander graphs with random color planting}Coloring
expander graphs with a random planted coloring}

Let $G'=\left(V,E\right)$ be a graph as in Theorem~\ref{thm:Main}
($G'$ is a $d$-regular $\lambda$-expander graph after a random
3-color planting). Theorem~\ref{thm:Main} gives some of the mathematical background that is needed in order to analyze our algorithm for 3-coloring $G'$. The following theorem restates in a combined form the parts of Theorem~\ref{thm:partial} that relate to the $H_A/P_R$ model together with Theorem~\ref{thm:A/R}.
\begin{thm}
\label{thm:Main-algorithmicly}Let
$G'$ be as above and assume $\lambda\le\frac{d}{24}$ and $d\ge d_{min}$.
Here $d_{min}$ is a large enough constant. Algorithm~\ref{alg:3-Coloring},
see below, colors $G'$, with high probability over choice of the random planted 3-coloring.
\end{thm}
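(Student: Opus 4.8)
The proof assembles the five phases of Algorithm~\ref{alg:3-Coloring} outlined in Section~\ref{sec:partial}. Throughout, call a vertex \emph{bad} at a given stage if its current color disagrees with the planted coloring, where an uncolored vertex is also counted as bad; the plan is to show that the bad set shrinks from $O(\lambda n/d)$ after Phase~1 to $O(\lambda^2 n/d^2)$ after Phase~3, and is eliminated in Phases~4--5 while producing a legal coloring. All the probabilistic statements below (Theorem~\ref{thm:Main} and Lemmas~\ref{lem:step2}, \ref{lem:step3}, \ref{lem:Safe Recoloring}) hold with high probability over the random planting, so by a union bound we may assume they all hold simultaneously.

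\textbf{Phases 1--3.} By Theorem~\ref{thm:Main}, with high probability the two most negative eigenvalues of $A_{G'}$ are each $\approx -d/3$, are separated by a gap of order $d$ from every other eigenvalue, and the plane they span is $O(1/\sqrt d)$-close in $\ell_2$ to $\mathrm{span}(\bar x,\bar y)$, in which the three color classes map — up to normalization — to three well-separated points. The hypothesis $\lambda\le d/24$ makes the gap condition $\lambda\le(\tfrac13-\epsilon)d$ of Theorem~\ref{thm:Main} hold comfortably, so all but an $O(\lambda/d)$-fraction of vertices land close to the correct cluster in this embedding. Feeding the embedding to the sampling-based clustering of Lemma~\ref{lem:Eigenvector approximation} yields $\chi_1$, a $b$-approximated coloring with $b\le O(\lambda n/d)$. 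Phases~2 and~3 are then Lemmas~\ref{lem:step2} and~\ref{lem:step3}. Their common engine is the expander mixing lemma: edge counts in $G'$ between any two vertex sets $S,T$ are within $\lambda\sqrt{|S||T|}$ of their expected value, which forces a strict majority of each miscolored class to have a majority of correctly-colored neighbors once that class exceeds $\Theta(\lambda^2 n/d^2)$; iterated parallel local improvement thus produces $\chi_2$ with bad set $O(\lambda^2 n/d^2)$, and the same estimate shows that every bad vertex (and only $O(\lambda^2 n/d^2)$ good ones) is \emph{suspect}, so cautious uncoloring yields the $b$-partial coloring $\chi_3$ with $b\le O(\lambda^2 n/d^2)$, where every vertex still colored agrees with the planting.

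\textbf{Phases 4--5.} Safe recoloring only assigns a free vertex the unique color missing from its colored neighborhood, so the invariant ``every colored vertex agrees with the planting'' (true for $\chi_3$) is preserved, and after it terminates each free vertex has colored neighbors of at most one color. By Lemma~\ref{lem:Safe Recoloring}, with high probability over the planting the subgraph induced on the vertices that remain free decomposes into connected components of size $O\!\left(\tfrac{\lambda^2}{d^2}\log n\right)$. On each component we have a list-coloring instance with lists of size at least $2$, which we solve by brute force over all $3^{O((\lambda^2/d^2)\log n)}=n^{O(\lambda^2/d^2)}$ candidate colorings; this is polynomial because $\lambda^2/d^2\le\tfrac1{576}$ is an absolute constant (in particular the bound holds even when $d$ is only a large constant, where the total number of free vertices may be $\Theta(n)$). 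A legal completion of each component exists since the planted coloring restricted to it is one, and concatenating these completions with $\chi_3$ — using that no edge joins two distinct components — gives a legal $3$-coloring of $G'$.

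\textbf{Main obstacle.} The crux is Lemma~\ref{lem:Safe Recoloring}: it is the sole place where randomness of the planting is genuinely needed, and it must bound the largest connected component of a sparse, random-like subgraph that is itself defined by the correlated outcomes of the deterministic Phases~1--4, so one cannot simply quote a standard component-size bound for $G_{n,p}$. A secondary point of care is the constant-degree regime, where $O(\lambda^2 n/d^2)=\Theta(n)$, so it is specifically the \emph{small-component} conclusion — not merely a bound on the number of free vertices — that keeps the final exhaustive search polynomial.
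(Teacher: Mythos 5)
Your proof follows the paper's own five-phase argument --- spectral clustering, iterative recoloring, cautious uncoloring, safe recoloring, brute force --- and invokes the same lemmas in the same roles, including correctly identifying Lemma~\ref{lem:Safe Recoloring} as the one step where randomness of the planting is indispensable. The only caveat is quantitative: before applying Lemma~\ref{lem:Safe Recoloring} the paper's appendix chain uses the tighter, model-specific bound $|H_1|=O(|SB|)=O\!\left(n\,2^{-\Omega(d)}\right)$ (valid for random planting, via Lemmas~\ref{lem:SB bound the random planting Case}, \ref{lem:Iterative Refinement}, \ref{lem: Uncoloring leaves us with a large set colored correctly.}), which makes the lemma's hypothesis $|H_1|\le n/11$ immediate for large $d$, whereas the uniform $O(\lambda^2 n/d^2)$ bound you quote with $\lambda\le d/24$ leaves that hypothesis hinging on an untracked constant factor.
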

First we define the following.
\begin{defn}
Given any coloring of $G'$, say $C$, denote by $col_{C}\left(v\right)$
the color of vertex $v$ according to $C$. Denote by $P$ the planted
coloring of $G'$. We say a coloring is partial if some of the vertices
are not colored.
\end{defn}

\begin{defn}
An $f$-approximated coloring is a $3$-coloring of $G'$, possibly
not legal, that agrees with the planted coloring, $col_{P}$, of $G'$
on at least $n-f$ vertices.
\end{defn}

Throughout this section, let $\epsilon=0.01$. Consider Algorithm~\ref{alg:3-Coloring}.

\begin{algorithm}[H]
\begin{enumerate}
\item Apply Algorithm~\ref{alg:Spectral Clustering} ({\em spectral clustering})
on $G'$, obtaining a coloring $C_{1}$.
\item Apply Algorithm~\ref{alg:iterative recoloring} ({\em iterative
recoloring}) on $(G',C_{1})$, obtaining a coloring $C_{2}$.
\item Apply Algorithm~\ref{alg:Cautious Uncoloring} ({\em cautious uncoloring})
on $(G',C_{2})$, obtaining a partial coloring $C_{3}$.
\item Apply Algorithm~\ref{alg:Safe-Recoloring} ({\em safe recoloring})
on $(G',C_{3})$, obtaining a partial coloring $C_{4}$.
\item Apply Algorithm~\ref{alg:Brute-Force} ({\em brute force}) on
$(G',C_{4})$, obtaining a coloring $C_{5}$.
\end{enumerate}
\protect\caption{$3$-Coloring\label{alg:3-Coloring}}
\end{algorithm}

Algorithm 3-Coloring is patterned after the algorithm of Alon and Kahale
for the random planted model. The main differences between the two
{\em algorithms} are as follows. (There are also differences in
the analysis) Similar to \cite{alon1997spectral}, Step~1 (spectral
3-clustering) starts with computing the eigenvectors corresponding
to the two most negative eigenvalues. However, it differs from \cite{alon1997spectral}
in the way $C_{1}$ is derived from these eigenvector. Our approach
is computationally more efficient, an aspect that is only of minor
significance in the context of 3-coloring, but may become significant
for $k$-coloring with $k>3$. See more details in Section~\ref{sub:Proof-Of-Theorem MAin Alg}.
Another difference is the introduction of Step~4 in our algorithm,
that is not present in \cite{alon1997spectral}. The purpose of introducing
this additional step is so as to be able to prove the last statement
of Lemma~\ref{lem:Safe Recoloring}. In \cite{alon1997spectral}
such a statement could be proved already after Step~3.

\subsection{\label{sub:Proof-Of-Theorem MAin Alg}Proof Of Theorem~\ref{thm:Main-algorithmicly}}

The following lemmas outline the desired outcome of the respective
steps of Algorithm~\ref{alg:3-Coloring} and the proof of Theorem~\ref{thm:Main-algorithmicly}
follows by applying them.

At a high level, one may think of most of our proofs as adaptations
(some immediate, some requiring work) of proofs of corresponding statements
made in \cite{alon1997spectral}. The exception to this is the proof
of the last statement in Lemma~4, which involves ideas not present
in \cite{alon1997spectral}.
\begin{defn}
\label{def: Statistically Bad}
The set $SB\subseteq V$, statistically
bad, is the the following vertex set. $v\in SB$ if for some color class other than $col_{P}\left(v\right)$, $v$ has either has more
than $\left(\frac{1}{3}+\epsilon\right)d$ neighbors or less than
$\left(\frac{1}{3}-\epsilon\right)d$ neighbors in that color class. \end{defn}
\begin{lem}
\label{lem:SB bound the random planting Case}It holds that $\left|SB\right|\le n2^{-\Omega\left(d\right)}$
with high probability over the random planting process.\end{lem}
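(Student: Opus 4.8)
\textbf{Proof plan for Lemma~\ref{lem:SB bound the random planting Case}.}
The plan is to fix a vertex $v$ and a color class index $j \neq col_P(v)$, and to bound the probability that $v$ has an atypical number of neighbors colored $j$; then apply a union bound over the at most $2n$ pairs $(v,j)$. First I would condition on the host graph $G$ (which is fixed, being the adversarial expander), so that the only randomness is the planting, i.e.\ the independent uniform assignment of each vertex of $V$ to one of $V_1,V_2,V_3$. Condition also on the color of $v$ itself. The key observation is that $v$ has exactly $d$ neighbors in $G$ (regularity), and each such neighbor $u$ is placed in class $j$ independently with probability $\tfrac{1}{3}$. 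Hence the number of $j$-colored neighbors of $v$ is a sum of $d$ i.i.d.\ Bernoulli$(\tfrac13)$ random variables, with mean $\tfrac{d}{3}$.

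The main step is then a standard Chernoff bound (available as one of the ``useful facts'' referenced in Section~\ref{sec:Useful-Facts}): the probability that this binomial deviates from $\tfrac{d}{3}$ by more than $\epsilon d$ is at most $2\exp(-c\epsilon^2 d)$ for an absolute constant $c>0$ --- since $\epsilon = 0.01$ is a fixed constant, this is $2^{-\Omega(d)}$. Thus $\Pr[v \in SB] \le 2^{-\Omega(d)}$ for each $v$ (absorbing the factor from the two color classes $j \neq col_P(v)$ and the conditioning on $col_P(v)$, which only costs a constant and a sum over three possible colors of $v$). By linearity of expectation, $\mathbb{E}[|SB|] \le n \cdot 2^{-\Omega(d)}$.

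To upgrade this expectation bound to a high-probability bound on $|SB|$ itself, I would either apply Markov's inequality (which already gives that $|SB| \le n\cdot 2^{-\Omega(d)}$ with probability $1 - 2^{-\Omega(d)}$, after slightly weakening the constant in the exponent), or invoke a concentration inequality such as McDiarmid's inequality --- noting that changing the color of a single vertex $u$ changes $|SB|$ by at most $d+1$ (it can affect the ``statistically bad'' status only of $u$ and its $d$ neighbors), so $|SB|$ is concentrated around its mean up to an additive $O(\sqrt{n}\,d)$, which is $n\cdot 2^{-\Omega(d)}$-negligible in the relevant regime (and for large $d$ one can fall back on the Markov argument). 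I do not expect any serious obstacle here; the only mild subtlety is bookkeeping the constants so that the conclusion really reads $|SB| \le n 2^{-\Omega(d)}$ with high probability, and making sure the statement is robust to the conditioning on $v$'s own color and to the union over the (at most two) relevant color classes per vertex. The hardest conceptual point --- that after conditioning on the fixed host graph the neighbor-colors are genuinely independent --- is immediate from the definition of $P_3(G)$.
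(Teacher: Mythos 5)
Your proposal is correct and follows essentially the same route as the paper's proof: per-vertex Chernoff bound giving $\Pr[v\in SB]\le 2^{-\Omega(d)}$, hence $\mathbb{E}[|SB|]\le n2^{-\Omega(d)}$, then a two-regime concentration argument with McDiarmid for $d$ up to a small constant times $\log n$ and a union-bound/Markov argument (showing $|SB|=0$ w.h.p.) once $d$ exceeds a suitable multiple of $\log n$. The only cosmetic differences are that the paper parameterizes the McDiarmid deviation multiplicatively (threshold $d\,\mathbb{E}[|SB|]$) rather than additively as $O(\sqrt{n}\,d)$, and uses Lipschitz constant $d$ instead of your $d+1$; both are equivalent up to constants.
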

\begin{proof}
(Of Lemma~\ref{lem:SB bound the random planting Case}). In expectation
$v$ has $\frac{1}{3}d$ neighbors in $G'$ with original color $k\in\left\{ 1,2,3\right\} \setminus\left\{ col_{P}\left(v\right)\right\} $.
 By Chernoff bound, Theorem~\ref{thm:=00005BChernoff=00005D}, the
probability that a random vertex $v\in G'$ is in $SB$ is at most
$2^{-\Omega\left(d\right)}$, thus $\mathbb{E}\left[SB\right]=2^{-\Omega\left(d\right)}n$.
If $d\ge\eta_{1}\log n$, for a large enough constant $\eta_{1}$,
then, by using the union bound, the probability that $\left|SB\right|\ge1$
is bounded by $2^{-\Omega\left(d\right)}n=n^{-\Omega\left(1\right)}$.
Assume $d\le\eta_{2}\log n$, where $\eta_{2}$ is a constant. $\left|SB\right|$
is a function of $n$ independent random variables that were used
to create $G'$. By changing one of the above random variables $\left|SB\right|$
changes by at most $d$. By applying McDiarmid's theorem, Theorem~\ref{thm:=00005BMcDiarmid=00005D},
it follows that for a small enough constant $\eta_{2}$ the following
holds,
\begin{align*}
\Pr\left[\left|\left|SB\right|\text{\textminus}\mathbb{E}\left[\left|SB\right|\right]\right|\ge d\mathbb{E}\left[\left|SB\right|\right]\right] & \le2\exp\left(-\frac{2\left(d\mathbb{E}\left[\left|SB\right|\right]\right)^{2}}{2nd^{2}}\right)\\
 & =2^{-\frac{n}{2^{\Omega\left(d\right)}}}=2^{-n^{\Omega\left(1\right)}}\,.
\end{align*}

For (constant) large enough values of $d$ and a suitable choice of
$\epsilon$ we can assure $\eta_{1}\le\eta_{2}$ and the proof follows.
\end{proof}

The following lemma is the first step towards coloring $G'$.
\begin{lem}
\label{lemma:log(n)-Algorithmicly}Let $G'$ as above and assume $\lambda\le\frac{1}{24}d$.
Algorithm~\ref{alg:iterative recoloring}, see below, outputs an
$O\left(\left|SB\right|\right)$-approximated coloring with high probability
over the colors of the vertices in the random planting process.
\end{lem}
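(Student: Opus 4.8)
\textbf{Proof proposal for Lemma~\ref{lemma:log(n)-Algorithmicly}.}

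The plan is to follow the Alon--Kahale template, but to phrase everything in terms of the set $SB$ of statistically bad vertices rather than in terms of probabilistic concentration, so that the argument is deterministic once $SB$ is fixed. First, I would combine Theorem~\ref{thm:Main} with Lemma~\ref{lem:Eigenvector approximation} (the spectral clustering guarantee) to conclude that the coloring $C_1$ produced by Step~1 is an $f_1$-approximated coloring for some $f_1 = O\left(\frac{\lambda^2}{d^2} n\right)$; call the set of vertices on which $C_1$ disagrees with the planted coloring $P$ the set $W_1$, so $|W_1| \le f_1$. The point is that $\frac{\lambda}{d} \le \frac{1}{24}$ is a small enough constant that $f_1$ is a small constant fraction of $n$. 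Then I would analyze Step~2 (iterative recoloring, Algorithm~\ref{alg:iterative recoloring}): a vertex $v$ is moved out of its current color class whenever it has strictly fewer neighbors in some other class. The claim to prove is that after this process stabilizes, the set $W_2$ of vertices miscolored relative to $P$ satisfies $|W_2| = O(|SB|)$.

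The key structural observation is the standard one: if $v \notin SB$, then $v$ has at least $\left(\tfrac{1}{3}-\epsilon\right)d$ neighbors (in $G'$) in each of the two non-planted color classes, and at most... well, $v$ has no neighbors in its own planted class (that is the planted coloring), so $v$ has degree between $\left(\tfrac{2}{3}-2\epsilon\right)d$ and $\left(\tfrac{2}{3}+2\epsilon\right)d$, split roughly evenly between the other two classes. Hence if all of $v$'s neighbors were colored correctly by the current coloring $C$, then $C$ would be forced (by the local-improvement rule) to assign $v$ its planted color, or at worst leave it fixed at the planted color. So a vertex $v \notin SB$ can be miscolored in the stable coloring $C_2$ only if it has ``many'' neighbors that are themselves miscolored in $C_2$ --- quantitatively, at least $\Omega(\epsilon d)$ such neighbors, since a gap of roughly $\epsilon d / 2$ in the neighbor-counts has to be overcome. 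This gives the edge-counting inequality: writing $W' = W_2 \setminus SB$, every vertex of $W'$ sends $\Omega(\epsilon d)$ edges into $W_2 = W' \cup (W_2 \cap SB)$, so $E_{G'}(W', W_2) \ge \Omega(\epsilon d)\,|W'|$.

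Now I would apply the expander mixing lemma (Theorem from Section~\ref{sec:Useful-Facts}; the host graph $G$ is a $\lambda$-expander and $G' \subseteq G$, so edge counts in $G'$ are at most edge counts in $G$) to bound $E_{G'}(W', W_2) \le E_G(W', W_2) \le \frac{d}{n}|W'|\,|W_2| + \lambda\sqrt{|W'|\,|W_2|}$. Combining with the lower bound and using $|W_2| \le |W'| + |SB|$, together with the fact that $|W_2| \le f_1 = O\!\left(\frac{\lambda^2}{d^2}n\right)$ is already a small constant fraction of $n$ (so the term $\frac{d}{n}|W'|\,|W_2| \le \frac{d}{n} f_1 |W'|$ is at most, say, $\tfrac{1}{10}\epsilon d |W'|$ once $\lambda/d$ is a small enough constant --- this is where $\lambda \le d/24$ is used), I can absorb the first mixing term into the left-hand side. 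That leaves $\Omega(\epsilon d)\,|W'| \le \lambda \sqrt{|W'|(|W'|+|SB|)} + \tfrac{1}{10}\epsilon d|W'|$, i.e. $\Omega(\epsilon d)\sqrt{|W'|} \le \lambda\sqrt{|W'| + |SB|}$; squaring and using $\lambda \le d/24$ again (so $\lambda^2$ is a small constant times $(\epsilon d)^2$ --- here I may need $\epsilon$ not too small relative to $1/24$, which holds since $\epsilon = 0.01$ and we only need $\lambda^2 = o((\epsilon d)^2)$ up to constants... actually I should be careful: $\lambda/d \le 1/24$ while $\epsilon = 0.01$, so $\lambda^2/(\epsilon d)^2 \le (1/24)^2/(0.01)^2 \approx 17$, which is \emph{not} small; so the cleaner route is to note that the recoloring rule only needs a gap to \emph{some} class, and the relevant constant in ``$\Omega(\epsilon d)$'' should be replaced by a more careful count that is a constant fraction of $d$ independent of $\epsilon$, using that a non-$SB$ vertex's two off-class neighbor counts are each within $\epsilon d$ of $d/3$ hence within $2\epsilon d \ll d/6$ of each other). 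With that correction the squared inequality forces $|W'| = O(|SB|)$, and hence $|W_2| \le |W'| + |SB| = O(|SB|)$, which is exactly the assertion of Lemma~\ref{lemma:log(n)-Algorithmicly}.

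\textbf{Main obstacle.} The delicate point is the last step: getting the constants to line up so that the expander-mixing bound is genuinely dominated by the ``forced-recoloring'' lower bound. This requires (i) that $f_1$, the size of the error set after spectral clustering, is already a sufficiently small constant fraction of $n$ (so that the $\frac{d}{n}|W'||W_2|$ term is negligible), and (ii) that the per-vertex edge surplus forced by the recoloring rule is $\Omega(d)$ with a constant that beats $\lambda/d \le 1/24$ --- which means I should \emph{not} route the lower bound through the parameter $\epsilon$ but through the fact that a correctly-surrounded non-$SB$ vertex has a majority gap of order $d$ between its planted class (zero own-class neighbors) and any misassignment. I expect that tracking these constants carefully, and in particular arguing that the monotone recoloring process terminates (so that $C_2$ is well-defined and the ``stable'' hypothesis can be invoked on every vertex simultaneously), is where the real work lies; everything else is a routine instantiation of the expander mixing lemma.
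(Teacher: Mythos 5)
Your core estimate --- lower-bound the number of edges a wrongly-colored non-$SB$ vertex must send into the error set by a constant fraction of $d$, then use the expander mixing lemma on the host graph to show the error set cannot be large --- is exactly the engine behind the paper's proof, and you correctly caught (mid-stream) that the relevant constant is $\Omega(d)$ (roughly $\tfrac{1}{2}(\tfrac{1}{3}-\epsilon)d$, coming from the zero own-class count versus near-$d/3$ off-class counts), not $\Omega(\epsilon d)$; that correction is essential because $\lambda/d\le 1/24$ is not small relative to $\epsilon=0.01$.

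However, the way you package the argument has a gap that you acknowledge but do not close, and the paper's organization is designed precisely to avoid it. You analyze the error set $W_2$ ``after the process stabilizes,'' with two unjustified ingredients: (i) that the parallel minority-recoloring process reaches a fixed point at all (it is not a monotone local-search --- it may oscillate --- and Algorithm~\ref{alg:iterative recoloring} does not run to convergence but for $\Omega(d)$ parallel rounds); and (ii) that $|W_2|\le f_1$ (or even $\le n/24$), which you invoke to absorb the $\frac{d}{n}|W'||W_2|$ term but which is not a priori true for a putative fixed point --- a single round could in principle enlarge the error set. The paper sidesteps both problems by proving the \emph{one-step} contraction in Lemma~\ref{lem:approx from f<1/d to f(lambda/d)^2}: if the input coloring $C$ is $f$-approximated with $f\le n/24$, then the output $C_2$ is $\bigl((24\lambda/d)^2 f + |SB|\bigr)$-approximated. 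The edge count is applied between $W_C$ (input errors) and $W_{C_2}\setminus SB$ (output errors outside $SB$), not within a single error set, so nothing is assumed about a fixed point. Because $(24\lambda/d)^2<1$, the bound $f\le n/24$ is preserved by induction across rounds (so the absorption of the $\frac{d}{n}$-term is always legitimate), and after $k=\Omega(d)$ rounds the $f$-contribution has shrunk below $O(|SB|)$ (Lemma~\ref{lem:Iterative Refinement}). So: same expander-mixing technique and same constant, but you should replace the fixed-point framing by this one-step-contraction-plus-iteration framing to get a complete proof; otherwise you owe an argument for termination and for the size of the terminal error set, neither of which Algorithm~\ref{alg:iterative recoloring} actually delivers.
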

Note that if $d=\eta\log n$, (for large enough constant $\eta$),
then Lemma~\ref{lemma:log(n)-Algorithmicly} implies that Algorithm~\ref{alg:iterative recoloring}
recovers the original coloring of $G'$. To prove Lemma~\ref{lemma:log(n)-Algorithmicly}
we first show Lemma~\ref{lem:Eigenvector approximation} and Lemma~\ref{lem:approx from f<1/d to f(lambda/d)^2}
presented below.

The next lemma shows, using Theorem~\ref{thm:Main}, that we can
derive an approximated coloring from the eigenvectors $e_{n-1}\left(G'\right),e_{n}\left(G'\right)$
(see Algorithm~\ref{alg:Spectral Clustering} bellow).

\begin{algorithm}[H]
Input: A graph $G'$ and a positive constant $c\le\frac{1}{2}$.
\begin{enumerate}
\item Calculate $e_{n-1}:=e_{n-1}\left(G'\right)$ and $e_{n}:=e_{n}\left(G'\right)$.
\item For every triplet of vertices $v_{1},v_{2},v_{3}\in V\left(G'\right)$
do

\begin{enumerate}
\item Put a vertex $u\in V\left(G'\right)$ in $S_{i}$ if  it
holds that
\[
\left(\left(e_{n-1}\right)_{u}-\left(e_{n-1}\right)_{v_{i}}\right)^{2}+\left(\left(e_{n}\right)_{u}-\left(e_{n}\right)_{v_{i}}\right)^{2}<\frac{1}{70n}\,.
\]
 In case that two deferent sets, $S_{i}$ and $S_{j}$, satisfy the above condition with respect to    $u$, go to Step~2.
\item If for every $i=1,2,3$ it holds that $\left|S_{i}\right|\ge\left(\frac{1}{3}-\frac{1}{d^{2c}}\right)n$
and that $\sum_{i=1}^{3}|S_i|\ge n- \theta(\frac{n}{d})$
then output any coloring $C$ of $G'$ that satisfies $col_{C}\left(u\right)=i$
if and only if $u\in S_{i}$ and stop.
\end{enumerate}
\end{enumerate}
\protect\caption{\label{alg:Spectral Clustering}Spectral Clustering}
\end{algorithm}

\begin{lem}
\label{lem:Eigenvector approximation}(Restatement of Lemma~\ref{lem:step1}).
Let $G'$ be as above and  $c$ be a positive constant. Algorithm~\ref{alg:Spectral Clustering}
runs in polynomial time and if the following
vectors exist (recall Definition~\ref{Def:x,y}). $\vec{\epsilon}_{\bar{x}}$
such that $\left\Vert \vec{\epsilon}_{\bar{x}}\right\Vert _{2}=O\left(d^{-c}\right)$
and $\bar{x}+\vec{\epsilon}_{\bar{x}}\in\text{span}\left(\left\{ e_{i}\left(G'\right)\right\} _{i\in\left\{ n-1,n\right\} }\right)$.
$\vec{\epsilon}_{\bar{y}}$ such that $\left\Vert \vec{\epsilon}_{\bar{y}}\right\Vert _{2}=O\left(d^{-c}\right)$
and $\bar{y}+\vec{\epsilon}_{\bar{y}}\in\text{span}\left(\left\{ e_{i}\left(G'\right)\right\} _{i\in\left\{ n-1,n\right\} }\right)$. Then Algorithm~\ref{alg:Spectral Clustering} outputs an $O\left(nd^{-2c}\right)$-approximated
coloring.\end{lem}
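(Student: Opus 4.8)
The plan is to establish three things. First, that in the planar embedding $\Phi\colon v\mapsto\left(\left(e_{n-1}\left(G'\right)\right)_{v},\left(e_{n}\left(G'\right)\right)_{v}\right)$ used by Algorithm~\ref{alg:Spectral Clustering}, all but $O\left(nd^{-2c}\right)$ of the vertices fall into three tight, well-separated clusters, one per planted color class. Second, that consequently some triple examined by the algorithm passes its acceptance tests, so the algorithm halts -- clearly in polynomial time, since it computes two eigenvectors and then, for each of $O\left(n^{3}\right)$ triples, does $O\left(n\right)$ work. Third, and conversely, that \emph{any} triple passing the tests induces a coloring identical to the planted one except on $O\left(nd^{-2c}\right)$ vertices; this is exactly the asserted $O\left(nd^{-2c}\right)$-approximated coloring.

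To set up the clusters, recall from Definition~\ref{Def:x,y} that $v\mapsto\left(\bar{x}_{v},\bar{y}_{v}\right)$ is constant on each planted class $V_{i}$; call the three values $q_{1},q_{2},q_{3}\in\mathbb{R}^{2}$. Since the planting is random, with high probability $\left|V_{i}\right|=\left(1\pm o\left(1\right)\right)\tfrac{n}{3}$, hence $\left\Vert \vec{x}\right\Vert _{2}=\left(1\pm o\left(1\right)\right)\sqrt{2n}$, $\left\Vert \vec{y}\right\Vert _{2}=\left(1\pm o\left(1\right)\right)\sqrt{2n/3}$, and a short computation shows $q_{1},q_{2},q_{3}$ form a near-equilateral triangle with pairwise distances $\left(1\pm o\left(1\right)\right)\sqrt{6/n}$. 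To pass from the $q_{i}$ to $\Phi$: expressing $\bar{x}+\vec{\epsilon}_{\bar{x}}$ and $\bar{y}+\vec{\epsilon}_{\bar{y}}$ in the orthonormal basis $e_{n-1}\left(G'\right),e_{n}\left(G'\right)$ of their common span gives a $2\times2$ matrix $A$ with $\left(\left(\bar{x}+\vec{\epsilon}_{\bar{x}}\right)_{v},\left(\bar{y}+\vec{\epsilon}_{\bar{y}}\right)_{v}\right)^{\top}=A\,\Phi\left(v\right)^{\top}$ for every $v$. Because $\bar{x},\bar{y}$ are nearly orthonormal and $\left\Vert \vec{\epsilon}_{\bar{x}}\right\Vert _{2},\left\Vert \vec{\epsilon}_{\bar{y}}\right\Vert _{2}=O\left(d^{-c}\right)$, one gets $AA^{\top}=I+O\left(d^{-c}\right)$, so $A$ and $A^{-1}$ have operator norm $1+o\left(1\right)$; thus $\Phi$ distorts Euclidean distances by a factor $1\pm o\left(1\right)$ relative to the ``perturbed ideal'' embedding $\Psi\colon v\mapsto\left(\left(\bar{x}+\vec{\epsilon}_{\bar{x}}\right)_{v},\left(\bar{y}+\vec{\epsilon}_{\bar{y}}\right)_{v}\right)$, and the images $c_{i}:=A^{-1}q_{i}$ remain pairwise at distance $\left(1\pm o\left(1\right)\right)\sqrt{6/n}$.

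Call $v$ \emph{good} if $\left|\left(\vec{\epsilon}_{\bar{x}}\right)_{v}\right|,\left|\left(\vec{\epsilon}_{\bar{y}}\right)_{v}\right|\le\tau/\sqrt{n}$ for a small absolute constant $\tau$. By Markov's inequality the number of non-good vertices is at most $\tfrac{n}{\tau^{2}}\left(\left\Vert \vec{\epsilon}_{\bar{x}}\right\Vert _{2}^{2}+\left\Vert \vec{\epsilon}_{\bar{y}}\right\Vert _{2}^{2}\right)=O\left(nd^{-2c}\right)$. A good vertex of planted color $i$ has $\Psi$-image within $\sqrt{2}\,\tau/\sqrt{n}$ of $q_{i}$, hence $\Phi$-image within $\rho:=O\left(\tau/\sqrt{n}\right)$ of $c_{i}$. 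Writing $r:=\sqrt{1/\left(70n\right)}$ for the radius used by the algorithm, choose $\tau$ small enough that $2\rho<r$ and $2\left(r+\rho\right)<\min_{i\ne j}\left\Vert c_{i}-c_{j}\right\Vert _{2}$ (possible since $\rho=\Theta\left(\tau/\sqrt{n}\right)$ while $r,\left\Vert c_{i}-c_{j}\right\Vert _{2}=\Theta\left(1/\sqrt{n}\right)$, the latter comfortably exceeding $r$). Then any ball of radius $r$ in the $\Phi$-embedding contains good vertices of at most one planted color. Now pick a good representative $v_{i}\in V_{i}$ for each $i$ (possible, as only $o\left(n\right)$ vertices are non-good). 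On the iteration with the triple $\left(v_{1},v_{2},v_{3}\right)$ no vertex is ambiguous -- were a vertex within $r$ of two of the $\Phi\left(v_{i}\right)$, those two would be within $2r$, contradicting $2\left(r+\rho\right)<\left\Vert c_{i}-c_{j}\right\Vert _{2}$ -- and every good vertex of color $i$ lies within $2\rho<r$ of $\Phi\left(v_{i}\right)$, so it enters $S_{i}$ and no other $S_{j}$. Hence $\left|S_{i}\right|\ge\left|V_{i}\right|-O\left(nd^{-2c}\right)$ and $\sum_{i}\left|S_{i}\right|\ge n-O\left(nd^{-2c}\right)$, which meets the acceptance thresholds of Step~2 of Algorithm~\ref{alg:Spectral Clustering} for suitably fixed absolute constants (in particular when the input vectors are those furnished by Theorem~\ref{thm:Main}, for which $c=\tfrac12$ and $O\left(nd^{-2c}\right)=O\left(n/d\right)$). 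So the algorithm accepts, on this triple if not earlier.

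Finally, take any triple on which the algorithm accepts, with output coloring $C$. On an accepted triple the sets $S_{1},S_{2},S_{3}$ are pairwise disjoint (no ambiguous vertex), and by the separation established above each $S_{i}$ meets at most one planted class when restricted to good vertices. Since $\sum_{i}\left|S_{i}\right|$ is within the allowed slack of $n$, at most $O\left(nd^{-2c}\right)$ good vertices lie outside $S_{1}\cup S_{2}\cup S_{3}$; were two of the $S_{i}$ to meet the same planted class, an entire class -- hence $\left(\tfrac13-o\left(1\right)\right)n$ good vertices -- would be omitted, a contradiction. Therefore $S_{1},S_{2},S_{3}$ meet the three distinct planted classes, $C$ agrees with the planted coloring (up to the irrelevant renaming of colors) on every good vertex it colors, and so $C$ is an $O\left(nd^{-2c}\right)$-approximated coloring. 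I expect the main obstacle to be precisely this last step -- ruling out that an \emph{incorrect} clustering slips past the size tests -- which is what dictates the form of those tests; once that is in place, matching the numerical constants $\tau$, $\tfrac{1}{70}$ and the $d^{-2c}$ slack is routine bookkeeping.
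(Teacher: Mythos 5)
Your proposal is correct and follows essentially the same approach as the paper's proof: identify a set of $\Theta(nd^{-2c})$ ``bad'' vertices by an averaging/Markov argument applied to the coordinates of $\vec{\epsilon}_{\bar{x}},\vec{\epsilon}_{\bar{y}}$, observe that the remaining (good) vertices cluster tightly near the three ideal centers in an embedding that differs from the $(e_{n-1},e_n)$ embedding by a near-isometry, and verify that the algorithm's distance threshold $\sqrt{1/(70n)}$ is comfortably below the inter-center separation $\Theta(\sqrt{6/n})$. The only cosmetic difference is that you pass between the two coordinate systems via the $2\times 2$ change-of-basis matrix $A$ (whose near-orthogonality gives $AA^{\top}=I+O(d^{-c})$), whereas the paper Gram--Schmidt-corrects $\bar{x}+\vec{\epsilon}_{\bar{x}}$ against $\bar{y}+\vec{\epsilon}_{\bar{y}}$ and introduces a further set $Good_2$ to account for the correction; the two are equivalent bookkeeping devices. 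You also spell out the converse direction (that any triple which passes the acceptance tests yields a correct approximated coloring, via disjointness plus the size lower bounds), which the paper dismisses as ``straightforward by similar arguments''; that is a fair and useful elaboration rather than a genuinely different route.
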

By Theorem~\ref{thm:Main} we have $\left\Vert \vec{\epsilon}_{\bar{x}}\right\Vert _{2}=O\left(\frac{1}{\sqrt{d}}\right)$
and respectively for $y$ so we can run Algorithm~\ref{alg:Spectral Clustering}
with $c=\frac{1}{2}$.
\begin{proof}
{[}Lemma~\ref{lem:Eigenvector approximation}{]}. We note that \cite{alon1997spectral}
present a different algorithm to get an approximate coloring from
the eigenvectors. Our algorithm generalizes more easily to the case
of planting $k$ colors. The running time of Step~1 is roughly linear (it is enough to get an approximation of $e_{n-1},e_{n}$). The running time of each iteration in Step~2 is $O\left(n\right)$.
The number of iteration in Step~2 is $O\left(n^{3}\right)$ (for
a planted coloring with $k$ colors it is $O\left(n^{k}\right)$ )
but choosing at each iteration a random triplet reduces the (expected)
number of iterations to be constant (for a planted coloring with $k$
colors it is expected to be less than $\frac{k^{k}}{k!}$, see details below). Thus for a planted
coloring with a constant number of vertices the running time of Algorithm~\ref{alg:Spectral Clustering}
is linear.

Assume, for simplicity, that the planted color classes are balanced.
Define the following subset $Good\subseteq V\left(G'\right)$. A vertex
$v\in V\left(G'\right)$ is in $Good$ if both $\left(\left(\vec{\epsilon}_{\bar{x}}\right)_{v}\right)^{2}\le\frac{1}{an}$
and $\left(\left(\vec{\epsilon}_{\bar{y}}\right)_{v}\right)^{2}\le\frac{1}{an}$,
for $a=70$. By averaging argument it holds that $\left|Good\right|\ge\left(1-\frac{2a}{d^{2c}}\right)n$.
Note that as $\bar{x},\bar{y}$ are almost orthogonal then $\bar{x}+\vec{\epsilon}_{\bar{x}},\bar{y}+\vec{\epsilon}_{\bar{y}}$
tend to be almost orthogonal as $d$ gets larger. Formally, by the
triangle's inequality and the Cauchy-Schwartz inequality it follows
that,
\begin{align}
\left|\left\langle \bar{x}+\vec{\epsilon}_{\bar{x}},\bar{y}+\vec{\epsilon}_{\bar{y}}\right\rangle \right| & \le\left|\left\langle \bar{x},\bar{y}\right\rangle \right|+\left|\vec{\epsilon}_{\bar{x}}\right|+\left|\vec{\epsilon}_{\bar{y}}\right|+\left|\vec{\epsilon}_{\bar{x}}\right|\left|\vec{\epsilon}_{\bar{y}}\right|\nonumber \\
 & =O\left(d^{-c}\right)\,.\label{eq:x+eps,y+eps are almost orthogonal.}
\end{align}

Define the vector $\tilde{x}$ to be the second Gram\textendash Schmidt
orthonormalization vector with respect to $\bar{y}+\vec{\epsilon}_{\bar{y}},\bar{x}+\vec{\epsilon}_{\bar{x}}$,
i.e.,
\[
\tilde{x}=\frac{\bar{x}+\vec{\epsilon}_{\bar{x}}-\left\langle \bar{x}+\vec{\epsilon}_{\bar{x}},\bar{y}+\vec{\epsilon}_{\bar{y}}\right\rangle \left(\bar{y}+\vec{\epsilon}_{\bar{y}}\right)}{\left\Vert \bar{x}+\vec{\epsilon}_{\bar{x}}-\left\langle \bar{x}+\vec{\epsilon}_{\bar{x}},\bar{y}+\vec{\epsilon}_{\bar{y}}\right\rangle \left(\bar{y}+\vec{\epsilon}_{\bar{y}}\right)\right\Vert _{2}}\,.
\]
 By Equation~\ref{eq:x+eps,y+eps are almost orthogonal.} it follows
that
\begin{align*}
\left\Vert \bar{x}+\vec{\epsilon}_{\bar{x}}-\tilde{x}\right\Vert _{2} & \le O\left(d^{-c}\right)\,,
\end{align*}
Assume for simplicity that $\left\Vert \bar{x}+\vec{\epsilon}_{\bar{x}}-\tilde{x}\right\Vert _{2}\le d^{-c}$.
Since
\begin{align*}
\left\Vert \tilde{x}\right\Vert _{2} & \ge\left\Vert \bar{x}+\vec{\epsilon}_{\bar{x}}\right\Vert -\left\Vert \bar{x}+\vec{\epsilon}_{\bar{x}}-\tilde{x}\right\Vert _{2}\\
 & \ge1-d^{-c}>0,
\end{align*}
 then $\tilde{x}\neq0$ and by definition $\left\langle \tilde{x},\bar{y}+\vec{\epsilon}_{\bar{y}}\right\rangle =0$.
Hence
\begin{equation}
\text{span}\left(\tilde{x},\bar{y}+\vec{\epsilon}_{\bar{y}}\right)=\text{span}\left(e_{n-1},e_{n}\right)\,.\label{eq:span=00003Dspan}
\end{equation}
Define the following subset $Good_{2}\subseteq V\left(G'\right)$.
A vertex $v\in V\left(G'\right)$ is in $Good_{2}$ if both $v\in Good$
and $\left(\left(\bar{x}+\vec{\epsilon}_{\bar{x}}-\tilde{x}\right)_{v}\right)^{2}\le\frac{1}{an}$.
By averaging argument it holds that $\left|Good_{2}\right|\ge\left(1-\frac{3a}{d^{2c}}\right)n$.
If the condition in Step~2.a of Algorithm~\ref{alg:Spectral Clustering}
holds for $u,w\in V\left(G'\right)$ then for any $x,y\in\mathbb{R}$
that satisfy $x^{2}+y^{2}=1$ the condition
\[
\left|x\left(\left(e_{n-1}\right)_{u}-\left(e_{n-1}\right)_{w}\right)+y\left(\left(e_{n}\right)_{u}-\left(e_{n}\right)_{w}\right)\right|\le\frac{1}{\sqrt{an}}
\]
holds as well, and By Equation~\ref{eq:span=00003Dspan}
\begin{equation}
\left|x\left(\left(\tilde{x}\right)_{u}-\left(\tilde{x}\right)_{w}\right)+y\left(\left(\bar{y}+\vec{\epsilon}_{\bar{y}}\right)_{u}-\left(\bar{y}+\vec{\epsilon}_{\bar{y}}\right)_{w}\right)\right|\le\frac{1}{\sqrt{an}}\,,\label{eq:Rotation under x,y}
\end{equation}
holds for every $x,y$ such that $x^{2}+y^{2}=1$ as well. If $v_{1}$
and $u$ are both in $Good_{2}$ and they both have the same planted
color, by Equation~\ref{eq:Rotation under x,y} it follows that $v_{1},u\in S_{1}$.
This is since for every $x,y$
\begin{align*}
\left|x\left(\left(\tilde{x}\right)_{v_{1}}-\left(\tilde{x}\right)_{u}\right)+y\left(\left(\bar{y}+\vec{\epsilon}_{\bar{y}}\right)_{v_{1}}-\left(\bar{y}+\vec{\epsilon}_{\bar{y}}\right)_{u}\right)\right|\\
\le\left|x\left(\left(\tilde{x}\right)_{v_{1}}-\left(\tilde{x}\right)_{u}\right)\right| & +\left|y\left(\left(\bar{y}+\vec{\epsilon}_{\bar{y}}\right)_{v_{1}}-\left(\bar{y}+\vec{\epsilon}_{\bar{y}}\right)_{u}\right)\right|\\
\le\left|\left(\left(\bar{x}+\vec{\epsilon}_{\bar{x}}\right)_{v_{1}}-\left(\bar{x}+\vec{\epsilon}_{\bar{x}}\right)_{u}\right)\right|+2\sqrt{\frac{1}{an}} & +\left|\left(\left(\bar{y}+\vec{\epsilon}_{\bar{y}}\right)_{v_{1}}-\left(\bar{y}+\vec{\epsilon}_{\bar{y}}\right)_{u}\right)\right|\\
\le\left|\left(\left(\vec{\epsilon}_{\bar{x}}\right)_{v_{1}}-\left(\vec{\epsilon}_{\bar{x}}\right)_{u}\right)\right| & +\left|\left(\left(\vec{\epsilon}_{\bar{y}}\right)_{v_{1}}-\left(\vec{\epsilon}_{\bar{y}}\right)_{u}\right)\right|\le6\sqrt{\frac{1}{an}}\,.
\end{align*}
If $v_{1}$ and $u$ are both in $Good$ and $u$ has a different
planted color than $v$ then by Equation~\ref{eq:Rotation under x,y}
it follows that $u\notin S_{1}$. To see this set $x=0$ and $y=1$,
it follows that
\begin{align*}
\left|\left(\left(\bar{y}+\vec{\epsilon}_{\bar{y}}\right)_{v_{1}}-\left(\bar{y}+\vec{\epsilon}_{\bar{y}}\right)_{u}\right)\right| & \ge\left|\left(\left(\bar{y}\right)_{v_{1}}-\left(\bar{y}\right)_{u}\right)\right|-2\left|\vec{\epsilon}_{\bar{x}}\right|\\
 & \ge\frac{1}{\sqrt{n}}-2\sqrt{\frac{1}{an}}.
\end{align*}
If $v_{1},v_{2}$ and $v_{3}$ are all in $Good$ and have different
planted coloring then by the above Step~2.b of Algorithm~\ref{alg:Spectral Clustering}
will output an $O\left(\frac{n}{d^{2c}}\right)$-approximated coloring.
Note that a random set of three vertices from $V\left(G'\right)$
has a constant probability to satisfy this (specifically by assuming $d$ is much larger than $k$ it is roughly $p=\frac{k!}{k^k}$) so by choosing $v_{1},v_{2}$
and $v_{3}$ at random the expected number of the iterations that
Algorithm~\ref{lem:Eigenvector approximation} performs is constant (more precisely  it is $1/p$).
By similar arguments it is straight forward to prove that any coloring
that Algorithm~\ref{alg:Spectral Clustering} outputs is an $O\left(\frac{n}{d^{2c}}\right)$-approximated
coloring.

\end{proof}
Recall Definition~\ref{def: Statistically Bad}. The following lemma
shows that Algorithm~\ref{alg:One Step Refine} refines any $f=O\left(\frac{n}{24}\right)$-approximated
coloring.

\begin{algorithm}[H]
Input: A graph $G'$ and a coloring $C$.
\begin{enumerate}
\item For each vertex $v$, set $col_{C_{2}}\left(v\right)$ to be the minority
color in the multiset $\left\{ col_{C}\left(u\right)|u\in N\left(v\right)\right\} $
(break ties arbitrarily).
\end{enumerate}
\protect\caption{\label{alg:One Step Refine}One Step Refinement}
\end{algorithm}

\begin{lem}
\label{lem:approx from f<1/d to f(lambda/d)^2} Algorithm~\ref{alg:One Step Refine}
runs in polynomial time ($O\left(nd\right)$). Let $G'$ be as above
and let $f$ be such that $C$ is an $f$-approximated coloring for
$G'$. If $f\le\frac{n}{24}$ then Algorithm~\ref{alg:One Step Refine}
outputs an $\left(\left(\frac{24\lambda}{d}\right)^{2}f+\left|SB\right|\right)$-approximated
coloring for $G'$.\end{lem}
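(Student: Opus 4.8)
The running-time claim is immediate: for each of the $n$ vertices we compute the minority colour among its $d$ neighbours under $C$, which is $O(d)$ work per vertex, hence $O(nd)$ total. For the approximation guarantee, set $W := \{v\in V : col_C(v)\neq col_P(v)\}$, the set of vertices on which $C$ disagrees with the planted colouring, so $|W|\le f$, and let $W' := \{v : col_{C_2}(v)\neq col_P(v)\}$ be the corresponding set for the output colouring $C_2$. The plan is to show $W'\subseteq SB\cup B$ for a suitable set $B$ of vertices with ``many'' $W$-neighbours, and then bound $|B|$ using the expander mixing lemma (this is the deterministic substitute for the probabilistic neighbourhood-concentration estimates of \cite{alon1997spectral}).

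\textbf{Step 1 (local correctness).} First I would prove: if $v\notin SB$ and $|N(v)\cap W|<\frac{d}{12}$, then $col_{C_2}(v)=col_P(v)$. Write $i=col_P(v)$ and $m=|N(v)\cap W|$. Every neighbour of $v$ has planted colour $\neq i$, and since $v\notin SB$, for each $j\neq i$ the number of neighbours of $v$ with planted colour $j$ lies in $[(\tfrac13-\epsilon)d,(\tfrac13+\epsilon)d]$. Under $C$, at most $m$ neighbours of $v$ carry colour $i$ (such neighbours must lie in $W$), while for each $j\neq i$ at least $(\tfrac13-\epsilon)d-m$ neighbours carry colour $j$ (the correctly-coloured planted-$j$ neighbours). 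For $\epsilon=0.01$ and $m<\tfrac{d}{12}$ we have $m<(\tfrac13-\epsilon)d-m$ (equivalently $\tfrac{d}{12}<(\tfrac16-\tfrac{\epsilon}{2})d$), so colour $i$ is the strict minority in the multiset $\{col_C(u):u\in N(v)\}$ and the recolouring step assigns $col_{C_2}(v)=i$. Consequently $W'\subseteq SB\cup B$, where $B:=\{v\in V : |N(v)\cap W|\ge \tfrac{d}{12}\}$.

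\textbf{Step 2 (bounding $|B|$ via mixing).} Counting edges between $B$ and $W$ from the $B$-side gives $E_{G'}(B,W)\ge\frac{d}{12}|B|$. On the other hand $G'$ is $d$-regular and a $\lambda$-expander, so by the expander mixing lemma (see Section~\ref{sec:Useful-Facts}), $E_{G'}(B,W)\le\frac{d}{n}|B||W|+\lambda\sqrt{|B||W|}$. Since $|W|\le f\le \frac{n}{24}$ we have $\frac{d}{n}|W|\le\frac{d}{24}$, so combining the two bounds yields $\bigl(\frac{d}{12}-\frac{d}{24}\bigr)|B|\le\lambda\sqrt{|B|f}$, i.e.\ $\frac{d}{24}\sqrt{|B|}\le\lambda\sqrt{f}$, hence $|B|\le\bigl(\frac{24\lambda}{d}\bigr)^2 f$. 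Therefore $|W'|\le|SB|+|B|\le\bigl(\frac{24\lambda}{d}\bigr)^2 f+|SB|$, which is precisely the claimed bound (and, since $\lambda\le\frac{d}{24}$, the error term $\bigl(\frac{24\lambda}{d}\bigr)^2 f$ is at most $f$, so a single refinement step does not increase the error).

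\textbf{Main obstacle.} The only delicate point is the calibration of the threshold $\frac{d}{12}$: it must be small enough that a vertex outside $SB$ with that many $W$-neighbours is still recoloured correctly (needing $\frac{d}{12}<(\tfrac16-\tfrac{\epsilon}{2})d$, valid for $\epsilon=0.01$), and simultaneously large enough that, after subtracting the $\frac{d}{n}|W|\le\frac{d}{24}$ term arising from the mixing lemma, a fixed fraction of it remains to control $|B|$. Both constraints are met by $\frac{d}{12}$ under the standing hypotheses $\epsilon=0.01$, $\lambda\le\frac{d}{24}$, $f\le\frac{n}{24}$; beyond this balancing act I do not anticipate further difficulties.
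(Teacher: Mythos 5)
Your decomposition and the key counting step are the same as the paper's: you show every vertex that remains miscoloured after one refinement step is either in $SB$ or has many neighbours in the old error set, and then you bound the latter class via the expander mixing lemma. (The paper works directly with $W_{C_2}\setminus SB$ and the threshold $(\tfrac16-\epsilon)d$; your intermediate set $B$ with the more conservative threshold $d/12$ is a harmless repackaging, and in fact your choice of $d/12$ is calibrated exactly to produce the constant $24$ in the statement.)

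There is, however, one line that is wrong as written and must be repaired. You invoke the expander mixing lemma on $G'$, saying ``$G'$ is $d$-regular and a $\lambda$-expander.'' Neither claim is true: $G'$ is obtained from $G$ by deleting all monochromatic edges, so its degrees are roughly $\tfrac{2d}{3}$ and, more importantly, it has two eigenvalues near $-\tfrac{d}{3}$ (this is precisely the content of Theorem~\ref{thm:Main}), so it is not a $\lambda$-expander for any $\lambda$ substantially smaller than $d/3$. Applying the mixing lemma directly to $G'$ is therefore unjustified. The fix is what the paper does: since $G'$ is a subgraph of the host graph $G$, one has $E_{G'}(B,W)\le E_G(B,W)$, and it is $G$ that is $d$-regular and a $\lambda$-expander, so the mixing lemma applies to $E_G(B,W)$. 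With this one-word swap (and the intermediate inequality $E_{G'}\le E_G$ made explicit) your argument is correct, and your Step~1 local analysis is fine as is.
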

\begin{proof}
{[}Lemma~\ref{lem:approx from f<1/d to f(lambda/d)^2}{]}. Let $W_{C}$
be the set of vertices colored with a different color than the original
coloring in step 1 of the algorithm. Let $W_{C_{2}}$ be the set of
vertices colored with a different color than the original in $C_{2}$.
Let $k=\left|W_{C_{2}}\setminus SB\right|$. The key point, in bounding
$k$, is that any $v\in W_{C_{2}}\setminus SB$ has at least $\left(\frac{1}{6}-\epsilon\right)d$
neighbors in $W_{C}$ as otherwise $v$ was colored correctly.

So on the one hand, by using the Expander Mixing Lemma, (Lemma~\ref{lem:=00005BMixing Lemma=00005D}),
the following holds
\begin{align*}
E_{G'}\left(W_{C},W_{C_{2}}\setminus SB\right) & \le E_{G}\left(W_{C},W_{C_{2}}\setminus SB\right)\\
 & \le\lambda\sqrt{\left|W_{C}\right|\left|W_{C_{2}}\setminus SB\right|}+\frac{d}{n}\left|W_{C}\right|\left|W_{C_{2}}\setminus SB\right|\\
 & \le\lambda\sqrt{fk}+\frac{d}{n}fk\,.
\end{align*}
On the other hand
\begin{align*}
E_{G'}\left(W_{C},W_{C_{2}}\setminus SB\right) & \ge\frac{1}{2}k\left(\frac{1}{6}-\epsilon\right)d\,.
\end{align*}
 It follows that if $k>0$ and $f\le\frac{n}{24}$ then
\[
k\le\left(\frac{24\lambda}{d}\right)^{2}f\,.
\]

Thus if $\lambda<\frac{1}{24}d$ then we can improve the approximation.
\end{proof}
To prove Lemma~\ref{lemma:log(n)-Algorithmicly} consider Algorithm~\ref{alg:iterative recoloring}.

{[}Lemma~\ref{lemma:log(n)-Algorithmicly}{]}.

\begin{algorithm}[H]
Input: A graph $G'$ and a coloring $F_{1}$.
\begin{enumerate}
\item For $i$ taking values from $2$ to $k=\Omega\left(d\right)$ do.

\begin{enumerate}
\item Apply Algorithm~\ref{alg:One Step Refine} where the specified coloring
is $F_{i-1}$ to get $f_{i}$-approximated coloring, namely $F_{i}$.
\end{enumerate}
\end{enumerate}
\protect\caption{\label{alg:iterative recoloring}Iterative Recoloring}
\end{algorithm}

\begin{lem}
\label{lem:Iterative Refinement}(Restatement of Lemma~\ref{lem:step2}).
Algorithm~\ref{alg:iterative recoloring} runs in polynomial time
(the running time is $O\left(d^{2}n\right)$). Let $G'$ be as above.
If $F_{1}$ is a $\left(\frac{n}{24}\right)$-approximated coloring
for $G'$ then Algorithm~\ref{alg:iterative recoloring} outputs
an $O\left(\left|SB\right|\right)$-approximated coloring for $G'$.\end{lem}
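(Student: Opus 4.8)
The plan is to obtain Lemma~\ref{lem:Iterative Refinement} from Lemma~\ref{lem:approx from f<1/d to f(lambda/d)^2} by a short induction on the iteration count, the only real content being a geometric‑decay estimate; everything substantive (that one refinement step shrinks the error set via the expander mixing lemma) is already in Lemma~\ref{lem:approx from f<1/d to f(lambda/d)^2}. Write $\rho:=\left(\frac{24\lambda}{d}\right)^{2}$. The hypothesis $\lambda\le\frac{1}{24}d$ makes $\rho<1$ (and, invoking Theorem~\ref{thm:Main}, in fact $\rho=O(1/d)$, though all we shall really use is a contraction factor bounded away from~$1$). The running‑time bound is immediate: Algorithm~\ref{alg:iterative recoloring} performs $k=\Omega(d)$ rounds, each a single call to Algorithm~\ref{alg:One Step Refine} costing $O(nd)$, for a total of $O(nd^{2})$.

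First I would record the invariant that every intermediate coloring $F_{i}$ is an $f_{i}$‑approximated coloring with $f_{i}\le\frac{n}{24}$, which is exactly what is needed in order for Lemma~\ref{lem:approx from f<1/d to f(lambda/d)^2} to be applicable in round $i$. This holds for $F_{1}$ by hypothesis. Inductively, if $f_{i}\le\frac{n}{24}$, then Lemma~\ref{lem:approx from f<1/d to f(lambda/d)^2} yields $f_{i+1}\le\rho f_{i}+|SB|\le\rho\frac{n}{24}+|SB|$; using the bound $|SB|\le n2^{-\Omega(d)}$ (which holds with high probability over the planting by Lemma~\ref{lem:SB bound the random planting Case}; in the complementary event the conclusion ``$O(|SB|)$‑approximated'' is trivial) we have $|SB|\le(1-\rho)\frac{n}{24}$ for $d\ge d_{min}$, hence $f_{i+1}\le\frac{n}{24}$. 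Thus all $k-1$ invocations of the one‑step refinement are legitimate and we get the recursion $f_{i+1}\le\rho f_{i}+|SB|$ for $1\le i<k$.

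Unrolling this recursion gives
\[
f_{k}\ \le\ \rho^{\,k-1}f_{1}+|SB|\sum_{j=0}^{k-2}\rho^{\,j}\ \le\ \frac{\rho^{\,k-1}n}{24}+\frac{|SB|}{1-\rho}.
\]
The second term is already $O(|SB|)$. For the first term I would use the freedom in the hidden constant of $k=\Omega(d)$: taking $k\ge c_{0}d$ with $c_{0}\log_{2}(1/\rho)$ exceeding the constant in the exponent of the bound $|SB|\le n2^{-\Omega(d)}$ of Lemma~\ref{lem:SB bound the random planting Case}, one gets $\rho^{\,k-1}n=n2^{-\Omega(d)}$, of the same order as that upper bound on $|SB|$; hence $f_{k}=O(|SB|)$. (When $d\ge\eta\log n$ for a large enough constant $\eta$ this forces $f_{k}<1$, i.e.\ $f_{k}=0$, which recovers the remark after Lemma~\ref{lemma:log(n)-Algorithmicly} that the original coloring is recovered exactly.) Finally one observes that in Algorithm~\ref{alg:3-Coloring} the input $F_{1}=C_{1}$ is the output of Algorithm~\ref{alg:Spectral Clustering}, which by Lemma~\ref{lem:Eigenvector approximation} together with Theorem~\ref{thm:Main} is $O(n/d)$‑approximated and therefore satisfies $f_{1}\le\frac{n}{24}$ once $d$ is a large enough constant, so the hypothesis of the lemma is met.

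The one point that requires genuine care is the bookkeeping of constants in the last step: the geometric decay coming from $\rho<1$ must drive the ``transient'' term $\rho^{\,k-1}\frac{n}{24}$ down to $O(|SB|)$ within only $\Omega(d)$ rounds, which is precisely why $\lambda$ must be bounded away from $\frac{d}{24}$ by a constant factor (a literal equality $\lambda=\frac{d}{24}$ gives $\rho=1$ and no contraction at all), and why the hidden constant in $k=\Omega(d)$ has to be tuned against the constant hidden in Lemma~\ref{lem:SB bound the random planting Case}. I do not expect any other obstacle, since the hard analytic estimate — contraction of the error set by the factor $\rho$ up to an additive $|SB|$ term, proved through the expander mixing lemma — has already been established in the proof of Lemma~\ref{lem:approx from f<1/d to f(lambda/d)^2}.
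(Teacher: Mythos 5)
Your proof is correct and takes essentially the same route as the paper: iterate Lemma~\ref{lem:approx from f<1/d to f(lambda/d)^2}, observe the recursion $f_{i+1}\le\rho f_i+|SB|$ with $\rho=(24\lambda/d)^2<1$, and unroll for $k=\Theta(d)$ rounds. You are a bit more explicit than the paper in checking the invariant $f_i\le n/24$ (so that the one-step lemma remains applicable throughout) and in tuning the constant in $k=\Omega(d)$ against the $2^{-\Omega(d)}$ bound on $|SB|$, but these are points the paper's terse "substituting $k=\Omega(d)$ resolves the proof" leaves implicit rather than points you handle differently.
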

\begin{proof}
(Of Lemma~\ref{lem:Iterative Refinement}). By Lemma~\ref{lem:approx from f<1/d to f(lambda/d)^2},
it holds that $f_{i}\le\left(\frac{24\lambda}{d}\right)^{2}f_{i-1}+\left|SB\right|$
for all $2\le i\le k$. It follows that $f_{k}\le\left(\frac{24\lambda}{d}\right)^{2k}\frac{n}{d}+O\left(\left|SB\right|\right)$,
substituting $k=\Omega\left(d\right)$%
\footnote{If $\lambda=o\left(d\right)$ then less iterations are needed.%
} in the last inequality resolves the proof.
\end{proof}
The following lemma shows a structural property of expander graphs
and is used in the proof of Lemma~\ref{lem: Uncoloring leaves us with a large set colored correctly.}.
\begin{lem}
\label{lem:An expander after set being removed has a large core.}Let
$G$ be a $d$-regular $\lambda$-expander graph with $n$ vertices
and consider an arbitrary $S\subset V\left(G\right)$. If $\lambda\le\frac{\epsilon}{16}d$,
$\left|S\right|<\frac{\epsilon}{4}\, n$ and $d$ is large enough
then there exists $CC\subseteq V\left(G\right)\setminus S$ such that
$G_{CC}$ is a graph with a minimum degree of $\left(1-\epsilon\right)d$
and $\left|CC\right|\ge n-2\left|S\right|$.
\end{lem}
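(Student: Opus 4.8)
The plan is to build $CC$ by a greedy peeling process and then control how much it discards using the expander mixing lemma. Initialize the ``removed'' set to be $S$, and as long as the current remaining set contains a vertex with fewer than $(1-\epsilon)d$ neighbors inside the remaining set, pick one such vertex and move it to the removed set. This process clearly terminates; let $CC$ be the terminal remaining set. By construction $CC\subseteq V(G)\setminus S$, and every vertex of $CC$ has at least $(1-\epsilon)d$ neighbors in $CC$, so $G_{CC}$ has minimum degree at least $(1-\epsilon)d$. Let $B=V(G)\setminus(S\cup CC)$ be the set of peeled vertices; since $|CC|=n-|S|-|B|$, it suffices to prove $|B|\le|S|$.

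The key observation concerns ``backward edges''. List the peeled vertices as $v_1,v_2,\dots,v_k$ in order of removal, and set $R_0=S$, $R_i=S\cup\{v_1,\dots,v_i\}$. When $v_i$ is removed it has at most $(1-\epsilon)d$ neighbors in the remaining set $V(G)\setminus R_{i-1}$, so by $d$-regularity it has more than $\epsilon d$ neighbors in $R_{i-1}$. Summing this over a prefix $v_1,\dots,v_t$, and observing that each edge internal to $\{v_1,\dots,v_t\}$ is counted exactly once (by its later endpoint) and each edge to $S$ exactly once, we obtain for every $t$
\[
e(B_t)+e(B_t,S)>\epsilon d\, t,\qquad B_t:=\{v_1,\dots,v_t\},
\]
where $e(X)$ denotes the number of edges inside $X$ and $e(X,Y)$ the number of edges between disjoint sets $X$ and $Y$. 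It is important here that vertices are peeled one at a time, so that each internal edge of $B_t$ contributes to the left-hand side only once (a ``parallel'' peeling would only give the weaker $2e(B_t)+e(B_t,S)$-type bound, which does not leave enough slack below).

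Now suppose for contradiction that $|B|>|S|$. Then the process peels at least $|S|$ vertices, and we apply the displayed inequality with $t=|S|$ to $B^*:=B_{|S|}$, getting $e(B^*)+e(B^*,S)>\epsilon d|S|$. For the reverse direction, $B^*$ and $S$ are disjoint and $|B^*|=|S|<\frac{\epsilon}{4}n$, so the expander mixing lemma (Lemma~\ref{lem:=00005BMixing Lemma=00005D}) gives $e(B^*)\le\frac{d}{2n}|S|^2+\frac{\lambda}{2}|S|$ and $e(B^*,S)\le\frac{d}{n}|S|^2+\lambda|S|$; using $|S|<\frac{\epsilon}{4}n$ and $\lambda\le\frac{\epsilon}{16}d$ these are at most $\frac{\epsilon d|S|}{8}+\frac{\epsilon d|S|}{32}$ and $\frac{\epsilon d|S|}{4}+\frac{\epsilon d|S|}{16}$ respectively, whose sum is $\frac{15}{32}\epsilon d|S|<\epsilon d|S|$ — a contradiction. (The generous constant-factor slack absorbs any rounding in $\epsilon d$, $(1-\epsilon)d$; ``$d$ large enough'' is needed only to make these quantities meaningful.) Hence $|B|\le|S|$ and therefore $|CC|\ge n-2|S|$, as claimed.

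The step I expect to be the main obstacle is setting up the edge count correctly: charging internal edges of the peeled set only once (which forces the one-vertex-at-a-time peeling) and using the ``stopping time'' trick of truncating the process the moment the removed set first reaches size $2|S|$, so that the set $B^*$ fed into the mixing lemma is small enough ($|S\cup B^*|<\frac{\epsilon}{2}n$) for the $\frac{d}{n}|B^*|^2$ error term to stay below $\epsilon d|S|$. Everything after that is routine substitution of the hypotheses $|S|<\frac{\epsilon}{4}n$ and $\lambda\le\frac{\epsilon}{16}d$.
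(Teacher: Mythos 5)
Your proof is correct and follows essentially the same route as the paper: a greedy one-vertex-at-a-time peeling process starting from $S$, followed by an application of the expander mixing lemma to rule out the removal of more than $|S|$ additional vertices. The only cosmetic difference is that you apply the mixing lemma separately to $e(B^{*})$ and $e(B^{*},S)$, whereas the paper applies it once to $E_G(S_t,S_t)$ with $S_t=S\cup B_t$; your bookkeeping is in fact tighter than the paper's (which is a bit loose in its stated inequality $|S_t|<\tfrac{\epsilon}{8}n$), so the $\tfrac{15}{32}\epsilon d|S|<\epsilon d|S|$ margin you derive is a welcome check that the hypotheses $|S|<\tfrac{\epsilon}{4}n$ and $\lambda\le\tfrac{\epsilon}{16}d$ really do suffice.
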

Before we give the proof note that if $G$ was the complete graph,
as a toy example, then we can take $CC=V\left(G\right)\setminus S$.
This lemma can be seen as a relaxed version for this property that
holds in expander graphs.
\begin{proof}
(Of Lemma~\ref{lem:An expander after set being removed has a large core.}).
Consider the following iterative procedure. Repeatedly remove a vertex
$v_{t}$ from $V\left(G\right)\setminus\left(S\cup\left(\bigcup_{i=1}^{t-1}v_{i}\right)\right)$
if $v_{t}$ has more than $\epsilon d$ neighbors in $S_{t}=S\cup\left(\bigcup_{i=1}^{t-1}v_{i}\right)$.
Denote by $S'$ the union of $S$ and the removed vertices. Assume
towards a contradiction that this process stops after more than $t=\left|S\right|$
steps. The graph $G_{S_{t}}$ has an average degree of at least $\frac{\epsilon}{2}d$.
By the expander mixing lemma it holds that
\begin{align*}
\frac{\epsilon}{4}d\left|S_{t}\right| & \le E_{G}\left(S_{t},S_{t}\right)\\
 & \le\frac{d}{n}\left|S_{t}\right|^{2}+\lambda\left|S_{t}\right|
\end{align*}
It follows that $\frac{\epsilon}{8}n\le\frac{\left(\frac{\epsilon}{4}d-\lambda\right)}{d}n\le\left|S_{t}\right|$,
but $\left|S_{t}\right|<\frac{\epsilon}{8}n$, which is a contradiction
for large enough $d$ (as $\lambda\le\frac{\epsilon}{8}d$). Clearly
every vertex in $G_{V\left(G\right)\setminus S'}$ has degree at least
$\left(1-\epsilon\right)d$ and the proof follows.
\end{proof}

To prove Theorem~\ref{thm:Main-algorithmicly} consider Algorithm~\ref{alg:Cautious Uncoloring}.

\begin{algorithm}[H]
Input: A graph $G'$ and a coloring $C$.
\begin{enumerate}
\item Repeatedly uncolor the following vertices $v\in G'$ (denote the set
of uncolored vertices by $H_{1}$):

\begin{enumerate}
\item Vertices with less than $\left(\frac{2}{3}-2\epsilon\right)d$ neighbors.
\item Vertices with less than $\left(\frac{1}{6}\right)d$ neighbors of
color $l\in\left\{ 1,2,3\right\} \setminus col_{C}\left(v\right)$.
\end{enumerate}
\end{enumerate}
\protect\caption{\label{alg:Cautious Uncoloring}Cautious Uncoloring}
\end{algorithm}

\begin{lem}
\label{lem: Uncoloring leaves us with a large set colored correctly.}(Restatement
of Lemma~\ref{lem:step3}). Algorithm~\ref{alg:Cautious Uncoloring}
runs in polynomial time. Let $G'$ be as above. If $C$ is an $O\left(\left|SB\right|\right)$-approximated
coloring then the set of the uncolored vertices ($H_{1}$) is of size
$O\left(\left|SB\right|\right)$ and the set of the colored vertices
agree with the planted coloring of $G'$. \end{lem}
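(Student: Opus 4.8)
The plan is to establish the two assertions of the lemma separately, after disposing of the trivial running-time claim: the uncolored set only grows, so there are at most $n$ uncoloring rounds, each of which costs $O(nd)$ to scan, hence polynomial time.

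\textbf{Colored vertices are correctly colored.} Let $R$ be the set of vertices of $G'$ that are not uncolored (i.e.\ not in $H_1$) but whose color under $C$ disagrees with the planted coloring; since the algorithm only ever uncolors, such a vertex keeps its color $col_C(v)$, so $R\subseteq W_0$, where $W_0$ is the set of vertices miscolored by $C$, and $|W_0|=O(|SB|)$ by hypothesis. The key point is that $R$ induces in $G'$ a subgraph of minimum degree at least $\frac{d}{6}$: any $v\in R$ survived cautious uncoloring, so rule~(b) did not fire for the color $col_P(v)$ (which differs from $col_C(v)$), meaning $v$ has at least $\frac{d}{6}$ colored neighbours $u$ with $col_C(u)=col_P(v)$; but $u$ is a neighbour of $v$ in $G'$, and the planted class $col_P(v)$ is independent there, so $col_P(u)\neq col_P(v)=col_C(u)$, i.e.\ $u$ is colored and miscolored, hence $u\in R$. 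Now the expander mixing lemma gives $\frac{d}{6}|R|\le E_{G'}(R,R)\le \frac{d}{n}|R|^2+\lambda|R|$, and with $\lambda\le\frac{d}{24}$ this forces $|R|\ge\frac{n}{8}$. Since $|R|\le|W_0|=O(|SB|)=n\,2^{-\Omega(d)}$ by Lemma~\ref{lem:SB bound the random planting Case}, which is below $\frac{n}{8}$ once $d\ge d_{min}$, we get $R=\emptyset$. In particular $W_0\subseteq H_1$.

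\textbf{Few vertices are uncolored.} Put $B=SB\cup W_0$; by the previous paragraph $|B|=O(|SB|)=n\,2^{-\Omega(d)}$, so $|B|<\frac{\epsilon_0}{4}n$ for a suitably small constant $\epsilon_0$. I would apply Lemma~\ref{lem:An expander after set being removed has a large core.} to the host graph $G$ with $S=B$ to obtain a core $CC\subseteq V(G)\setminus B$ with $|CC|\ge n-2|B|$ in which every vertex has all but at most $\epsilon_0 d$ of its $G$-neighbours inside $CC$. I then claim no vertex of $CC$ is ever uncolored, which I prove by induction on the uncoloring order: if $v\in CC$ were the first core vertex to be uncolored, then at that moment all of $v$'s $CC$-neighbours are still colored and, being outside $B\supseteq W_0$, are correctly colored by the first part; since $v\notin SB$, for each color $l\neq col_P(v)$ the vertex $v$ has at least $(\frac{1}{3}-\epsilon)d$ planted-$l$ neighbours in $G$, of which all but at most $\epsilon_0 d$ lie in $CC$ and are therefore colored $l$, leaving $v$ with more than $\frac{d}{6}$ colored $l$-neighbours, and likewise at least $(\frac{2}{3}-O(\epsilon))d$ colored neighbours overall, so (with the thresholds set with enough slack) neither uncoloring rule applies to $v$ --- a contradiction. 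Hence $H_1\subseteq V\setminus CC$, so $|H_1|\le 2|B|=O(|SB|)$.

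\textbf{Main obstacle.} The delicate point is the constant bookkeeping in the second part: one must choose the parameter $\epsilon_0$ of the core lemma (which is tied to the ratio $\lambda/d$, and hence constrained by $\lambda\le\frac{d}{24}$) small enough that a core vertex, after discarding the roughly $\frac{d}{3}$ host-edges killed by the planting, its at most $\epsilon_0 d$ neighbours outside the core, and its few neighbours in $B$, still retains strictly more than $\frac{d}{6}$ correctly colored neighbours of each of the other two colors as well as the $(\frac{2}{3}-2\epsilon)d$ colored neighbours needed to evade rule~(a). Ensuring that the gap between the $SB$-threshold $\epsilon$ and the two uncoloring thresholds absorbs all of these losses --- possibly by recording a slightly stronger version of Lemma~\ref{lem:An expander after set being removed has a large core.}, or by letting $\epsilon$ depend on $\lambda/d$ --- is the step requiring care; the remainder is the routine combination of the expander mixing lemma with the definitions of $SB$ and of the uncoloring rules.
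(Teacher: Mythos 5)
Your proof follows the same two expander-mixing arguments the paper uses, but in reverse order, and the reordering is a genuine (if modest) simplification of one of them. The paper first builds the core $CC$ via Lemma~\ref{lem:An expander after set being removed has a large core.} with $S = SB\cup B$, argues by induction that $CC$ never loses a vertex (bounding $|H_1|$), and only then disposes of the miscolored-but-still-colored set by noting it sits inside $V\setminus CC$, so its size is at most $n-|CC|=O(|SB|)$, which contradicts the lower bound forced by the mixing lemma. You prove the miscolored-but-colored set $R$ is empty first, and you get the needed upper bound $|R|=O(|SB|)$ directly from $R\subseteq W_0$ and the approximation hypothesis, with no detour through the core; this makes that half self-contained. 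You then use $R=\emptyset$ to clean up the inductive core argument (colored $CC$-vertices must carry their planted colors). Both proofs invoke the same structural facts (rule~(b) forces $R$ to have min degree $\frac{d}{6}$, the mixing lemma forces $|R|=\Omega(n)$ or $R=\emptyset$, and the core lemma bounds $H_1$), so this is the paper's argument reorganized rather than a different one. Your closing ``main obstacle'' paragraph is fair: the paper itself is terse about the exact constants linking the $\epsilon$ of Lemma~\ref{lem:An expander after set being removed has a large core.}, the $\epsilon=0.01$ used to define $SB$, and the uncoloring thresholds, and the consistency check you describe is exactly what a careful write-up would need to supply; the paper glosses over this the same way you do.
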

\begin{proof}
(Of Lemma~\ref{lem: Uncoloring leaves us with a large set colored correctly.}).
Define
\[
B:=\left\{ v\in V\left(G'\right)\,|\, col_{P}\left(v\right)\neq col_{C}\left(v\right)\right\} \,.
\]

To derive the proof we show $B\subseteq H_{1}$ and that $\left|H_{1}\right|\le O\left(\left|SB\right|\right)$.
By Lemma~\ref{lem:An expander after set being removed has a large core.}
there exist a set $CC\subseteq V\left(G'\right)\setminus\left(SB\cup B\right)$
such that $\left|CC\right|\ge n-2\left|SB\cup B\right|$, and every
$v\in G_{CC}$ has at least $\left(\frac{1}{3}-2\epsilon\right)d$
neighbors in $CC$ with color $i$, for all $i\in\left\{ 1,2,3\right\} \setminus col_{P}\left(v\right)$.
To see this, apply Lemma~\ref{lem:An expander after set being removed has a large core.}
on $G$, the graph before the planting, with $S=SB\cup B$ to get
the set $CC$. Since $v\notin SB$ it holds that $v\in CC$ has at
at least $\left(\frac{1}{3}-2\epsilon\right)d$ neighbors in $CC$
with color $i$, for all $i\in\left\{ 1,2,3\right\} \setminus col_{P}\left(v\right)$.

It holds that no vertex from $CC$ is uncolored, (at the first iteration
of Step~1 this clearly holds. Assume that no vertex from $CC$ is
uncolored in the first $i$ iterations of Step~1, in the $i+1$ iteration
this still holds.).

~

Now we show that all of $V\left(G'\right)\setminus H_{1}$ is colored
correctly. Assume towards a contradiction that there exists a vertex
$v$ in $\left(V\left(G'\right)\setminus H_{1}\right)\setminus CC$
such that $v\in B$. Let $B_{V\left(G'\right)\setminus H_{1}}:=\left\{ v\in V\left(G'\right)\setminus H_{1}\,|\, col_{P}\left(v\right)\neq col_{C}\left(v\right)\right\} $.
Since that any vertex $v\in B_{V\left(G'\right)\setminus H_{1}}$
has degree at least $\left(\frac{2}{3}-2\epsilon\right)d$ neighbors
(as otherwise Step~1.a of Algorithm~\ref{alg:Cautious Uncoloring}
removes $v$) then $v$ has at least $\left(\frac{1}{6}-\epsilon\right)d$
neighbors colored by $col_{p}\left(v\right)$ which means they are
in $B_{V\left(G'\right)\setminus H_{1}}$. It follows that the sub-graph
of $G'$ induced on $B_{V\left(G'\right)\setminus H_{1}}$ has degree
at least $\left(\frac{1}{6}-\epsilon\right)d$. By the expander mixing
lemma, Lemma~\ref{lem:=00005BMixing Lemma=00005D}, it holds that
\begin{align*}
\left(\frac{1}{6}-\epsilon\right)d\left|B_{V\left(G'\right)\setminus H_{1}}\right| & \le E_{G'}\left(B_{V\left(G'\right)\setminus H_{1}},B_{V\left(G'\right)\setminus H_{1}}\right)\\
 & \le E_{G}\left(B_{V\left(G'\right)\setminus H_{1}},B_{V\left(G'\right)\setminus H_{1}}\right)\\
 & \le\frac{d}{n}\left|B_{V\left(G'\right)\setminus H_{1}}\right|^{2}+\lambda\left|B_{V\left(G'\right)\setminus H_{1}}\right|\,.
\end{align*}
It follows that if $\left|B_{V\left(G'\right)\setminus H}\right|>0$,
as we assumed, then $\left|B_{V\left(G'\right)\setminus H}\right|\ge\Omega\left(\frac{d-\lambda}{d}\right)n$
but as it holds that
\[
\left|B_{V\left(G'\right)\setminus H}\right|\le n-\left|CC\right|=O\left(\left|SB\right|\right)\,,
\]
 we derive a contradiction.
\end{proof}

\begin{algorithm}[H]
Input: A graph $G'$ and a partial coloring $C$. Denote the set of
uncolored vertices by $H_{1}$.
\begin{enumerate}
\item Repeatedly color any vertex that has $2$ neighbors colored with $i,j$,
(for $i\neq j$), by $\left\{ 1,2,3\right\} \setminus\left\{ i,j\right\} $.
\end{enumerate}
\protect\caption{\label{alg:Safe-Recoloring}Safe Recoloring}
\end{algorithm}

\begin{lem}
\label{lem:Safe Recoloring}Algorithm~\ref{alg:Safe-Recoloring}
runs in polynomial time. Let $G'$ be as above and $H$ be the set
of vertices that were left uncolored by Algorithm~\ref{alg:Safe-Recoloring}.
Let $G'_{H}$ be the induced sub-graph of $G'$ on the vertices of
$H$. If $H_{1}\le\frac{1}{11}n$ then with high probability (over
the distribution of $G'$) each connected component in $G'_{H}$ is
of size $O\left(\frac{\lambda^{2}}{d^{2}}\log n\right)$.\end{lem}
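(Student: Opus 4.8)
The polynomial running time is immediate: every iteration of Algorithm~\ref{alg:Safe-Recoloring} colors at least one new vertex, so there are at most $n$ iterations, each executable in $O(|E(G')|)$ time. The content is the bound on component sizes, which I would prove in three stages: (1) a structural description of $H$ in which the safe‑recoloring step is used; (2) a deterministic argument (via the expander mixing lemma) together with the bound on $|SB|$, showing that no component can be large; (3) a union bound over the random planting showing that the surviving (small) components are in fact of size $O(\tfrac{\lambda^2}{d^2}\log n)$.

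\emph{Stage 1.} By Lemma~\ref{lem: Uncoloring leaves us with a large set colored correctly.} every colored vertex carries its planted color $col_P$, and by construction of Algorithm~\ref{alg:Safe-Recoloring} every $v\in H$ has colored neighbors of at most one color. Hence for each $v\in H$ some color $c_v\ne col_P(v)$ appears on no colored neighbor of $v$; since a vertex of planted color $c_v$ that is $G$-adjacent to $v$ is also $G'$-adjacent to $v$, every such vertex is uncolored, i.e.\ lies in $H$, hence in the component $CC$ of $G'_H$ containing $v$. Thus
\[
N_G(v)\cap V_{c_v}\subseteq CC\qquad\text{for every }v\in CC,
\]
and moreover $|N_G(v)\cap V_{c_v}|\ge(\tfrac13-\epsilon)d$ whenever $v\notin SB$.

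\emph{Stage 2.} Fix a component $CC$, so $|CC|\le|H_1|\le n/11$, and let $X$ be the set of vertices of $G$ with at least $\tfrac14 d$ neighbors in $CC$; by Stage~1, $CC\setminus SB\subseteq X$. Applying the expander mixing lemma to $E_G(X,CC)$ and using $|CC|\le n/11$ and $\lambda\le d/24$ gives $|X|\le O(\tfrac{\lambda^2}{d^2})|CC|$, so $|CC|\le 2|CC\cap SB|\le 2|SB|$. Combined with $|SB|\le n2^{-\Omega(d)}$ (Lemma~\ref{lem:SB bound the random planting Case}), every component obeys $|CC|\le n2^{-\Omega(d)}$, which, since $\tfrac{\lambda^2}{d^2}\ge\Omega(\tfrac1d)$, is in particular much smaller than $\tfrac{\lambda^2}{d^2}n$.

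\emph{Stage 3, the technical heart.} Put $t^\ast=C\tfrac{\lambda^2}{d^2}\log n$. By Stage~2 it suffices to show that with high probability over the planting there is no connected-in-$G$ set $S$ with $t^\ast\le|S|\le n2^{-\Omega(d)}$ that is a component of $G'_H$. For a fixed such $S$, being a component forces, for every $v\in S$, that some color $c_v\ne col_P(v)$ is missing from $col_P(N_G(v)\setminus S)$. Union over the $\le 3^{|S|}$ values of $col_P$ on $S$ and the $\le 2^{|S|}$ patterns $(c_v)_{v\in S}$; for each such choice the colors of the distinct vertices of $N_G(S)\setminus S$ are independent and each avoids its (nonempty) forbidden set with probability $\le\tfrac23$, so $\Pr[S\text{ is a component}]\le 6^{|S|}(\tfrac23)^{|N_G(S)\setminus S|}$. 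As $|S|\ll\tfrac{\lambda^2}{d^2}n$, the vertex-expansion bound for $\lambda$-expanders gives $|N_G(S)\setminus S|\ge\Omega(\tfrac{d^2}{\lambda^2})|S|$, whence (using $d^2/\lambda^2\ge 576$) $\Pr[S\text{ is a component}]\le 2^{-\Omega(d^2/\lambda^2)|S|}$. Since $G$ has at most $n(ed)^{|S|}$ connected $|S|$-vertex subsets and $ed\cdot 2^{-\Omega(d^2/\lambda^2)}<1$ (for all $d$ below a large absolute constant; for larger $d$ one has $|SB|<1$ w.h.p.\ and $H=\emptyset$), summing the geometric series over $|S|\ge t^\ast$ yields $o(1)$ once $C$ is large enough. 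The main obstacle is exactly this balancing: one must arrange the union bound so that the $6^{|S|}$ loss from enumerating colors and avoided-color patterns is absorbed by the gain $2^{-\Omega(d^2/\lambda^2)|S|}$ from expansion --- which is why the hypothesis $\lambda\le d/24$ (equivalently, $d^2/\lambda^2$ bounded below) enters --- and one must line up the ranges of $|S|$ handled by Stages~2 and~3; the requirement $n(ed)^{t^\ast}2^{-\Omega(d^2/\lambda^2)t^\ast}=o(1)$ is what pins the threshold to $\Theta(\tfrac{\lambda^2}{d^2}\log n)$. Stage~1, though conceptually the key point (it is where the extra safe-recoloring step pays off), is short once the right statement is isolated.
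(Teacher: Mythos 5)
Your Stages~1--2 are sound and essentially reproduce the content of the paper's Claim~\ref{claim:sets with large neighborhod are not in H}: once the safe--recoloring step forces every uncolored $v$ to have a ``forbidden'' color $c_v\ne col_P(v)$ that is absent among the planted colors of $N_G(v)\setminus S$, one gets $\Pr[S\text{ is a component}]\le 3^{|S|}(2/3)^{|N_G(S)\setminus S|}$ (your factor $6^{|S|}$ works too but is wasteful; the paper gets $2^{|S|}$ by partitioning the boundary into parts $D_v$).

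The gap is in Stage~3, and it is exactly the place you flagged as ``the technical heart.'' You count connected $s$-subsets by the flat bound $n(ed)^{s}$ and pair it with the \emph{worst-case} boundary size $t\ge\tfrac12(d/\lambda)^2 s$. The per-vertex ratio is then $ed\cdot(2/3)^{\Omega((d/\lambda)^2)}$. Under the lemma's hypothesis $(d/\lambda)^2$ is only guaranteed to be at least $24^2=576$, so this ratio is $ed\cdot c_0$ for a fixed constant $c_0<1$ --- and it exceeds $1$ once $d$ exceeds $1/(ec_0)$, a large but \emph{absolute} constant. Your fallback ``for larger $d$, $|SB|<1$ w.h.p.\ and $H=\emptyset$'' only becomes true when $n2^{-\Omega(d)}<1$, i.e.\ $d=\Omega(\log n)$. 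Since the theorem is supposed to cover every constant $d\ge d_{\min}$ (with $\lambda$ possibly as large as $d/24$), the interval between the constant where your union bound breaks and $\Theta(\log n)$ is left unhandled.

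The paper sidesteps this by stratifying the count by boundary size: Lemma~\ref{lem:=000023connected com' with neighborhood t} gives $N_G(n,s,t)\le n\binom{s+t}{s}$. For the critical (smallest) boundary value $t\approx\tfrac12(d/\lambda)^2 s$, this is roughly $\bigl(e(d/\lambda)^2\bigr)^{s}$, \emph{not} $(ed)^s$; the factor $d$ has disappeared. One then optimizes over the ratio $c=t/s\ge(d/\lambda)^2/2$ and finds that $\binom{s+t}{s}2^{-\eta t}\le 2^{(\log_2(e(c+1))-\eta c)s}$ has a negative exponent bounded away from $0$ uniformly in $d$ once $(d/\lambda)^2$ exceeds a fixed constant. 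This refinement --- letting the subgraph count and the probability estimate depend jointly on $t$ rather than plugging in the worst $t$ for the probability and the worst over-count for the enumeration --- is what makes the argument go through for every $d\ge d_{\min}$, and it is the ingredient your Stage~3 is missing.

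Two minor points. First, your Stage~2 (showing deterministically via the mixing lemma and $|SB|$ that every component has size $\le 2|SB|\le n2^{-\Omega(d)}$) is a correct observation but is not needed: the paper's union bound already sums over all $s$ up to $|H|\le n/11$. Second, in the final paragraph you attribute the balancing difficulty to the ``$6^{|S|}$ loss,'' but that factor is harmless (it is swallowed by $(2/3)^{288|S|}$ even in the tight case $\lambda=d/24$); the real culprit is the $(ed)^{|S|}$ enumeration.
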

\begin{proof}
(Of Lemma~\ref{lem:Safe Recoloring}). Recall that $G$ is the graph
before the planted coloring process occurred. We use Claim~\ref{claim:sets with large neighborhod are not in H}
extensively.
\begin{claim}
\label{claim:sets with large neighborhod are not in H} Let $C\subset V\left(G\right)$
be any set of vertices of size $k$, let $D$ be $N_{G}\left(C\right)\setminus C$
and let $t=\left|D\right|$. If $t\ge2k$ then $C$ is in $H$ and
$D\subseteq V\left(G\right)\setminus H$ with probability at most
$2^{-\eta t}$, for some constant $\eta$.
\end{claim}
\begin{proof}
Partition the vertices of $D$ to disjoint subsets $\left\{ D_{v}\right\} $,
 where $v\in C$ and $D_{v}\subseteq N_{G}\left(v\right)$ (this can be done by choosing for each vertex in $D$ an arbitrary neighbor in $S$). Consider
some $D_{v}$ and assume that $v$ is colored by $1$. If both $v\in H$
and $D_{v}\subseteq V\left(G\right)\setminus H$ then the set of colors
that vertices in $D_{v}$ are using can be contained in exactly one
of the sets $\left\{ 1,2\right\} $ or $\left\{ 1,3\right\} $ (otherwise
Algorithm~\ref{alg:Safe-Recoloring} will color $v$ at Step~1).
Therefore the probability that both $v\in H$ and $D_{v}\subseteq V\left(G\right)\setminus H$
is at most $2\left(\frac{2}{3}\right)^{\left|D_{v}\right|}$. Hence,
the probability that both $C$ is in $H$ and $D\subseteq V\left(G\right)\setminus H$
is at most
\begin{align*}
	\prod_{D_{v}\in\left\{ D_{v}\right\} }2\left(\frac{2}{3}\right)^{\left|D_{v}\right|} & =2^{\left|\left\{ D_{v}\right\} \right|}\left(\frac{2}{3}\right)^{t}\\
	& \le2^{k}\left(\frac{2}{3}\right)^{t}\\
	& =\left(\frac{2}{3}\right)^{t+k\log_{\frac{2}{3}}2}\le\left(\frac{2}{3}\right)^{0.14t}\,.
\end{align*}

The last inequality follows by the assumption that $t\ge2k$.

\end{proof}
~

Let $C\subset H$ be any set of vertices of size $k$ such that $G'_{C}$
is connected. Note that if $G'_{C}$ is connected then $G{}_{C}$
is connected. Hence we can consider the set $\bar{C}$ to be a connected
set in $G$ such that $C\subseteq\bar{C}\subset H$ and $N\left(\bar{C}\right)\setminus\bar{C}\subseteq V\left(G\right)\setminus H$
(as long as there exist vertices in $\left(N\left(\bar{C}\right)\setminus\bar{C}\right)\cap H$
we repeatedly add them to $\bar{C}$, and it is straight forward to
show that $\bar{C}$ is connected in $G$). Thus if we show that with
high probability no connected subsets $\bar{C}$ of $G$ with cardinality
$\Omega\left(\frac{\lambda^{2}}{d^{2}}\log n\right)$ are in $H$
if $N\left(\bar{C}\right)\subseteq V\left(G\right)\setminus H$ then
the proof follows. The key point of the proof is that (since $H$
is small) no (small connected) subsets of $H$ with with a small neighborhood
are in $V\left(G\right)\setminus H$ (due to $G$'s expansion) and
when a connected set have a large enough neighborhood we can use Claim~\ref{claim:sets with large neighborhod are not in H}
to show they are likely to not be contained in $H$. To fully exploit
Claim~~\ref{claim:sets with large neighborhod are not in H} we
use the union bound in a delicate manner that depends on the neighborhood
size of the (connected) sets in order to sum sets with large neighborhood
size with small probabilities. In order to do so we use Lemma~\ref{lem:=000023connected com' with neighborhood t}.

~

~

Denote by $N_{G}\left(n,k,d\right)$ the number of connected components
of size $k$ with a neighborhood of size exactly $t$ in a graph $G$
with $n$ vertices.
\begin{lem}
\label{lem:=000023connected com' with neighborhood t}For any graph
$G$ the following holds,
\[
N_{G}\left(n,k,t\right)\le n{k+t \choose k}\,.
\]

\end{lem}
Several proofs of Lemma~\ref{lem:=000023connected com' with neighborhood t} are known, and
 one such a proof can be found in~\cite{krivelevich2015smoothed}. Lemma~\ref{lem:=000023connected com' with neighborhood t} holds
for any graph, regardless of its degree or expansion. Let $\alpha$
be such that $\lambda=\frac{d}{\alpha}$. Let
\[
p_{s,t}:=\max_{C\subset G,\left|C\right|=s,\left|N\left(C\right)\right|=t}\Pr\left[\text{\ensuremath{C}\,\ is in\,\ensuremath{H}}\right]\,.
\]
By the union bound the probability that there exists a connected component
in $G'_{H}$ of size $\Omega\left(\frac{\lambda^{2}}{d^{2}}\log n\right)$
is at most
\begin{align}
\sum_{s=\frac{\lambda^{2}}{d^{2}}\log n\,}^{\left|H\right|}\sum_{t=1}^{ds}N_{G}\left(n,s,t\right)P_{s,t} & =\sum_{s=\frac{\lambda^{2}}{d^{2}}\log n\,}^{\left|H\right|}\sum_{t=\left\lceil \frac{1}{2}\left(\frac{d}{\lambda}\right)^{2}s\right\rceil }^{ds}N_{G}\left(n,s,t\right)2^{-\eta t}\label{eq:alpha1}\\
 & \le\sum_{s=\frac{\lambda^{2}}{d^{2}}\log n\,}^{\left|H\right|}\sum_{t=\left\lceil \frac{\alpha^{2}}{2}s\right\rceil }^{ds}n{s+t \choose s}2^{-\eta t}\label{eq:alpha2}\\
 & \le\sum_{s=\frac{\lambda^{2}}{d^{2}}\log n\,}^{\left|H\right|}\sum_{t=\left\lceil \frac{\alpha^{2}}{2}s\right\rceil }^{ds}n\left(e\frac{s+t}{s}\right)^{s}2^{-\eta t}\label{eq:alpha3}\\
 & \le\sum_{s=\frac{\lambda^{2}}{d^{2}}\log n\,}^{\left|H\right|}nds\max_{\frac{\alpha^{2}}{2}\le c\le d}2^{\left(\log_{2}\left(e\left(c+1\right)\right)-\eta c\right)s}\nonumber \\
 & \le n^{4}\max_{\frac{\alpha^{2}}{2}\le c\le d}2^{\left(\log_{2}\left(e\left(c+1\right)\right)-\eta c\right)s}\\
 & \le n^{-\Omega\left(1\right)}\,.\label{eq:alpha4}
\end{align}
Equality~\ref{eq:alpha1} follows since by Lemma~\ref{lem:=00005BSpectral to vertex expansion.=00005D}
if $\lambda\le d/\sqrt{12}$ then for a set of size $s\le\left|H\right|\le\frac{1}{11}n$
it holds that its neighborhood is of size $t\ge\frac{1}{2}\left(\frac{d}{\lambda}\right)^{2}s$
(thus $N_{G}\left(n,s,t\right)=0$) and by Claim~\ref{claim:sets with large neighborhod are not in H}
(for some constant $\eta$). Inequality~\ref{eq:alpha2} follows
by Lemma~\ref{lem:=000023connected com' with neighborhood t}. Inequality~\ref{eq:alpha3}
follows by the binomial identity ${n \choose k}\le\left(e\frac{n}{k}\right)^{k}$.
Inequality~\ref{eq:alpha4} follows if $\alpha\ge\alpha'$ , where
$\alpha'$ is a large enough constant.
\end{proof}
\begin{algorithm}[H]
Input: A graph $G'$ and a partial coloring $C$.
\begin{enumerate}
\item If the induced graph of the uncolored vertices, $H$, contains a connected
component of size $\log n$ abort.
\item Enumerate over all possible coloring of each connected component of
$H$ and return a legal $3$-coloring of $G'$.\protect\caption{\label{alg:Brute-Force}Brute Force}

\end{enumerate}
\end{algorithm}

\section{Adversarial host and adversarial planting, $H_A/P_A$}
\label{app:AA}

Section~\ref{app:AA} contains the proofs for the model with an adversarial expander host graph and an adversarial planted balanced coloring, namely, $H_A/P_A$.
Section~\ref{sec:Coloring Expander Graphs With Adversarial Planting} shows how to obtain a $b$-partial coloring for graphs in the $H_{A}/P_{A}$ model (Theorem~\ref{thm:partial}).
Section~\ref{sec:On the hardness of coloring expander graphs with adversarial planting} proves Theorem~\ref{thm:A/A}
(hardness result for the $H_{A}/P_{A}$ model).

\subsection{\label{sec:Coloring Expander Graphs With Adversarial Planting}
Partial coloring of expander graphs with adversarial planting}

Given a $d$-regular $\lambda$-expander graph $G$ and a partition,
$C$, of $V(G)$ to $3$ sets, $c_{1},c_{2},c_{3}$, we denote by
$P_{C}\left(G\right)$ the graph obtained after removing all edges
from $G$ with both endpoints in the same set $c_{i}$. If all the
sets $c_{1},c_{2},c_{3}$ have the same cardinality we say that $C$
is balanced. In this section we consider the computational problem
of coloring $P_{C}\left(G\right)$ when $C$ is given by an adversary.

In this section we show the following theorem.
\begin{thm}
\label{thm:Coloring with balanced adversarial planting}(Restatement
of Theorem~\ref{thm:partial}). Let $G$ be a $d$-regular $\lambda$-expander
graph. If $\lambda\le c d$ (for some constant $c$)
then for every balanced 3-color planting $C$ Algorithm~\ref{alg:Iterative Coloring and Uncoloring}
outputs an $O\left(n\left(\frac{\lambda}{d}\right)^2\right)$-approximated coloring for $P_{C}\left(G\right)$.
\end{thm}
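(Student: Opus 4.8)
The plan is to run the three-phase Alon--Kahale scheme that is packaged in Algorithm~\ref{alg:Iterative Coloring and Uncoloring} — spectral clustering, then iterative recoloring, then cautious uncoloring — and to exploit the fact that everything downstream of the spectral step was already proved in Section~\ref{sec:Coloring expander graphs with random color planting} in a form that is oblivious to how the coloring was planted. Concretely, Lemma~\ref{lem:Iterative Refinement} and Lemma~\ref{lem: Uncoloring leaves us with a large set colored correctly.} are stated purely in terms of the size of the statistically bad set $SB$ (Definition~\ref{def: Statistically Bad}) and are proved using only the expander mixing lemma (Lemma~\ref{lem:=00005BMixing Lemma=00005D}) and the structural Lemma~\ref{lem:An expander after set being removed has a large core.}; no concentration inequality is used. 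Hence for the $H_A/P_A$ model there are only two new ingredients to supply: a deterministic replacement for Lemma~\ref{lem:SB bound the random planting Case}, namely $|SB| = O\!\left(\tfrac{\lambda^{2}}{d^{2}}n\right)$, and a deterministic replacement for Theorem~\ref{thm:Main} (this is Lemma~\ref{lem: eigen values after arbitrary planting}), asserting that planting a balanced $3$-coloring in a $d$-regular $\lambda$-expander creates exactly two eigenvalues near $-\tfrac{d}{3}$, all other eigenvalues of absolute value $O(\lambda)+o(d)$, and the vectors $\bar x,\bar y$ of Definition~\ref{Def:x,y} close to the span of the two bottom eigenvectors.

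To bound $SB$: fix two color classes $V_j,V_{j'}$ and let $B\subseteq V_j$ be the set of vertices whose number of neighbours in $V_{j'}$ deviates from $\tfrac{d}{3}$ by more than $\epsilon d$ (for the same constant $\epsilon$ used in Definition~\ref{def: Statistically Bad}). Counting the edges $E_G(B,V_{j'})$ in two ways — once from the definition of $B$, once from the expander mixing lemma using $|V_{j'}|=n/3$ — gives $\epsilon d\,|B|\le \lambda\sqrt{|B|\,n/3}$, so $|B|=O\!\left(\tfrac{\lambda^{2}}{d^{2}}n\right)$; summing over the constantly many ordered pairs $(j,j')$ and over both tails yields $|SB|=O\!\left(\tfrac{\lambda^{2}}{d^{2}}n\right)$. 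Exact balance of the planting is used exactly here: it is what forces every $|V_j|=\Theta(n)$, making the mixing lemma informative.

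For the spectrum, write $A_{P_C(G)}=A_G-\sum_{j=1}^{3}A_{G_j}$ where $G_j$ is the subgraph of $G$ induced on $V_j$. The lower bound (at least two eigenvalues near $-\tfrac{d}{3}$) follows from Courant--Fischer applied to $\mathrm{span}(\vec x,\vec y)$: every edge of $P_C(G)$ is bichromatic and $E_G(V_a,V_b)=\tfrac{dn}{9}\pm\tfrac{\lambda n}{3}$ by mixing, so a direct computation shows each unit vector in this plane has Rayleigh quotient $-\tfrac{d}{3}\pm O(\lambda)$. The matching upper bound follows the argument of Theorem~\ref{thm:Main}: each $G_j$ is an induced subgraph of a $\lambda$-expander, hence itself a $\lambda$-expander by Cauchy interlacing with top eigenvector close to $\bar p_j$, so for any unit vector $v$ orthogonal to the three approximate eigenvectors $\bar 1+\vec\epsilon_{\bar 1}$, $\bar x+\vec\epsilon_{\bar x}$, $\bar y+\vec\epsilon_{\bar y}$ one gets $\|A_G v\|$ and $\|\sum_j A_{G_j}v\|$ both $O(\lambda)+o(d)$, forcing $|\lambda_i|=O(\lambda)+o(d)$ for the remaining eigenvalues and $|S_\delta|=3$. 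The one genuinely new estimate is the quality of these approximate eigenvectors: $(A_{P_C(G)}\vec x)_i+\tfrac{d}{3}x_i$ is, up to sign, a linear combination of the imbalances of the neighbourhood of $i$ among the color classes, so splitting into the $O(\tfrac{\lambda^2}{d^2}n)$ bad vertices (imbalance $\le d$) and the good ones (imbalance $\le\epsilon d$) gives $\|A_{P_C(G)}\bar x-(-\tfrac{d}{3})\bar x\|_2=O(\lambda+\epsilon d)$, hence $\|\vec\epsilon_{\bar x}\|_2=O(\lambda/d+\epsilon)$, which is a small constant once $\lambda\le cd$ with $c$ small. This suffices as input to Lemma~\ref{lem:Eigenvector approximation}: the clustering output $C_1$ is $o(n)$-approximated, safely below the $\tfrac{n}{24}$ threshold of Lemma~\ref{lem:Iterative Refinement}, which then yields an $O(|SB|)=O(\tfrac{\lambda^2}{d^2}n)$-approximated $C_2$ (the per-step contraction factor $(24\lambda/d)^2$ is $<1$ because $\lambda<\tfrac{d}{24}$), and cautious uncoloring (Lemma~\ref{lem: Uncoloring leaves us with a large set colored correctly.}, which needs $\lambda\le\tfrac{\epsilon}{16}d$ for Lemma~\ref{lem:An expander after set being removed has a large core.}) leaves $O(|SB|)=O(\tfrac{\lambda^2}{d^2}n)$ free vertices with every colored vertex in agreement with the planted coloring. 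Reading this partial coloring as a (possibly illegal) $3$-coloring gives the claimed $O\!\left(n(\lambda/d)^2\right)$-approximated coloring; $c$ is the minimum of the three thresholds above, and the whole argument is deterministic, with no high-probability clause.

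I expect the main obstacle to be the spectral lemma (Lemma~\ref{lem: eigen values after arbitrary planting}). In the random model $P_C(G)$ is almost $\tfrac{2d}{3}$-regular, which gives $\|\vec\epsilon_{\bar x}\|_2=O(1/\sqrt d)$ almost for free; balance plus expansion only gives the much weaker $O(\tfrac{\lambda^2}{d^2}n)$ bound on the number of degree-irregular vertices, so the approximate eigenvectors are only $O(\lambda/d+\epsilon)$-good, and one has to check that this weaker guarantee still (a) isolates exactly two eigenvalues below $-\delta$ — so that the bottom-two eigenspace is two-dimensional and encodes the partition — and (b) is a legitimate input to the clustering and recoloring steps. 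A secondary point to verify carefully is that the proofs of Lemmas~\ref{lem:Iterative Refinement} and~\ref{lem: Uncoloring leaves us with a large set colored correctly.} really do go through with no residual use of concentration beyond the single parameter $|SB|$.
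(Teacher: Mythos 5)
Your proposal matches the paper's proof plan essentially step for step: it isolates the two deterministic inputs needed ($|SB|$ and the spectral lemma), bounds $|SB|$ via the expander mixing lemma exactly as the paper does in Lemma~\ref{lem: sb bound adversary}, replaces Theorem~\ref{thm:Main} by a mixing-lemma version of Lemma~\ref{lem: eigen values after arbitrary planting}, and then observes (correctly) that Lemmas~\ref{lem:approx from f<1/d to f(lambda/d)^2},~\ref{lem:Iterative Refinement} and~\ref{lem: Uncoloring leaves us with a large set colored correctly.} are already stated in terms of $|SB|$ and use no randomness. The one place you diverge technically is in the quality of the approximate eigenvectors: you keep the fixed threshold $\epsilon=0.01$ from Definition~\ref{def: Statistically Bad} throughout, which yields $\Vert\vec\epsilon_{\bar x}\Vert_2=O(\lambda/d+\epsilon)$, a small constant; the paper instead re-parameterizes the bad/good split with a variable threshold and sets it to $\sqrt{\lambda/d}$ (balancing $\lambda/\epsilon$ against $\epsilon d$), which gives $\Vert A_{G'}\bar x+\tfrac{d}{3}\bar x\Vert_2=O(\sqrt{d\lambda})$ and hence $\Vert\vec\epsilon_{\bar x}\Vert_2=O(\sqrt{\lambda/d})$, leading to the cleaner intermediate $O(n\lambda/d)$-approximation after spectral clustering. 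Both variants are enough for the stated $O\bigl(n(\lambda/d)^2\bigr)$ conclusion (which is determined by $|SB|$, not by the spectral step) provided $c$ is chosen small enough, so your approach is sound; the paper's optimization simply makes the input to Lemma~\ref{lem:Eigenvector approximation} sharper (and is what is actually used to get the intermediate $O(n\lambda/d)$ stated in the proof of Theorem~\ref{thm:Coloring with balanced adversarial planting}). You correctly flag this spectral step as the delicate point; you might also note that exact balance is used not only to make the mixing lemma informative but also to guarantee that $\vec x,\vec y,\vec 1$ are mutually orthogonal, which is needed for the ``$|S_\delta|=3$'' step of the argument.
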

Throughout this section, let $\epsilon=0.01$.
\begin{defn}
The set $SB\subseteq V$, statistically bad, is the the following
vertex set. $v\in SB$ if $v$ has more than $\left(\frac{1}{3}+\epsilon\right)d$
or neighbors, or less than $\left(\frac{1}{3}-\epsilon\right)d$ or
neighbors, with original color $k\in\left\{ 1,2,3\right\} \setminus\left\{ col_{C}\left(v\right)\right\} $.
\end{defn}
It turns out that by applying Algorithm~\ref{alg:iterative recoloring}
and the uncoloring procedure of Algorithm~\ref{alg:Cautious Uncoloring}
(see Algorithm~\ref{alg:Iterative Coloring and Uncoloring} below)
we can find the planted coloring of a set of $\left(1-O\left(\frac{1}{d}\right)\right)n$
vertices.

\begin{algorithm}[H]
\begin{enumerate}
\item Color $P_{C}\left(G\right)$ with an $O\left(\left|SB\right|\right)$-approximated
coloring ,$C_{Alg}$ (use Algorithm~\ref{alg:iterative recoloring}).
\item Repeatedly uncolor vertices $v\in P_{C}\left(G\right)$ with less
than $\left(\frac{2}{3}-2\epsilon\right)d$ neighbors or with less
than $\left(\frac{1}{6}\right)d$ neighbors of color $l\in\left\{ 1,2,3\right\} \setminus col_{C_{Alg}}\left(v\right)$,
denote the set of uncolored vertices by $H_{1}$.
\item If $\left|H_{1}\right|=O\left(\log n\right)$ then enumerate over
all the possible colorings ($3^{\left|H_{1}\right|}$) of the vertices
in $H_{1}$.
\end{enumerate}
\protect\caption{\label{alg:Iterative Coloring and Uncoloring}Iterative Coloring and
Uncoloring}
\end{algorithm}

Throughout the rest of this section we denote $G'=P_{C}\left(G\right)$.
The main point of the proof of Theorem~\ref{thm:Coloring with balanced adversarial planting}
is that $e_{n-1}\left(P_{C}\left(G\right)\right),e_{n}\left(P_{C}\left(G\right)\right)$
are related to $C$ even when $C$ is arbitrary balanced (rather than
random as in the previous sections). This is stated in Lemma~\ref{lem: eigen values after arbitrary planting}
which is similar to Theorem~\ref{thm:Main}. To state Lemma~\ref{lem: eigen values after arbitrary planting}
let us define the following.
\begin{defn}
\label{Def:x,y-1}Given $C$ define the following vectors in $\mathbb{R}^{n}$.
$\left(\vec{1}_{n}\right)_{i}=1$, $\left(\vec{p_{i}}\right)_{j}:=\begin{cases}
1 & v_{j}\in c_{i}\\
0 & v_{j}\notin c_{i}
\end{cases}$, $\left(\vec{x}\right)_{i}:=\begin{cases}
2 & v_{i}\in c_{1}\\
-1 & v_{i}\in c_{2}\\
-1 & v_{i}\in c_{3}
\end{cases}$ and $\left(\vec{y}\right)_{i}:=\begin{cases}
0 & v_{i}\in c_{1}\\
1 & v_{i}\in c_{2}\\
-1 & v_{i}\in c_{3}
\end{cases}$.\end{defn}
\begin{lem}
\label{lem: eigen values after arbitrary planting}Let $G$ be a $d$
regular $\lambda$-expander, where $\lambda \le c d$
(for some constant $c$). If \textup{$G'=P_{C}\left(G\right)$ for a balanced
$C$  then}  with high probability the following holds.
\begin{enumerate}
\item The eigenvalues of $G'$ have the following spectrum.

\begin{enumerate}
\item $\lambda_{1}\left(G'\right)\ge\frac{2}{3}d-\sqrt{d\lambda}$.
\item $\lambda_{n}\left(G'\right)\le\lambda_{n-1}\left(G'\right)\le-\frac{1}{3}d-\sqrt{d\lambda}$.
\item $|\lambda_{i}\left(G'\right)|\le2\lambda+O\left(\sqrt{d\lambda}\right)$
for all $2\le i\le n-2$.
\end{enumerate}
\item The following vectors exist.

\begin{enumerate}
\item $\vec{\epsilon}_{\bar{x}}$ such that $\left\Vert \vec{\epsilon}_{\bar{x}}\right\Vert _{2}=O\left(\sqrt{\frac{\lambda}{d}}\right)$
and $\bar{x}+\vec{\epsilon}_{\bar{x}}\in\text{span}\left(\left\{ e_{i}\left(G'\right)\right\} _{i\in\left\{ n-1,n\right\} }\right)$.
\item $\vec{\epsilon}_{\bar{y}}$ such that $\left\Vert \vec{\epsilon}_{\bar{y}}\right\Vert _{2}=O\left(\sqrt{\frac{\lambda}{d}}\right)$
and $\bar{y}+\vec{\epsilon}_{\bar{y}}\in\text{span}\left(\left\{ e_{i}\left(G'\right)\right\} _{i\in\left\{ n-1,n\right\} }\right)$.
\end{enumerate}
\end{enumerate}
\end{lem}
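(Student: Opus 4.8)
Write $A_{G'}=A_{G}-B$, where $B=A_{G_{1}}\oplus A_{G_{2}}\oplus A_{G_{3}}$ is the block-diagonal adjacency matrix of the edges deleted by the planting and $G_{i}=G_{c_{i}}$ is the induced subgraph on color class $c_{i}$ (of size $n/3$, since $C$ is balanced). The plan is \emph{not} to imitate the proof of Theorem~\ref{thm:Main} by exhibiting near-eigenvectors of $A_{G'}$: for an adversarial balanced planting the within-class degrees need not concentrate, so $A_{G'}\bar{x}$ (and $A_{G'}\bar{1}$) can be at distance $\Theta(d)$ from $-\tfrac{1}{3}d\,\bar{x}$, and feeding $\bar{x}$ into Lemma~\ref{lem:Almost Eigan Vectors implies more} gives nothing useful. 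Instead I would prove that $A_{G'}$ equals, up to operator norm $O(\sqrt{d\lambda})$, a rank-$3$ matrix $M_{0}$ whose eigenvalues are $\tfrac{2}{3}d$ (with eigenvector $\bar{1}$) and $-\tfrac{1}{3}d$ with multiplicity $2$ (with eigenspace $\mathrm{span}\{\bar{x},\bar{y}\}$, recall Definition~\ref{Def:x,y-1}), and then read off items~1(a)--(c) from Weyl's inequality and item~2 from a Davis--Kahan-type eigenspace perturbation bound.

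The technical heart is a quantitative form of the statement ``the top eigenvector of $A_{G_{i}}$ is essentially $\bar{p}_{i}$''. By the expander mixing lemma (Lemma~\ref{lem:=00005BMixing Lemma=00005D}) applied to $c_{i}$, $\ \bar{p}_{i}^{\,t}A_{G_{i}}\bar{p}_{i}=E_{G}(c_{i},c_{i})/|c_{i}|\ge\tfrac{1}{3}d-\lambda$, so $\mu_{i}:=\lambda_{1}(A_{G_{i}})\ge\tfrac{1}{3}d-\lambda$. Conversely, for any unit vector $w$ supported on $c_{i}$ we have $w^{t}A_{G_{i}}w=w^{t}A_{G}w\le d\langle w,\bar{1}\rangle^{2}+\lambda(1-\langle w,\bar{1}\rangle^{2})$, and $\langle w,\bar{1}\rangle^{2}\le|c_{i}|/n=\tfrac{1}{3}$ by Cauchy--Schwarz, so $\mu_{i}\le\tfrac{1}{3}d+\lambda$. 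Cauchy interlacing (Theorem~\ref{thm:=00005BCauchy-interlacing-theorem=00005D.}) forces every other eigenvalue of $A_{G_{i}}$ into $[-\lambda,\lambda]$, so for $\lambda\le cd$ with $c$ small, $\mu_{i}$ is simple and well separated, with eigenvector $f_{i}$ supported on $c_{i}$. Expanding $\mu_{i}=f_{i}^{t}A_{G}f_{i}\le d\langle f_{i},\bar{1}\rangle^{2}+\lambda=\tfrac{1}{3}d\langle f_{i},\bar{p}_{i}\rangle^{2}+\lambda$ and combining with $\mu_{i}\ge\tfrac{1}{3}d-\lambda$ gives $\langle f_{i},\bar{p}_{i}\rangle^{2}\ge1-6\lambda/d$, hence (after fixing the sign of $f_{i}$) $\lVert f_{i}-\bar{p}_{i}\rVert_{2}=O(\sqrt{\lambda/d})$.

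Now assemble. Since $G$ is $d$-regular, $A_{G}=d\,\bar{1}\bar{1}^{t}+A_{G}'$ with $\lVert A_{G}'\rVert\le\lambda$, and $B=\sum_{i}\mu_{i}f_{i}f_{i}^{t}+B_{\mathrm{small}}$ where $B_{\mathrm{small}}$ collects the non-top eigenvalues of the three blocks and hence has $\lVert B_{\mathrm{small}}\rVert\le\lambda$. Replacing each $f_{i}$ by $\bar{p}_{i}$ costs $O(\sqrt{d\lambda})$ in operator norm (the bound above times $\mu_{i}=O(d)$), and replacing each $\mu_{i}$ by $\tfrac{1}{3}d$ costs a further $O(\lambda)$, so $A_{G'}=M_{0}+R'$ with $\lVert R'\rVert=O(\sqrt{d\lambda})$ and $M_{0}:=d\,\bar{1}\bar{1}^{t}-\tfrac{1}{3}d\sum_{i}\bar{p}_{i}\bar{p}_{i}^{t}$. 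In the orthonormal basis $\bar{p}_{1},\bar{p}_{2},\bar{p}_{3}$ of $W:=\mathrm{span}\{\bar{1},\bar{x},\bar{y}\}$ one has $\bar{1}=\tfrac{1}{\sqrt{3}}\sum_{i}\bar{p}_{i}$, so $M_{0}|_{W}=\tfrac{1}{3}d(J_{3}-I_{3})$ with eigenvalues $\tfrac{2}{3}d$ (eigenvector $\bar{1}$) and $-\tfrac{1}{3}d$ with multiplicity $2$ (eigenspace $\mathrm{span}\{\bar{x},\bar{y}\}$), while $M_{0}$ vanishes on $W^{\perp}$. Weyl's inequality then yields $\lambda_{1}(G')\ge\tfrac{2}{3}d-O(\sqrt{d\lambda})$, $\ |\lambda_{i}(G')|\le O(\sqrt{d\lambda})$ for $2\le i\le n-2$, and $\lambda_{n}(G')\le\lambda_{n-1}(G')\le-\tfrac{1}{3}d+O(\sqrt{d\lambda})$, which are items~1(a)--(c); the clean constants in the statement come out by instead bounding $\lambda_{1}(G')$ directly via the Rayleigh quotient $\bar{1}^{t}A_{G'}\bar{1}=\tfrac{2}{3}d-E_{G}(\text{within classes})/n\ge\tfrac{2}{3}d-\lambda$, and $\lambda_{n-1}(G')$ via Courant--Fischer restricted to $\mathrm{span}\{\vec{x},\vec{y}\}$, which gives $\lambda_{n-1}(G')\le2\lambda-\tfrac{1}{3}d$. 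For item~2, the eigenvalue $-\tfrac{1}{3}d$ of $M_{0}$ is separated from the rest of its spectrum by $\ge\tfrac{1}{3}d\gg\lVert R'\rVert$, so a Davis--Kahan-type bound shows the span of the two most negative eigenvectors of $A_{G'}$ is within $O(\lVert R'\rVert/d)=O(\sqrt{\lambda/d})$ of $\mathrm{span}\{\bar{x},\bar{y}\}$, i.e.\ the required $\vec{\epsilon}_{\bar{x}},\vec{\epsilon}_{\bar{y}}$ of norm $O(\sqrt{\lambda/d})$ exist.

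The main obstacle is conceptual rather than computational: one must recognize that the near-eigenvector route of Theorem~\ref{thm:Main} genuinely breaks under adversarial planting, and replace it by the ``rank-$3$ plus norm-$O(\sqrt{d\lambda})$'' decomposition, whose crux is the $O(\sqrt{\lambda/d})$ closeness of $f_{i}$ to $\bar{p}_{i}$ obtained by playing the mixing-lemma lower bound $\mu_{i}\ge\tfrac{1}{3}d-\lambda$ against the expansion upper bound $\mu_{i}\le\tfrac{1}{3}d+\lambda$. After that everything is Weyl / interlacing / Davis--Kahan bookkeeping, with the only care being that $c$ must be a small enough constant (so that $\sqrt{d\lambda}$ is a small fraction of $d$, making all the separations above valid) and $d$ large enough.
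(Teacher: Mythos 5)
Your proposed decomposition $A_{G'}=M_{0}+R'$ with $M_{0}=d\,\bar{1}\bar{1}^{t}-\tfrac{1}{3}d\sum_{i}\bar{p}_{i}\bar{p}_{i}^{t}$ and $\lVert R'\rVert=O(\sqrt{d\lambda})$ does give a correct proof, and the bookkeeping via Weyl and Davis--Kahan is sound. However, the premise you use to justify abandoning the paper's route is wrong. You assert that for an adversarial balanced planting ``$A_{G'}\bar{x}$ can be at distance $\Theta(d)$ from $-\tfrac{1}{3}d\,\bar{x}$, and feeding $\bar{x}$ into Lemma~\ref{lem:Almost Eigan Vectors implies more} gives nothing useful.'' This is not so, and in fact the paper's actual proof of Lemma~\ref{lem: eigen values after arbitrary planting} does go through near-eigenvectors. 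While an individual coordinate of $A_{G'}\vec{x}-\bigl(-\tfrac{d}{3}\bigr)\vec{x}$ can indeed deviate by $\Theta(d)$, the expander mixing lemma controls \emph{how many} coordinates can deviate by more than $\epsilon d$: the set of vertices in class $i$ with more (or fewer) than $\bigl(\tfrac{1}{3}\pm\epsilon\bigr)d$ neighbours in class $j$ has size $O\bigl((\lambda/(\epsilon d))^{2}n\bigr)$. Summing squared deviations, one gets at most $d^{2}\cdot(\lambda/\epsilon d)^{2}n$ from the few bad vertices plus $(\epsilon d)^{2}n$ from the rest; choosing $\epsilon=\sqrt{\lambda/d}$ balances the two terms and gives $\bigl\lVert A_{G'}\vec{x}-\bigl(-\tfrac{d}{3}\bigr)\vec{x}\bigr\rVert_{2}^{2}=O(d\lambda\,n)$, i.e.\ $\bigl\lVert A_{G'}\bar{x}-\bigl(-\tfrac{d}{3}\bigr)\bar{x}\bigr\rVert_{2}=O(\sqrt{d\lambda})$ -- the same $O(\sqrt{d\lambda})$ you need for $\lVert R'\rVert$. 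Plugged into Lemma~\ref{lem:Almost Eigan Vectors implies more} with $\delta=\Theta(d)$, this yields $\lVert\vec{\epsilon}_{\bar{x}}\rVert=O(\sqrt{\lambda/d})$, exactly as in the statement, after which the paper repeats the argument of Theorem~\ref{thm:Main} (Claims~\ref{Claim:Bounding eiganvalues} and~\ref{claim:.s_delta=00003D3}) to locate the rest of the spectrum.

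So the two routes are quantitatively equivalent: the crux of your argument -- that the top eigenvector $f_{i}$ of $A_{G_{i}}$ satisfies $\lVert f_{i}-\bar{p}_{i}\rVert=O(\sqrt{\lambda/d})$, obtained by playing the mixing-lemma lower bound $\mu_{i}\ge\tfrac{1}{3}d-\lambda$ against the expansion upper bound $\mu_{i}\le\tfrac{1}{3}d+\lambda$ -- is the same mixing-lemma estimate as the paper's, just packaged in operator-norm language rather than coordinate-wise. Your version buys a cleaner statement (structure $M_{0}$ plus small perturbation $R'$) and lets you invoke Weyl/Davis--Kahan directly instead of re-deriving Claim~\ref{claim:.s_delta=00003D3}, whereas the paper's version recycles the machinery already built for Theorem~\ref{thm:Main}. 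Both are fine; just drop the claim that the near-eigenvector route breaks under adversarial planting, since the mixing lemma is precisely what rescues it.
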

\begin{proof}
The proof goes by showing an equivalent statement as in Lemma~\ref{lem:Existance of almost eiganvectors}.
In this proof we set $\epsilon$ to be $\sqrt{\frac{\lambda}{d}}$ (rather
than $0.01$). Denote $c_{G',i,j}^{+}=\left\{ v\in c_{i}|\left|N_{G'}\left(v\right)\cap c_{j}\right|\ge\left(\frac{1}{3}+\epsilon\right)dn\right\} $
and $c_{G',i,j}^{-}=\left\{ v\in c_{i}|\left|N_{G'}\left(v\right)\cap c_{j}\right|\le\left(\frac{1}{3}-\epsilon\right)dn\right\} $.
By the expander mixing lemma, Lemma~\ref{lem:=00005BMixing Lemma=00005D},
it follow that for all $i\neq j$
\begin{align*}
\left(\frac{1}{3}+\epsilon\right)d\left|c_{G',i,j}^{+}\right|\le\left|E_{G'}\left(c_{G',i,j}^{+},G'_{c_{j}}\right)\right| & =\left|E_{G}\left(c_{G,i,j}^{+},G{}_{c_{j}}\right)\right|\\
 & \le\frac{d}{3}\left|c_{G',i,j}^{+}\right|+\lambda\sqrt{\frac{n}{3}\left|c_{G',i,j}^{+}\right|}\,.
\end{align*}

The above simplifies to
\[
\left|c_{G',i,j}^{+}\right|\le\left(\frac{\lambda}{\epsilon d}\right)^{2}\frac{n}{3}\,,
\]
and similarly
\[
\left|c_{G',i,j}^{-}\right|\le\left(\frac{\lambda}{\epsilon d}\right)^{2}\frac{n}{3}\,.
\]
Recall the definition of $\vec{x}$, it follows that
\begin{align*}
\sum_{i=1}^{n/3}\left(A_{G'}\vec{x}-\left(-\frac{d}{3}\right)\vec{x}\right)_{i}^{2} & \le d^{2}\left(\left|c_{G',1,2}^{+}\right|+\left|c_{G',1,2}^{-}\right|+\left|c_{G',1,3}^{+}\right|+\left|c_{G',1,3}^{-}\right|\right)+\left(2\epsilon d\right)^{2}\frac{n}{3}\\
 & \le d^{2}\frac{n}{3}\left(4\left(\frac{\lambda}{\epsilon d}\right)^{2}+\left(2\epsilon\right)^{2}\right)\\
 & \le\frac{8}{3}nd\lambda\,.
\end{align*}
By similar calculations we can conclude that
\[
\sum_{i=1}^{n}\left(A_{G'}\vec{x}-\left(-\frac{d}{3}\right)\vec{x}\right)_{i}^{2}\le8nd\lambda \,,
\]
and if we normalize then
\[
\left\Vert A_{G'}\bar{x}-\left(-\frac{d}{3}\right)\bar{x}\right\Vert _{2}\le O\left(\sqrt{d\lambda }\right)\,.
\]
Similarly we can show equivalent statements with respect to $\bar{y},\bar{1}$
and $\left\{ \bar{p}_{i}\,|\, i\in\left\{ 1,2,3\right\} \right\} $.
By applying Lemma~\ref{lem:Almost Eigan Vectors implies more} we
can conclude an equivalent statement as in Lemma~\ref{lem:distance to S_delta},
namely the following claim.
\begin{claim}
\label{claim:AdvPlanting1}Define $S_{\delta}\left(x\right)\subset\{1,\ldots,n\}$
to be the set of those indices $i$ for which $|\lambda\left(G'\right)_{i}-\left(x\right)|\le\delta$.
The following vectors exist.
\begin{itemize}
\item $\vec{\epsilon}_{\bar{1}}$ such that $\left\Vert \vec{\epsilon}_{\bar{1}}\right\Vert _{2}\le\frac{O\left(\sqrt{d\lambda }\right)}{\delta}$
and $\bar{1}+\vec{\epsilon}_{\bar{1}}\in\text{span}\left(\left\{ e_{i}\left(G'\right)\right\} _{i\in S_{\delta}\left(\frac{2}{3}d\right)}\right)$.
\item $\vec{\epsilon}_{\bar{x}}$ such that $\left\Vert \vec{\epsilon}_{\bar{x}}\right\Vert _{2}\le\frac{O\left(\sqrt{d\lambda }\right)}{\delta}$
and $\bar{x}+\vec{\epsilon}_{\bar{x}}\in\text{span}\left(\left\{ e_{i}\left(G'\right)\right\} _{i\in S_{\delta}\left(-\frac{1}{3}d\right)}\right)$.
\item $\vec{\epsilon}_{\bar{y}}$ such that $\left\Vert \vec{\epsilon}_{\bar{y}}\right\Vert _{2}\le\frac{O\left(\sqrt{d\lambda }\right)}{\delta}$
and $\bar{y}+\vec{\epsilon}_{\bar{y}}\in\text{span}\left(\left\{ e_{i}\left(G'\right)\right\} _{i\in S_{\delta}\left(-\frac{1}{3}d\right)}\right)$.
\end{itemize}
\end{claim}
Using Claim~\ref{claim:AdvPlanting1} and applying the same arguments
as in the proof of Theorem~\ref{thm:Main} ends the proof.
\end{proof}

\begin{lem}
\label{lem: sb bound adversary}It holds that $\left|SB\right|\le O\left(n\left(\frac{\lambda}{\epsilon d}\right)^{2}\right)$.\end{lem}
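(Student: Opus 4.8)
The plan is to recognize that $SB$ is, by definition, the union over the constantly many ordered pairs of distinct color classes of the ``over-populated'' and ``under-populated'' vertex sets that were already bounded inside the proof of Lemma~\ref{lem: eigen values after arbitrary planting}. So the whole argument is one application of the expander mixing lemma, repeated $O(1)$ times.

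Concretely, for distinct $i,j\in\{1,2,3\}$ set $c^{+}_{i,j}=\{v\in c_i : |N_{G}(v)\cap c_j|\ge(\tfrac13+\epsilon)d\}$ and $c^{-}_{i,j}=\{v\in c_i : |N_{G}(v)\cap c_j|\le(\tfrac13-\epsilon)d\}$. Since every edge of $G$ between $c_i$ and $c_j$ with $i\ne j$ survives in $G'=P_C(G)$, we have $|N_{G}(v)\cap c_j|=|N_{G'}(v)\cap c_j|$ for $j\ne col_C(v)$, and therefore $SB=\bigcup_{i\ne j}\bigl(c^{+}_{i,j}\cup c^{-}_{i,j}\bigr)$. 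Thus it suffices to bound a single set $c^{+}_{i,j}$ (the argument for $c^{-}_{i,j}$ is symmetric) and multiply by the number of ordered pairs times the two signs, i.e.\ by $12$.

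For $c^{+}_{i,j}$, the defining property gives $E_{G}(c^{+}_{i,j},c_j)\ge(\tfrac13+\epsilon)d\,|c^{+}_{i,j}|$, while the expander mixing lemma (Lemma~\ref{lem:=00005BMixing Lemma=00005D}) together with $|c_j|=n/3$ (balance) gives
\[
E_{G}(c^{+}_{i,j},c_j)\le\frac{d}{n}\,|c^{+}_{i,j}|\cdot\frac{n}{3}+\lambda\sqrt{|c^{+}_{i,j}|\cdot\frac{n}{3}}=\frac{d}{3}\,|c^{+}_{i,j}|+\lambda\sqrt{\frac{n}{3}\,|c^{+}_{i,j}|}.
\]
Comparing the two estimates yields $\epsilon d\,|c^{+}_{i,j}|\le\lambda\sqrt{n|c^{+}_{i,j}|/3}$, hence $|c^{+}_{i,j}|\le \lambda^{2}n/(3\epsilon^{2}d^{2})$. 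Running the same computation with $E_{G}(c^{-}_{i,j},c_j)\le(\tfrac13-\epsilon)d\,|c^{-}_{i,j}|$ and the lower-bound side of the mixing lemma, $E_{G}(c^{-}_{i,j},c_j)\ge\frac{d}{3}|c^{-}_{i,j}|-\lambda\sqrt{\frac{n}{3}|c^{-}_{i,j}|}$, gives the identical bound on $|c^{-}_{i,j}|$. Summing over all $i\ne j$ and both signs gives $|SB|\le 12\cdot \lambda^{2}n/(3\epsilon^{2}d^{2}) = O\!\left(n\bigl(\tfrac{\lambda}{\epsilon d}\bigr)^{2}\right)$.

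There is essentially no obstacle here; the only points requiring a little care are using the correct direction of the mixing lemma for the under-populated set $c^{-}_{i,j}$ and invoking exact balance $|c_j|=n/3$ (with only approximate balance the same bound holds up to constants). Everything else is the short arithmetic above, which is already carried out in the proof of Lemma~\ref{lem: eigen values after arbitrary planting}.
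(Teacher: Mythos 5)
Your proof is correct and follows essentially the same approach as the paper's: the paper's proof of this lemma simply recalls the bounds $|c^{\pm}_{G',i,j}|\le(\lambda/(\epsilon d))^2 n/3$ already established (via the expander mixing lemma applied to each over/under-populated set against $c_j$) inside the proof of Lemma~\ref{lem: eigen values after arbitrary planting}, and sums them over the constantly many pairs $(i,j)$. You have merely written out the mixing-lemma computation explicitly rather than citing it, and your observations about the direction of the inequality for $c^-_{i,j}$ and about using exact balance $|c_j|=n/3$ are correct.
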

\begin{proof}
Recall the definition of $c_{G',i,j}^{+}$ and $c_{G',i,j}^{-}$ from
the proof of Lemma~\ref{lem: eigen values after arbitrary planting}.
It holds that
\begin{align*}
SB & \le\sum_{i\neq j}\left(\left|c_{G',i,j}^{+}\right|+\left|c_{G',i,j}^{-}\right|\right)\\
 & \le4\left(\frac{\lambda}{\epsilon d}\right)^{2}n\,.
\end{align*}

\end{proof}

Now we continue with the proof of Theorem~\ref{thm:Coloring with balanced adversarial planting}.
\begin{proof}
{[}Theorem~\ref{thm:Coloring with balanced adversarial planting}{]}.
Using Lemma~\ref{lem: eigen values after arbitrary planting} and
the proof of Lemma~\ref{lem:Eigenvector approximation} we conclude
that Algorithm~\ref{alg:Spectral Clustering} outputs an $O\left(n\frac{\lambda}{d}\right)$-approximate
coloring of $G'$. The proof of Lemma~\ref{lemma:log(n)-Algorithmicly}
shows that Algorithm~\ref{alg:iterative recoloring} gets $O\left(\left|SB\right|\right)$-approximate
coloring of $G'$. Define
\[
B:=\left\{ v\in V\left(G'\right)\,|\, col_{C}\left(v\right)\neq col_{C_{Alg}}\left(v\right)\right\} \,.
\]
The proof of Lemma~\ref{lem: Uncoloring leaves us with a large set colored correctly.}
shows that $\left|H_{1}\right|\le O\left(\left|SB\right|\right)$
and $B\subseteq H_{1}$. By Lemma~\ref{lem: sb bound adversary}
it follows that $SB\le O\left(n\left(\frac{\lambda}{\epsilon d}\right)^{2}\right)$
and proof follows.
\end{proof}

When $G$ has an high degree (with respect to its expansion) we obtain
the following corollary.
\begin{cor}
Let $G$ be an $\Omega\left(\frac{n}{\log n}\right)$-regular
 $\lambda$-expander graph. If $\lambda=O\left(\sqrt{d}\right)$
then for every balanced $C$ Algorithm~\ref{alg:Iterative Coloring and Uncoloring}
colors $P_{C}\left(G\right)$.
\end{cor}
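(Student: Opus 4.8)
The plan is to instantiate Theorem~\ref{thm:Coloring with balanced adversarial planting} with the stronger parameters $\lambda = O(\sqrt{d})$ and $d = \Omega(n/\log n)$, and then observe that the set of vertices left uncolored by the first two steps of Algorithm~\ref{alg:Iterative Coloring and Uncoloring} is so small that the brute-force Step~3 runs in polynomial time and is guaranteed to complete the coloring legally.

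First I would run Steps~1 and~2 of Algorithm~\ref{alg:Iterative Coloring and Uncoloring} exactly as in the proof of Theorem~\ref{thm:Coloring with balanced adversarial planting}. That proof (through Lemma~\ref{lem: Uncoloring leaves us with a large set colored correctly.}) shows that the set $H_1$ of uncolored vertices satisfies $|H_1| = O(|SB|)$ and, crucially, that every vertex that is still colored agrees with the planted balanced coloring $C$. By Lemma~\ref{lem: sb bound adversary} we have $|SB| = O\left(n\left(\frac{\lambda}{\epsilon d}\right)^{2}\right)$; since $\epsilon$ is the absolute constant $0.01$ and $\lambda = O(\sqrt{d})$, this gives $|SB| = O(n/d)$, hence $|H_1| = O(n/d)$.

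Next I would plug in the degree assumption $d = \Omega(n/\log n)$, which yields $|H_1| = O(\log n)$. (Since $d\to\infty$, for $n$ large enough $d$ exceeds the absolute constant required by Theorem~\ref{thm:Coloring with balanced adversarial planting}, and $\lambda \le cd$ holds for any fixed $c$.) Therefore Step~3 of Algorithm~\ref{alg:Iterative Coloring and Uncoloring} is triggered and enumerates all $3^{|H_1|} = 3^{O(\log n)} = n^{O(1)}$ colorings of the vertices of $H_1$ in polynomial time. Because the already-colored part of the graph coincides with $C$ restricted to $V\setminus H_1$, coloring the vertices of $H_1$ according to $C$ reconstructs exactly the planted coloring $C$, which is a legal $3$-coloring of $P_C(G)$ by the definition of $P_C$. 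Hence at least one enumerated extension is a legal $3$-coloring, and the algorithm outputs one.

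I do not expect a genuine obstacle here: the corollary is essentially a parameter specialization of the $H_A/P_A$ partial-coloring analysis already established in this section. The one point that requires care is bookkeeping of constants: one must verify that the hidden constant in $|H_1| = c\log n$ is absolute (it is a product of the constants appearing in Lemma~\ref{lem: sb bound adversary}, Lemma~\ref{lem: Uncoloring leaves us with a large set colored correctly.}, the $O(\sqrt{d})$ bound on $\lambda$, and the $\Omega(n/\log n)$ bound on $d$), so that $3^{c\log n}$ is genuinely polynomial in $n$ and Step~3 indeed runs in polynomial time.
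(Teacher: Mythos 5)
Your proposal is correct and matches the route the paper itself takes: the corollary is offered as a direct parameter specialization of Theorem~\ref{thm:Coloring with balanced adversarial planting} (via Lemmas~\ref{lem: sb bound adversary} and~\ref{lem: Uncoloring leaves us with a large set colored correctly.}), where $\lambda=O(\sqrt d)$ and $d=\Omega(n/\log n)$ force $|H_1|=O(n/d)=O(\log n)$, so Step~3's $3^{|H_1|}=n^{O(1)}$ enumeration completes a legal coloring in polynomial time. Nothing further is required.
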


\subsection{\label{sec:On the hardness of coloring expander graphs with adversarial planting}Hardness of coloring expander graphs with adversarial planting}

It is well known~\cite{garey1976some} that $3$-coloring $4$-regular
graphs is NP-hard. In this section we show the following theorem.
\begin{thm}
(Restatement of Theorem~\ref{thm:A/A}). Let $c$ be a constant equals
$4$ and $d=\Omega\left(1\right)$ be large enough. Let $H$ be an
arbitrary $c$-regular graph with $\frac{n}{cd}$ vertices that is
$3$-colorable. Any (polynomial time) algorithm that (fully) colors
$d$-regular $O\left(\sqrt{d}\right)$-expander graphs with an adversarial
planted 3-coloring that is balanced can be used to color $H$. \end{thm}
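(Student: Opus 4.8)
The plan is to convert the hypothesized $H_A/P_A$ algorithm $\mbox{ALG}$ into a polynomial-time $3$-coloring procedure for $H$, realizing the sketch of Section~\ref{sec:A/A}. As a preprocessing step, replace $H$ by a suitable number of vertex-disjoint copies of itself (three, or six if needed for parity); this keeps $H$ $4$-regular and $3$-colorable and guarantees that a \emph{balanced} legal $3$-coloring exists. Write $n_1$ for the resulting number of vertices and set $n:=4dn_1$ and $n_2:=n-n_1=n_1(4d-1)$; the copying also arranges all divisibility conditions used below.

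Next, build in polynomial time a ``filler'' graph $Z$ on $n_2$ vertices: take an explicit $(d-1)$-regular $O(\sqrt d)$-expander $W$ on $n_2$ vertices, designate a fixed set of $n_1(d-4)$ of its vertices as \emph{connectors}, and add to $W$ a perfect matching on the remaining $n_2-n_1(d-4)=3n_1(d+1)$ vertices, chosen greedily so as not to duplicate an edge of $W$ (possible since $W$ is sparse). Then $Z$ has degree $d-1$ at the connectors and degree $d$ elsewhere, and $\lambda(Z)\le\lambda(W)+1=O(\sqrt d)$ since $A_Z=A_W+A_{M'}$ with $M'$ a matching. Fix any balanced $3$-coloring of $V(Z)$ having exactly $n_1(d-4)/3$ connectors and $n_1(d+1)$ non-connectors of each color, let $Z_3$ be $Z$ with its monochromatic edges removed, form $G:=H\sqcup Z_3$, run $\mbox{ALG}$ on $G$, and output the restriction to $H$ of the coloring returned. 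This reduction runs in polynomial time, and --- the key point --- it never constructs or uses a $3$-coloring of $H$.

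For correctness it suffices to show that $G$ lies in the support of the $H_A/P_A$ model, that is, to exhibit a $d$-regular $O(\sqrt d)$-expander host $\hat H$ together with a balanced planted $3$-coloring $\hat\chi$ with $P_{\hat\chi}(\hat H)=G$; this is the only place we use that $H$ is $3$-colorable. Let $\chi$ be a balanced legal $3$-coloring of $H$. Form $\hat H$ from the disjoint union of $H$ and $Z$ by adding, for each $v\in V(H)$, edges from $v$ to $d-4$ distinct connectors of color $\chi(v)$; such an assignment exists because for each color $i$ the demand $|\chi^{-1}(i)|(d-4)=(n_1/3)(d-4)$ equals the number of color-$i$ connectors. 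A degree count shows $\hat H$ is $d$-regular. Let $\hat\chi$ assign $\chi$ on $V(H)$ and the fixed coloring on $V(Z)$; it is balanced, since each class has size $n_1/3+n_2/3=n/3$. Every edge of $H$ survives the planting since $\chi$ is legal, every $v$--connector edge is monochromatic and is deleted, and the edges inside $Z$ contribute precisely $Z_3$; hence $P_{\hat\chi}(\hat H)=H\sqcup Z_3=G$. Since the $H_A/P_A$ model involves no randomness, $\mbox{ALG}$ legally $3$-colors every graph of this form, in particular $G$, and the restriction of such a coloring to the component $H$ (and then to one copy) is a legal $3$-coloring of the original $H$. (Conversely, if $H$ is not $3$-colorable then neither is $G$, so $\mbox{ALG}$ cannot return a legal coloring; thus the procedure in fact decides $3$-colorability, which together with the NP-hardness of $3$-coloring $4$-regular graphs yields Theorem~\ref{thm:A/A}.)

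The one genuinely technical step is verifying that $\hat H$ is an $O(\sqrt d)$-expander. I would estimate $|x^\top A_{\hat H}\,x|$ for a unit vector $x=(x_H,x_Z)$ orthogonal to $\mathbf{1}_n$, further writing $x_Z=\alpha\mathbf{1}_{n_2}+x_Z'$ with $x_Z'\perp\mathbf{1}_{n_2}$. The $A_H$-part is at most $4\|x_H\|^2$; the $A_Z$-part is at most $(\lambda(W)+1)\|x_Z'\|^2$ plus a term controlled by $\alpha^2 n_2\le n_1\|x_H\|^2/n_2\le \|x_H\|^2/d$ (using $x\perp\mathbf{1}_n$ and $n_1/n_2=1/(4d-1)$); and the cross term $2x_Z^\top B x_H$, where $B$ is the connector incidence matrix, is at most $2\sqrt{d-4}\,\|x_H\|\,\|x_Z\|$ because $B^\top B=(d-4)I_{n_1}$, so $B$ has spectral norm $\sqrt{d-4}$. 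Since $4=O(\sqrt d)$, $\sqrt{d-4}=O(\sqrt d)$, $n_1/n_2=O(1/d)$, and $\lambda(W)=O(\sqrt d)$, all contributions are $O(\sqrt d)\|x\|^2$, whence $\lambda(\hat H)=O(\sqrt d)$. The remaining loose ends --- existence of explicit expanders of the needed degrees, and the parity/divisibility conditions --- are routine, handled by standard constructions and by the preprocessing copies.
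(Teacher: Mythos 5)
Your proposal is correct and follows the same reduction as the paper: embed $H$ as a vertex-induced subgraph of a near-$d$-regular expander $\hat H$ by attaching each $H$-vertex to $d-4$ dedicated connectors in a filler expander, plant a balanced coloring that assigns each connector the color of its $H$-neighbor (which deletes exactly the connector edges and leaves $H \sqcup Z_3$), and bound $\lambda(\hat H)$ by separating the contributions of the filler, of $H$, and of the connector incidences. The only departures from the paper are cosmetic: you arrange for $\hat H$ to be exactly $d$-regular (following the overview; the paper's appendix proof allows degrees $d$ and $d+1$ in $G_H$), and you bound the spectrum via a direct Rayleigh-quotient decomposition rather than the paper's Weyl-perturbation decomposition $A_{G_H}=A'_G+A'_H+A'_S$ with the star-forest $A'_S$ — though both hinge on the same $\sqrt{d-4}$ bound on the connector block.
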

\begin{proof}
Assume the graph $H$ is $3$-colorable in such a way that all the
color classes are of the same size (this can be obtained by taking
a disjoint union of $3$ copies of $H$), denote this coloring by
$P_{H}=P_{1}\cup P_{2}\cup P_{3}$. Let $G$ be any $d$-regular $O\left(\sqrt{d}\right)$-expander
graph with $\left(1-\frac{1}{cd}\right)n$ vertices. Let $A$ be an
algorithm that colors $d$-regular $O\left(\sqrt{d}\right)$-expander
graphs with an adversarial planted coloring. We show that this implies
that $A$ colors the disjoint union $H$ and $P\left(G\right)$ ($P\left(G\right)$
is an arbitrary balanced planted coloring of $G$), denote this union
by $\left(P\left(G\right)+H\right)$. Consider the graph $G_{H}$
obtained as follows. Add $3$ sets, $S_{1},S_{2},S_{3}$ of $\frac{n}{3cd}$
vertices to $G$ such that each of the added vertices has $d-c$ unique
neighbors in $G$ and all the vertices from the $i$-th set have their
neighborhood contained in $P_{i}$. Replace the induced sub-graph
on the vertices of $\bigcup S_{i}$ (it is an independent set) with
$H$ (permute the vertices of $H$ such that the sets $\left\{ S_{i}\right\} $
agrees with $P$ ). It is clear from this construction that there
exists an (adversarial) planting such that after it has been applied
on $G_{H}$ we obtain $\left(P\left(G\right)+H\right)$. The following
claim ends the proof.

~
\begin{claim}
$G_{H}$ is a $O\left(\sqrt{d}\right)$ expander.\end{claim}
\begin{proof}
Consider the following inequality from perturbation theory for matrices
that holds for any two symmetric matrices $A,N\in\mathbb{R}^{n,n}$
(see, for example, \cite{bhatia2013matrix})
\begin{equation}
\max_{i:1\le i\le n}\left|\lambda_{i}\left(A+N\right)-\lambda_{i}\left(A\right)\right|\le\max_{i:1\le i\le n}\left|\lambda_{i}\left(N\right)\right|\,.\label{eq:Sums of Eigenvalues}
\end{equation}

Namely, the inequality shows that by adding a matrix $N$ to a matrix
$A$, the eigenvalues of $A+N$ change by at most $\max_{i:1\le i\le n}\left|\lambda_{i}\left(N\right)\right|$.

We can write the adjacency matrix of $G{}_{H}$ as
\[
A_{G{}_{H}}=A'_{G}+A'_{H}+A'_{S}\,.
\]

Here $S$ is the disjoint union of $\frac{n}{cd}$ star graphs $S_{d-c}$
(the graph $S_{k}$ is a bipartite graph of $(k+1)$ vertices with
one vertex connected to all the other vertices) and $A'_{G}$ is the
adjacency matrix of the graph obtained from $G$ an independent set
of vertices were added to it (similarly for $H$ and $S$). Note that
adding an independent set to a graph only adds zero entries to its
spectrum.

It is a known fact that for every $i$ it holds that $\left|\lambda_{i}\left(S_{k}\right)\right|\le\sqrt{k}$.
Hence, since $S$ is the disjoint union of star graphs then every
$i$ it holds that $\left|\lambda_{i}\left(A_{S}\right)\right|\le\sqrt{d-c}$.
Since $H$ is $c$ regular, for every $i$ it holds that $\left|\lambda_{i}\left(H\right)\right|\le c$.
By Inequality~\ref{eq:Sums of Eigenvalues} it follows that for every
$i\in\left[n\right]$ it holds that
\begin{align}
\left|\lambda_{i}\left(A_{G}\right)\right|-\left(c+\sqrt{d-c}\right) & \le\left|\lambda_{i}\left(A_{G_{H}}\right)\right|\label{eq:G_H is expander}\\
 & \le\left|\lambda_{i}\left(A_{G}\right)\right|+\left(c+\sqrt{d-c}\right)\,.\nonumber
\end{align}
Recall that $c$ is a constant and $G$ is a $d$-regular $O\left(\sqrt{d}\right)$-expander
graph and note that (by construction) vertices of $G_{H}$ have degrees
between $d$ and $d+1$ (actually, this implies that $d\le\lambda_{1}\left(G_{H}\right)\le d+1$).
Hence by Inequality~\ref{eq:G_H is expander} $G{}_{H}$ is (roughly
$d$-regular) $O\left(\sqrt{d}\right)$-expander.
\end{proof}
~
\end{proof}

\section{Random host and adversarial planting, $H_R/P_A$}
\label{app:RA}

Section~\ref{app:RA} contains the proofs for the model with a random host graph and an adversarial planted balanced coloring, namely, $H_R/P_A$.
Section~\ref{sec:F 3-coloring random graphs with adversarial color planting}
proves Theorem~\ref{thm:R/A}a (a 3-coloring algorithm for $H_R/P_A$). Section~\ref{sec:G Hardness of 3-coloring random graphs with adversarial planted 3-coloring}
proves Theorem~\ref{thm:R/A}b (hardness for $H_R/P_A$).

\subsection{\label{sec:F 3-coloring random graphs with adversarial color planting}
3-coloring random graphs with adversarial color planting}

We start with a definition for a distribution for random graphs, following
\cite{erdds1959random},
\begin{defn}
A graph with $n$ vertices $G$ is distributed by $G_{n,d}$ if each
edge is included in the graph with probability $p=\frac{d}{n-1}$,
independently from every other edge.
\end{defn}
In this section we consider the following planting model. The host
graph $G\sim G_{n,d}$ is a random graph on $n$ vertices with average
degree $d$ and the planting is adversarial. As opposed to the previous
sections, here our techniques can be applied only to $3$ colors (or
less) planting (see Section~\ref{sub:G.2 Hardness result for random graphs with adversarial 4-color planting}
for details). Consider the following algorithm.

\begin{algorithm}[H]
\begin{enumerate}
\item Use Algorithm~\ref{alg:iterative recoloring} to color $P_{C}\left(G\right)$
with a partial coloring ,$C_{Alg}$.
\item Repeatedly uncolor vertices $v\in P_{C}\left(G\right)$ with less
than $\left(\frac{2}{3}-2\epsilon\right)d$ neighbors or with less
than $\left(\frac{1}{6}\right)d$ neighbors of color $l\in\left\{ 1,2,3\right\} \setminus col_{C_{Alg}}\left(v\right)$.
\item Repeatedly color every vertex that has $2$ neighbors colored with
$i,j$, (for $i\neq j$), by $\left\{ 1,2,3\right\} \setminus\left\{ i,j\right\} $.
\item For every vertex that has a colored neighbor create a variable that
takes values corresponding to the two remaining colors.
\item For each $i\in\left\{ 1,2,3\right\} $ do as follows.

\begin{enumerate}
\item Any uncolored vertex that has no colored neighbor is colored by $i$.
\item If there exists a legal coloring for $P_{C}\left(G\right)$ that agrees
with the above (use any algorithm that solves $2SAT$) return it.
\end{enumerate}
\item Return the partial coloring of $P_{C}\left(G\right)$.
\end{enumerate}
\protect\caption{\label{alg:3-coloring-for-random graphs with adv planting}3-coloring
for random graphs with adversarial planting }
\end{algorithm}

We prove the following theorem.
\begin{thm}
\label{thm:random graphs with Adv Planting}(Restatement of Theorem~\ref{thm:R/A}a).
Let $d=\omega\left(n^{\frac{2}{3}}\right)$, $G\sim G_{n,d}$ and
$P_{C}\left(G\right)$ be the graph $G$ with an adversarial 3-coloring.
With high probability, Algorithm~\ref{alg:3-coloring-for-random graphs with adv planting}
outputs a legal coloring for $P_{C}\left(G\right)$.
\end{thm}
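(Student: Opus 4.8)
The plan is to feed the partial-coloring machinery of Theorem~\ref{thm:partial} into a pigeonhole argument that exploits a property special to random host graphs. Since $d=\omega(n^{2/3})$, a graph $G\sim G_{n,d}$ is with high probability (nearly) $d$-regular and an $O(\sqrt{d})$-expander, so Theorem~\ref{thm:partial} applies: steps~1--2 of Algorithm~\ref{alg:3-coloring-for-random graphs with adv planting} (spectral clustering, iterative recoloring and cautious uncoloring, analysed in Lemmas~\ref{lem:Eigenvector approximation}, \ref{lem:Iterative Refinement} and~\ref{lem: Uncoloring leaves us with a large set colored correctly.}) leave a set of free vertices of size $b\le O(n(\lambda/d)^2)=O(n/d)$, while every colored vertex carries its planted color $col_P$. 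Step~3 (safe recoloring) only colors a free vertex $v$ that has two $P_C(G)$-neighbors with distinct (planted) colors $i,j$; as $v$ is $P_C(G)$-adjacent to them, $col_P(v)$ must be the third color, which is exactly what step~3 assigns, so the invariant ``colored $\Rightarrow$ planted'' is preserved and the free set can only shrink. Write $F$ for the free set after step~3; then $|F|\le O(n/d)$, every colored vertex still agrees with $col_P$, and, by maximality of step~3, every $v\in F$ either has no colored neighbor in $P_C(G)$ or has all its colored neighbors of one common color.

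The next ingredient is the structural fact that with high probability every pair $u,v$ of vertices of $G$ has at least $(1-o(1))\frac{d^2}{n}$ common neighbors: the expected number is $(n-2)p^2=(1-o(1))\frac{d^2}{n}$ with edge probability $p=\frac{d}{n-1}$, which for $d=\omega(n^{2/3})$ is $\omega(n^{1/3})=\omega(\log n)$, so a Chernoff bound (Theorem~\ref{thm:=00005BChernoff=00005D}) followed by a union bound over the $\binom{n}{2}$ pairs gives the claim. Let $F_0\subseteq F$ be the free vertices with no colored neighbor in $P_C(G)$ and fix $u,v\in F_0$. Since $|F|\le O(n/d)=o(d^2/n)$, pigeonhole gives a common neighbor $w$ of $u,v$ in $G$ that is not free, hence colored, hence $col_C(w)=col_P(w)$. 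The edge $uw$ lies in $G$ but not in $P_C(G)$ (as $u$ has no colored neighbor there while $w$ is colored), so planting deleted it, forcing $col_P(u)=col_P(w)$; symmetrically $col_P(v)=col_P(w)$. Hence $F_0$ is monochromatic under $col_P$, and in particular $P_C(G)$ contains no edge inside $F_0$.

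It remains to show that one of the three branches of step~5 returns a legal coloring; I claim the branch $i=i^{\ast}:=col_P(F_0)$ does. In that branch $F_0$ is colored $i^{\ast}$ (consistent with $col_P$ and conflict-free inside $F_0$), and for each $v\in F\setminus F_0$ the variable created in step~4 ranges over a two-element list $L_v=\{1,2,3\}\setminus\{c\}$, where $c$ is the common color of $v$'s colored neighbors; note $col_P(v)\in L_v$ and, since any $P_C(G)$-edge from $v$ to $F_0$ is bichromatic under $col_P$, also $col_P(v)\ne i^{\ast}$. Imposing legality on every $P_C(G)$-edge incident to a free vertex yields a $2$-SAT instance (edges between two list-variables give $2$-clauses; edges from a list-variable to an already-colored vertex or to $F_0$ give unit clauses; edges inside the already-colored part and inside $F_0$ impose nothing new). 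This instance is satisfiable, because assigning $col_P$ to $F\setminus F_0$ extends the pre-colored vertices together with the choice $F_0\mapsto i^{\ast}$ to a legal $3$-coloring of $P_C(G)$. Hence the $2$-SAT solver returns some satisfying assignment, which together with the coloring of $F_0$ and of the previously colored vertices is a legal $3$-coloring of $P_C(G)$, and the algorithm outputs it. All steps run in polynomial time ($2$-SAT in linear time), so the procedure is polynomial.

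The step I expect to be the real obstacle is making the two quantitative bounds meet: we need the partial-coloring size $b=O(n/d)$ from Theorem~\ref{thm:partial} to be strictly below the common-neighbor count $\approx d^2/n$, and simultaneously $d^2/n\gg\log n$ for the union bound over all pairs; together these force precisely the threshold $d=\omega(n^{2/3})$. A secondary point needing care is justifying that $G\sim G_{n,d}$ genuinely meets the near-regularity and $O(\sqrt{d})$-expansion hypotheses of Theorem~\ref{thm:partial} in this degree range, and handling the degenerate cases $F=\emptyset$ or $F_0=\emptyset$, where the $2$-SAT step trivially succeeds because the partial coloring already agrees with the legal planted coloring.
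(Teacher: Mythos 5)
Your proof is correct and follows essentially the same approach as the paper: a $b$-partial coloring with $b=O(n/d)$ from the expander-based partial-coloring machinery, the pigeonhole observation that for $d=\omega(n^{2/3})$ every pair of vertices has $\gg n/d$ common colored neighbors so the set $F_0$ of free vertices with no colored $P_C(G)$-neighbor must be monochromatic under the planted coloring, and a conclusion by guessing that color and reducing to 2-SAT. You additionally give a careful verification that the resulting 2-SAT instance is satisfiable via the planted coloring (and handle the degenerate cases), a point the paper's proof leaves implicit.
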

The starting point in the proof of Theorem~\ref{thm:random graphs with Adv Planting}
is that random graphs are very good expanders. Specifically, it is
a well known fact that with high probability $\lambda_{2}\left(G\right)=\Theta\left(\sqrt{d}\right)$
(for example, see \cite{feige2005spectral,furedi1981eigenvalues,friedman1989second}).
This enables us to employ our techniques from Section~\ref{sec:Coloring Expander Graphs With Adversarial Planting}
(mainly to apply Theorem~\ref{thm:Coloring with balanced adversarial planting})
to get an $O\left(\frac{n}{d}\right)$-approximated coloring for $P_{C}\left(G\right)$.
To get a better coloring of $P_{C}\left(G\right)$ than suggested
by Theorem~\ref{thm:Coloring with balanced adversarial planting},
we use specific properties of random graphs. Mainly that every pair
of vertices has roughly $\frac{d^{2}}{n}$ common neighbors.
\begin{proof}
(Of Theorem~\ref{thm:random graphs with Adv Planting}). Recall that
with high probability $\lambda_{2}\left(G\right)=\Theta\left(\sqrt{d}\right)$.
Hence, by Theorem~\ref{thm:Coloring with balanced adversarial planting},
it follows that after Step~2 we have a partial coloring of $P_{C}\left(G\right)$
that colors all but $O\left(\frac{n}{d}\right)$ vertices exactly
as in the planted coloring. Clearly, in Step~3 all the vertices
we color are colored as in the planted coloring. Denote the set of
colored vertices by $A$ and the remaining vertices by $B$. Denote
by $B_{1}\subseteq B$ the set of vertices with a colored neighbor,
and by $B_{2}\subseteq B$ the rest. The proof follows by showing
that all the vertices in $B_{2}$ are of the same planted color. By
applying the Chernoff and the union bounds it follows that, with high
probability, for every pair of vertices $u,v\in V\left(G\right)$
there are $\Theta\left(\frac{d^{2}}{n}\right)$ common neighbors (assume
this event holds). By our assumption on $d$, it follows that given
a pair of vertices $u,v$ of different planted color classes at least
$\Theta\left(\frac{d^{2}}{n}\right)-O\left(\frac{n}{d}\right)>0$
vertices in $A$ are neighbors of both $u,v$ in $G$, take one such
vertex $w$. Hence, for every planting, $w\in A$ is still a neighbor
of at least one of $u,v$, say $v$. But this is a contradiction for
$v\in B_{2}$.
\end{proof}
When $n^{\frac{1}{2}}\le d\le n^{\frac{2}{3}}$ it holds that all
pairs have a lot of common neighbors and Theorem~\ref{thm:random graphs with Adv Planting}
does not give any advantage when that is the case. The following generalization
to Theorem~\ref{thm:random graphs with Adv Planting} deals with
that regime.
\begin{thm}
\label{thm:random graphs with Adv Planting-1}Let $G\sim G_{n,d}$
and $P_{C}\left(G\right)$ be the graph $G$ with an adversarial 3-coloring.
With high probability, Algorithm~\ref{alg:3-coloring-for-random graphs with adv planting}
outputs a partial coloring for $P_{C}\left(G\right)$ that is legal
on the colored vertices such that at most $\tilde{O}\left(\frac{n^{2}}{d^{3}}\right)$
vertices are not colored.
\end{thm}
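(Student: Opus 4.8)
The plan is to follow the proof of Theorem~\ref{thm:random graphs with Adv Planting} as far as it goes, and then, in the range of $d$ where that proof breaks down, to replace the crude ``every pair has a coloured common neighbour'' argument by a counting argument that loses only a polylogarithmic factor. First I would dispose of the trivial ranges of $d$. If $d=\omega(n^{2/3})$, Theorem~\ref{thm:random graphs with Adv Planting} already outputs a full legal colouring, so there is nothing to prove. If $d\le\sqrt{n\log n}$, then $n/d\le\sqrt{\log n}\cdot n^2/d^3=\tilde O(n^2/d^3)$, so it suffices to run Steps~1--3 of Algorithm~\ref{alg:3-coloring-for-random graphs with adv planting}: since $\lambda_2(G)=\Theta(\sqrt d)$ whp for $G\sim G_{n,d}$, Theorem~\ref{thm:Coloring with balanced adversarial planting} together with the cautious uncolouring step (Lemma~\ref{lem: Uncoloring leaves us with a large set colored correctly.}) gives, for \emph{every} balanced planting $C$, a partial colouring whose coloured vertices agree with the planted colouring and whose free set $F'$ has $|F'|=O(n/d)$ (safe recolouring, Step~3, only colours additional vertices correctly). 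So from now on assume $\sqrt{n\log n}<d$, and let $F'$, with $|F'|=O(n/d)$, be the free set after Step~3.

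The heart of the proof is the analysis of the set $F_0\subseteq F'$ of free vertices that have no coloured neighbour in $P_C(G)$. The key observation is: if $u,v\in F_0$ have different planted colours, then every common neighbour $w$ of $u$ and $v$ in the host graph $G$ lies in $N_{P_C(G)}(u)\cup N_{P_C(G)}(v)$ (because $w$ cannot have the planted colour of both $u$ and $v$, so at least one of the edges $uw,vw$ survives the planting), and since $u,v\in F_0$ this union is contained in $F'$. I would use two properties of $G\sim G_{n,d}$, both holding whp and both to be established as events depending on $G$ alone, \emph{before} the adversary picks $C$: (i) every pair of vertices has at least $\tfrac12 d^2/n$ common neighbours (Chernoff plus a union bound over pairs, using $d^2/n\gg\log n$); and (ii) every vertex set $S$ with $|S|=O(n/d)$ spans $O(|S|\,\mathrm{polylog}\,n)$ edges (a direct union bound over all such $S$, using that the expected number of internal edges of such an $S$ is $O(|S|)$). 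Property (ii) is the crucial ingredient and is genuinely stronger than the expander mixing lemma, which would only give $O(\sqrt d\,|S|)$.

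Now set $\hat X:=\{v\in F_0:\deg_{P_C(G)}(v)\ge\tfrac14 d^2/n\}$. If $u,v\in F_0\setminus\hat X$ had different planted colours, then by (i) and the observation $\tfrac12 d^2/n\le|N_{P_C(G)}(u)\cup N_{P_C(G)}(v)|\le\deg_{P_C(G)}(u)+\deg_{P_C(G)}(v)<\tfrac12 d^2/n$, a contradiction; hence $F_0\setminus\hat X$ is monochromatic in the planted colouring. On the other hand, since $N_{P_C(G)}(v)\subseteq F'$ for $v\in F_0$, we have $\sum_{v\in\hat X}\deg_{P_C(G)}(v)\le 2E_G(F',F')=O(\tfrac nd\,\mathrm{polylog}\,n)$ by (ii), while each summand is $\ge\tfrac14 d^2/n$, so $|\hat X|=O\!\big(\tfrac{n^2}{d^3}\,\mathrm{polylog}\,n\big)=\tilde O(n^2/d^3)$. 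To finish the colouring I would uncolour the vertices of $\hat X$; the set $M:=F_0\setminus\hat X$ is then monochromatic, hence an independent set of $P_C(G)$, of some unknown colour $m'$. For each colour $i\in\{1,2,3\}$, colour all of $M$ by $i$ and solve the resulting list-colouring instance on the remaining free vertices of $F'$ (each of which has a list of at most two colours, so this is $2\mathrm{SAT}$, solvable in polynomial time). For $i=m'$ the planted colouring restricted to $V\setminus\hat X$ is a legal solution consistent with all constraints (coloured vertices agree with the planted colouring, $M$ really has planted colour $m'$, and each list contains the relevant planted colour), so the instance is satisfiable and we output a colouring of $V\setminus\hat X$ that is legal on the coloured vertices with only $|\hat X|=\tilde O(n^2/d^3)$ vertices uncoloured. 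This needs only a minor adjustment to the final stage of Algorithm~\ref{alg:3-coloring-for-random graphs with adv planting}: instead of forcing all of $F_0$ to one colour, one first discards $\hat X$.

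The step I expect to be the main obstacle is property (ii): one must show, uniformly over all vertex subsets of size $O(n/d)$, that the number of internal edges is within a polylog factor of its expectation. The expander mixing lemma is not sharp enough here (it would yield only a bound of $\tilde O(n^2/d^{5/2})$ on the number of uncoloured vertices), so the bound must be proved directly, balancing the $\binom ns$-size union bound against the Chernoff tail — this is what forces the polylogarithmic loss. A secondary point is bookkeeping: because $F'$ and $\hat X$ depend on both $G$ and the adversarially chosen planting $C$, properties (i) and (ii) must be quantified over all pairs and all small sets simultaneously and established as high-probability events about $G$ alone, so that they can be invoked after conditioning on the adversary's choice.
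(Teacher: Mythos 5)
Your proof is correct up to a constant in the regime split: property~(i) is a union bound over $\Theta(n^2)$ pairs, so it needs $d^2/n\ge C\log n$ for a sufficiently large constant $C$; the cutoff should therefore be $C\sqrt{n\log n}$ rather than $\sqrt{n\log n}$ (the low-degree case still absorbs the gap, since $n/d=\tilde O(n^2/d^3)$ throughout $d\le C\sqrt{n\log n}$).

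The route is genuinely different from the paper's. The paper works directly with the (unknown) free set: for every candidate set $B'$ of size $n/d$ and every $v\in B'$, it bounds the number of $u\in B'$ that share no common neighbour with $v$ outside $B'$. It does this by conditioning on there being $\Omega(n)$ vertices $w$ with at least two neighbours in $B'$, arguing that a large uncovered set $T\subseteq B'$ is then unlikely, and paying a union bound of $\binom{n}{n/d}=e^{\Theta((n/d)\log d)}$ over the choice of $B'$ --- which is precisely what forces the $\log^{\eta}d$ loss. You avoid this conditional-probability bookkeeping by first establishing two uniform structural facts about $G_{n,d}$ alone (every pair has $\Omega(d^2/n)$ common neighbours; every set of size $s=O(n/d)$ spans $\tilde O(s)$ edges), and then running a short deterministic argument against the adversary's choice of $C$. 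Your approach buys modularity and, importantly, makes the exceptional set $\hat X$ explicit. This in turn surfaces a genuine imprecision in the paper: as written, Step~5 of Algorithm~\ref{alg:3-coloring-for-random graphs with adv planting} forces \emph{all} of $F_0$ to one colour, and if $\hat X\neq\emptyset$ the resulting 2SAT is unsatisfiable for every choice of the forced colour, so the algorithm falls through to Step~6 and returns the Step~3 colouring with $O(n/d)$ free vertices --- not the claimed $\tilde O(n^2/d^3)$. Your fix (discard $\hat X$ before forcing a colour on $F_0$, then solve 2SAT) is exactly what is needed, and your proof shows why it suffices. Both proofs lose the polylog factor at the same point, namely a union bound over all $O(n/d)$-sized subsets against a Chernoff tail; your formulation also makes explicit that property~(ii) must genuinely beat the expander mixing lemma (which would only give $\tilde O(n^2/d^{5/2})$), which is why a direct counting argument is unavoidable here.
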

Note that when $d\ge\sqrt{n}$ then indeed we get advantage from applying
Theorem~\ref{thm:random graphs with Adv Planting-1} (and when $d<\sqrt{n}$
then Theorem~\ref{thm:Coloring with balanced adversarial planting}
gives a better guarantee of $O\left(\frac{n}{d}\right)$-approximated
coloring).
\begin{proof}
(Of Theorem~\ref{thm:random graphs with Adv Planting-1}). Recall
that with high probability $\lambda_{2}\left(G\right)=\Theta\left(\sqrt{d}\right)$.
Hence, by Theorem~\ref{thm:Coloring with balanced adversarial planting},
it follows that after Step~2 we have a partial coloring of $P_{C}\left(G\right)$
that colors all but $O\left(\frac{n}{d}\right)$ vertices exactly
as in the planted coloring. Clearly, in Step~3 all the vertices
we color are colored as in the planted coloring. Denote the set of
colored vertices by $A$ and the remaining vertices by $B$. Denote
by $B_{1}\subseteq B$ the set of vertices with a colored neighbor,
and by $B_{2}\subseteq B$ the rest. Fix an arbitrary set $A'\subset V\left(G\right)$
of size $n\left(1-\frac{1}{d}\right)$ and let $B'=V\left(G\right)\setminus A'$.

For every vertex $w\in A'$ it hold that
\begin{align*}
\Pr_{G\sim G_{n,d}}\left[\left|N_{G}\left(w\right)\cap A'\right|\ge2\right] & =1-\left(1-p\right)^{\frac{n}{d}}-\frac{n}{d}p\left(1-p\right)^{\frac{n}{d}-1}\\
 & =\Theta\left(1\right)\,.
\end{align*}

Consider the set $C'=\left\{ w\in V\left(G\right)\,|\,\left|N\left(w\right)\cap A'\right|\ge2\right\} $.
By applying the Chernoff bound it follows that
\[
\Pr\left[\left|C'\right|<0.1n\right]\le2^{-\Omega\left(n\right)}\,.
\]

Condition on the event that $\left|C'\right|\ge0.1n$. For every vertex
$v\in B'$ define the set $B_{v,A'}=\left\{ u\in B'\,|\,\nexists w\in A'\, s.t.\,\left\{ u,v\right\} \subseteq N_{G}\left(w\right)\right\} $.
It holds that for any $t\ge0$
\begin{align*}
\Pr\left[\left|B_{v,A'}\right|\ge t\right] & \le\frac{n}{d}\left(1-\frac{t}{\left(\frac{n}{d}\right)^{2}}\right)^{0.1n}\\
 & \le\frac{n}{d}e^{-\frac{t}{\left(\frac{n}{d}\right)^{2}}0.1n}\,.
\end{align*}

Let $\eta$ be some constant and $Bad$ be the following event
\[
\exists A'\subseteq V\left(G\right)\, s.t.\,\left|A'\right|=\frac{n}{d},\,\exists v\in B'\, s.t.\,\left|B_{v,A'}\right|\ge\frac{n^{2}}{d^{3}}\log^{\eta}d\,.
\]
By the the above and the union bound it follows that for every $\eta>1$
\begin{align*}
\Pr_{G\sim G_{n,d}}\left[Bad\right] & \le\binom{n}{n/d}\,\left(2^{-\Omega\left(n\right)}+\frac{n^{2}}{d^{2}}e^{-\frac{t}{\left(\frac{n}{d}\right)^{2}}0.1n}\right)\\
 & \le e^{\Theta\left(\frac{n}{d}\log d\right)-\Theta\left(\frac{n}{d}\log^{\eta}d\right)}\,.
\end{align*}

The last inequality follows from the approximation $\binom{n}{k}\le\left(e\frac{n}{k}\right)^{k}$,
and the last term approaches zero as $n$ grows.

Assume that no legal coloring of $P_{C}\left(G\right)$ was found
in Step~5. It follows that the set $B_{2}$ was colored by $C$ with
at least two colors. Let $u$ be a vertex such that $col_{C}\left(u\right)=i$.
Let $j\neq i$ be any other color class in $C$. Suppose $J=\left|\left\{ v\,|\, col_{C}\left(v\right)=j\right\} \right|\ge\frac{n^{2}}{d^{3}}\log^{\eta}d$
then, conditioned on the complement event $Bad$, there exist a vertex
$v\in J$ and $w\in A$ such that $\left\{ u,v\right\} \subseteq N_{G}\left(w\right)$.
Since $u,v$ are in different color classes it follows that at least
one of them is not in $B_{2}$, as no meter what $col_{C}\left(w\right),$
at least one of $u,v$ remains a neighbor of $w$ in $P_{C}\left(G\right)$,
which is a contradiction. It follows that the remaining set of uncolored
vertices is of size at most $3\frac{n^{2}}{d^{3}}\log^{\eta}d$.

\end{proof}

\subsection{\label{sec:G Hardness of 3-coloring random graphs with adversarial planted 3-coloring}Hardness
of 3-coloring random graphs with adversarial planted 3-coloring}

The following definition can be found in \cite{alon2004probabilistic}
(see Chapter~4).
\begin{defn}
{[}balanced graph{]}. Given a graph $H$, denote by $\alpha$ its
average degree. A graph $H$ is balanced if every induced subgraph
of $H$ has an average degree of at most $\alpha$. \end{defn}
\begin{thm}
(Restatement of Lemma~\ref{lem:Q}). \label{thm:Hardness of 3-coloring random graphs with adversarial planted 3-coloring}
For $0<\epsilon\le\frac{1}{7}$ and $3<\alpha<4$, suppose that $\frac{k^{2}d^{2}}{n}\le\epsilon$
and $\frac{k^{4}n^{\alpha-2}}{d^{\alpha}}\le\epsilon^{2}$. Let $H$
be an arbitrary balanced graph on $k$ vertices with an average degree
$\alpha$ and let $G\sim G_{n,d}$ be a random graph. Then with probability
at least $1-4\epsilon$ (over choice of $G$), $G$ contains a set
$S$ of $k$ vertices such that:
\begin{enumerate}
\item The subgraph induced on $S$ is $H$.
\item No two vertices of $S$ have a common neighbor outside $S$.
\end{enumerate}
\end{thm}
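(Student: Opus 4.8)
The plan is to use the second moment method, counting ordered copies of $H$ in $G$. Write $p=\tfrac{d}{n-1}$, and for an ordered tuple $\vec v=(v_1,\dots,v_k)$ of distinct vertices let $\phi_{\vec v}$ be the map $i\mapsto v_i$. Let $Y$ be the number of $\vec v$ for which $\phi_{\vec v}$ is an isomorphism from $H$ onto the \emph{induced} subgraph $G[\{v_1,\dots,v_k\}]$, and let $Z\le Y$ be the number of such $\vec v$ for which additionally some vertex $w\notin\{v_1,\dots,v_k\}$ is adjacent in $G$ to at least two of the $v_i$. Each set $S$ satisfying the two conclusions is counted by $X:=Y-Z$ with multiplicity $|\mathrm{Aut}(H)|\ge 1$, and $X\ge 1$ forces such an $S$ to exist; so it suffices to prove $\Pr[X\ge 1]\ge 1-4\epsilon$.

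\emph{First moments.} The potential edges inside a fixed $k$-set $S$ and the potential edges with exactly one endpoint in $S$ form disjoint sets, hence are independent in $G_{n,d}$; consequently the events ``$\phi_{\vec v}$ is an induced isomorphism'' and ``some $w\notin S$ has $\ge 2$ neighbours in $S$'' are independent. Thus $\mathbb{E}[Y]=n(n-1)\cdots(n-k+1)\cdot p^{e(H)}(1-p)^{\binom k2-e(H)}$ and $\mathbb{E}[Z]=\mathbb{E}[Y]\cdot\Pr[\exists\,w\notin S:|N(w)\cap S|\ge 2]\le\mathbb{E}[Y]\cdot n\binom k2 p^2$. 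Since $d\ge 1$, the hypothesis $k^2d^2/n\le\epsilon$ gives $k^2p=O(\epsilon)$, $k^2/n\le\epsilon$, and $n\binom k2 p^2\le\epsilon$, so $\mathbb{E}[Y]=(1\pm O(\epsilon))\,n^k p^{e(H)}$ and $\mathbb{E}[Z]\le\epsilon\,\mathbb{E}[Y]$, whence $\mathbb{E}[X]\ge(1-\epsilon)\mathbb{E}[Y]$. Writing $e(H)=\tfrac\alpha2 k$ (the average degree of $H$ is $\alpha$), the hypothesis $k^4n^{\alpha-2}/d^\alpha\le\epsilon^2$ is exactly $n\,p^{\alpha/2}\ge k^2/\epsilon$; combined with the above this also shows $\mathbb{E}[X]\ge(1-O(\epsilon))(k^2/\epsilon)^k>0$.

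\emph{Second moment of $Y$ (the main obstacle).} Expand $\mathbb{E}[Y^2]=\sum_{\vec u,\vec v}\Pr[\phi_{\vec u},\phi_{\vec v}\text{ both induced isos}]$ and group the pairs by their overlap pattern: the index sets $B_1=\{i:u_i\in\{v_j\}_j\}$, $B_2=\{j:v_j\in\{u_i\}_i\}$, and the bijection $B_1\to B_2$ identifying the shared vertices, which for a nonzero contribution must satisfy $H[B_1]\cong H[B_2]$, so $e(H[B_1])=e(H[B_2])=:e'$. Pairs with $|B_1|\le 1$ share no potential edge, so by independence their total contribution to $\mathbb{E}[Y^2]$ is at most $\mathbb{E}[Y]^2$; hence $\mathrm{Var}(Y)\le\sum_{|B_1|\ge 2}\Pr[\text{both}]$. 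For a fixed pattern with $|B_1|=v'\ge 2$, the number of tuple pairs realizing it is at most $n^{2k-v'}$ and $\Pr[\text{both}]\le(1+O(\epsilon))\,p^{-e'}\,p^{2e(H)}(1-p)^{2\binom k2-2e(H)}$ (the $(1-p)$ correction is at most $(1-p)^{-\binom k2}=1+O(\epsilon)$ since $k^2p=O(\epsilon)$), and the number of patterns with a given $v'$ is at most $\binom k{v'}^2 v'!\le(ek^2/v')^{v'}$. Dividing by $\mathbb{E}[Y]^2$ and using \emph{balancedness} in the form $e'\le\tfrac\alpha2 v'$ together with $np^{\alpha/2}\ge k^2/\epsilon$, the contribution of all patterns with a given $v'$ is at most
\begin{equation*}
(1+O(\epsilon))\,\frac{\binom k{v'}^2 v'!}{\bigl(np^{\alpha/2}\bigr)^{v'}}\ \le\ (1+O(\epsilon))\left(\frac{e\epsilon}{v'}\right)^{v'}.
\end{equation*}
Since $\epsilon\le\tfrac17$ we have $e\epsilon/2<\tfrac12$, so $\sum_{v'\ge 2}(e\epsilon/v')^{v'}\le\sum_{v'\ge 2}(e\epsilon/2)^{v'}=O(\epsilon^2)$, giving $\mathrm{Var}(Y)=O(\epsilon^2)\,\mathbb{E}[Y]^2$. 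This is the step where the precise hypotheses matter: because $H$ grows with $n$ (it has $k=n^{\epsilon}$ vertices), one cannot invoke the classical ``a balanced graph appears above its threshold'' result as a black box — the number of overlap patterns and the automorphism factors now depend on $n$ — so the overlap sum must be controlled term by term, and the two inputs that make this possible are $e(H[B])\le\tfrac\alpha2|B|$ and the multiplicative gap $np^{\alpha/2}\ge k^2/\epsilon$.

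\emph{Conclusion.} Since $0\le X\le Y$ we have $\mathbb{E}[X^2]\le\mathbb{E}[Y^2]=(1+O(\epsilon^2))\mathbb{E}[Y]^2$, so $\mathrm{Var}(X)\le\mathbb{E}[Y^2]-\mathbb{E}[X]^2\le\bigl((1+O(\epsilon^2))-(1-\epsilon)^2\bigr)\mathbb{E}[Y]^2=(2\epsilon+O(\epsilon^2))\mathbb{E}[Y]^2$, while $\mathbb{E}[X]^2\ge(1-\epsilon)^2\mathbb{E}[Y]^2>0$. By Chebyshev's inequality, $\Pr[X=0]\le\mathrm{Var}(X)/\mathbb{E}[X]^2\le\frac{2\epsilon+O(\epsilon^2)}{(1-\epsilon)^2}$, and checking the constants shows this is at most $4\epsilon$ for $\epsilon\le\tfrac17$. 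Hence with probability at least $1-4\epsilon$ there is a set $S$ of $k$ vertices inducing $H$ and with no two vertices sharing a neighbour outside $S$, as required.
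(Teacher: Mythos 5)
Your proof is correct and takes essentially the same approach as the paper's: a second-moment argument in which balancedness of $H$ (in the form $e(H[B])\le\tfrac{\alpha}{2}|B|$) controls the overlap terms, the hypothesis $k^2d^2/n\le\epsilon$ bounds the probability of common external neighbors (and, in the paper, of extra internal edges), and Chebyshev finishes. The paper's bookkeeping differs slightly — it sidesteps automorphism counting by fixing a $k$-block partition of $V(G)$ and requiring vertex $i$ of $H$ to come from block $i$, it counts edge-induced rather than vertex-induced copies, and it indexes overlap terms by the number $t$ of vertices in which two copies differ rather than by the overlap size $v'$ — but none of this changes the substance of the argument or the final bound.
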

\begin{proof}
Let $p=\frac{d}{n-1}$ be the edge probability in $G$. Suppose for
simplicity (an assumption that can be removed) that $k$ divides $n$.
Partition the vertex set of $G$ into $k$ equal parts of size $n/k$
each. Vertex $i$ of $H$ will be required to come from part $i$.
A set $S$ with such a property is said to {\em obey the partition}.

Let $X$ be a random variable counting the number of sets $S$ obeying
the partition that satisfy the theorem. Let $Y$ be a random variable
counting the number of sets $S$ obeying the partition that have $H$
as an edge induced subgraph (but may have additional edges, and may
not satisfy item~2 of the theorem).

\[
E[Y]=\left(\frac{n}{k}\right)^{k}p^{\frac{\alpha k}{2}}=\left(\frac{d^{\alpha}}{k^{2}n^{\alpha-2}}\right)^{\frac{k}{2}}\,.
\]

A set $S$ in $Y$ contributes to $X$ if it has no internal edges
beyond those of $H$ (which happens with probability at least $1-{k \choose 2}\frac{d}{n}$)
and no two of its vertices has a common neighbor outside $S$ (which
happens with probability at least $1-{k \choose 2}\frac{d^{2}}{n}$).
Consequently:

\[
E[X]\ge E[Y]\left(1-\frac{k^{2}d^{2}}{n}\right)\ge(1-\epsilon)E[Y].
\]

Now let us compute $E[Y^{2}]$. Given one occurrence of $H$, consider
another potential occurrence $H'$ that differs from it by $t$ vertices.
Since $H$ is balanced graph then
\begin{align*}
\left|E\left(G_{H'\setminus H}\right)\right|+\left|E\left(G_{H'\setminus H},G_{H}\right)\right| & \ge\frac{\alpha\left|V\left(G\right)\right|}{2}-\frac{\alpha\left|V\left(G_{H\cap H'}\right)\right|}{2}\\
 & \ge\frac{\alpha}{2}t\,.
\end{align*}
Hence, the probability that $H'$ is realized is at most $p^{\frac{\alpha t}{2}}$.
The number of ways to choose the $t$ other vertices is ${k \choose t}\left(\frac{n}{k}\right)^{t}$.
Hence the expected number of such occurrences is $\mu_{t}\le{k \choose t}\left(\frac{n}{k}\right)^{t}p^{\frac{\alpha t}{2}}={k \choose t}\left(\frac{d^{\alpha}}{k^{2}n^{\alpha-2}}\right)^{\frac{t}{2}}$.

We have:
\begin{enumerate}
\item $\mu_{k}<E[Y]$.
\item $\sum_{t\le\frac{k}{2}}\frac{\mu_{t}}{E[Y]}\le2^{k}\left(\frac{d^{\alpha}}{k^{2}n^{\alpha-2}}\right)^{\frac{-k}{4}}=\left(\frac{16k^{2}n^{\alpha-2}}{d^{\alpha}}\right)^{\frac{k}{4}}$.
\item $\sum_{t\ge\frac{k}{2}}^{k-1}\frac{\mu_{t}}{E[Y]}\le\sum_{t\ge\frac{k}{2}}^{k-1}k^{k-t}\left(\frac{d^{\alpha}}{k^{2}n^{\alpha-2}}\right)^{\frac{t-k}{2}}=\sum_{t\ge\frac{k}{2}}^{k-1}\left(\frac{d^{\alpha}}{k^{4}n^{\alpha-2}}\right)^{\frac{t-k}{2}}$.
The term $t=k-1$ dominates (when $d^{\alpha}\ge2k^{4}n^{\alpha-2}$)
and hence the sum is at most roughly $\sqrt{\frac{k^{4}n^{\alpha-2}}{d^{\alpha}}}$.
\end{enumerate}
Recall that $\frac{k^{4}n^{\alpha-2}}{d^{\alpha}}\le\epsilon^{2}$.
Then $\sum\mu_{i}\le\left(1+\epsilon\right)E[Y]$. Hence $E[Y^{2}]\le(1+\epsilon)\left(E[Y]\right)^{2}$.
Recall that $X\le Y$ and that $E[X]\ge(1-\epsilon)E[Y]$. Hence $E[X^{2}]\le\frac{1+\epsilon}{(1-\epsilon)^{2}}E[X]^{2}\le(1+4\epsilon)E[X]^{2}$
(the last inequality holds because $\epsilon\le\frac{1}{7}$). We
get that $\sigma^{2}[X]=E[X^{2}]-E[X]^{2}\le4\epsilon E[X]^{2}$.
By Chebychev's inequality we conclude that $Pr[X\ge0]\ge1-\frac{\sigma^{2}[X]}{E[X]^{2}}\ge1-4\epsilon$.
\end{proof}
\textbf{Remark:} The proof of Theorem~\ref{thm:Hardness of 3-coloring random graphs with adversarial planted 3-coloring}
shows that the number of copies of $H$ in $G$ is likely to be close
to its expectation, and hence large (this will be useful in the next
section). Also, simple modifications to the proof can be used in order
to show the existence of many disjoint copies (where the number grows
as $\epsilon$ decreases).

\begin{cor}
\label{cor:Hardness of 3-coloring random graphs with adversarial planted 3-coloring}
For every $0.467<\delta<\frac{1}{2}$ and $\epsilon<\frac{1}{8}$
there is some $\rho>0$ such that the following holds for every large
enough $n$. Let $H$ be an arbitrary balanced graph with average
degree $3.75$ on $k=n^{\rho}$ vertices. Let $G$ be a random graph
on $n$ vertices with average degree $d=n^{\delta}$ (which we refer
to as $G_{n,d}$). Then with probability larger than $1-4\epsilon$
(over choice of $G$), $G$ contains a set $S$ of $k$ vertices such
that:
\begin{enumerate}
\item The subgraph induced on $S$ is $H$.
\item No two vertices of $S$ have a common neighbor outside $S$.
\end{enumerate}
\end{cor}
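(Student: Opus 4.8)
The plan is to obtain the corollary as a direct instantiation of Theorem~\ref{thm:Hardness of 3-coloring random graphs with adversarial planted 3-coloring}: take $\alpha = 3.75$, $k = n^{\rho}$, and $d = n^{\delta}$, and verify that for every $\delta$ in the stated range there is a positive $\rho$ for which the two numerical hypotheses of that theorem, namely $\tfrac{k^2 d^2}{n}\le\epsilon$ and $\tfrac{k^4 n^{\alpha-2}}{d^{\alpha}}\le\epsilon^2$, hold once $n$ is large enough. The conclusion of the theorem (existence of a set $S$ of $k$ vertices inducing $H$ with no two vertices of $S$ sharing a neighbor outside $S$, with probability at least $1-4\epsilon$) is then exactly the conclusion of the corollary.

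The substantive step is to rewrite the two hypotheses as inequalities on exponents of $n$. With the substitutions above, the first hypothesis reads $n^{2\rho+2\delta-1}\le\epsilon$, which holds for all sufficiently large $n$ as soon as $2\rho+2\delta-1<0$, i.e. $\rho<\tfrac12-\delta$; this constraint is nonvacuous exactly because $\delta<\tfrac12$. The second hypothesis reads $n^{4\rho+1.75-3.75\delta}\le\epsilon^2$, which holds for all sufficiently large $n$ as soon as $4\rho+1.75-3.75\delta<0$, i.e. $\rho<\tfrac{3.75\delta-1.75}{4}$; the right-hand side is strictly positive precisely when $\delta>\tfrac{7}{15}=0.4\overline{6}$, and this is guaranteed (with a little room to spare) by the assumption $\delta>0.467$. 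Hence, given any $\delta\in(0.467,\tfrac12)$, I would fix $\rho$ to be any positive number strictly below $\min\{\tfrac12-\delta,\ \tfrac{3.75\delta-1.75}{4}\}$. With this choice both exponents are strictly negative constants, so since $\epsilon$ and $\epsilon^2$ are fixed positive constants and $n^{-c}\to 0$ for any $c>0$, both hypotheses hold once $n$ is large enough. The remaining hypotheses of Theorem~\ref{thm:Hardness of 3-coloring random graphs with adversarial planted 3-coloring} are immediate: $3<\alpha=3.75<4$, and $\epsilon<\tfrac18<\tfrac17$.

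It remains to absorb a few cosmetic technicalities. The quantities $n^{\rho}$ and $n^{\delta}$ need not be integers, and $k$ should be chosen so that a balanced graph of average degree $3.75$ on $k$ vertices exists (e.g. $k$ a multiple of $8$), and so that $k\mid n$ as used in the proof of Theorem~\ref{thm:Hardness of 3-coloring random graphs with adversarial planted 3-coloring}. One handles this by rounding $k$ and $d$ to nearby admissible integers (and invoking the remark in the proof of that theorem that the divisibility assumption $k\mid n$ can be removed); such rounding perturbs the exponents computed above by $o(1)$ and therefore does not affect the conclusion for large $n$. I do not expect any real obstacle in this argument: the only genuine content is the bookkeeping that isolates the threshold $\delta=\tfrac{7}{15}$, which is forced by letting $\rho\to 0^+$ in the inequality $4\rho+1.75-3.75\delta<0$, and the one place deserving a little care is making the rounding of $k$ and $d$ simultaneously compatible with all divisibility conventions while leaving the asymptotics intact.
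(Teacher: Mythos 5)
Your proposal is correct and takes essentially the same approach as the paper: instantiate Theorem~\ref{thm:Hardness of 3-coloring random graphs with adversarial planted 3-coloring} with $\alpha=3.75$ and pick the exponent of $k$ small enough that both numerical hypotheses become $n^{-c}\le\epsilon$ (resp.\ $\le\epsilon^{2}$) with $c>0$. The paper states this compactly by taking $k=\min\bigl[\sqrt{\epsilon}\,n^{(1-2\delta)/2},\ \sqrt{\epsilon}\,n^{(\alpha\delta-\alpha+2)/4}\bigr]$, which is the same exponent computation you carried out, and your identification of the threshold $\delta=7/15$ (from $4\rho+1.75-3.75\delta<0$ as $\rho\to0^{+}$) matches the paper's $\delta_0$ at $\epsilon\to0$ in Lemma~\ref{lem:Q}.
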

\begin{proof}
In Theorem~\ref{thm:Hardness of 3-coloring random graphs with adversarial planted 3-coloring},
choose $\epsilon<\frac{1}{8}$, $\alpha = 3.75$,  and choose $k$ such that $\frac{k^{2}d^{2}}{n}\le\epsilon$
and $\frac{k^{4}n^{\alpha-2}}{d^{\alpha}}\le\epsilon^{2}$. Specifically,
one may choose $k=\min\left[\sqrt{\epsilon}n^{\frac{1-2\delta}{2}},\sqrt{\epsilon}n^{\frac{\alpha\delta-\alpha+2}{4}}\right]=n^{\Omega(1)}$. \end{proof}
\begin{thm}
(Restatement of Lemma~\ref{lem:3.75}). \label{thm:dense graph reduction}Coloring
a balanced graph with an average degree $3.75$ with a balanced $3$-coloring
is NP-hard. \end{thm}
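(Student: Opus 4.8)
The plan is to reduce from the NP-hard problem of 3-coloring 4-regular graphs (or any fixed-degree 3-colorable class), and transform an arbitrary instance into a balanced graph of average degree exactly $3.75$ while preserving 3-colorability in both directions. The target average degree $3.75 = 15/4$ is strictly between $10/3$ and $4$, which gives us room to maneuver: we must \emph{lower} the average degree from $4$ down to $3.75$, and simultaneously enforce the balance condition (no subgraph exceeds average degree $3.75$). Lowering the average degree is achieved by attaching low-degree gadgets (trees or sparse 3-colorable gadgets) to the original vertices, which drags the global average down without creating any dense subgraph; attaching pendant structures can only lower the average degree of any subgraph that includes them. The key subtlety is that $K_4$ and other small dense pieces must not appear, but a 4-regular graph already has max subgraph average degree $4 > 3.75$, so we cannot simply take the 4-regular graph as-is --- we must first \emph{subdivide} its edges (replace each edge by a path of length~2 through a new degree-2 vertex) to destroy all dense subgraphs, then add back degree to reach average $3.75$.

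Concretely, first I would replace each edge of the 4-regular instance $Q$ by a path of length~2. Subdivision preserves 3-colorability: a 3-coloring of the subdivided graph restricts to a 3-coloring of $Q$ on the original vertices (adjacent originals get different colors since the middle vertex, having only those two neighbors, forces them apart only when they would otherwise clash --- actually one must check: a subdivided edge $u - w - v$ is properly 3-colorable iff we can color $w$ avoiding $u$ and $v$, always possible with 3 colors, so subdivision does \emph{not} preserve the constraint). To fix this, I would instead replace each edge by a gadget that is 3-colorable precisely when its two attachment points receive different colors --- the standard such gadget is small and sparse (e.g.\ a short even cycle or a diamond-free gadget of constant size whose average degree is below $3.75$). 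Call the resulting graph $Q'$; it is 3-colorable iff $Q$ is, every subgraph of $Q'$ has average degree below $4$, and the max-average-degree subgraph is controlled by the gadget's density, which we choose below $3.75$. Then I would pad each vertex with pendant trees (or small constant-size 3-colorable low-density gadgets) to raise the \emph{global} average degree to exactly $3.75$ while keeping every subgraph's average degree at most $3.75$; since trees have average degree below $2$, adding them to any vertex set only decreases local average degree, so the only subgraph achieving average $3.75$ is the whole graph. A final cosmetic step makes the color classes equal-sized by taking three disjoint copies, as the text already notes.

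The main obstacle is the simultaneous balancing act: one must design a constant-size edge-replacement gadget that (i) is 3-colorable iff its two terminals get distinct colors, (ii) has average degree strictly below $3.75$ \emph{together with its terminals counted}, and (iii) when many such gadgets are glued at shared terminals, no resulting subgraph exceeds average degree $3.75$. Property (iii) is where the ``balanced'' requirement bites --- a naive gadget might let a cluster of gadgets sharing a high-degree terminal form a locally dense region. I would handle this by keeping terminal degrees low (the original degree-4 vertices of $Q$ become the densest points, and after subdivision-style replacement their incident-gadget contribution must be bounded), and then verifying the balance inequality by a discharging/counting argument: for any vertex subset $T$, bound $|E(G[T])|$ by summing contributions from whole gadgets (each sparse) plus boundary terms, showing the ratio never exceeds $15/8$. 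The arithmetic here is routine but must be done carefully to pin the constant at exactly $3.75$ rather than merely ``some constant below $4$''; the parenthetical remark in the text (``With extra work $\ldots$ one can replace $3.75$ by any constant larger than $10/3$'') signals that the gadget and padding parameters are chosen with slack, so a clean choice of gadget size and number of pendant edges per vertex should close the argument.
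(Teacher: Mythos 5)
Your plan (edge replacement plus pendant padding) is a genuinely different route from the paper's, which replaces each \emph{vertex} of the 4-regular instance by a constant-size color-propagation gadget rather than replacing each \emph{edge} by a must-differ gadget. The paper's gadget is a chain of diamonds (two triangles sharing an edge, forcing endpoint equality) that replicates the original vertex into several ``copies'' all forced to receive the same color; the four incident edges of the original degree-4 vertex are then routed one apiece to distinct copies. The balance proof in the paper is then a clean local argument: the gadget vertices are partitioned into 4-tuples, and in each 4-tuple every vertex has degree at most~4 with at least one of degree at most~3, giving average degree at most $\frac{3\cdot4+3}{4}=3.75$ in any induced subgraph. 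That construction hits the target density and the balance condition simultaneously and needs no separate padding stage.

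Your sketch, as written, has two concrete gaps. First, the cited example of a ``must differ'' edge gadget is wrong: a short even cycle does \emph{not} force its two terminals to receive distinct colors in a 3-coloring (a 4-cycle $u$-$a$-$v$-$b$-$u$ is perfectly happy with $u$ and $v$ equal). A workable gadget exists --- force $u=w$ with a diamond and then attach the edge $w$-$v$ --- but you never pin one down, and the density and balance arithmetic must be done for the actual gadget, not a placeholder. Second, the padding step is directionally impossible: attaching pendant trees (or any structure of average degree below the current average) can only \emph{lower} the global average degree, never raise it. Since a natural edge gadget (e.g.\ the diamond-plus-edge replacement) already drops the average degree well below $3.75$ (to about $24/7\approx 3.43$), ``pad with trees to raise to exactly $3.75$'' cannot work; you would instead need to tune the gadget itself to land at or slightly above $3.75$ and then verify the balance inequality for that specific choice. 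These are fixable issues, but as stated the proposal does not close, whereas the paper's vertex-gadget construction resolves the density, the equality-propagation, and the balance condition in one shot.
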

\begin{proof}
It is known that $3$-coloring $4$-regular graphs is NP-hard, see~\cite{garey1976some}
(and actually, with slight modifications, this proof shows it as well).
Therefore it is enough to show that there exists a polynomial time
reduction $R$ such that for any given 4-regular graph $H$ it holds
that
\begin{enumerate}
\item $R\left(H\right)$ is a balanced graph with an average degree of $3.75$.
\item $H$ is a 3-colorable graph if and only if $R\left(H\right)$ is $3$-colorable
and given a (legal) $3$-coloring to $R\left(H\right)$ one can (legally)
$3$-color $H$ in a polynomial time.
\item If $R\left(H\right)$ is $3$-colorable then it has a balanced coloring.
\end{enumerate}
The reduction is as follows. For every vertex $v$ of $H$ consider
its four edges $e_{1},e_{2},e_{3},e_{4}$, replace $v$ by the graph
in Figure~\ref{fig:The-construction-of} (denote this graph by $R\left(H,v\right)$)
and then connect edge $e_{i}$ to vertex $u_{i}$. Note that the average
degree of $R\left(H\right)$ is $3.75$. Also note that in any legal
$3$-coloring of $R\left(H\right)$ the vertices $v_{i},u_{i}$ get
all the same color and the second assertion of $R$ follows.

We show that $R\left(H\right)$ is a balanced graph. For a vertex
$v$ of $H$ let $R\left(H,v_{i}\right)$ be the set $\left\{ v_{i},u_{i},A_{i},B_{i}\right\} $.
Consider a subset $S^{*}$ of the vertices of $R\left(H\right)$ such
that the average degree on the induced subgraph $R\left(H\right)_{S^{*}}$
is maximized to $\alpha^{*}$. Let $\alpha^{*}\left(R\left(H,v_{i}\right)\right)$
be the average degree of the vertices $R\left(H,v_{i}\right)\cap S^{*}$
in $R\left(H\right)_{S^{*}}$. As the sets $R\left(H,v_{i}\right)$
are disjoint and their union is the vertex set of $R\left(H\right)$
it follows that $\alpha^{*}$ is upper bounded by $\alpha^{*}\left(R\left(H,v_{i}\right)\right)$
for some $v_{i}$. But for every $v_{i}$ and $S^{*}$ we are averaging
at most $4$ vertices of degree bounded by $4$, where at least one
of them is of degree bounded by $3$. It follows that
\[
\alpha^{*}\le\max_{v_{i}}\alpha^{*}\left(R\left(H,v_{i}\right)\right)\le\frac{3*4+3}{4}=3.75\,.
\]

To show the third assertion of $R$ it is enough to take a disjoint
union of $3$ copies of the above construction (note that the disjoint
union of two balanced graphs is balanced).

\begin{figure}
\centering{}\includegraphics[scale=0.4]{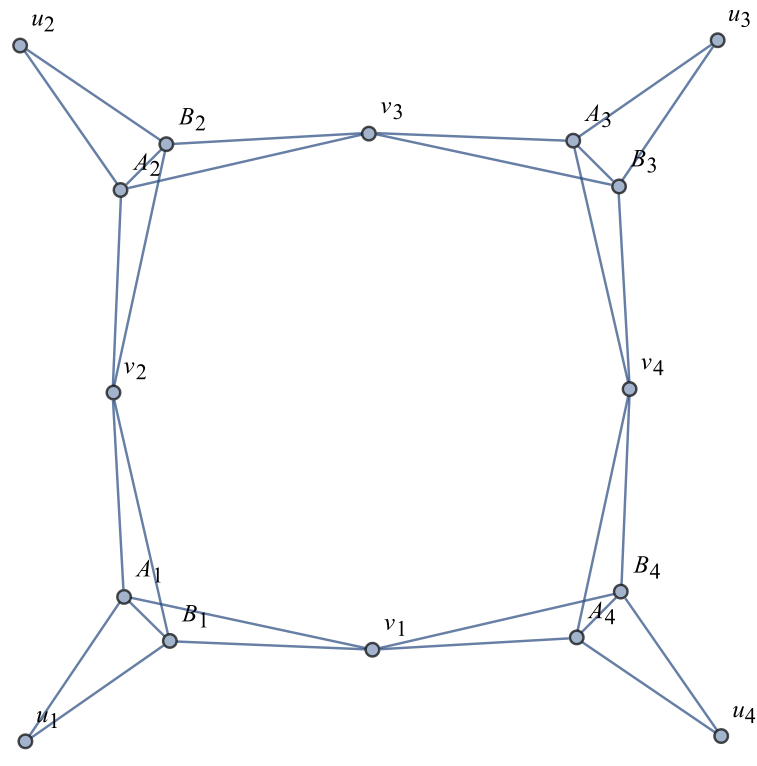}\protect\caption{\label{fig:The-construction-of}The construction of Theorem~\ref{thm:dense graph reduction}.}
\end{figure}

\end{proof}

{\bf Remark:} The construction and analysis
of Theorem~\ref{thm:dense graph reduction} can be modified to show that for every $\epsilon > 0$, 3-coloring of
balanced graphs with average degree $\left(\frac{10}{3}+\epsilon\right)$ is NP-hard. Every vertex $v_i$ in the
graph of Figure~\ref{fig:The-construction-of} is replaced by a 4-vertex gadget with five edges of structure similar to the graph induced on $v_1,A_1,B_1,v_2$, with the $v$ vertices as endpoints of the gadget. For example, if $v_2$ is replaced then one endpoint is connected to $A_1$ and $B_1$, and the other endpoint is connected to $A_2$ and $B_2$. Observe that the two endpoints of the gadget must have the same color in every legal 3-coloring. Each such replacement increases the number of vertices by three and the number of edges by five, hence bringing the average degree closer to $\frac{10}{3}$.
Repeating this replacement recursively (until the distance between $A_1$ and $A_2$ becomes $\Omega(\frac{1}{\epsilon})$) gives a balanced graph with average degree below $\left(\frac{10}{3}+\epsilon\right)$. Further details omitted.

~

For the sake of intuition, we temporarily restrict attention to algorithms that we refer to as {\em
decomposable} (a restriction that will be lifted later).
\begin{defn}
An algorithm $A$ for 3-coloring is {\em decomposable} if for every
disconnected input graph $G$, algorithm $A$ is applied independently
to each of $G$'s connected components.
\end{defn}
Natural 3-coloring algorithms are decomposable. In fact, we are not
aware of any coloring algorithm that is not decomposable. Moreover,
in works on random and semi-random models of inputs, it makes sense
to require coloring algorithms to be decomposable, as an algorithm
that is not decomposable would presumably involve aspects that are
very specific to the model and would not generalize to other models.

One can imagine that in some contexts the use of algorithms that are
not decomposable may offer advantages. This may happen if the input
graph is generated in such a way that the structure of one component
contains hints as to how to color other components. Perhaps the simplest
form of a hint is the following. Suppose that the input graph is known
to be generated with a balanced coloring (in which each color class
is of size $n/3$), and furthermore, is known to be generated such
that in each component the 3-coloring is unique. Then for an input
graph with two components, once one colors the first component, one
knows how many vertices of each color class there should be in the
second component. This simple form of a hint saves at most polynomial
factors in the running time, because it involves only $O(\log n)$
bits of information, and hence the hint can be guessed. Nevertheless,
it illustrates the point that under some generation models of input
graphs, it is possible that algorithms that are not decomposable will
be faster than algorithm that are decomposable.

Here we consider 3-coloring $G_{n,d}$ with an adversarially planted
balanced 3-coloring. We show that given the decomposable-algorithm
assumption, a hardness result can be derived. In Section~\ref{sub:Hardness-Result-Without-DECOM-ASS}
a full proof is given without the decomposability assumption.

\begin{thm}
\label{thm:Hardness of 3-coloring random graphs with adversarial planted 3-coloring 2}
Suppose that for some $0.467<\delta<\frac{1}{2}$ and $d=n^{\delta}$
there is a decomposable algorithm that with probability at least $\frac{1}{2}$
(over choice from $G_{n,d}$ and for every adversary) 3-colors $G_{n,d}$
with an adversarially planted balanced 3-coloring. Then P=NP. \end{thm}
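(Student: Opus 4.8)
The plan is to reduce from the NP-hard problem, established in Lemma~\ref{lem:3.75} (restated as Theorem~\ref{thm:dense graph reduction}), of deciding $3$-colorability of a balanced graph $Q$ of average degree $3.75$ --- a class for which ``$3$-colorable'' coincides with ``admits a balanced $3$-coloring''. Fix $\epsilon<\frac18$ and the corresponding $\rho>0$ supplied by Corollary~\ref{cor:Hardness of 3-coloring random graphs with adversarial planted 3-coloring}; this is the step that uses the hypothesis $\delta>0.467$ together with the operative value $\alpha=3.75$, since for every $\delta\in(0.467,\frac12)$ it lets us pick $k=n^{\rho}$ satisfying simultaneously $\frac{k^{2}d^{2}}{n}\le\epsilon$ and $\frac{k^{4}n^{\alpha-2}}{d^{\alpha}}\le\epsilon^{2}$ with $d=n^{\delta}$. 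Given an instance $Q$ on $k$ vertices (small instances decided by brute force), put $n=k^{1/\rho}$, so that $k=n^{\rho}$ and $d=n^{\delta}$ are polynomial in $k$; run the assumed decomposable algorithm $\mathrm{ALG}$ on $Q$ itself, and output ``$3$-colorable'' iff $\mathrm{ALG}$ returns a legal $3$-coloring (amplifying by repetition should $\mathrm{ALG}$ use internal randomness). This is a polynomial-time procedure; if $Q$ is not $3$-colorable then $\mathrm{ALG}$ certainly fails on it, so the whole content is to show $\mathrm{ALG}(Q)$ succeeds whenever $Q$ is $3$-colorable.

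To see that, I would examine the $H_R/P_A$ experiment in which $H\sim G_{n,d}$ and an exponential-time adversary, after seeing $H$, carries out the three-step construction from the overview of Section~\ref{R/Ahard}: (i) if $H$ contains no induced copy of $Q$, fail and plant an arbitrary balanced $3$-coloring; (ii) otherwise pick a uniformly random induced copy of $Q$ in $H$, fix a balanced $3$-coloring $\chi$ of $Q$, and if two copy-vertices with $\chi(u)\ne\chi(v)$ share a neighbor outside the copy, fail; (iii) else extend $\chi$ to a balanced $3$-coloring of $H$ in which every neighbor $w$ outside the copy of a copy-vertex $v$ takes color $\chi(v)$. By Corollary~\ref{cor:Hardness of 3-coloring random graphs with adversarial planted 3-coloring}, steps~(i)--(ii) fail with probability $<4\epsilon$, and conditioned on not failing the constraints of step~(iii) are mutually consistent and bind only $O(kd)=o(n)$ vertices of $H$, so a balanced extension exists. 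Deleting all monochromatic edges then removes every edge between the copy of $Q$ and the rest of $H$ (each such edge is monochromatic by construction) while keeping every edge inside the copy (each is properly colored by $\chi$); hence in the resulting graph $G'$ the copy of $Q$ is exactly a union of connected components. Since by hypothesis $\mathrm{ALG}$ $3$-colors $G'$ with probability $\ge\frac12$ over the experiment, while the adversary fails with probability $<4\epsilon<\frac12$, with positive probability $\mathrm{ALG}$ succeeds and the adversary did not fail; for any such outcome, decomposability of $\mathrm{ALG}$ forces $\mathrm{ALG}$'s output on the $Q$-components of $G'$ to be exactly $\mathrm{ALG}(Q)$, and it is a legal $3$-coloring. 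Thus $\mathrm{ALG}(Q)$ legally $3$-colors $Q$ whenever $Q$ is $3$-colorable, the reduction is correct, and $\mathrm{P}=\mathrm{NP}$.

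The part I expect to demand the most care is the quantitative bookkeeping rather than any conceptual difficulty: checking that $\rho$ (equivalently $k=n^{\rho}$) can indeed be chosen to meet both inequalities feeding Corollary~\ref{cor:Hardness of 3-coloring random graphs with adversarial planted 3-coloring} uniformly over $\delta\in(0.467,\frac12)$, and verifying that the set of vertices of $H$ forced by the edges into the copy of $Q$ is of size $o(n)$ so that a balanced completion of $\chi$ truly exists. I do not anticipate a deeper obstacle here, because the genuinely hard issue that appears once the decomposability hypothesis is removed --- arguing that the adversary's component $Z_3'$, produced by a procedure not known to run in polynomial time, is statistically close to a polynomial-time-samplable graph and hence leaks no information about $Q$ --- is exactly what decomposability short-circuits: a decomposable algorithm's action on the $Q$-components is by definition oblivious to $Z_3'$. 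That stronger statement is what Section~\ref{sub:Hardness-Result-Without-DECOM-ASS} establishes.
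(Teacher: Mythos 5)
Your proposal follows essentially the same plan as the paper's proof: reduce from balanced graphs of average degree $3.75$, have the (unbounded) adversary isolate a copy of $Q$ inside the random host graph by coloring each outside neighbor of a copy-vertex $v$ with $\chi(v)$, and use decomposability plus the union bound ``$\Pr[\mathrm{ALG}\text{ succeeds}]+\Pr[\text{adversary does not fail}]>1$'' to conclude that $\mathrm{ALG}(Q)$ legally $3$-colors $Q$ whenever $Q$ is $3$-colorable. The one place where your reasoning has a small hole is that you have the adversary pick a \emph{uniformly random} induced copy of $Q$ in $H$ and then assert, citing Corollary~\ref{cor:Hardness of 3-coloring random graphs with adversarial planted 3-coloring}, that steps (i)--(ii) jointly fail with probability $<4\epsilon$. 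The corollary only guarantees that with probability $>1-4\epsilon$ \emph{some} copy of $Q$ exists with no two vertices sharing a neighbor outside; it says nothing about the fraction of copies with that property, so a uniformly random copy could violate it. The paper sidesteps this by having the adversary select the specific copy whose existence the corollary asserts (the random-copy selection is needed only in the decomposability-free argument of Section~\ref{sub:Hardness-Result-Without-DECOM-ASS}, where one must match the distribution $G_{n,d,H}$). The fix in your write-up is trivial --- have the adversary pick any copy satisfying the corollary's two conditions when one exists --- after which your argument is complete. Your observation that the non-$3$-colorable case is immediate (run $\mathrm{ALG}$ on $Q$ directly; it cannot produce a legal coloring) is a slight streamlining of the paper's presentation, which instead has the adversary handle that case explicitly.
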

\begin{proof}
By Theorem~\ref{thm:dense graph reduction} it follows that 3-coloring
balanced graphs of average degree $3.75$ is NP-hard.

Suppose there was an algorithm $A$ for 3-coloring $G_{n,d}$ with
an adversarially planted balanced 3-coloring as in the statement of
the theorem. Consider now an arbitrary balanced graph $H$ with average
degree $3.75$ of size $k=n^{\rho}$ (where $\rho$ is as in Corollary~\ref{cor:Hardness of 3-coloring random graphs with adversarial planted 3-coloring}),
and associate with it an adversary $H'$. On input a random graph
$G$ from $G_{n,d}$, the graph has probability more than $\frac{1}{2}$
of satisfying the conclusion of Theorem~\ref{thm:Hardness of 3-coloring random graphs with adversarial planted 3-coloring}.
The adversary $H'$ (who is not computationally bounded) does the
following.
\begin{enumerate}
\item If $G$ does not satisfy the conclusion of Theorem~\ref{thm:Hardness of 3-coloring random graphs with adversarial planted 3-coloring}
with respect to $H$, then the adversary $H'$ plants in it a random
balanced coloring.
\item If $G$ satisfies the conclusion of Theorem~\ref{thm:Hardness of 3-coloring random graphs with adversarial planted 3-coloring}
with respect to $H$ then the adversary $H'$ leaves $H$ untouched
and then:

\begin{enumerate}
\item If $H$ is 3-colorable, for each color class of $H$ it colors its
neighborhood outside $H$ with the same color as in $H$, and then
completes to a balanced planted 3-coloring at random. Observe that
this planted coloring disconnects $H$ from the rest of $G$, because
all original edges between $H$ and the rest of $G$ are between pairs
of vertices of the same color.
\item If $H$ is not 3-colorable, the adversary removes all edges between
$H$ and the rest of $G$, and randomly produces a balanced planted
coloring of the rest of $G$.
\end{enumerate}
\end{enumerate}
Case~2 above happens with probability greater than $\frac{1}{2}$,
by Corollary~\ref{cor:Hardness of 3-coloring random graphs with adversarial planted 3-coloring}.

Suppose that $H$ is not 3-colorable (case 2(b)). Then $A$ must fail
to 3-color $H$.

Suppose now that $H$ is 3-colorable (case 2(a)). Because $A$ is
decomposable, it must color $H$ without seeing the rest of $G$.
Because $A$ succeeds for every adversary on at least half the inputs
(over choice from $G_{n,d}$), and for adversary $H'$ over half the
inputs generate $H$, $A$ must succeed to 3-color $H$. (We assumed
here that $A$ is deterministic. If $A$ is randomized then choose
$\epsilon<\frac{1}{16}$ and then $A$ must succeed with probability
at least $\frac{2}{3}$. In this case the conclusion will be that
NP has randomized polynomial time algorithms with one sided error.)

Hence the output of $A(H)$ determines whether $H$ is 3-colorable.
As this applies to every $H$, and the sizes of $H$ and $G$ are
polynomially related, this implies that $A$ solves in polynomial
time an NP-hard problem, implying $P=NP$.
\end{proof}

There are two weaknesses of Theorem~\ref{thm:Hardness of 3-coloring random graphs with adversarial planted 3-coloring 2}.
One is that it requires $d>n^{0.467}$: at lower densities the input
graph is unlikely to contain a given $H$ with average degree $3.75$. The degree $d$ can be lowered to roughly $n^{0.4}$ using the remark following the proof of Theorem~\ref{thm:dense graph reduction}. However, it cannot be lowered below $n^{1/3}$ (using our techniques), because of Proposition~\ref{pro:sparseQ}.
The other weakness is that it requires $A$ to be decomposable. The decomposability
weakness can be overcome using the following approach.

Suppose there was an algorithm $A$ for 3-coloring $G_{n,d}$ with
an adversarially planted balanced 3-coloring, that succeeds with probability
at least $\frac{1}{2}$ over choice of $G$ (for every adversary).
Given a 3.75-balanced graph $H$ on $k$ vertices, give $A$ as input
a graph $G'$ composed of two disjoint parts. One is $H$ and the
other is a random subgraph of size $n-k$ of a random graph from $G_{n,d}$
with a randomly planted balanced 3-coloring. This would prove Theorem~\ref{thm:Hardness of 3-coloring random graphs with adversarial planted 3-coloring 2}
if the distribution generated by this process is statistically close
to the one generated by the adversary $H'$. The techniques in~\cite{Juels98hidingcliques}
can be extended in order to prove statistical closeness
and this is done in the next section.

\subsection{\label{sub:Hardness-Result-Without-DECOM-ASS}Hardness result without
the decomposable-algorithm assumption}


Let $H$ be an arbitrary balanced graph
with average degree $\alpha$ and $k$ vertices, for $\alpha=3.75$.
Let $G$ be a graph with $n$ vertices. Assume that $k$ divides $n$
(this assumption can be removed) and consider a fixed partition
of the vertex set of $G$ to $k$ disjoint subsets of vertices, each
of size $\frac{n}{k}$. Let $C_{H}\left(G\right)$ be the number of
induced sub-graphs of $G$ that are isomorphic to $H$ such that they
obey the partition (see the definition in the proof of Theorem~\ref{thm:Hardness of 3-coloring random graphs with adversarial planted 3-coloring 2})
and let $E_{H}$ be $\mathbb{E}_{G\sim G_{n,d}}\left[C_{H}\left(G\right)\right]$.

Note that
\begin{equation}
E_{H}=\left(\frac{n}{k}\right)^{k}p^{\frac{\alpha}{2}k}\left(1-p\right)^{\binom{k}{2}-\frac{\alpha}{2}k}\,,\label{eq:Expected number of induced subgraphs}
\end{equation}

where $p=\frac{d}{n-1}$.

~

We consider the following distribution of random graphs with the graph
$H$ being planted as an induced sub-graph.
\begin{defn}
\label{def:g_n,d,H}
A graph $G$ with $n$ vertices is distributed by $G_{n,d,H}$ if
it is created by the following random process.
\begin{enumerate}
\item Take a random graph $G'$ distributed by $G_{n,d}$.
\item Choose a random subset $K$ of $k$ vertices from $G'$ that obeys
the partition.
\item Replace the induced subgraph of $G'$ on $K$ by $H$ (we say that
$H$ is randomly planted in $G'$).
\end{enumerate}
\end{defn}
Given a graph $G$, we denote by $p\left(G\right)$ the probability
to output $G$ according to $G_{n,d}$ and by $p'\left(G\right)$
the probability to output $G$ according to $G_{n,d,H}$.
\begin{claim}
\label{claim: Probability strech when planted}For any given graph
$G$ it holds that $p'\left(G\right)=\frac{C_{H}\left(G\right)}{E_{H}}p\left(G\right)$.\end{claim}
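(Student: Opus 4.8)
The plan is to compute $p'(G)$ directly by unwinding the three-step process in Definition~\ref{def:g_n,d,H} and conditioning on which copy of $H$ gets planted where. The key observation is that the final graph $G$ can result from the process $G_{n,d,H}$ only if $G$ actually contains an induced copy of $H$ obeying the partition, and the number of ways to have produced $G$ is exactly $C_H(G)$: each induced copy $K$ of $H$ in $G$ (obeying the partition) corresponds to one ``route'' to $G$, namely choosing $K$ in Step~2 and having Step~1 output the graph $G'_K$ obtained from $G$ by replacing the induced subgraph on $K$ with an arbitrary graph --- but wait, since Step~3 overwrites the edges inside $K$, the graph $G'$ must agree with $G$ on all edges outside $K$ and on all edges between $K$ and its complement, while the edges inside $K$ in $G'$ are irrelevant (they get erased). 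So I would instead phrase it as: $p'(G) = \sum_{K} \Pr[\text{Step 2 picks } K] \cdot \Pr_{G' \sim G_{n,d}}[G' \text{ replaced-at-}K \text{ equals } G]$, where $K$ ranges over the $k$-subsets obeying the partition on which $G$ induces a copy of $H$.

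The main technical point is to evaluate $\Pr_{G'\sim G_{n,d}}[\,G' \text{ with } K\text{-part overwritten by } H = G\,]$ for a fixed such $K$. Since $G_{n,d}$ is a product distribution over the $\binom{n}{2}$ potential edges, and overwriting the $K$-part decouples the $\binom{k}{2}$ edge-slots inside $K$ from the rest, this probability is precisely the probability that $G'$ agrees with $G$ on every edge-slot \emph{outside} the induced-on-$K$ part. That probability is $p^{a}(1-p)^{b}$ where $a$ is the number of edges of $G$ not inside $K$ and $b$ the number of non-edges not inside $K$; and $a+b = \binom{n}{2}-\binom{k}{2}$. I would then note this quantity is the same for every valid $K$ (because $G$ restricted to each such $K$ is always $H$, contributing the same $\frac{\alpha}{2}k$ edges and $\binom{k}{2}-\frac{\alpha}{2}k$ non-edges inside $K$), so comparing with $p(G) = p^{|E(G)|}(1-p)^{\binom{n}{2}-|E(G)|}$ gives $\Pr_{G'}[\cdots] = p(G)\cdot p^{-\frac{\alpha}{2}k}(1-p)^{-(\binom{k}{2}-\frac{\alpha}{2}k)}$, which by Equation~\ref{eq:Expected number of induced subgraphs} is exactly $p(G)\cdot (n/k)^{k}/E_H$.

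Finally I would assemble the pieces: $\Pr[\text{Step 2 picks a particular } K] = (n/k)^{-k}$ since $K$ is a uniformly random $k$-subset obeying the partition and there are $(n/k)^k$ of those; summing over the $C_H(G)$ valid choices of $K$ yields
\[
p'(G) = C_H(G)\cdot (n/k)^{-k}\cdot p(G)\cdot \frac{(n/k)^{k}}{E_H} = \frac{C_H(G)}{E_H}\,p(G).
\]
The one place that needs a little care is the bookkeeping that distinct induced copies $K$ of $H$ really do give distinct ``routes'' with disjoint contributions --- this is immediate because the value of $K$ chosen in Step~2 is part of the randomness, so we are genuinely summing a probability over a partition of the sample space according to the choice of $K$; there is no double-counting. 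I expect this indexing-by-$K$ argument, rather than any computation, to be the only subtle point, and it is handled by the remark that $C_H(G)$ counts exactly the induced copies obeying the partition.
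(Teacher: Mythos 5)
Your proposal is correct and takes essentially the same approach as the paper: decompose by the choice of $K$ in Step~2, observe that only the $C_H(G)$ choices of $K$ on which $G$ induces $H$ (and obeys the partition) can yield $G$, and for each such $K$ the event ``Step~1 produces a $G'$ that agrees with $G$ off of $K$'' has probability $p^{e-\frac{\alpha}{2}k}(1-p)^{\binom{n}{2}-\binom{k}{2}-e+\frac{\alpha}{2}k}$, after which Equation~\eqref{eq:Expected number of induced subgraphs} finishes the algebra. The only difference is presentational: you make the conditioning on $K$ and the independence of $G'$ from the choice of $K$ explicit, whereas the paper folds these into a single sentence.
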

\begin{proof}
Let $e$ be the number of edges in $G$ and consider $p'\left(G\right)$.
Out of the $\left(\frac{n}{k}\right)^{k}$ options to choose $K$
(in $G_{n,d,H}$) only $C_{H}\left(G\right)$ options are such that
the induced sub graph on $K$ is $H$ so that the resulting graph
could be $G$. Given that we chose a suitable $K$, the rest of the
edges ($e-\frac{\alpha}{2}k$) should agree with $G$. It follows
that
\begin{align*}
p'\left(G\right) & =\frac{C_{H}\left(G\right)}{\left(\frac{n}{k}\right)^{k}}p^{e-\frac{\alpha}{2}k}\left(1-p\right)^{\binom{n}{2}-\binom{k}{2}-e+\frac{\alpha}{2}k}\\
 & =\frac{C_{H}\left(G\right)}{E_{H}}p^{e}\left(1-p\right)^{\binom{n}{2}-e}\\
 & =\frac{C_{H}\left(G\right)}{E_{H}}p\left(G\right)\,.
\end{align*}

The second equality follows from Equation~\ref{eq:Expected number of induced subgraphs}.
\end{proof}
~

Recall the definition of the random variables $X,Y$ from the proof
of Theorem~\ref{thm:Hardness of 3-coloring random graphs with adversarial planted 3-coloring 2}.
Moreover, the following claim follows from the proof of Theorem~\ref{thm:Hardness of 3-coloring random graphs with adversarial planted 3-coloring 2}.
\begin{claim}
\label{claim:Variance of the number of induced  subgraphs}
\[
\sigma^{2}\left[X\right]=4\epsilon\mathbb{E}\left[X\right]^{2}\,.
\]
\end{claim}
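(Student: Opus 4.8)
The plan is to observe that this claim is simply a repackaging of the second-moment estimate already carried out in the proof of Theorem~\ref{thm:Hardness of 3-coloring random graphs with adversarial planted 3-coloring} (the restatement of Lemma~\ref{lem:Q}). There, working with the same fixed partition of $V(G)$ into $k$ parts of size $n/k$, we introduced the random variable $Y$ counting partition-obeying vertex sets $S$ on which $H$ sits as an edge-induced subgraph, and $X\le Y$ counting those $S$ that in addition induce exactly $H$ and have no two vertices with a common neighbor outside $S$. So no new probabilistic argument is needed; the task is just to extract the variance bound from the computations already on record.

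First I would recall the bound $E[Y^{2}]\le(1+\epsilon)(E[Y])^{2}$, obtained by writing $E[Y^{2}]=\sum_{t=0}^{k}\mu_{t}$ (indexed by the number $t$ of vertices in which a second partition-obeying copy of $H$ differs from a fixed one), using the balancedness of $H$ to get $\mu_{t}\le\binom{k}{t}\bigl(\tfrac{d^{\alpha}}{k^{2}n^{\alpha-2}}\bigr)^{t/2}$, and then summing over the three ranges $t=k$, $t\le k/2$, and $k/2<t\le k-1$ under the hypotheses $\tfrac{k^{2}d^{2}}{n}\le\epsilon$ and $\tfrac{k^{4}n^{\alpha-2}}{d^{\alpha}}\le\epsilon^{2}$.

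Then I would combine this with the two facts $X\le Y$, hence $E[X^{2}]\le E[Y^{2}]$, and $E[X]\ge(1-\epsilon)E[Y]$; the latter holds because a set $S$ counted by $Y$ fails to be counted by $X$ only if it has an extra internal edge or two of its vertices share a neighbor outside $S$, an event of probability at most $\binom{k}{2}\tfrac{d}{n}+\binom{k}{2}\tfrac{d^{2}}{n}\le\tfrac{k^{2}d^{2}}{n}\le\epsilon$. This yields $E[X^{2}]\le\tfrac{1+\epsilon}{(1-\epsilon)^{2}}E[X]^{2}\le(1+4\epsilon)E[X]^{2}$, the last step using $\epsilon\le\tfrac{1}{7}$. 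Subtracting $E[X]^{2}$ gives $\sigma^{2}[X]=E[X^{2}]-E[X]^{2}\le4\epsilon E[X]^{2}$, which is the assertion of the claim (the displayed equality is to be read as this upper bound).

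Since every ingredient is already established, there is no real obstacle; the only point requiring care is to verify that the hypotheses $\tfrac{k^{2}d^{2}}{n}\le\epsilon$, $\tfrac{k^{4}n^{\alpha-2}}{d^{\alpha}}\le\epsilon^{2}$ and $\epsilon\le\tfrac{1}{7}$ under which Theorem~\ref{thm:Hardness of 3-coloring random graphs with adversarial planted 3-coloring} was proved remain in force in the present setting, with $\alpha=3.75$ and $k=n^{\rho}$, $d=n^{\delta}$ as chosen in Corollary~\ref{cor:Hardness of 3-coloring random graphs with adversarial planted 3-coloring}; this is exactly the regime fixed there and used throughout this section, so the claim follows.
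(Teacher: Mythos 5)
Your proposal is correct and follows essentially the same route as the paper: the paper's own proof of this claim is simply to observe that the inequality $\sigma^{2}[X]\le 4\epsilon\,\mathbb{E}[X]^{2}$ was already derived at the end of the proof of Theorem~\ref{thm:Hardness of 3-coloring random graphs with adversarial planted 3-coloring} (the restatement of Lemma~\ref{lem:Q}), via $E[Y^{2}]\le(1+\epsilon)(E[Y])^{2}$, $X\le Y$, and $E[X]\ge(1-\epsilon)E[Y]$, exactly the chain you reproduce. You are also right to read the displayed equality in the claim as an upper bound, and right that the reference in the paper's sentence introducing $X,Y$ should point to Theorem~\ref{thm:Hardness of 3-coloring random graphs with adversarial planted 3-coloring} rather than to Theorem~\ref{thm:Hardness of 3-coloring random graphs with adversarial planted 3-coloring 2}.
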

\begin{thm}
\label{thm:Hardnes adv planting on gnp full}(Restatement of Theorem~\ref{thm:R/A}b).
Let $G\sim G_{n,d}$ be a random graph, where $d=n^{\delta}$ for $0.467\le\delta<\frac{1}{2}$,
	and let $\gamma$ be an arbitrary small constant.
	 If there exists
a randomized algorithm that colors $G\sim G_{n,d}$ after an adversarial
color-planting with probability  $\gamma$ (over the distribution $G_{n,d}$ and the
randomness of the algorithm) then $RP=NP$.\end{thm}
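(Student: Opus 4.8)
The plan is to convert a hypothetical randomized $3$-coloring algorithm $A$ for the $H_R/P_A$ model into a randomized polynomial-time algorithm deciding $3$-colorability of balanced graphs of average degree $3.75$, which is NP-hard by Theorem~\ref{thm:dense graph reduction} (Lemma~\ref{lem:3.75}); this gives $RP=NP$. The reduction takes an arbitrary balanced graph $H$ on $k=n^{\rho}$ vertices (with $\rho$ as in Corollary~\ref{cor:Hardness of 3-coloring random graphs with adversarial planted 3-coloring}), feeds $A$ a graph $G'$ drawn from the distribution $G_{n,d,H}$ of Definition~\ref{def:g_n,d,H} (take $G\sim G_{n,d}$, choose a random partition-obeying copy $K$ of $k$ vertices, overwrite the induced subgraph on $K$ by $H$), and then reads off the coloring $A$ assigns to the planted copy of $H$. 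If $A$ returns a legal $3$-coloring of $G'$, its restriction to the $H$-part certifies $3$-colorability of $H$; conversely if $H$ is not $3$-colorable then $G'$ is not $3$-colorable and $A$ cannot succeed. Repeating with fresh randomness amplifies the (constant) success probability $\gamma$, provided we can argue $A$ really does succeed with constant probability on inputs drawn from $G_{n,d,H}$.

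The crux is that $A$'s guarantee is only for inputs generated by the $H_R/P_A$ adversary: a random host $H^{\ast}\sim G_{n,d}$ together with an adversarially planted balanced $3$-coloring. We must therefore realize (up to small statistical distance) the distribution $G_{n,d,H}$ as such an adversarially planted instance. The (unbounded) adversary does this via the three-step procedure sketched in Section~\ref{R/Ahard}: given host $H^{\ast}$, if $H^{\ast}$ does not contain a partition-obeying induced copy of $H$, fail and plant randomly; otherwise pick a uniformly random such copy $K$, let $Z'$ be the rest, and if some two vertices of $K$ with distinct colors (under a fixed balanced $3$-coloring $\chi$ of $H$) share a neighbor in $Z'$, fail and plant randomly; else extend $\chi$ to a balanced planted coloring that disconnects $K$ from $Z'$ (every $Z'$-vertex adjacent to $K$ gets the color of its $K$-neighbor). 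By Corollary~\ref{cor:Hardness of 3-coloring random graphs with adversarial planted 3-coloring} (which is exactly Theorem~\ref{thm:Hardness of 3-coloring random graphs with adversarial planted 3-coloring} specialized to $\alpha=3.75$, $k=n^{\rho}$, $d=n^{\delta}$ with $0.467<\delta<\frac12$), the probability of failing in either step is at most $4\epsilon$. So with probability $\ge 1-4\epsilon$ over the host, the adversary produces a genuine $H_R/P_A$ instance whose $H$-component is a clean, disconnected copy of $H$.

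The main obstacle is showing that, conditioned on not failing, the graph produced by the adversary has essentially the same distribution as a sample from $G_{n,d,H}$ — equivalently, that after removing the isolated $H$-component, the residual graph $Z_3'$ is statistically close to the polynomially-constructible residual $Z_3$. This is the step that requires the $G_{n,\frac12}$-planted-clique argument of~\cite{Juels98hidingcliques} adapted to our setting. Concretely, by Claim~\ref{claim: Probability strech when planted} the density of $G_{n,d,H}$ relative to $G_{n,d}$ at a graph $G$ is exactly $C_H(G)/E_H$, where $C_H(G)$ counts partition-obeying induced copies of $H$; and Claim~\ref{claim:Variance of the number of induced subgraphs} gives $\sigma^2[X]=4\epsilon\,\mathbb{E}[X]^2$, i.e.\ $C_H$ concentrates around $E_H$ up to a $1\pm O(\sqrt{\epsilon})$ factor except on a set of probability $O(\epsilon)$. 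Hence the likelihood ratio $p'(G)/p(G)=C_H(G)/E_H$ is within $1\pm O(\sqrt\epsilon)$ with probability $1-O(\epsilon)$, so $G_{n,d,H}$ is within total variation distance $O(\sqrt\epsilon)$ of a mixture that the adversary can (information-theoretically) produce; comparing the adversary's ``success'' branch to this mixture, and folding in the $O(\epsilon)$ failure probability, shows the adversary's output distribution is within $O(\sqrt\epsilon)$ of $G_{n,d,H}$. Choosing $\epsilon$ a sufficiently small constant, $A$ must therefore legally $3$-color the planted $H$ with probability bounded below by a positive constant over the draw from $G_{n,d,H}$ and its internal randomness; amplifying by independent repetitions and using $k=n^{\rho}$ polynomially related to $n$ completes the reduction and yields $RP=NP$. (For a deterministic $A$ one gets $P=NP$; the randomized case gives one-sided-error algorithms for NP, hence $RP=NP$.)
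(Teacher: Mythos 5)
Your proposal follows essentially the same approach as the paper: reduce from $3$-coloring balanced graphs of average degree $3.75$ (Lemma~\ref{lem:3.75}), use the planted distribution $G_{n,d,H}$ together with the likelihood-ratio identity of Claim~\ref{claim: Probability strech when planted} and the second-moment bound of Claim~\ref{claim:Variance of the number of induced  subgraphs}, and sample the reduction's instances as a disjoint union of $H$ with a randomly planted $G_{n-k,d}$. The difference is in the closing step: the paper argues by a {\emph measure} bound (it isolates the set of host graphs that $A$ colors for \emph{every} planting, shows this set has $G_{n,d}$-measure $\ge\gamma/2$ by averaging, and uses the likelihood ratio $C_H(G)/E_H\ge\rho(1-\epsilon)$ \emph{only one-sidedly} on concentrated graphs to transport mass $\ge\gamma/11$ into $G_{n,d,H}$), whereas you argue by a {\emph total-variation} bound $TV(G_{n,d},G_{n,d,H})=O(\sqrt\epsilon)$ and subtract it from $\gamma$. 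Both are valid, but note two points that your version glosses over. First, the intermediate claim ``$C_H$ concentrates around $E_H$ up to a $1\pm O(\sqrt\epsilon)$ factor except on a set of probability $O(\epsilon)$'' does not follow from $\mathrm{Var}[X]\le 4\epsilon\,\mathbb{E}[X]^2$: Chebyshev at confidence $1-O(\epsilon)$ only controls the deviation to a constant multiplicative factor, not $O(\sqrt\epsilon)$. The TV bound $O(\sqrt\epsilon)$ is nevertheless correct, but should be derived directly via $TV=\tfrac12\mathbb{E}\big[|1-C_H/E_H|\big]\le\tfrac12\sqrt{\mathrm{Var}[C_H]/E_H^2}$ (and one must bound $\mathrm{Var}[C_H]$ by sandwiching $X\le C_H\le Y$, which the paper's claims only give for $X$). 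Second, because your closing estimate is $\gamma-O(\sqrt\epsilon)$, the choice $\epsilon=\gamma/100$ that the paper uses would not suffice for small $\gamma$; you need $\epsilon=O(\gamma^2)$, which shrinks the hard-instance size $k=n^{\rho(\epsilon,\delta)}$ but still leaves it $n^{\Omega_\gamma(1)}$. The paper's one-sided measure argument is more robust precisely because it never needs the ratio to be close to $1$, only bounded below, and so works with $\epsilon$ linear in $\gamma$. Finally, you invoke the Section~\ref{R/Ahard} three-step adversary and pick a copy uniformly; this is justified by (and equivalent to) the paper's posterior-sampling observation, since the posterior over partition-obeying copies of $H$ is uniform, but you should say so explicitly since the model only gives the adversary an arbitrary host graph, not a planted copy.
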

\begin{proof}
Consider $\epsilon=\frac{\gamma}{100}$ and set $k=\min\left[\sqrt{\epsilon}n^{\frac{1-2\delta}{2}},\sqrt{\epsilon}n^{\frac{\alpha\delta-\alpha+2}{4}}\right]=n^{\Omega(1)}$
(so that the conditions of Corollary~\ref{cor:Hardness of 3-coloring random graphs with adversarial planted 3-coloring}
and Theorem~\ref{thm:Hardness of 3-coloring random graphs with adversarial planted 3-coloring}
hold). Assume that there exists an algorithm $A$ as in the theorem.
The probability measure of graphs from $G_{n,d}$ that $A$ colors
with respect to \emph{all} possible color planting with probability
(over the randomness of $A$) at least $\frac{\gamma}{2}$ is at least
$\frac{\gamma}{2}$. This holds by averaging and because we can consider
an adversary that, given any input graph, simulates $A$ and try all
possible color planting in order fail $A$ with the largest probability
(over the randomness of $A$).

By Chebychev's inequality and Claim~\ref{claim:Variance of the number of induced  subgraphs}
it holds that
\[
\Pr_{G\sim G_{n,d}}\left[X\le\rho\mathbb{E}\left[X\right]\right]\le\frac{4\epsilon}{\left(1-\rho\right)^{2}}\,.
\]
 Note that $X\le C_{H}\left(G\right)\le Y$. By the proof of Theorem~\ref{thm:Hardness of 3-coloring random graphs with adversarial planted 3-coloring}
it hold that
\[
\mathbb{E}\left[C_{H}\left(G\right)\right]\le\mathbb{E}\left[Y\right]\le\frac{1}{1-\epsilon}\mathbb{E}\left[X\right]\,.
\]

Hence
\[
\Pr_{G\sim G_{n,d}}\left[X\le\rho\left(1-\epsilon\right)\mathbb{E}\left[C_{H}\left(G\right)\right]\right]\le\frac{4\epsilon}{\left(1-\rho\right)^{2}}\,.
\]

Set $\rho$ to be such that $\frac{4\epsilon}{\left(1-\rho\right)^{2}}\le\frac{1}{4}\gamma$
and that $\frac{1}{2}\rho\ge\frac{1}{10}$. Therefore that the probability
measure of graphs that $A$ colors (with probability, over the randomness
of $A$, of at least $\frac{\gamma}{2}$ for every color planting)
with $X\ge\rho\left(1-\epsilon\right)\mathbb{E}\left[C_{H}\left(G\right)\right]$
is at least $\frac{1}{4}\gamma$. By Claim~\ref{claim: Probability strech when planted}
it follows that the probability measure with respect to $G_{n,d,H}$
of graphs that $A$ colors (after any adversarial planting with probability
at least $\frac{\gamma}{2}$) is at least $\frac{1}{2}\gamma\rho\left(1-\epsilon\right)\ge\frac{1}{11}\gamma$.

Fix $H$ to be an arbitrary balanced graph with average degree $3.75$
with $k$ vertices that has a balanced $3$-coloring. Now we show
that we can use $A$ to color $H$.

Given $G\sim G_{n,d,H}$ we consider the following distribution $\tilde{G}_{H}$
for graphs with an adversarial coloring. If any vertex that is not
on the induced planted (by $G_{n,d,H}$) graph $H$ has two or more
neighbors in the planted induced graph, denote this event by $S_{1}$,
then color $G$ arbitrarily. Otherwise for the planted graph use a
coloring that agrees with the coloring of $H$. For the rest of the vertices, if a
vertex has a neighbor in the planted graph color it by the same color
of its neighbor and the rest of the vertices are colored in such away
that the coloring is balanced. More specifically, from all the balanced
coloring we choose one at random (again, if there is no possible balanced coloring then we  color
$G$ arbitrarily). Since $k=o\left(\frac{n}{d}\right)$ then, by the
union bound and the Chernoff bound, no vertex in $G$ has more then $\left(1+c_{1}\right)d$
neighbors with probability at most $n2^{-\Omega\left(c_{1}d\right)}\le\gamma/44$.
Hence there are such balanced colorings with high probability. Denote
the event that no such balanced coloring is possible by $S_{2}$.
The event $S_{1}$ happens with probability at most $k^{2}\frac{d^{2}}{n}\le\gamma/44$.
A graph from $\tilde{G}_{H}$ is called \emph{good} if the events
$S_{1},S_{2}$ do not hold. Given that $G$ is a good graph, all possible
balanced coloring on $G\setminus H$ are equally distributed.

If an adversary knows the subset $S$ of vertices that the distribution
$G_{n,d,H}$ plants $H$ on then the distribution $\tilde{G}_{H}$
can be created by this adversary. A subtle point is that the adversary
can guess $S$. Given a graph $G\sim G_{n,d,H}$ the adversary can
calculate for every subset $S'$ (that obeys the partition) that satisfies
$G_{S'}=H$ what is the probability that $S'$ is the planted subset,
then choose a subset $S'$ with the calculated probability and behave
as $S'$ is the planted subset. It follows that the distribution after
the above preprocessing is the same distribution as if the adversary
knows $S$.

In total, the probability measure of good graphs in $\tilde{G}_{H}$
that $A$ can color (with probability, over the randomness of $A$,
of at least $\frac{\gamma}{2}$ for every color planting) is at least
$\gamma/22$. The key point is that the conditional, on being good,
distribution of graphs $\tilde{G}_{H}$ can be sampled in a polynomial
time by taking the vertex disjoint union of the graph $H$ and a random
graph from $G_{n-k,d}$ with a random balanced planted coloring (and
apply an appropriate random permutation). If we run algorithm $A$
on $\Omega\left(\frac{44}{\gamma^{2}}\right)$ random instances from
$\tilde{G}_{H}$ (which again we can sample efficiently) then with
high probability $A$ will color $H$ (that is a graph with a $poly\left(n\right)$
vertices) with high probability. By Theorem~\ref{thm:dense graph reduction}
the proof follows.
\end{proof}

\subsection{\label{sec:Proof-of-Proposition sparce copy}Proof of Proposition~\ref{pro:sparseQ}}
\begin{prop}
(Restatement of Proposition~\ref{pro:sparseQ}). Let ${\cal Q}$
be an arbitrary class of graphs. Then either there is a polynomial
time algorithm for solving 3-colorability on every graph in ${\cal Q}$,
or ${\cal Q}$ contains graphs that are unlikely to appear as subgraphs
of a random graph from $G_{n,p}$, if $p=n^{-2/3}$.\end{prop}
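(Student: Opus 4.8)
The plan is to prove the contrapositive in the following sharp form: if every graph in ${\cal Q}$ is likely to appear as a subgraph of $G_{n,p}$ when $p\le n^{-2/3}$ and $n$ is large, then $3$-colorability can be decided in polynomial time on ${\cal Q}$. The point that makes this work is that the density $p=n^{-2/3}$ sits exactly at the threshold corresponding to subgraphs of average degree $3$, and $3$ is precisely the minimum degree forced on any $4$-chromatic-critical graph.

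The first step is a routine first-moment computation. For a graph $Q$ set $m(Q)=\max_{\emptyset\ne Q'\subseteq Q}|E(Q')|/|V(Q')|$. Suppose $m(Q)>\frac{3}{2}$ and let $Q'\subseteq Q$ witness this, so $|E(Q')|>\frac{3}{2}|V(Q')|$. The expected number of (labeled) copies of $Q'$ in $G_{n,p}$ is at most $n^{|V(Q')|}p^{|E(Q')|}\le n^{|V(Q')|-\frac{2}{3}|E(Q')|}$, whose exponent is negative; hence by Markov's inequality $\Pr[Q'\subseteq G_{n,p}]\to 0$, and therefore $\Pr[Q\subseteq G_{n,p}]\to 0$ as well. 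So a graph with $m(Q)>\frac{3}{2}$ is unlikely to appear. Contrapositively, if every $Q\in{\cal Q}$ is likely to appear then $m(Q)\le\frac{3}{2}$ for all $Q\in{\cal Q}$; equivalently, every subgraph of every member of ${\cal Q}$ has average degree at most $3$.

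The second step shows that on the class $\{Q : m(Q)\le\frac{3}{2}\}$, deciding $3$-colorability is the same as testing for a $K_4$. Suppose $\chi(Q)\ge 4$ and let $Q_0\subseteq Q$ be a minimal subgraph with $\chi(Q_0)\ge 4$ (necessarily a $4$-vertex-critical graph). Every $4$-critical graph has minimum degree at least $3$, so $Q_0$ has average degree at least $3$; but $Q_0\subseteq Q$ forces average degree at most $3$, so all degrees of $Q_0$ equal $3$, i.e. $Q_0$ is $3$-regular. Since $Q_0$ is connected and $3$-regular, Brooks' theorem gives $\chi(Q_0)\le 3$ unless $Q_0=K_4$; hence $Q_0=K_4$ and $K_4\subseteq Q$. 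The converse is trivial, so for $m(Q)\le\frac{3}{2}$ we have $\chi(Q)\le 3$ if and only if $Q$ is $K_4$-free, which is checkable in time $O(|V(Q)|^4)$.

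Putting the pieces together: given ${\cal Q}$, either some member has $m(Q)>\frac{3}{2}$, in which case by the first step ${\cal Q}$ contains a graph unlikely to appear in $G_{n,p}$; or every member has $m(Q)\le\frac{3}{2}$, in which case the $K_4$-test of the second step is a polynomial-time $3$-colorability algorithm on ${\cal Q}$. The delicate point, and the reason the argument cannot be finished by counting alone, is the boundary case $m(Q)=\frac{3}{2}$: there the first-moment bound does not send the appearance probability to $0$ (indeed $K_4$ itself appears with constant probability), so it is exactly the structural input (minimum degree of $4$-critical graphs together with Brooks' theorem) that resolves it. What remains is bookkeeping: fixing the quantifier order in the phrase "likely to appear" (it suffices that the probability is bounded away from $0$ for infinitely many $n$, using monotonicity of subgraph containment in $n$ and in $p$), handling $p$ strictly below $n^{-2/3}$ (the first-moment bound only improves), and observing that a class consisting only of graphs of bounded size is handled trivially by brute force.
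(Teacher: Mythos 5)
Your proof is correct. The first half (the first-moment computation showing that a graph with some subgraph of average degree above~$3$ is unlikely to appear in $G_{n,p}$ at $p=n^{-2/3}$) is essentially identical to the paper's. The second half differs in a genuine if modest way. The paper handles the $3$-sparse case algorithmically: iteratively peel vertices of degree $<3$, observe that what remains is either empty or $3$-regular, $3$-color the $3$-regular core via the algorithmic form of Brooks' theorem, and extend the coloring back through the peeled vertices. You instead prove a structural characterization -- for $m(Q)\le\frac{3}{2}$, $\chi(Q)\le 3$ if and only if $Q$ is $K_4$-free -- by taking a minimal subgraph $Q_0$ with $\chi(Q_0)\ge 4$, noting that $4$-criticality forces min degree $\ge 3$, that sparseness then forces $Q_0$ to be exactly $3$-regular and connected, and that Brooks' theorem pins $Q_0$ to $K_4$. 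This turns the decision problem into $K_4$-detection, a cleaner statement for deciding colorability, at the cost of not directly producing a coloring (the paper's peeling route does, and is what you would run anyway if you wanted one). Both routes ultimately lean on the same two facts -- degeneracy below $3$ and Brooks' theorem -- so the difference is one of packaging rather than of mathematical substance, but your characterization via $K_4$-freeness is a nice abstraction of the paper's algorithm.
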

\begin{proof}
We say that a class ${\cal Q}$ is {\em 3-sparse} if every graph in $Q$ has average degree at most~3, and furthermore, has no subgraph of average degree above~3. There are two cases to consider.

Suppose that ${\cal Q}$ is 3-sparse. In this case, there is a polynomial time algorithm that solves 3-colorably on all graphs from ${\cal Q}$. Let $Q \in {\cal Q}$ be an arbitrary such graph. Iteratively remove from $Q$ vertices of degree less than~3 until no longer possible, and let $Q'$ be the remaining graph. Every 3-coloring of $Q'$ can be extended to $Q$ by inductive coloring. Hence it remains to 3-color $Q'$. If $Q'$ is empty then we are done. If $Q'$ is nonempty, then 3-sparseness of $Q$ implies that $Q'$ is 3-regular. In this case, Brook's theorem~\cite{brooks1941colouring}
(see~\cite{lovasz1975three} for an algorithmic version of it) implies that we can decide whether $Q'$ is $3$-colorable, and if so, 3-color it in polynomial time.

Suppose that ${\cal Q}$ is not 3-sparse. Then it contains some graph $Q$ and within it some subgraph $Q'$ such that $Q'$  contains at least $\frac{3k}{2}+1$ edges, where $k$ denotes the number of vertices in $Q'$. The probability that a random $G_{n,p}$ graph with $p = n^{-2/3}$ contains an induced copy of $Q'$ is at most $n^k p^{3k/2 + 1} = p = n^{-2/3}$.
\end{proof}

\section{Extending the results to more than 3 colors}
\label{sec:k>3}
In this section we elaborate on our results when using $k\ge 4$ colors. When stating our results we shall use the notation $O_k(.)$ to denote hidden constants whose value may depend on the number $k$ of colors.
We derive the following theorems.

 \begin{thm}
 	\label{thm:partial-k}
 	(Generalization of Theorem~\ref{thm:partial}).
 	For any positive  $k$ there exists a  constant $c_k$, such that if the
 	average degree in the host graph satisfy $c_k<d<n$ then the following holds. In all four
 	models ($H_{A}/P_{A}$, $H_{A}/P_{R}$, $H_{R}/P_{A}$, $H_{R}/P_{R}$)
 	there is a polynomial time algorithm that finds a $b$-partial coloring
 	for $b=O_k(\left(\frac{\lambda}{d}\right)^{2}n)$. For the models with
 	random host graphs ($H_{R}$) and/or random planted colorings ($P_{R}$),
 	the algorithm succeeds with high probability over choice of random
 	host graph $H$ and/or random planted coloring $P$.
 \end{thm}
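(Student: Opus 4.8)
The plan is to reproduce, with $k$ in place of $3$, the three-stage program that establishes Theorem~\ref{thm:partial} --- spectral clustering, iterative recoloring, and cautious uncoloring --- the only structural change being that the spectral embedding now lives in $\mathbb{R}^{k-1}$ rather than in the plane. Throughout, $k$ is treated as a fixed constant, and every hidden constant --- including a needed upper bound $\lambda\le c_k'd$ with $c_k'=\Theta(1/k)$ (without which $b=O_k((\lambda/d)^2 n)$ is vacuous in any case) --- is permitted to depend on $k$, which is exactly what the $O_k(\cdot)$ notation records. As in the $k=3$ treatment, the four models split according to whether the planting is random ($P_R$) or adversarial balanced ($P_A$); the randomness of the host graph only enters through the bound $\lambda=O(\sqrt d)$, which is already subsumed by the degree/expansion hypotheses.

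The first step is the $k$-color analogue of Theorem~\ref{thm:Main} (random planting) and Lemma~\ref{lem: eigen values after arbitrary planting} (adversarial balanced planting). For a $d$-regular $\lambda$-expander $G$ with a balanced $k$-coloring $V_1,\dots,V_k$, consider the $k$-dimensional space $U=\text{span}(\vec p_1,\dots,\vec p_k)$, which contains $\vec 1_n$. Exactly as in Lemma~\ref{lem:Existance of almost eiganvectors} (random case, via McDiarmid's inequality applied to the coordinates of $A_{G'}\vec p_i$), or via the expander mixing lemma as in the proof of Lemma~\ref{lem: eigen values after arbitrary planting} (adversarial balanced case), each $\bar p_i$ satisfies $A_{G'}\bar p_i=-\tfrac{d}{k}\bar p_i+\vec\delta_i$ modulo a component along $\bar 1$, with $\|\vec\delta_i\|_2\le O(\sqrt d)$ (random) or $O(\sqrt{d\lambda})$ (adversarial). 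Since every induced subgraph $G_{V_i}$ is a $\lambda$-expander of maximum degree $\le d$ (Cauchy interlacing, as in Lemma~\ref{lem:Random Induced Graph is expander}), I would re-run the orthogonality arguments of Claim~\ref{Claim:Bounding eiganvalues} and Claim~\ref{claim:.s_delta=00003D3} verbatim with $3$ replaced by $k$: any unit eigenvector of $G'$ orthogonal to the perturbed copy of $U$ has $A_{G'}$-norm at most $2\lambda+O(\sqrt{d\lambda})$, whereas $k$ mutually near-orthogonal vectors inside that perturbed space force $|S_\delta|\ge k$, so the norm bound forces $|S_\delta|=k$. Hence $G'$ has top eigenvalue $\approx(1-\tfrac1k)d$, exactly $k-1$ eigenvalues near $-\tfrac{d}{k}$ (the ``coloring eigenvalues''), and $|\lambda_i(G')|\le 2\lambda+O(\sqrt{d\lambda})$ for the remaining $n-k$ indices, together with approximate eigenvectors $\bar z^{(1)}+\vec\epsilon^{(1)},\dots,\bar z^{(k-1)}+\vec\epsilon^{(k-1)}$ spanning the bottom $(k-1)$-dimensional eigenspace (the $\bar z^{(j)}$ being the $k-1$ ``color contrast'' vectors generalizing $\bar x,\bar y$), with each $\|\vec\epsilon^{(j)}\|_2=O_k(\sqrt d/\delta)$. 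This is the one place where $\lambda\le c_k'd$ is essential: $\tfrac{d}{k}$ must dominate $2\lambda$ plus lower-order terms for the coloring eigenspace to be spectrally isolated.

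The remaining three steps are essentially mechanical substitutions of $k$ for $3$. Spectral clustering (Algorithm~\ref{alg:Spectral Clustering}) now embeds each vertex into $\mathbb{R}^{k-1}$ via the $k-1$ most negative eigenvectors, guesses a $k$-tuple of ``center'' vertices, and assigns each remaining vertex to the class of the unique sufficiently close center; sampling the tuple at random makes the expected number of trials $O(k^k/k!)=e^{O(k)}$. After Gram--Schmidt near-orthogonalization of the perturbed contrast vectors, as in the proof of Lemma~\ref{lem:Eigenvector approximation}, the $k$ planted classes map to $k$ well-separated points perturbed by total $\ell_2$-norm $O_k(\|\vec\epsilon\|_2)$, yielding an $f$-approximated coloring with $f=O_k(n\|\vec\epsilon\|_2^2)$: that is $O_k(n/d)$ for $P_R$ and $O_k(n\lambda/d)$ for $P_A$. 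The refinement step (Algorithm~\ref{alg:One Step Refine}, ``color $v$ by the minority color among its neighbors'') then has the same analysis as Lemma~\ref{lem:approx from f<1/d to f(lambda/d)^2}: a vertex outside $SB$ that is mis-colored after refinement must have $\Omega(d/k)$ mis-colored neighbors, so the expander mixing lemma gives the recursion $f_{i+1}\le O_k((\lambda/d)^2)f_i+|SB|$, valid while $f_i\le n/C_k$; iterating $\Omega(d)$ times collapses the first term and leaves an $O_k(|SB|)$-approximated coloring. Here $|SB|=n2^{-\Omega_k(d)}$ for $P_R$ (Chernoff/McDiarmid as in Lemma~\ref{lem:SB bound the random planting Case}, with $\tfrac13$ replaced by $\tfrac1k$ and $\epsilon=\epsilon_k$ small) and $|SB|=O_k(n(\lambda/d)^2)$ for $P_A$ (expander mixing lemma as in Lemma~\ref{lem: sb bound adversary}); in both cases $|SB|=O_k(n(\lambda/d)^2)$. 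Finally, cautious uncoloring (Algorithm~\ref{alg:Cautious Uncoloring}) with thresholds $(1-\tfrac1k-2\epsilon)d$ colored neighbors and $\tfrac{1}{2k}d$ neighbors in each other class: applying the core lemma (Lemma~\ref{lem:An expander after set being removed has a large core.}) to the host $G$ with $S=SB\cup B$, where $B$ is the set of vertices still mis-colored after refinement and $|B|=O_k(|SB|)$, produces a core $CC$ with $|CC|\ge n-2|S|$ none of whose vertices is ever uncolored, and an expander-mixing-lemma argument as in Lemma~\ref{lem: Uncoloring leaves us with a large set colored correctly.} shows the set of surviving mis-colored vertices is empty. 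Hence the surviving colored vertices agree with the planting, $|H_1|=O_k(|SB|)=O_k(n(\lambda/d)^2)$, and we obtain the claimed $b$-partial coloring with $b=O_k((\lambda/d)^2 n)$.

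The main obstacle is the bookkeeping in the spectral step: one must re-verify the delicate orthogonality estimates of Claims~\ref{Claim:Bounding eiganvalues} and~\ref{claim:.s_delta=00003D3} with $k$ color-indicator vectors and their small perturbations, confirm that the coloring eigenspace has dimension exactly $k-1$ and is isolated, and track how the constants degrade with $k$ --- which is what ultimately fixes how small $\lambda/d$ must be, i.e., the value of $c_k'$. Everything downstream --- clustering in $\mathbb{R}^{k-1}$, the geometric refinement recursion, and the uncoloring analysis --- is a routine substitution of $k$ for $3$; the only genuinely new (and minor) point is the $e^{O(k)}$ rather than $O(1)$ expected number of clustering trials.
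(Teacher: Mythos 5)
Your proposal is correct and follows essentially the same three-stage program (spectral clustering in $\mathbb{R}^{k-1}$, iterative recoloring, cautious uncoloring) that the paper uses; indeed the paper explicitly treats the $k>3$ partial-coloring theorem as a "rather straightforward" substitution, supported by a $k$-color spectral theorem (Theorem~\ref{thm:Main-k}), a randomized $k$-clustering algorithm (Algorithm~\ref{alg:Random Spectral k-Clustering}), and the corresponding approximation lemma (Lemma~\ref{lem:Eigenvector approximation-k}). The one presentational difference is that the paper constructs the $k-1$ contrast vectors explicitly from the eigenvectors of the $k\times k$ matrix $\mathbf{1}_{k\times k}-I_k$ (Definition~\ref{Def:x(p)}), while you work with the span of the indicator vectors $\vec p_1,\dots,\vec p_k$ directly; these are equivalent bases for the same $(k-1)$-dimensional coloring eigenspace, and everything else — the $e^{O(k)}$ clustering trials, the $k$-dependent recoloring/uncoloring thresholds, and the $|SB|$ bounds via McDiarmid for $P_R$ and the expander mixing lemma for $P_A$ — matches the paper's plan.
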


 \begin{thm}
 	\label{thm:A/A-k}
 	(Generalization of Theorem~\ref{thm:A/A}).
 	For any positive $k$ there exists a  constant $C_k$, such that  the following holds.
 	In the $H_{A}/P_{A}$ model, for every $d$ in the
 	range $C_k<d<n^{1-\epsilon}$ (where $\epsilon>0$ is arbitrarily small), it is NP-hard to $k$-color
 	a graph with a planted $k$-coloring, even when $\lambda=O_k(\sqrt{d})$.
 \end{thm}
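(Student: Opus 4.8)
The plan is to reproduce the reduction behind Theorem~\ref{thm:A/A} almost verbatim, replacing the class of $4$-regular graphs (on which $3$-coloring is NP-hard) by a class of bounded-degree graphs on which $k$-coloring is NP-hard. So I would first fix a constant $c_k$ for which $k$-colorability is NP-hard on $c_k$-regular graphs; such a $c_k$ exists by the standard hardness results for coloring bounded-degree graphs, and if only bounded-degree, possibly irregular, hardness is conveniently available, one regularizes the instances with a small gadget, increasing $c_k$ by at most an additive constant and preserving $k$-colorability. As in the $k=3$ argument, passing to $k$ disjoint copies of an instance lets us assume that a $k$-colorable instance admits a balanced $k$-coloring. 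Now suppose, for contradiction, that for some $d$ in the asserted range there were a polynomial-time algorithm $\mbox{ALG}$ that $k$-colors every instance of the $H_A/P_A$ model whose host is $d$-regular with $\lambda = O_k(\sqrt{d})$.

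Given a $c_k$-regular instance $Q$ on $n_1 \simeq n^{1-\delta}$ vertices (write $d = n^{\delta}$, so $n_1 < n/(c_k d)$), I would build the padding exactly as in Section~\ref{sec:On the hardness of coloring expander graphs with adversarial planting}: take an arbitrary (approximately $d$-regular) $O(\sqrt{d})$-expander $Z$ on $n - n_1$ vertices containing $n_1(d - c_k)$ designated connector vertices of degree $d-1$, with the remaining vertices of degree $d$; split both the connectors and the remaining vertices evenly into $k$ groups and plant the resulting balanced $k$-coloring, obtaining $Z_k$. The reduction outputs the disjoint union $G$ of $Q$ and $Z_k$, runs $\mbox{ALG}$ on it, and declares $Q$ to be $k$-colorable iff $\mbox{ALG}$ returns a proper $k$-coloring of $G$.

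Correctness splits as usual. If $Q$ is not $k$-colorable then $G$ contains $Q$ as an induced subgraph and so is not $k$-colorable, hence $\mbox{ALG}$ fails. If $Q$ is $k$-colorable, an (unbounded) adversary fixes a balanced $k$-coloring $\chi$ of $Q$ and joins each vertex $v$ of $Q$ to $d - c_k$ distinct connector vertices in color class $\chi(v)$ (the counts match exactly, since both $Q$ and the connector set are split evenly among the $k$ colors), producing a $d$-regular graph $H$. Planting in $H$ the $k$-coloring that agrees with $\chi$ on $Q$ and with $Z_k$'s coloring on $Z$ removes exactly the monochromatic edges — in particular every $Q$–$Z$ edge, which is monochromatic by construction — and the result is precisely $G$. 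The one thing left to verify is that $H$ is an $O_k(\sqrt{d})$-expander, which, exactly as in the $k=3$ proof, follows from Weyl's perturbation inequality: writing $A_H = A_Z^{+} + (A_Q^{+} + A_S^{+})$, where $(\cdot)^{+}$ denotes zero-padding to order $n$ and $S$ is the vertex-disjoint union of the connector stars $S_{d-c_k}$, one has $\lVert A_Q^{+} + A_S^{+}\rVert \le c_k + \sqrt{d - c_k} = O_k(\sqrt{d})$, while every nonprincipal eigenvalue of $A_Z^{+}$ is $O(\sqrt{d})$ in absolute value; since $\lambda_1(H) = d$ exactly (as $H$ is $d$-regular), this gives $\lambda(H) = O_k(\sqrt{d})$. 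Hence $G$ is a legitimate $H_A/P_A$ instance, $\mbox{ALG}$ must $k$-color it, and reading off $k$-colorability of $Q$ takes polynomial time, contradicting $\mathrm{P} \ne \mathrm{NP}$. The hypothesis $C_k < d < n^{1-\epsilon}$ is what makes $n_1 = n^{1-\delta}$ a growing polynomial in $n$ (so every polynomial-size $Q$ fits) while still leaving $n - n_1 = (1-o(1))n$ vertices for the expander padding.

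The only genuinely new ingredient compared with the $k=3$ case, and the place I expect to need the most care, is the base hardness statement: one wants NP-hardness of $k$-colorability on regular graphs of some constant degree $c_k$, so that the connector bookkeeping is exact and the perturbation term stays $O_k(\sqrt{d})$. If the literature supplies only bounded-degree (possibly irregular) hardness, I would regularize with a standard auxiliary gadget, checking only that it preserves both $k$-colorability and the ability to take a balanced $k$-coloring; everything else is identical to the $k=3$ argument.
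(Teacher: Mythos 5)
Your proof is correct and follows essentially the same route as the paper, which only asserts the $k>3$ generalization as ``straightforward'' from the proof of Theorem~\ref{thm:A/A} in Section~\ref{sec:On the hardness of coloring expander graphs with adversarial planting}: you replace the class of $4$-regular $3$-coloring-hard instances by a class of $c_k$-regular $k$-coloring-hard instances, redo the connector bookkeeping (which indeed balances exactly once the NP-hard instance and the connector set are each split evenly among the $k$ colors), and re-run the Weyl-perturbation bound $\lVert A_Q^{+}+A_S^{+}\rVert\le c_k+\sqrt{d-c_k}=O_k(\sqrt{d})$ to certify that the host graph is still an $O_k(\sqrt d)$-expander. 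The only point you should cite more concretely is the base NP-hardness of $k$-coloring for bounded-degree (regular) graphs, but that is standard and your regularization remark handles it.
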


 \begin{thm}
 	\label{thm:A/R-k}
 	(Generalization of Theorem~\ref{thm:A/R}).
 	For any positive $k$ there exist constants $0<c_k<1$ and $C_k>1$, such that  the following holds.
 	In the $H_{A}/P_{R}$ model there is a polynomial time algorithm with the following
 	properties. For every $d$ in the range $C_k<d\le n-1$ and every $\lambda\le c_kd$,
 	for every host graph within the model, the algorithm with high probability
 	(over the choice of random planted $k$-coloring) finds a legal $k$-coloring.
 \end{thm}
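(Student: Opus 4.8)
The plan is to run the $k$-colour analogue of Algorithm~\ref{alg:3-Coloring}: (i) spectral clustering of the $n$ vertices into $k$ groups using the $k-1$ most negative eigenvectors of the adjacency matrix of $G'=P_k(G)$; (ii) iterative recolouring, where the one-step refinement recolours each vertex by a \emph{least represented} colour among its neighbours (breaking ties arbitrarily); (iii) cautious uncolouring, with the thresholds scaled by $1/k$; (iv) safe recolouring, where an uncoloured vertex that sees $k-1$ distinct colours among its neighbours is assigned the unique missing colour; (v) brute force on the connected components of the still-uncoloured vertices. Every numerical constant in the analysis of Section~\ref{app:AR} is replaced by one that may depend on $k$; steps (i)--(iii) are exactly the $H_A/P_R$ instance of Theorem~\ref{thm:partial-k}. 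The requirement $\lambda\le c_kd$ is used to keep each step a contraction, and the admissible value of $c_k$ (one can take $c_k=\Theta(1/k)$) is forced by the union of the requirements below.

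First I would generalize the spectral analysis of Theorem~\ref{thm:Main}. After a uniform random $k$-colouring of the vertices, $G'$ is near-$\tfrac{(k-1)d}{k}$-regular, so $\lambda_1(G')\approx\tfrac{(k-1)d}{k}$; the all-ones vector together with a basis of $k-1$ ``colour difference'' vectors are near-eigenvectors for eigenvalue $-\tfrac{d}{k}$, with residual of $\ell_2$-norm $O(\sqrt d)$ by the McDiarmid computation of Lemma~\ref{lem:Existance of almost eiganvectors}; and any unit vector orthogonal to the span of these $k$ near-eigenvectors has small image under $A_{G'}=A_G-\sum_{i=1}^{k}A_{G_i}$, because it is nearly orthogonal to the top eigenvector of $G$ and to the top eigenvector of each induced colour class $G_i$ (Lemma~\ref{lem:Random Induced Graph is expander} extends verbatim). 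Provided $c_k$ is small enough that $\tfrac{d}{k}>2\lambda+O(\sqrt d)$, this isolates exactly $k-1$ eigenvalues of value $\approx-\tfrac{d}{k}$, whose eigenvectors nearly span the colour-difference space. Spectral clustering then embeds each vertex into $\mathbb{R}^{k-1}$; running the sampling variant of Algorithm~\ref{alg:Spectral Clustering} with $k$ random ``centres'' (a random $k$-tuple of vertices has representatives of all $k$ colours with probability $\approx k!/k^k$, so the expected number of rounds is $O(k^k/k!)$) yields an $O_k(\tfrac{\lambda}{d}n)$-approximated $k$-colouring, the $k$-colour analogue of Lemma~\ref{lem:step1}.

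Next come recolouring and uncolouring. The set $SB$ of statistically bad vertices still satisfies $|SB|\le n2^{-\Omega_k(d)}$ with high probability, by Chernoff plus the McDiarmid argument of Lemma~\ref{lem:SB bound the random planting Case}. A misclassified vertex not in $SB$ has $\Omega_k(d)$ neighbours in the ``wrong'' set $W_C$: it has essentially no neighbour of its planted colour, so for the least-represented-colour rule to miss its planted colour, some correctly-coloured class of size $\approx\tfrac{d}{k}$ must be matched by wrongly coloured neighbours. The expander mixing lemma then gives a contraction $|W_{C_2}\setminus SB|\le O_k\bigl((\lambda/d)^2\bigr)|W_C|$ whenever $|W_C|$ is at most a $k$-dependent fraction of $n$, so after $\Omega_k(d)$ one-step refinements we reach an $O_k(|SB|)$-approximated colouring (here $C_k<d$ is needed so that the output $O_k(n/d)$ of step (i) is below that fraction). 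Cautious uncolouring with thresholds $\bigl(\tfrac{k-1}{k}-2\epsilon_k\bigr)d$ and, say, $\tfrac{d}{2k}$, together with the core lemma (Lemma~\ref{lem:An expander after set being removed has a large core.}), leaves every still-coloured vertex agreeing with the planting and only $O_k(|SB|)$ vertices uncoloured, exactly as in Section~\ref{app:AR}.

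The remaining work, and the main obstacle, is the safe-recolouring step. After it I claim that with high probability the still-uncoloured vertices break into connected components of size $O_k\!\bigl(\tfrac{\lambda^2}{d^2}\log n\bigr)$, which, since $\lambda\le c_kd$, is $O_k(\log n)$; hence the brute-force step enumerates $k^{O_k(\log n)}=n^{O_k(1)}$ colourings per component, a consistent choice exists because the planted colouring restricted to a component is consistent with its already-coloured neighbours, and since distinct components have no edges between them, gluing the pieces yields a legal $k$-colouring of $G$. The proof of the claim follows Lemma~\ref{lem:Safe Recoloring}: for a connected uncoloured set $S$ of size $s$ whose boundary $D$ of size $t$ is entirely coloured, partition $D$ into pieces $D_v$, one per $v\in S$; for $S$ to survive safe recolouring the colours on each $D_v$ must omit at least one of the $k-1$ colours different from $v$'s colour, an event of probability $\le(k-1)(1-\tfrac1k)^{|D_v|}$, so the whole configuration has probability $\le(k-1)^{s}(1-\tfrac1k)^{t}$. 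The difficulty is that for $k\ge4$ the crude bound $t\ge2s$ no longer makes this exponentially small in $t$; instead one must invoke the spectral-to-vertex-expansion estimate used inside the proof of Lemma~\ref{lem:Safe Recoloring} to get $t\ge\tfrac12(d/\lambda)^2s$, and then choose $c_k$ small enough --- roughly $\lambda/d=O\bigl((k\log k)^{-1/2}\bigr)$, which is implied by $c_k=\Theta(1/k)$ --- that $(k-1)(1-\tfrac1k)^{(d/\lambda)^2/2}<1$. With this choice the union bound over connected sets, using $N_G(n,s,t)\le n\binom{s+t}{s}$ from Lemma~\ref{lem:=000023connected com' with neighborhood t}, telescopes to $n^{-\Omega(1)}$ as in the $k=3$ case. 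Re-balancing this probabilistic estimate against the expansion guarantee, and thereby pinning down $c_k$, is the only genuinely new point relative to the proof of Theorem~\ref{thm:A/R}.
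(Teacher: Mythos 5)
Your proposal is correct and follows the same overall route the paper takes (and, for the generalization to $k\ge 4$, largely sketches): the paper's Section~\ref{sec:k>3} gives only the spectral pieces (Theorem~\ref{thm:Main-k}, Definition~\ref{Def:x(p)}, Algorithm~\ref{alg:Random Spectral k-Clustering}, Lemma~\ref{lem:Eigenvector approximation-k}) and asserts that the rest of Theorem~\ref{thm:A/R} carries over ``in a rather straightforward way.'' You correctly pinpoint the one place where this is not literally routine: in Claim~\ref{claim:sets with large neighborhod are not in H} the bound $(k-1)^s\bigl(\tfrac{k-1}{k}\bigr)^t$ is no longer exponentially small under $t\ge 2s$ once $k\ge 4$ (indeed $(k-1)\bigl(\tfrac{k-1}{k}\bigr)^2<1$ fails already at $k=4$), so one must instead invoke the vertex-expansion estimate $t\ge\tfrac12(d/\lambda)^2 s$ already used in the proof of Lemma~\ref{lem:Safe Recoloring} and shrink $c_k$ to roughly $O\bigl((k\log k)^{-1/2}\bigr)$ so that $\tfrac12(d/\lambda)^2\ln\tfrac{k}{k-1}>\ln(k-1)$ — this is exactly the kind of $k$-dependent constant the theorem statement anticipates, and your accounting of it is right.
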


 \begin{thm}
	\label{thm:Generalization of R/Ab}
	(Generalization of Theorem~\ref{thm:R/A}b).
	Let $G\sim G_{n,d}$ be a random graph, where $d=n^{\delta}$ for $0.467\le\delta<1$, let $k \ge 4$, and let
	$\gamma$ be an arbitrary small constant. If there exists a randomized
	algorithm that $k$-colors $G\sim G_{n,d}$ after an adversarial color-planting
	with probability $\gamma$ (over the distribution $G_{n,d}$ and the randomness
	of the algorithm) then $RP=NP$.\end{thm}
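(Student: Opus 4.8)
The plan is to reproduce the structure of the proof of Theorem~\ref{thm:Hardnes adv planting on gnp full} (the $k=3$ case), adapting its ingredients to $k\ge 4$ colours, and then to isolate and remove the single place where the restriction $\delta<\tfrac12$ entered. That proof has three pieces: (i) a class of \emph{balanced} graphs on which $k$-colouring is NP-hard, whose average degree lies in a window prescribed by $\delta$; (ii) a planting lemma in the spirit of Theorem~\ref{thm:Hardness of 3-coloring random graphs with adversarial planted 3-coloring}, guaranteeing that a random $G\sim G_{n,d}$ contains many induced copies of such a graph $H$; and (iii) the Juels--Peinado-style statistical-closeness argument of Claim~\ref{claim: Probability strech when planted}, which (together with amplification) turns an algorithm $A$ that $k$-colours the $H_R/P_A$ model with probability $\gamma$ into a randomized decision procedure for $k$-colourability of $H$, hence $RP=NP$. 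First I would generalize piece (i): a $k$-colour analogue of Lemma~\ref{lem:3.75}, stating that $k$-colouring is NP-hard on balanced graphs of average degree $\alpha$ for a range of $\alpha$. One obtains this by taking a bounded-degree NP-hard $k$-colouring instance (for instance, by attaching to a $3$-colouring instance a bounded-degree ``palette-reducing'' gadget that forces the original vertices to use only $3$ of the $k$ colours, or directly), and then padding/subdividing to hit the target balanced average degree, exactly as in the remark following the proof of Theorem~\ref{thm:dense graph reduction}; for $k\ge4$ and larger $\delta$ the admissible window for $\alpha$ is wide (roughly up to $2/(1-\delta)$), so there is plenty of room, and its lower limit stays above $10/3$ --- consistent with the lower bound $\delta\ge0.467$ inherited from the $k=3$ case, and with Proposition~\ref{pro:sparseQ}, which blocks $\delta<\tfrac13$ altogether.

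Next I would reprove the planting lemma, piece (ii), keeping only the second-moment estimate that produces (and concentrates) the count of induced copies of the balanced graph $H$, and \emph{dropping} the clause ``no two vertices of the copy have a common neighbour outside it''. That clause is precisely the hypothesis $\tfrac{m^2 d^2}{n}\le\epsilon$ (with $m=|V(H)|$) in Theorem~\ref{thm:Hardness of 3-coloring random graphs with adversarial planted 3-coloring}, and it is the \emph{only} reason the $k=3$ argument is confined to $\delta<\tfrac12$: once $d\ge\sqrt n$ essentially every pair of copy-vertices shares an outside neighbour, so that clause cannot hold. After this pruning the planting lemma holds for all $d$ up to $n^{1-\epsilon}$.

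The heart of the matter --- and the step I expect to be the main obstacle --- is then to make the planted copy of $H$ behave like a freshly-planted instance detached from the rest of the graph, now \emph{without} the common-neighbour clause. For $d\ge\sqrt n$ the copy simply cannot be disconnected (each outside vertex has many neighbours in the copy, spanning several colour classes), so the $k=3$ device of recolouring each boundary vertex to agree with its unique inside-neighbour is unavailable. The plan is to exploit the fact that $k\ge4$ leaves a spare colour: take $|V(H)|=n^{\rho}$ with $\rho<1-\delta$ small enough that the external neighbourhood of the planted copy fits inside a single colour class; colour $H$ using only $k-1$ of the colours (possible because the palette-reducing gadget makes $H$ effectively a $(k-1)$-colouring instance whose satisfiability encodes $3$-colourability of the source); and devote the spare colour to a ``buffer'' layer enveloping the copy. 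One must then show, in the style of Section~\ref{sub:Hardness-Result-Without-DECOM-ASS}, that (a) the planted colouring stays balanced, (b) the $k$-colourability of the copy's connected component equals that of $H$, and (c) the resulting distribution is statistically close to a distribution samplable in polynomial time, so that Claim~\ref{claim: Probability strech when planted} and amplification go through. Step (c) is where the bulk of the technical work lies, and where $k\ge4$ is genuinely used: there is no spare colour when $k=3$. This is the mirror image, on the hardness side, of Theorem~\ref{thm:R/A}a --- the $2$SAT step that finishes the $k=3$ algorithm works because the residual lists have size $k-1=2$, whereas lists of size $k-1\ge3$ make that step NP-hard, so no analogous algorithm can exist once $k\ge4$, and the present theorem confirms it.
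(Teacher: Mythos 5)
Your high-level plan matches the paper's: keep the NP-hard class of balanced graphs of average degree~$3.75$ (on which \emph{3}-colouring is NP-hard), drop the ``no common outside neighbour'' clause from the planting lemma (which is indeed the only thing forcing $\delta<\tfrac12$ in the $k=3$ case), and use the spare colour(s) as a buffer around the planted copy of $H$ so that $H$ is forced to use only three colours. The two replacement conditions in the paper's planting lemma (Theorem~\ref{thm:(k=00003D4)Hardness of 3-coloring random graphs with adversarial planted 3-coloring}) --- that every vertex of $S$ has some neighbour outside $S$, and that the total degree of $S$ is at most $\tfrac{n-k}{4}$ so the buffer fits into one colour class --- you get right in spirit via the constraint $\rho<1-\delta$.

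However, your item~(b), ``the $k$-colourability of the copy's connected component equals that of $H$,'' reveals a gap. Unlike the $k=3$ case, the copy of $H$ is \emph{not} disconnected from the rest of the graph: the adversary now colours the external neighbours of $H$ with the spare colour, which is \emph{different} from the colours inside $H$, so those edges survive the planting. Hence there is no separate ``component of $H$'' whose $k$-colourability the algorithm must decide; $H$ sits inside a single large component. The correct argument is that \emph{any} legal $k$-colouring of the whole graph $G$ that the algorithm outputs must, when restricted to $G\setminus H$, coincide with the planted one, which then forces every vertex of $H$ to see the spare colour among its neighbours and hence to use only $k-1$ colours. This requires a nontrivial uniqueness statement: with high probability, the random balanced $k$-colouring planted in a dense spectral expander is the \emph{only} legal $k$-colouring of the resulting graph. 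That is precisely Lemma~\ref{lem:Uniqueness of random coloring} in the paper, proved via the expander mixing lemma, and it is the ingredient your outline does not supply. Without it, nothing prevents the algorithm from returning a 4-colouring of $G$ in which the boundary of $H$ is multicoloured, in which case the output tells you nothing about 3-colourability of $H$. (A minor additional note: for $k=4$ you do not need the ``palette-reducing gadget'' you describe --- the boundary colour already reduces the palette on $H$ to $k-1=3$ --- and the paper simply keeps the same 3-colouring NP-hardness source.)
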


It turns out that theorems~\ref{thm:partial},~\ref{thm:A/A},~\ref{thm:A/R} and~\ref{thm:R/A}b extend to $k \ge 4$. Theorem~\ref{thm:R/A}a does not extend to $k \ge 4$ as this would contradict Theorem~\ref{thm:Generalization of R/Ab}.

~

Theorems ~\ref{thm:partial},~\ref{thm:A/A} and~\ref{thm:A/R} follow in a rather straightforward way from our proofs of the respective theorems~\ref{thm:partial},~\ref{thm:A/A} and~\ref{thm:A/R}. We provide some more details.

We start with the generalization of Theorem~\ref{thm:Main} to $k \ge 4$. The statement of Theorem~~\ref{thm:Main} uses two predefined vectors ($\bar{x}$ and $\bar{y}$) that encodes the planted 3-coloring (see Definition~\ref{Def:x,y}). Here we describe how to define $k - 1$ vectors when $k \ge 4$.

Let $P_k(G)$ be the graph $G$ after a random $k$-color-planting has been applied. Let $V_1,V_2,...,V_k$ be a partition of $V$ induced by the color planting and let $P$ be a function that maps each vertex to its color class.
Let $P_k=\mathbf{1}_{k \times k}-I_k$ be the matrix obtained by the all-one matrix minus the identity matrix and let the vectors  $p_0,p_1,...,p_{k-1}$ be an orthogonal set of eigenvectors of $P$ where $p_0$ is the all-one vector.
For a vector $\vec{x}$ we denote by $\vec{x}(j)$ its $j$-th coordinate and recall  that we denote by  $\bar{x}$ the vector  $\frac {\vec{x}}{\left\Vert \vec{x}\right\Vert_2}$ .


\begin{defn}
\label{Def:x(p)} We define the following $k$ vectors $\vec{x}_0, \ldots, \vec{x}_{k-1}$ in $\mathbb{R}^{n}$. For each $j$, $0 \le j \le k-1$,  the vector $\vec{x}_j$  gives to coordinate $i$ the value $p_j(P(i))$.

\end{defn}

	One can check that Definition~\ref{Def:x,y} is an instantiation of Definition~\ref{Def:x(p)} in the case $k=3$. Moreover Theorem~\ref{thm:Main-k} bellow follows by a direct generalization of the proof of Theorem~\ref{thm:Main}.

\begin{thm}
\label{thm:Main-k}
Let $G$ be a $d$ regular $\lambda$-expander and $P_{k}\left(G\right)$ be the graph $G$ after a random $k$-color planting, where
$\lambda\le \Theta(d)$.
With high probability the following holds.

\begin{enumerate}
\item The eigenvalues of $P_{k}\left(G\right)$ have the following spectrum.

\begin{enumerate}
\item $\lambda_{1}\left(P_{k}\left(G\right)\right)\ge\left(1-2^{-\Omega(d)}\right)\frac{k-1}{k}d$.
\item $\lambda_{n-k+2}\left(P_{k}\left(G\right)\right)\le-\frac{1}{k}d\left(1-\frac{k}{\sqrt{d}}\right)$.
\item $|\lambda_{i}\left(P_{k}\left(G\right)\right)|\le2\lambda+O_k\left(\sqrt{d}\right)$
for all $2\le i\le n-k+1$.
\end{enumerate}
\item The following vectors exist.

$\vec{\epsilon}_{\bar{x}_j}$ such that $\left\Vert \vec{\epsilon}_{\bar{x}_j}\right\Vert _{2}=O_k\left(\frac{1}{\sqrt{d}}\right)$
and $\bar{x}_j+\vec{\epsilon}_{\bar{x}_j}\in\text{span}\left(\left\{ e_{i}\left(P_k(G)\right)\right\} _{i\in\left\{ n-k+2,...,n\right\} }\right)$, for $1\le j\le k-1$.
\end{enumerate}

\end{thm}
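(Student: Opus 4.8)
The plan is to rerun the proof of Theorem~\ref{thm:Main} essentially verbatim, with the two hand-built vectors $\bar{x},\bar{y}$ replaced by the $k-1$ vectors $\bar{x}_1,\dots,\bar{x}_{k-1}$ of Definition~\ref{Def:x(p)} (together with $\bar{x}_0=\bar{1}$), and with the fraction $\frac{1}{3}$ replaced throughout by $\frac{1}{k}$. The first step is to verify that each $\vec{x}_j$ is an approximate eigenvector of $A_{P_k(G)}$. Fix a vertex $v$ with planted color $c=P(v)$; a $G$-edge $\{v,u\}$ survives in $P_k(G)$ exactly when $P(u)\neq c$, and conditioned on this event $P(u)$ is uniform over the other $k-1$ colors, so $\mathbb{E}\left[(A_{P_k(G)}\vec{x}_j)_v\mid P(v)=c\right]=\frac{d}{k}\sum_{c'\neq c}p_j(c')=-\frac{d}{k}\,p_j(c)=-\frac{d}{k}(\vec{x}_j)_v$ for $1\le j\le k-1$ (using $\langle p_j,p_0\rangle=0$), while for $j=0$ the same computation gives $\frac{k-1}{k}d\,(\vec{x}_0)_v$. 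A McDiarmid estimate identical in structure to the one in the proof of Lemma~\ref{lem:Existance of almost eiganvectors} (changing one vertex's color changes the relevant sum of squares by $O_k(d^2)$) then shows that with high probability $\bigl\|A_{P_k(G)}\bar{x}_j+\frac{d}{k}\bar{x}_j\bigr\|_2=O_k(\sqrt{d})$ for $1\le j\le k-1$ and $\bigl\|A_{P_k(G)}\bar{1}-\frac{k-1}{k}d\,\bar{1}\bigr\|_2=O_k(\sqrt{d})$. In the same way the random induced subgraphs $G_1,\dots,G_k$ (each vertex present with probability $\frac{1}{k}$) are $\lambda$-expanders with largest eigenvalue $\approx\frac{d}{k}$, by Cauchy interlacing together with a second McDiarmid estimate, exactly as in Lemmas~\ref{lem:Random Induced Graph is almost regular.},~\ref{lem:Random Induced Graph is expander} and~\ref{lem:distance to S_delta in G_1}, so $\bar{p}_c+\vec{\epsilon}_{\bar{p}_c}$ spans $e_1(G_c)$ with $\|\vec{\epsilon}_{\bar{p}_c}\|_2\le\sqrt{d}/\delta$.

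Next I would apply Lemma~\ref{lem:Almost Eigan Vectors implies more} with threshold $\delta$ equal to a suitably small $k$-dependent constant fraction of $d$, to each of $\bar{x}_0,\dots,\bar{x}_{k-1}$, producing perturbations $\vec{\epsilon}_{\bar{x}_j}$ with $\|\vec{\epsilon}_{\bar{x}_j}\|_2\le\sqrt{d}/\delta=O_k(1/\sqrt{d})$ such that $\bar{x}_j+\vec{\epsilon}_{\bar{x}_j}$ lies in the span of those eigenvectors of $P_k(G)$ whose eigenvalue is within $\delta$ of $-\frac{d}{k}$ (and within $\delta$ of $\frac{k-1}{k}d$ when $j=0$); this delivers the perturbations required in item~2, once we know these $k$ eigenvalues are exactly $\lambda_1$ and $\lambda_{n-k+2},\dots,\lambda_n$. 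Writing $A_{P_k(G)}=A_G-\sum_{c=1}^{k}A_{G_c}$, for any eigenvector $e_i=e_i(P_k(G))$ orthogonal to all of $\bar{x}_0+\vec{\epsilon}_{\bar{x}_0},\dots,\bar{x}_{k-1}+\vec{\epsilon}_{\bar{x}_{k-1}}$ I would bound $\|A_G e_i\|_2\le\lambda+O_k(d^{1.5}/\delta)$ (since $\langle e_i,\bar{1}\rangle=-\langle e_i,\vec{\epsilon}_{\bar{x}_0}\rangle$ has absolute value at most $\sqrt{d}/\delta$) and $\bigl\|\sum_{c}A_{G_c}e_i\bigr\|_2\le\lambda+O_k(d^{1.5}/\delta)$ (since each $\bar{p}_c$ is a fixed, $k$-dependent linear combination of $\bar{x}_0,\dots,\bar{x}_{k-1}$, so $\langle e_i,e_1(G_c)\rangle=O_k(\sqrt{d}/\delta)$ for every $c$), mirroring Claim~\ref{Claim:Bounding eiganvalues} step by step; this gives $\|A_{P_k(G)}e_i\|_2\le 2\lambda+O_k(d^{1.5}/\delta)=2\lambda+O_k(\sqrt{d})$, which is item~1(c).

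To finish, I must show that the set $S_\delta$ of indices $i$ with $|\lambda_i(P_k(G))|$ within $\delta$ of $\frac{d}{k}$ or of $\frac{k-1}{k}d$ has size exactly $k$. On one hand the $k$ vectors $\bar{x}_j+\vec{\epsilon}_{\bar{x}_j}$ are linearly independent (the $\bar{x}_j$ are near-orthonormal and the perturbations tiny) and lie in $\mathrm{span}\{e_i:i\in S_\delta\}$, so $|S_\delta|\ge k$. On the other hand, if $|S_\delta|\ge k+1$ there is a unit vector $v\in\mathrm{span}\{e_i:i\in S_\delta\}$ orthogonal to all $k$ of them, for which $\|A_{P_k(G)}v\|_2\ge\frac{d}{k}-\delta$ by the Rayleigh quotient, contradicting the bound $\|A_{P_k(G)}v\|_2\le 2\lambda+O_k(\sqrt{d})$ of the previous paragraph as soon as $\lambda\le c_k d$ for a small enough constant $c_k=c_k(k)$. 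This last point is the one genuine obstacle, and it is where the hypothesis $\lambda\le\Theta(d)$ must acquire its $k$-dependence: the gap $\frac{d}{k}-\delta-2\lambda$ that the separation argument needs to be positive is only an $\Theta(1/k)$ fraction of $d$, so $\lambda$ has to be a correspondingly smaller fraction of $d$; everything else is routine bookkeeping of the $k$-dependent implied constants. Once $|S_\delta|=k$ is established, near-orthogonality of $\bar{1}$ to $\bar{x}_1,\dots,\bar{x}_{k-1}$ forces one of these $k$ eigenvalues to lie near $\frac{k-1}{k}d$ (it is $\lambda_1$) and the other $k-1$ to lie near $-\frac{d}{k}$ (they are $\lambda_{n-k+2},\dots,\lambda_n$), which completes item~2 and the qualitative parts of items~1(a) and~1(b). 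The sharpened forms follow exactly as in Theorem~\ref{thm:Main}: for item~1(b), Courant--Fischer gives $\lambda_{n-k+2}(P_k(G))\le\max_{v\in\mathrm{span}\{\vec{x}_1,\dots,\vec{x}_{k-1}\}}\frac{v^{\top}A_{P_k(G)}v}{v^{\top}v}\le-\frac{d}{k}+\sqrt{d}$; for item~1(a), $\lambda_1(P_k(G))$ is at least the average degree of $P_k(G)$, which with high probability is $(1-2^{-\Omega(d)})\frac{k-1}{k}d$ by concentration of the number of monochromatic edges (McDiarmid when $d=O(\log n)$, Chernoff plus a union bound over vertices when $d=\omega(\log n)$), exactly as in the proof of Lemma~\ref{lem:SB bound the random planting Case}.
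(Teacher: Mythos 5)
Your proof is correct and takes exactly the approach the paper intends: it carries out the ``direct generalization of the proof of Theorem~\ref{thm:Main}'' that the paper asserts suffices, replacing $\bar{x},\bar{y}$ by the $k-1$ vectors of Definition~\ref{Def:x(p)} and $\frac{1}{3}$ by $\frac{1}{k}$ throughout, and reusing Lemmas~\ref{lem:Almost Eigan Vectors implies more}, \ref{lem:Random Induced Graph is almost regular.}, \ref{lem:Random Induced Graph is expander} together with the $A_G - \sum_c A_{G_c}$ decomposition verbatim. You also correctly flag that the constant implicit in $\lambda\le\Theta(d)$ must shrink with $k$, which the paper's terse statement leaves implicit.
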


Lemma~\ref{lem: eigen values after arbitrary planting} is a variant of Theorem~\ref{thm:Main} for the case of adversarial balanced 3-coloring.   Lemma~\ref{lem: eigen values after arbitrary planting} can be generalized in a similar manner as the above for the case of adversarial balanced $k$-coloring.




We now present a clustering algorithm (which is a generalization of Algorithm~\ref{alg:Spectral Clustering}) to the case planted  $k \ge 4$-coloring. We chose here a randomized version due to improvement in the running times over the deterministic version.


\begin{algorithm}[H]
	Input: A graph $P_k(G)$ and three positive constants $c\le\frac{1}{2}$,
	 $c^1_k$ and $c^2_k$ (the last two constants depend on $k$).
	\begin{enumerate}
		\item Compute the eigenvectors $e_{n-l}:=e_{n-l}\left(P_k(G)\right)$ for $0\le l \le k-2$.
		\item \label{step:rankclus} Choose uniformly at random a set of $k$ vertices $v_{1},v_{2},...,v_{k}\in V\left(P_k(G)\right)$.
		\item If there are $1 \le i < j \le k$ satisfying
\[
		\sum_{l=0}^{l=k-2}\left(\left(e_{n-l}\right)_{v_i}-\left(e_{n-l}\right)_{v_{j}}\right)^{2}<\frac{4}{c^1_k n}\,.
		\]
		go to Step~2.
		\item Put a vertex $u\in V\left(G'\right)$ in $S_{i}$ if  it
		holds that
		\[
		\sum_{l=0}^{l=k-2}\left(\left(e_{n-l}\right)_{u}-\left(e_{n-l}\right)_{v_{i}}\right)^{2}<\frac{1}{c^1_k n}\,.
		\]
		\item If for every $i=1,2,...,k$ it holds that $\left|S_{i}\right|\ge\left(\frac{1}{k}-\frac{1}{c^2_k d^{2c}}\right)n$
		then output a coloring $C$ of $G'$ that sets $col_{C}\left(u\right)=i$
		for every $1 \le i \le k$ and $u\in S_{i}$, and colors the remaining vertices (if there are any) arbitrarily. Otherwise go to Step~\ref{step:rankclus}.
		
	\end{enumerate}
	\protect\caption{\label{alg:Random Spectral k-Clustering}Random Spectral $k$-Clustering}
\end{algorithm}

The following lemma shows that under sufficient conditions the above clustering algorithm, Algorithm~\ref{alg:Random Spectral k-Clustering}, outputs a good approximated coloring. These conditions are stated in the lemma and by  Theorem~\ref{thm:Main-k} these conditions  hold with high probability.

\begin{lem}
	\label{lem:Eigenvector approximation-k}(Generalization of Lemma~\ref{lem:Eigenvector approximation}). For any positive $k$ there exist constants $c^1_k$ and $c^2_k$ such that the following holds.
	Let $P_k(G)$ be as above and  $c$ be a positive constant. Suppose that the following
	vectors exist for $0\le l \le k-1$ (recall Definition~\ref{Def:x(p)}):
	$\vec{\epsilon}_{\bar{x}(l)}$
	such that $\left\Vert \vec{\epsilon}_{\bar{x}_l}\right\Vert _{2}=O_k\left(d^{-c}\right)$
	and $\bar{x}_l+\vec{\epsilon}_{\bar{x}_l}\in\text{span}\left(\left\{ e_{i}\left(P_k(G)\right)\right\} _{i\in\left\{ n-k+1,n\right\} }\right)$. Then Algorithm~\ref{alg:Random Spectral k-Clustering}  outputs an $O_k\left(nd^{-2c}\right)$-approximated
	coloring.
	The expected running time of Algorithm~\ref{alg:Random Spectral k-Clustering} is $\tilde{O}(\frac{k^k}{k!} n)$.
	\end{lem}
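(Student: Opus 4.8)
The plan is to mimic the structure of the proof of Lemma~\ref{lem:Eigenvector approximation}, but carried out in a $(k-1)$-dimensional spectral embedding rather than a $2$-dimensional one. First I would fix the ambient subspace: by hypothesis, each of $\bar{x}_1+\vec{\epsilon}_{\bar{x}_1},\ldots,\bar{x}_{k-1}+\vec{\epsilon}_{\bar{x}_{k-1}}$ lies in $W:=\mathrm{span}\left(\left\{e_{n-l}(P_k(G))\right\}_{0\le l\le k-2}\right)$ (note $\bar{x}_0$ is the all-one direction and corresponds to the top eigenvalue, so only $\bar x_1,\dots,\bar x_{k-1}$ are relevant), and $\dim W = k-1$ because the $k-1$ perturbed vectors are nearly orthonormal — the $\vec{x}_j$ come from an orthonormal eigenbasis $p_1,\dots,p_{k-1}$ of $P_k$, their normalizations are (with high probability, by balance of the random color classes) within $o(1)$ of the clean values, and adding $\ell_2$-perturbations of size $O_k(d^{-c})$ does not destroy this. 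Hence $W$ equals the span of any orthonormal system obtained by Gram--Schmidt on the $\bar{x}_j+\vec{\epsilon}_{\bar{x}_j}$; call this orthonormalized system $\tilde{x}_1,\dots,\tilde{x}_{k-1}$ and observe, exactly as in the $k=3$ case, that $\left\Vert (\bar x_j + \vec\epsilon_{\bar x_j}) - \tilde x_j\right\Vert_2 = O_k(d^{-c})$. The point of passing to this orthonormal system is that the clustering test in Step~3--4 of Algorithm~\ref{alg:Random Spectral k-Clustering} is invariant under orthogonal changes of basis within $W$: the squared-distance sum $\sum_{l}\left((e_{n-l})_u - (e_{n-l})_v\right)^2$ equals $\sum_{j}\left((\tilde x_j)_u - (\tilde x_j)_v\right)^2$, and up to the $O_k(d^{-c})$ error this equals $\sum_{j}\left((\bar x_j)_u - (\bar x_j)_v\right)^2$, i.e.\ it computes (essentially) the squared Euclidean distance between the two vertices in the clean $(k-1)$-dimensional coloring embedding $u\mapsto (\bar x_1(u),\dots,\bar x_{k-1}(u))$.

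Next I would analyze that clean embedding. Since $p_0,\dots,p_{k-1}$ is an orthonormal basis of $\mathbb R^k$ with $p_0$ proportional to the all-ones vector, the $k$ clean points $q_i := (p_1(i),\dots,p_{k-1}(i))_{i=1}^k$ (one per color class) form a regular simplex: all pairwise squared distances are equal, to some constant $\kappa_k>0$ depending only on $k$. After normalizing the $\vec{x}_j$ (dividing by $\sqrt{|V_j|}\asymp\sqrt{n/k}$, with the ratios being $1+o(1)$ w.h.p.\ by Chernoff), the embedded position of a vertex in class $i$ is $q_i/\sqrt{n}$ up to lower-order terms, so two vertices in different classes are at squared distance $\approx \kappa_k/n$, and two vertices in the same class are at squared distance $0$. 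Now define $\mathrm{Good}\subseteq V$ to be the set of vertices $v$ with $\sum_j\left((\vec\epsilon_{\bar x_j})_v\right)^2 \le \tfrac{1}{c^2_k n}$ and also $\sum_j\left((\bar x_j + \vec\epsilon_{\bar x_j} - \tilde x_j)_v\right)^2 \le \tfrac{1}{c^2_k n}$; by an averaging argument over the $\ell_2$-bounds, $|\mathrm{Good}| \ge (1 - O_k(d^{-2c}))n$. Choosing the thresholds $c^1_k$ (in the clustering radius $\tfrac{1}{c^1_k n}$ and the separation test $\tfrac{4}{c^1_k n}$) and $c^2_k$ appropriately relative to $\kappa_k$ — concretely, $c^1_k$ large enough that the clustering ball has radius much smaller than $\kappa_k/n$ but large enough to contain all of $\mathrm{Good}$ around a good center of the same color — one shows: (i) if a center $v_i\in\mathrm{Good}$ has color $c$, then every $\mathrm{Good}$ vertex of color $c$ passes its test and lands in $S_i$, while no $\mathrm{Good}$ vertex of a different color does; and (ii) if the $k$ chosen centers are Good and of pairwise distinct colors, the separation pretest in Step~3 passes, and Step~5's size condition $|S_i| \ge (\tfrac1k - \tfrac{1}{c^2_k d^{2c}})n$ holds, so the algorithm outputs a coloring that agrees with the planted one on all of $\mathrm{Good}$, hence an $O_k(nd^{-2c})$-approximated coloring. (The at most $k^2$ pairs of centers and the union bound over the two defining conditions of $\mathrm{Good}$ are absorbed into the constant $O_k(\cdot)$.)

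Finally, for the running time: each iteration (compute $\binom k2$ sums of $k-1$ squared coordinate differences, then the $n$ membership tests, then the $k$ size checks) costs $O_k(n)$, and the eigenvector computation in Step~1 is (approximately) near-linear. A uniformly random $k$-subset of vertices consists of $k$ Good vertices of pairwise distinct colors with probability at least $\left(1 - O_k(d^{-2c})\right)\cdot \tfrac{k!}{k^k}\cdot(1+o(1))$ — the $\tfrac{k!}{k^k}$ factor is the probability that $k$ independent uniform colors are all distinct, the balance of class sizes makes the without-replacement correction negligible, and the Good factor is $1-o(1)$ — so the expected number of iterations is $(1+o(1))\tfrac{k^k}{k!}$, giving expected total running time $\tilde O\!\left(\tfrac{k^k}{k!}\,n\right)$.

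The main obstacle I anticipate is not any single step but the bookkeeping of the $k$-dependent constants: one must verify that $c^1_k$ and $c^2_k$ can be chosen \emph{simultaneously} so that the clustering ball is small enough to separate distinct colors (needs $c^1_k$ large relative to $1/\kappa_k$), large enough to capture the Good vertices of the matching color (needs the ball radius to dominate the intra-class perturbation, i.e.\ $c^1_k$ not too large relative to $c^2_k$), and so that the $O_k(d^{-2c})$-many non-Good vertices cannot spoil the size condition in Step~5 (needs $c^2_k$ compatible). Since $\kappa_k$ is an explicit constant (the squared edge length of the normalized regular $(k-1)$-simplex, which is $\tfrac{k}{k-1}$ after the natural normalization, a fact one reads off from $P_k = \mathbf 1 - I$ having eigenvalue $-1$ with multiplicity $k-1$), this is a finite and routine — if slightly tedious — chain of inequalities, exactly parallel to the constant $70$ in Algorithm~\ref{alg:Spectral Clustering}. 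Everything else is a verbatim generalization of the $k=3$ argument.
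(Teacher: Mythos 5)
Your proposal is correct and follows essentially the same route as the paper's (which is to mimic the proof of Lemma~\ref{lem:Eigenvector approximation} verbatim in a $(k-1)$-dimensional embedding rather than a $2$-dimensional one; the paper itself declines to write this out, noting only that the running-time analysis is discussed in the $k=3$ proof). All the right ingredients are there: Gram--Schmidt orthonormalization of the perturbed vectors $\bar x_j + \vec\epsilon_{\bar x_j}$, invariance of the squared-distance sum under orthogonal change of basis within the span of the bottom $k-1$ eigenvectors, the regular-simplex geometry of the clean embedding giving a uniform separation $\kappa_k/n$ between distinct color classes, the Good set defined by two averaging conditions of size $(1-O_k(d^{-2c}))n$, and the $\frac{k!}{k^k}$ success probability for the random-center step. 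One small slip: the numerical value of $\kappa_k$ is not $\frac{k}{k-1}$; a direct computation from $P_k = \mathbf{1}-I$ gives squared pairwise distance $2$ between the rows of the orthonormal basis $(p_1,\ldots,p_{k-1})$, so after dividing each $\vec x_j$ by $\lVert\vec x_j\rVert_2 \approx \sqrt{n/k}$ the per-vertex squared separation is about $2k/n$, i.e.\ $\kappa_k \approx 2k$. This is immaterial to the argument — all that matters is that $\kappa_k$ is a positive constant depending only on $k$ — but worth fixing if the constants $c^1_k, c^2_k$ are to be made explicit.
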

	
	An example of how to use Lemma~\ref{lem:Eigenvector approximation-k} is as follows. For $P_k(G)$ by Theorem~\ref{thm:Main-k} we have $\left\Vert \vec{\epsilon}_{\bar{x}(l)}\right\Vert _{2}=O_k\left(\frac{1}{\sqrt{d}}\right)$. Thus running Algorithm~\ref{alg:Random Spectral k-Clustering} with $c=\frac{1}{2}$ results in $O_k\left(\frac{n}{d}\right)$-approximated coloring.

	The analysis of  Algorithm~\ref{alg:Random Spectral k-Clustering} running time is discussed in the poof of Lemma~\ref{lem:Eigenvector approximation}.
	 We note that  one can show  an expected running time of $\tilde{O}(\frac{k^k}{k!}+n)$ for Algorithm~\ref{alg:Random Spectral k-Clustering}.

\subsection{\label{sub:G.2 Hardness result for random graphs with adversarial 4-color planting}Hardness
result for random graphs with adversarial 4-color planting }

In this section we extend Theorem~\ref{thm:Hardnes adv planting on gnp full}
to the case of adversarial planting with $k>3$ colors. For simplicity, we present the proof only for the case $k=4$, but it is not difficult to
extend it to any constant $k$.

Let us begin with an overview of the proof. Recall the overview of the proof of Theorem~\ref{thm:R/A}b given in Section~\ref{R/Ahard}, and the terminology that was used there. The hardness of 3-coloring in $H_R/P_A$ was by reduction from the class ${\cal Q}$ of balanced graphs with average degree $3.75$, on which 3-coloring is NP-hard. For this class, 4-coloring is easy (by inductive coloring), but nevertheless we shall use the same class ${\cal Q}$ in the hardness result for 4-coloring. We have already seen (in Lemma~\ref{lem:Q}) that for every graph $Q \in {\cal Q}$ of size $n^{\epsilon}$, random graphs of sufficiently high average degree $n^{\delta}$ are likely to contain $Q$ as a subgraph. For hardness of 3-coloring, given a random $G_{n,n^{\delta}}$ host graph $H$, the adversary plants in $H$ a 3-coloring that isolates a copy of $Q$. For hardness of 4-coloring, it is useless to plant in $H$ a 4-coloring that isolates a copy a copy of $Q$, because 4-coloring of $Q$ is easy. Instead, the plan is for the adversary to plant in $H$ a 4-coloring in which all neighbors of $Q$ outside $Q$ have the same color. This leaves only three colors for $Q$, and hence 4-coloring of $H$ would imply 3-coloring of $Q$.

To make this plan work, one needs every vertex of $Q$ to have at least one neighbor in $H - Q$, and all vertices of $Q$ combined should have less than $n/4$ neighbors in $H - Q$. Luckily, for a random copy of $Q$ in $H$, both these properties happen with overwhelming probability as long as $\epsilon + \delta < 1$. Consequently, the hardness result for 4-coloring has the following two advantages over the one for 3-coloring (that required vertices in $Q$ not to have common neighbors in $H - Q$). One is that there is no need to require that $\delta < \frac{1}{2} - \epsilon$. In particular, this shows that the positive results of Theorem~\ref{thm:R/A}a do not hold when $k\ge4$. The other advantage is that the fraction of host graphs $H \in_R G_{n,n^{\delta}}$ on which the reduction fails is smaller than the corresponding fraction in the proof of Theorem~\ref{thm:R/A}b. (For simplicity, we shall not address this point in our proofs.)

There is a property that is required for hardness of 4-coloring but was not needed for the hardness of 3-coloring. That property is that after the adversary plants the 4-coloring, we need that in every 4-coloring of $G$, there is some color $c$ such every vertex of $Q$ has at least one neighbor in $H - Q$ of  color $c$. This ensures that every 4-coloring of $G$ indeed 3-colors $Q$. The way we show that this property holds is through  Lemma~\ref{lem:Uniqueness of random coloring} that shows that in a sufficiently dense random graph, a planted random $k$-coloring is almost surely the only legal $k$-coloring of the resulting graph.

This completes the overview of our proof approach.

 We start with a variant of Theorem~\ref{thm:Hardness of 3-coloring random graphs with adversarial planted 3-coloring}.
\begin{thm}
	\label{thm:(k=00003D4)Hardness of 3-coloring random graphs with adversarial planted 3-coloring}For
	$0<\epsilon\le\frac{1}{7}$ and $3<\alpha<4$, suppose that $\frac{k^{2}d}{n}\le\epsilon$
	and $\frac{k^{4}n^{\alpha-2}}{d^{\alpha}}\le\epsilon^{2}$. Let $H$
	be an arbitrary balanced graph on $k$ vertices with an average degree
	$\alpha$ and let $G\sim G_{n,d}$ be a random graph. Then with probability
	at least $1-4\epsilon$ (over choice of $G$), $G$ contains a set
	$S$ of $k$ vertices such that:
	\begin{enumerate}
		\item The subgraph induced on $S$ is $H$.
		\item Each vertex of $S$ has a neighbor outside $S$.
		\item The total sum of degrees of vertices  in  $S$  is at most $\frac{n-k}{4}$.
	\end{enumerate}
\end{thm}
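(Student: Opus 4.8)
The plan is to adapt the second-moment argument from the proof of Theorem~\ref{thm:Hardness of 3-coloring random graphs with adversarial planted 3-coloring} (the proof of Lemma~\ref{lem:Q}), changing only which ``good event'' we require of the randomly placed copy of $H$. As before, fix a partition of $V(G)$ into $k$ equal parts of size $n/k$, and require vertex $i$ of $H$ to come from part $i$; call a set $S$ that realizes $H$ in this way one that \emph{obeys the partition}. Let $Y$ count the sets $S$ obeying the partition that contain $H$ as an edge-induced subgraph, so that $E[Y]=\left(\frac{d^{\alpha}}{k^{2}n^{\alpha-2}}\right)^{k/2}$ exactly as in the earlier proof, and let $X$ count those $S$ that in addition satisfy the three items of the theorem. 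The first item (no internal edges beyond those of $H$) costs a factor of at least $1-\binom{k}{2}\frac{d}{n}\ge 1-\frac{k^2 d}{n}\ge 1-\epsilon$ in expectation, which is why the hypothesis here is $\frac{k^2 d}{n}\le\epsilon$ rather than $\frac{k^2 d^2}{n}\le\epsilon$.

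The new ingredient is controlling items~2 and~3, which I would handle by a direct union-bound / concentration estimate rather than folding them into the first-moment factor for $X$. For item~2: each of the $k$ vertices of $S$ fails to have a neighbor outside $S$ with probability $(1-p)^{n-k}\le e^{-d(1-k/n)}$, which is exponentially small in $d=n^{\delta}$, so by a union bound over the $k$ vertices of $S$ the probability that item~2 fails for a fixed $S$ is at most $k e^{-d/2}$, negligible. For item~3: the total degree of $S$ into $V\setminus S$ is a sum of independent Bernoulli's with mean $k(n-k)p\approx kd$, which is far below $\frac{n-k}{4}$ whenever $kd = o(n)$ — and this is guaranteed by $\frac{k^2 d}{n}\le\epsilon$ (hence $kd\le\epsilon n/k = o(n)$); a Chernoff bound then makes the failure probability of item~3 for a fixed $S$ superpolynomially small. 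So conditioned on $S$ realizing $H$ with no extra internal edges, items~2 and~3 hold except with negligible probability, and therefore $E[X]\ge(1-\epsilon)E[Y]$ just as before (after absorbing the negligible terms into the $\epsilon$).

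The second-moment computation of $E[Y^2]$ is then \emph{identical} to the one in the proof of Theorem~\ref{thm:Hardness of 3-coloring random graphs with adversarial planted 3-coloring}: one splits the sum over a second copy $H'$ differing from $H$ in $t$ vertices into the ranges $t=k$, $t\le k/2$, and $k/2<t<k$, uses that $H$ is balanced to lower-bound the number of new edges by $\frac{\alpha t}{2}$, and invokes $\frac{k^4 n^{\alpha-2}}{d^{\alpha}}\le\epsilon^2$ to conclude $E[Y^2]\le(1+\epsilon)E[Y]^2$. Combining with $X\le Y$ and $E[X]\ge(1-\epsilon)E[Y]$ gives $E[X^2]\le\frac{1+\epsilon}{(1-\epsilon)^2}E[X]^2\le(1+4\epsilon)E[X]^2$ (using $\epsilon\le\frac17$), hence $\sigma^2[X]\le 4\epsilon E[X]^2$, and Chebyshev yields $\Pr[X\ge1]\ge\Pr[X>0]\ge 1-4\epsilon$. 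The main thing to be careful about — and the only place the proof genuinely differs — is the bookkeeping in the previous paragraph: one must check that the extra failure probabilities for items~2 and~3, summed over a fixed $S$, really are dominated by (say) $\epsilon E[Y]/E[Y]$ per copy so that the bound $E[X]\ge(1-\epsilon)E[Y]$ survives with the same constants; this is where the weakened hypothesis $\frac{k^2 d}{n}\le\epsilon$ (as opposed to $\frac{k^2 d^2}{n}\le\epsilon$) is exactly what is needed, since we no longer pay for ``no common neighbor outside $S$'' (the quadratic-in-$d$ event) and instead only pay for the linear-in-$d$ events above.
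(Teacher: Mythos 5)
Your proof is correct and follows essentially the same second-moment argument as the paper's proof: the same random variables $X$ and $Y$, the same first-moment inequality $E[X]\ge(1-\epsilon)E[Y]$ (with the weakened hypothesis $\frac{k^2d}{n}\le\epsilon$ paying only for ``no extra internal edges''), and the identical variance computation and Chebyshev step. The only minor divergence is in the concentration estimate for item~3, where you bound the number of edges leaving $S$ directly (a Binomial with mean $\approx kd$) rather than bounding the maximum degree in $G$ and multiplying by $k$ as the paper does; both routes give a superpolynomially small failure probability since $k=o(n/d)$.
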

\begin{proof}
	Let $p=\frac{d}{n-1}$ be the edge probability in $G$. Suppose for
	simplicity (an assumption that can be removed) that $k$ divides $n$.
	Partition the vertex set of $G$ into $k$ equal parts of size $n/k$
	each. Vertex $i$ of $H$ will be required to come from part $i$.
	A set $S$ with such a property is said to {\em obey the partition}.
	
	Let $X$ be a random variable counting the number of sets $S$ obeying
	the partition that satisfy the theorem. Let $Y$ be a random variable
	counting the number of sets $S$ obeying the partition that have $H$
	as an edge induced subgraph (but may have additional edges, and may
	not satisfy item~2 of the theorem).
	
	\[
	E[Y]=\left(\frac{n}{k}\right)^{k}p^{\frac{\alpha k}{2}}=\left(\frac{d^{\alpha}}{k^{2}n^{\alpha-2}}\right)^{\frac{k}{2}}\,.
	\]

	A set $S$ in $Y$ contributes to $X$ if the following conditions hold
	\begin{enumerate}
		\item It has no internal edges
		beyond those of $H$ (which happens with probability at least $1-{k \choose 2}\frac{d}{n}$).
		
		\item Every vertex of it has a neighbor outside $S$ (which happens
		with probability at least $1-ke^{-\frac{dn}{4}}$).
		
		\item  The total sum of degrees of vertices  in  $S$  is at most $\frac{n-k}{4}$.
		By the union and the Chernoff bounds no
		vertex in $G$ has more then $\left(1+c_{1}\right)d$ neighbors with probability at
		most $n2^{-\Omega\left(c_{1}d\right)}$. Since $k=o\left(\frac{n}{d}\right)$ this assertion is satisfied (for $S$ in $Y$) with probability $n2^{-\Omega\left(c_{1}d\right)}$ as well.
		
	\end{enumerate}
	  Consequently:
	
	\[
	E[X]\ge E[Y]\left(1-\frac{k^{2}d^{2}}{n}\right)\ge(1-\epsilon)E[Y].
	\]

	Using $\frac{k^{4}n^{\alpha-2}}{d^{\alpha}}\le\epsilon^{2}$ the computation
	of $E[Y^{2}]$ and the rest of the proof are the same as in Theorem~\ref{thm:Hardness of 3-coloring random graphs with adversarial planted 3-coloring}.

\end{proof}

\begin{cor}
	\label{cor: (k=00003D4)Hardness of 3-coloring random graphs with adversarial planted 3-coloring}
	For every $\delta\ge 0.467$ and $\epsilon<\frac{1}{8}$ there
	is some $\rho>0$ such that the following holds for every large enough
	$n$. Let $H$ be an arbitrary balanced graph with average degree
	$3.75$ on $k=n^{\rho}$ vertices. Let $G$ be a random graph on $n$
	vertices with average degree $d=n^{\delta}$ (which we refer to as
	$G_{n,d}$). Then with probability larger than $1-4\epsilon$ (over
	choice of $G$), $G$ contains a set $S$ of $k$ vertices such that:
	\begin{enumerate}
		\item The subgraph induced on $S$ is $H$.
		\item Each vertex of $S$ has a neighbor outside $S$.
		\item The total sum of degrees of vertices  in  $S$  is at most $\frac{n-k}{4}$.
	\end{enumerate}
\end{cor}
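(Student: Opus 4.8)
The plan is to obtain the corollary as a direct instantiation of Theorem~\ref{thm:(k=00003D4)Hardness of 3-coloring random graphs with adversarial planted 3-coloring}, by choosing the size $k$ of $H$ to be a suitable power of $n$. Recall that that theorem, applied with $\alpha = 3.75$ (which satisfies $3 < \alpha < 4$), produces a set $S$ with exactly the three required properties with probability at least $1 - 4\epsilon$, provided the two hypotheses $\frac{k^2 d}{n} \le \epsilon$ and $\frac{k^4 n^{\alpha-2}}{d^\alpha} \le \epsilon^2$ hold (and $\epsilon \le \frac{1}{7}$, which follows from $\epsilon < \frac{1}{8}$). So the entire proof reduces to checking that, for every $\delta$ in the stated range (taking $\delta < 1$ so that $G_{n,d}$ is well defined) and every $\epsilon < \frac{1}{8}$, there is some $\rho > 0$ such that $k = n^{\rho}$ satisfies both inequalities once $n$ is large; then we take $H$ on $n^{\rho}$ vertices and invoke the theorem.

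Concretely, I would take $k = \min[\sqrt{\epsilon}\, n^{(1-\delta)/2},\ \sqrt{\epsilon}\, n^{(\alpha\delta - \alpha + 2)/4}]$, mirroring (with the appropriately weaker first exponent) the choice made in the proof of the $3$-coloring analog, Corollary~\ref{cor:Hardness of 3-coloring random graphs with adversarial planted 3-coloring}. Since $k$ is the minimum of the two terms, substituting $d = n^{\delta}$ and $k \le \sqrt{\epsilon}\, n^{(1-\delta)/2}$ gives $\frac{k^2 d}{n} \le \frac{\epsilon\, n^{1-\delta}\, n^{\delta}}{n} = \epsilon$, and substituting $k \le \sqrt{\epsilon}\, n^{(\alpha\delta - \alpha + 2)/4}$ gives $\frac{k^4 n^{\alpha - 2}}{d^{\alpha}} \le \frac{\epsilon^2\, n^{\alpha\delta - \alpha + 2}\, n^{\alpha - 2}}{n^{\alpha\delta}} = \epsilon^2$, so both hypotheses of Theorem~\ref{thm:(k=00003D4)Hardness of 3-coloring random graphs with adversarial planted 3-coloring} are met. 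It then only remains to see that this $k$ is $n^{\Omega(1)}$, i.e.\ that both exponents are strictly positive: $\frac{1-\delta}{2} > 0$ precisely because $\delta < 1$, and $\frac{\alpha\delta - \alpha + 2}{4}$, which equals $\frac{15\delta - 7}{16}$ since $\alpha = \frac{15}{4}$, is positive precisely when $\delta > \frac{7}{15} \approx 0.4667$, hence for all $\delta \ge 0.467$. Writing $k = n^{\rho}$ (after harmless rounding) identifies the promised $\rho > 0$, and the corollary follows.

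There is no genuine obstacle here beyond bookkeeping of the exponents, but two points are worth flagging. First, the constant $0.467$ is forced entirely by the hypothesis $\frac{k^4 n^{\alpha-2}}{d^{\alpha}} \le \epsilon^2$ — the same constraint already present in the $3$-coloring case — which requires $\delta > (\alpha - 2)/\alpha = 7/15$; lowering the average degree of the NP-hard balanced class below $3.75$ (as in the remark following Theorem~\ref{thm:dense graph reduction}) would lower this threshold accordingly. Second, in contrast with Corollary~\ref{cor:Hardness of 3-coloring random graphs with adversarial planted 3-coloring}, the first hypothesis here is $\frac{k^2 d}{n} \le \epsilon$ rather than $\frac{k^2 d^2}{n} \le \epsilon$ — because we now only require every vertex of $S$ to have \emph{some} neighbor outside $S$, rather than that no two vertices of $S$ share an outside neighbor — so the only upper restriction it imposes is $\delta < 1$ instead of $\delta < \frac{1}{2}$. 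This is exactly the widened degree range of the corollary, and is ultimately what lets Theorem~\ref{thm:Generalization of R/Ab} reach all $\delta < 1$.
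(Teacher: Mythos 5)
Your proof is correct and matches the paper's own argument essentially verbatim: both invoke the preceding theorem with $\alpha = 3.75$ and the same choice $k = \min\left[\sqrt{\epsilon}\,n^{(1-\delta)/2},\ \sqrt{\epsilon}\,n^{(\alpha\delta-\alpha+2)/4}\right]$, verifying the two hypotheses and noting that $\delta > 7/15$ makes the second exponent positive. (You also correctly reference Theorem~\ref{thm:(k=00003D4)Hardness of 3-coloring random graphs with adversarial planted 3-coloring}, whereas the paper's proof text cites the $3$-coloring theorem by label, which appears to be a typo since the conditions quoted there are those of the $k=4$ variant.)
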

\begin{proof}
	In Theorem~\ref{thm:Hardness of 3-coloring random graphs with adversarial planted 3-coloring},
	choose $\epsilon<\frac{1}{8}$, $\alpha =3.75$, and choose $k$ such that $\frac{k^{2}d}{n}\le\epsilon$
	and $\frac{k^{4}n^{\alpha-2}}{d^{\alpha}}\le\epsilon^{2}$. Specifically,
	one may choose $k=\min\left[\sqrt{\epsilon}n^{\frac{1-\delta}{2}},\sqrt{\epsilon}n^{\frac{\alpha\delta-\alpha+2}{4}}\right]=n^{\Omega(1)}$.
\end{proof}

\begin{thm}
	\label{thm:Generalization of R/Ab for the k=4 case}
	(Generalization of Theorem~\ref{thm:R/A}b for the $k=4$ case).
	Let $G\sim G_{n,d}$ be a random graph, where $d=n^{\delta}$ for $0.467\le\delta<1$, and let
	$\gamma$ be an arbitrary small constant. If there exists a randomized
	algorithm that colors $G\sim G_{n,d}$ after an adversarial color-planting
	with probability $\gamma$ (over the distribution $G_{n,d}$ and the randomness
	of the algorithm) then $RP=NP$.\end{thm}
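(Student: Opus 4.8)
The plan is to prove Theorem~\ref{thm:Generalization of R/Ab for the k=4 case} by reducing from $3$-coloring of balanced graphs of average degree $3.75$, which is NP-hard by Lemma~\ref{lem:3.75} (equivalently Theorem~\ref{thm:dense graph reduction}), following the overall template of the proof of Theorem~\ref{thm:Hardnes adv planting on gnp full} but replacing the adversary's planting strategy. Suppose for contradiction that a randomized algorithm $A$ colors $G\sim G_{n,d}$ after an adversarial balanced $4$-color planting with probability $\gamma$, over both the choice of $G$ and the coins of $A$. Fix an arbitrary balanced graph $Q$ of average degree $3.75$ on $k=n^{\rho}$ vertices that admits a balanced $3$-coloring, where $\rho$ is chosen (as a function of $\delta$ and a small $\epsilon<\frac18$) as in Corollary~\ref{cor: (k=00003D4)Hardness of 3-coloring random graphs with adversarial planted 3-coloring}. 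We will use $A$ to decide $3$-colorability of $Q$; since $|V(Q)|$ and $n$ are polynomially related, an efficient such decision procedure would put an NP-hard problem in $RP$.

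The adversary's strategy is as follows. Given a host graph $H\sim G_{n,d}$, by Corollary~\ref{cor: (k=00003D4)Hardness of 3-coloring random graphs with adversarial planted 3-coloring} with probability at least $1-4\epsilon$ the graph $H$ contains an induced copy of $Q$, obeying a fixed partition of $V(H)$ into $k$ equal parts, such that every vertex of the copy has a neighbor outside the copy and the total degree of the copy's vertices is at most $(n-k)/4$; in particular the set $N$ of external neighbors of the copy satisfies $|N|\le(n-k)/4<n/4$. When this holds, the adversary leaves the copy of $Q$ untouched, colors every vertex of $N$ with color $4$, and extends arbitrarily to a balanced $4$-coloring of the remaining vertices; this extension exists with overwhelming probability by the same Chernoff/union-bound argument used in the proof of Theorem~\ref{thm:Hardnes adv planting on gnp full}, since $k=o(n/d)$ forces every vertex to have degree $(1\pm o(1))d$. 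When $H$ does not contain such a copy, the adversary simply plants a random balanced $4$-coloring. The point is that all of this requires only $\rho+\delta<1$ rather than $\delta<\tfrac12-\epsilon$: there is no need to avoid common external neighbors of pairs of $Q$-vertices. This is why the statement covers the whole range $0.467\le\delta<1$, and it is also why an analogue of Theorem~\ref{thm:R/A}a cannot hold for $k\ge4$.

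The key correctness claim is that in \emph{every} legal $4$-coloring of the resulting graph $G$ (not merely the one the adversary produced), the copy of $Q$ uses at most $3$ colors. Let $G'$ be $G$ with the copy of $Q$ deleted; then $G'$ is distributed like $G_{n-k,d}$ with a random balanced $4$-coloring planted, so by the uniqueness lemma from the overview (Lemma~\ref{lem:Uniqueness of random coloring}), with high probability the planted $4$-coloring is the unique legal $4$-coloring of $G'$ up to a permutation of the colors. Hence in any legal $4$-coloring of $G$, the restriction to $V(G')$ is a color permutation of the planted one, so one color class $c$ contains all of $N$. Since every vertex of the copy of $Q$ has a neighbor in $N$, and every $Q$-to-$N$ edge survives the planting (its endpoints received distinct planted colors), no vertex of $Q$ is colored $c$, so $Q$ is $3$-colored. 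Together with the easy directions --- a $3$-colorable $Q$ yields a legal balanced $4$-coloring of $G$, and a non-$3$-colorable $Q$ admits no legal $4$-coloring of $G$ --- the output of $A$ on $G$ determines $3$-colorability of $Q$.

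Finally, the adversary above is not polynomial time, and removing this assumption proceeds exactly as in the proof of Theorem~\ref{thm:Hardnes adv planting on gnp full}. We pass from $G_{n,d}$ to the planted distribution $G_{n,d,Q}$: by the probability-stretch identity (Claim~\ref{claim: Probability strech when planted}) together with the variance bound (Claim~\ref{claim:Variance of the number of induced  subgraphs}, which holds here through Theorem~\ref{thm:(k=00003D4)Hardness of 3-coloring random graphs with adversarial planted 3-coloring}), the measure of graphs from $G_{n,d,Q}$ on which $A$ colors with probability at least $\gamma/2$ over its coins, against every planting, is $\Omega(\gamma)$. Conditioned on a ``good'' event that fails with negligible probability, the adversary's output distribution coincides with that of the vertex-disjoint union of $Q$ and an independent $G_{n-k,d}$ with a random balanced $4$-coloring (after a random relabeling), which is polynomial-time samplable; the adversary's inability to identify the planted copy of $Q$ is handled by sampling a copy of $Q$ in $G$ according to its posterior probability. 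Running $A$ on polynomially many (in $1/\gamma$) independent such instances, with high probability $A$ colors at least one, revealing whether $Q$ is $3$-colorable, so $RP=NP$. I expect the main obstacle to be this last step: making the statistical-closeness argument (an extension of the Juels--Peinado technique) fully precise so that an exponential-time adversary whose behavior depends on the planted copy of $Q$ is faithfully simulated by a polynomial-time sampler, and simultaneously verifying that all the required ``good'' events --- small $N$, existence of a balanced $4$-coloring extension, and uniqueness of the planted $4$-coloring on $G'$ --- hold with overwhelming probability throughout the degree range $0.467\le\delta<1$.
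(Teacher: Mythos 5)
Your proposal follows essentially the same approach as the paper's proof: reduce from 3-coloring balanced graphs of average degree $3.75$, plant so that all external neighbors of the embedded copy of $Q$ receive the fourth color, invoke Lemma~\ref{lem:Uniqueness of random coloring} to conclude that every legal 4-coloring of $G$ restricts to a color permutation of the planted one on $G'$ and hence 3-colors $Q$, and lift the exponential-time adversary via the $G_{n,d,Q}$ probability-stretch argument of Theorem~\ref{thm:Hardnes adv planting on gnp full}. The one small imprecision worth noting is that you say the adversary ``extends arbitrarily'' to a balanced 4-coloring; for the statistical-closeness step to go through the extension must be uniformly at random conditioned on the fourth color class containing $N$, which is what the paper specifies and what you implicitly use when you describe the polynomial-time sampler.
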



\begin{proof}
	Suppose that there exists an algorithm $A$ as in the theorem. Consider $\epsilon=\frac{\gamma}{100}$ and set $k=\min\left[\sqrt{\epsilon}n^{\frac{1-\delta}{2}},\sqrt{\epsilon}n^{\frac{\alpha\delta-\alpha+2}{4}}\right]=n^{\Omega(1)}$
	(so that the conditions of Corollary~\ref{cor: (k=00003D4)Hardness of 3-coloring random graphs with adversarial planted 3-coloring}
	and Theorem~\ref{thm:(k=00003D4)Hardness of 3-coloring random graphs with adversarial planted 3-coloring}
	hold).  Let $H$ be an arbitrary balanced graph with average degree $3.75$
	with $k$ vertices that has a balanced $3$-coloring. We show
	how $A$ can be used in order to color $H$.
	Recall Definition~\ref{def:g_n,d,H} for the graphs distribution  $G_{n,d,H}$.
	By the proof of Theorem~\ref{thm:Hardnes adv planting on gnp full}
	it follows that the probability measure with respect to $G_{n,d,H}$
	of graphs that $A$ colors (after any adversarial planting, with probability
	at least $\frac{\gamma}{2}$) is at least $\frac{1}{2}\gamma\rho\left(1-\epsilon\right)\ge\frac{1}{11}\gamma$.

	Given $G\sim G_{n,d,H}$ we consider the following distribution $\tilde{G}_{H}$
	for graphs with an adversarial coloring. Let $S_1$ be the event that some vertex in
	the induced planted graph $H$ (in $G_{n,d,H}$) has no neighbors
	outside $H$. Let $S_2$ be the event that the total number of neighbors (among the remaining vertices in $G$) that the induced planted graph $H$ has is larger than $n/4$. If either event $S_1$ or $S_2$ happen (these are considered bad events), then plant an arbitrary balanced 4-coloring in $G$. If neither event $S_1$ nor $S_2$ happens (this will be shown to be the typical case), plant in $G$ a balanced 4-coloring chosen uniformly at random, conditioned on the following two events: the planted graph is colored by a balanced 3-coloring that agrees with the coloring of $H$, and all neighbors of $H$ (outside $H$) are colored by the fourth color.

 The event $S_{1}$ happens with probability at most $k^{2}e^{-\frac{dn}{4}}\le\gamma/44$. As to event $S_2$, since $k=o\left(\frac{n}{d}\right)$
	then, by the union bound and the Chernoff bound, no vertex in $G$ has more
	then $\left(1+c_{1}\right)d$ neighbors with probability at most $n2^{-\Omega\left(c_{1}d\right)}\le\gamma/44$.
	Consequently, event $S_{2}$ happens with probability at most $\gamma/44$ as well.
	
	One last event $S_{3}$ to consider is that the $4$-coloring of the
	graph induced on vertices not in $H$ is not unique. By Lemma~\ref{lem:Uniqueness of random coloring}
	below and since a random graph is a $O\left(\sqrt{d}\right)$-expander~\cite{feige2005spectral,furedi1981eigenvalues,friedman1989second},
	it follows that $S_{3}$ happens with probability at most $\frac{\gamma}{44}$
	as well. A graph from $\tilde{G}_{H}$ is called \emph{good} if the
	events $S_{1},S_{2},S_{3}$ do not hold.
	

	In total, the probability measure of good graphs in $\tilde{G}_{H}$
	that $A$  colors with probability
	of at least $\frac{\gamma}{2}$ for every color planting (over the randomness of $A$) is at least
	$\gamma/50$. Moreover, the
	distribution $\tilde{G}_{H}$ of good graphs can be sampled in a polynomial
	time as follows:
	\begin{enumerate}
		\item Take  the vertex disjoint
		 union
		of the graph $H$ and a random graph from $G'\sim G_{n-k,p}$ (with $p = d/n$).
		 \item Add edges between $H$ and $G'$ (each edge is added independently with
		probability $p$).
		\item 4-color $G'$ randomly conditioned on three color classes each containing exactly $\frac{n}{4} - \frac{k}{3}$ vertices and the remaining color class (of size $\frac{n}{4}$) containing all neighbors of $H$. Remove all monochromatic edges from $G'$.
		
	\end{enumerate}
	
	By the above, the probability that  this procedure fails to construct a good graph  is bounded by some small constant. Every 4-coloring of a good graph gives 3-coloring of $H$ (because the 4-coloring of $G'$ is unique, and in this 4-coloring every vertex of $H$ is adjacent to some vertex of $G'$ of the fourth color class).
	Running algorithm $A$ on $\Omega\left(\frac{50}{\gamma^{2}}\right)$
	random instances from $\tilde{G}_{H}$ (which we can sample
	efficiently) then with high probability $A$ will 4-color at least one of these instances.  By Theorem~\ref{thm:dense graph reduction}
	the proof follows.\end{proof}
\begin{lem}
	\label{lem:Uniqueness of random coloring} Let  $k\in\mathbb{N}^{+}$
	be a constant. Let $G$ be a $d$-regular $\lambda$-expander
	graph with $d=\Omega(\log (n))$ and $\lambda\le c \frac{1}{k^{2.5}} d$, where $c$ is a sufficiently small constant. Let $G'$ be the graph $G$ with a random balanced $k$-color planting.
	Then with high probability $G'$ has only one legal $k$-coloring (namely, the planted $k$-coloring).\end{lem}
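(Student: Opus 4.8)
The plan is to show that no $k$-coloring other than the planted one can survive, by arguing that any alternative legal coloring $\chi'$ of $G'$ must disagree with the planted coloring $P$ on many vertices, and then ruling out this possibility via the expander mixing lemma together with a union bound over the randomness of the planting. First I would dispense with the easy regime: the partial-coloring machinery of Section~\ref{sec:Coloring expander graphs with random color planting} (Theorem~\ref{thm:Main-algorithmicly} and the analysis of $SB$) already shows that when $\lambda \le cd/k^{2.5}$ and $d = \Omega(\log n)$, with high probability \emph{every} vertex has, in $G'$, close to $\frac{d}{k}$ neighbors in each planted color class other than its own (the set $SB$ of statistically bad vertices is empty w.h.p. when $d = \Omega(\log n)$, by the union bound in Lemma~\ref{lem:SB bound the random planting Case}). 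Condition on this good event for the rest of the argument.

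The core step is a rigidity argument. Suppose $\chi'$ is a legal $k$-coloring of $G'$ that differs from $P$. Let $D = \{v : \chi'(v) \neq P(v)\}$ be the set of disagreeing vertices; we may assume $D \neq \emptyset$. The key claim is that every $v \in D$ has $\Omega(d/k)$ neighbors (in $G'$) that also lie in $D$. To see this: since $v \in SB^{c}$, $v$ has at least $(\frac1k - \epsilon)d$ neighbors in $G'$ of planted color $P(v)$; all such neighbors $u$ satisfy $P(u) = P(v) \neq \chi'(v)$, but in $\chi'$ the color $\chi'(v)$ is forbidden to these neighbors only if... — more carefully, one argues that if $v \in D$ then among its $\ge (\frac{k-1}{k} - O(k\epsilon))d$ neighbors in $G'$, the planted coloring uses $\chi'(v)$ on roughly $\frac{d}{k}$ of them; for $\chi'$ to be legal, each such neighbor $u$ must have $\chi'(u) \neq \chi'(v) = $ its planted color, hence $u \in D$. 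This gives every vertex of $D$ at least $(\frac1k - O(\epsilon))d$ neighbors inside $D$, so the subgraph $G'_D$ has minimum degree $\Omega(d/k)$. Applying the expander mixing lemma to $D$ in the host graph $G$ (using $E_{G'}(D,D) \le E_G(D,D)$): $\frac1k \cdot \Omega(d)|D| \le E_G(D,D) \le \frac{d}{n}|D|^2 + \lambda|D|$, which forces $|D| \ge \Omega\!\big(\frac{(d/k) - \lambda}{d}\big)n = \Omega(n/k)$ since $\lambda \ll d/k$.

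It remains to rule out the existence of a legal coloring $\chi'$ with $|D| \ge \delta_0 n/k$ for the appropriate constant, and here is where the randomness of the planting enters. I would fix any partition $(A_1, \dots, A_k)$ of $V$ into sets with $|A_i| = \Omega(n/k)$ meant to play the role of $D$ restricted to each planted color class, and bound the probability over the random planting that there exists a legal $k$-coloring of $G'$ consistent with disagreement pattern encoded by these sets — the point being that forcing $\Omega(n/k)$ vertices out of their planted classes requires $\Omega(n)$ planted-monochromatic edges inside $G$ restricted to these sets to be deleted, i.e.\ requires $G$ to have $\Omega(n)$ edges within the pieces, but by expander mixing the number of such edges concentrates and is incompatible with this for an expander when $\lambda$ is small; the number of choices of the disagreement pattern is $k^n$, so we need the per-pattern failure probability to be $2^{-\omega(n)}$, which holds because each of the $\Omega(n)$ constrained vertices independently picks its planted color with probability $\frac1k$, contributing a factor $(\frac{k-1}{k})^{\Omega(n)}$ against a $k^n$ union bound when the disagreement set is a sufficiently large constant fraction of $n$. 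The main obstacle is getting the two constants to line up: the constant fraction $\delta_0$ forced by the mixing-lemma step must be large enough that $(\frac{k-1}{k})^{\delta_0 n}$ beats $k^n$, which is exactly where the hypothesis $\lambda \le c\,d/k^{2.5}$ with $c$ small is needed — pushing $\lambda$ down makes $\delta_0$ closer to $1$ and also tightens the concentration of edge counts, and a careful but routine choice of $c$ (depending only on absolute constants) closes the gap. I would also double check the boundary case where $D$ is small but nonempty is genuinely excluded by the minimum-degree/expansion dichotomy above (small nonempty $D$ with min-degree $\Omega(d/k)$ in an expander is impossible), so that the only surviving case is the large-$D$ case handled probabilistically.
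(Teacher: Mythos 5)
Your first step — showing that any legal $\chi'\neq P$ must disagree with $P$ on a set $D$ of size at least $\frac{d/k-\lambda}{d}n$, by observing that (once $SB$ is empty) the subgraph of $G'$ on $D$ has minimum degree about $d/k$ and then invoking the expander mixing lemma — is exactly the paper's first step, including the use of Lemma~\ref{lem:SB bound the random planting Case} to dispose of $SB$ when $d=\Omega(\log n)$. Up to this point you are aligned with the paper.

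The second step is where you diverge, and where there is a genuine gap. You propose to rule out colorings far from $P$ by a union bound over disagreement patterns against the randomness of the planting, with a per-pattern failure probability of the form $\bigl(\frac{k-1}{k}\bigr)^{\Omega(n)}$. That factor cannot beat the roughly $k^n$ candidate patterns: one would need $\bigl(\frac{k-1}{k}\bigr)^{\delta_0 n}\,k^n<1$, i.e.\ $\delta_0>\frac{\ln k}{\ln k-\ln(k-1)}$, and the right-hand side exceeds $k-1\ge 1$ for every $k\ge 2$, while $\delta_0\le 1$ by definition. Decreasing $\lambda$ pushes $\delta_0$ toward $1$ but never past it, so "a careful choice of $c$" cannot close this gap. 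To salvage a union-bound route one would have to get a per-pattern bound whose exponent scales with $nd$, not $n$ (e.g.\ counting edges of $G$ that must be deleted rather than vertices that must change color), but that is a different argument from the one you sketch, and the correlations of a balanced planting would still need to be handled. The paper avoids this entirely: once $SB=\emptyset$ the rest of the proof (Claim~\ref{claim:upper bound on distance from planted coloring}) is deterministic. It applies expander mixing to sets $col_{2,j}\cap col_{1,j_1}$ and $col_{2,j}\cap col_{1,j_2}$ — which must be nonadjacent in $G$ since $col_{2,j}$ is independent in $G'$ and edges between different planted classes are never deleted — to show every color class of $\chi_2$ overlaps some planted class in $\bigl(\frac{1}{k}-O(k^2(\lambda/d)^2)\bigr)n$ vertices, hence $d(\chi_1,\chi_2)\le\epsilon n$ with $\epsilon=O(k^3(\lambda/d)^2)$. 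Taking $\lambda\le c\,d/k^{2.5}$ makes this upper bound fall strictly below the lower bound $\approx n/k$ from step one, yielding the contradiction without any further appeal to randomness.
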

\begin{proof}
	
	Assume, towards a contradiction, that there exist two different $k$-colorings $\chi_{1}$ (the
	planted coloring) and $\chi_{2}$ of $G'$. Define their distance $d\left(\chi_{1},\chi_{2}\right)$
	to be the Hamming distance between two strings that represents
	the colors of $G'$'s vertices with respect to $\chi_{1},\chi_{2}$
	(choose a naming of the colors that minimizes the distance).
	Throughout  the proof we treat $\chi_{i}$
	as a string with the naming of the colors that meets the above definition
	of distance.

	Let $\delta$ be a constant which is much smaller than $\frac{1}{k}$, and let $col_{\chi_{i}}(v)$ denote the color of vertex $v$ in the coloring $\chi_{i}$.
    Let $SB\subseteq V$ denote the following vertex set. $v\in SB$ if there is some color class other than $col_{\chi_{1}}(v)$ such that $v$ has either more
    than
    $\left(\frac{1}{k}+\delta \right)d$
     neighbors or less than
    $\left(\frac{1}{k}-\delta\right)d$
     neighbors in that color class.
    By a similar proof as of Lemma~\ref{lem:SB bound the random planting Case},
	it follows that with high
	probability there are no vertices in $SB$. We assume throughout the proof that indeed $SB$ is empty, and that $\delta$ is negligible compared to $\frac{1}{k}$.

    We first show that  $d(\chi _1,\chi _2)\ge\frac{d/k-\alpha d}{d} n$.
    Let $H$ be the sub-graph of $G$ induced on the vertices
	$\left\{ v\in V\left(G\right)\,|\,\chi_{1}\left(v\right)\neq\chi_{_{2}}\left(v\right)\right\} $.
	Since that $\chi_{2}$ is legal then for $v\in H$ it holds that any neighbor $u$ of $v$ such that  $col{\chi_{1}}(u)=col{\chi_{2}}(v)$ is in $H$ as well. By our assumption on $SB$, it follows that $H$ has a minimal degree of roughly $d/k$. Recall that $E_{G}\left(H,H\right)$ is twice the number of edges in the induced subgraph $H$. By
	the expander mixing lemma it holds that
	\begin{align*}
		\frac{d}{k}\left|H\right| & \le\left|E_{G}\left(H,H\right)\right| \\
								  & \le\frac{d}{n}\left|H\right|^{2}+\lambda\left|H\right|\,.
	\end{align*}
	Hence $\left|H\right|\ge\frac{d/k-\lambda}{d} n$.
	
	Now we show the following claim
	\begin{claim}
		\label{claim:upper bound on distance from planted coloring}
		$d(\chi _1,\chi _2) \le \epsilon n$, for
		$$\epsilon=k\left((k-1)\epsilon_1+(k-1)\left(\frac{\lambda}{d}\right)^2\frac{1}{\frac{1}{k} (\frac{1}{k}-(k-1)\epsilon_1)}\right)\,,$$
		where  $\epsilon_1=k(k-1)\left(\frac{\lambda}{d}\right)^2$.	
	\end{claim}
	 We prove Claim~\ref{claim:upper bound on distance from planted coloring} below. Note that if $\lambda \le c \frac{1}{k^{2.5}}d$, for sufficiently small constant $c$, then $\epsilon n < \frac{d/k-\lambda}{d} n$. Hence, there is no other  legal coloring for $G'$ other than the planted coloring.
\end{proof}
In remains to prove Claim~\ref{claim:upper bound on distance from planted coloring}.
	 \begin{proof}
	(of Claim~\ref{claim:upper bound on distance from planted coloring}). Let $col_{i,j}$ be  vertices with color $j$ in coloring $\chi _{i}$.
	We claim that $\left|col_{2,j}\right|\le (\frac{1}{k}+\epsilon_1)n$ for all $j\in \left\{1,2,..,k\right\}$.
	Given that $\left|col_{2,j}\right|\ge \frac{1}{k}n$. We claim that for some
	$j_{1}\neq j_{2}\in\left\{1,2,..,k\right\}$ it holds that
	\begin{equation}
	\label{eq:x1timesx2}
	\left|col_{2,j}\cap col_{1,j_{1}}\right|\left|col_{2,j}\cap col_{1,j_{2}}\right|\ge \frac{1}{k}n\frac{|col_{2,j}|-\frac{1}{k}n}{k-1}\,.
	\end{equation}
	To see this, denote by $x_i$ a possible size for  $col_{2,j}\cap col_{1,j_{1}}$. Clearly $\sum_{i=1}^{k} x_k =\left|col_{2,j}\right|$.
For notational convenience we may assume that $x_1$ and $x_2$ are largest among the $\{x_i\}$.
Fixing the sum $x_1 + x_2$, the product $x_1 x_2$ is minimized when their values are unbalanced as possible.
In the worst case, $x_1=\frac{n}{k}$. Now to get minimal $x_1 x_2$, we assume $\left\{x_i\right\}_{i\neq 1}$ are all equal and the lower bound is established.

	Since $\chi_2$ is legal, by Equation~\ref{eq:x1timesx2} and  the expander mixing lemma, it holds that
		$$ \frac{d}{n}\frac{1}{k}n\frac{|col_{2,j}|-\frac{1}{k}n}{k-1} \le\lambda \sqrt{\frac{1}{k}n\frac{|col_{2,j}|-\frac{1}{k}n}{k-1}}\,.$$
	Thus $\left|col_{2,j}\right|\le (\frac{1}{k}+\epsilon_1)n$ and $col_{2,j}\ge (\frac{1}{k}-(k-1)\epsilon_1)n$.
	
	By the above, for every $j$ there exists $j_1$ such that  $\left|col_{1,j_1}\cap col_{2,j}\right|\ge \frac{1}{k} (\frac{1}{k}-(k-1)\epsilon_1)n$.
	Since $\chi _2$ is legal and by the expander mixing lemma, for every $j_2\neq j_1$ it holds that
	$$ \frac{d}{n} |col_{1,j_2}\cap col_{2,j}| \left|col_{1,j_1}\cap col_{2,j}\right| \le \lambda\sqrt{|col_{1,j_2}\cap col_{2,j}| \left|col_{1,j_1}\cap col_{2,j}\right|}\,.$$
		Therefore $|col_{1,j_2}\cap col_{2,j}|\le \left(\frac{\lambda}{d}\right)^2\frac{1}{\frac{1}{k} (\frac{1}{k}-(k-1)\epsilon_1)}n$. Since this holds for any $j_2\neq j_1$ then
	\begin{align*}
	\left|col_{1,j_1}\cap col_{2,j}\right| &\ge \left(\frac{1}{k}-(k-1)\epsilon_1\right)n -\sum_{j_{2} \neq j_{1}}|col_{1,j_2}\cap col_{2,j}| \\
	&\ge    \left(\frac{1}{k}-(k-1)\epsilon_1-(k-1)\left(\frac{\lambda}{d}\right)^2\frac{1}{\frac{1}{k} (\frac{1}{k}-(k-1)\epsilon_1)}\right)n \\
	&=\left(\frac{1}{k}-\frac{1}{k}\epsilon\right)n\,.
	\end{align*}
	In other words, it holds that for every color $j\in \left\{1,2,..,k\right\}$ in $\chi _1$ there exists a matching color $j_1$ in $\chi _2$ such that both colors agree on  at least $\left(\frac{1}{k}-\frac{1}{k}\epsilon\right)n$ vertices (it is easy to see that this is a matching).
	Hence $d(\chi _1 , \chi _2)\le \left(\frac{k}{k}\epsilon\right)n=\epsilon n$,  as desired.
	\end{proof}

\section{NP-hard subgraphs are not always an obstacle to efficient coloring}
\label{sec:counterintuitive}

Consider the following proposition.

\begin{prop}
Let $Q(V_Q,E_Q)$ be an arbitrary 3-colorable 4-regular graph on $k = \frac{n^{1-\epsilon}}{4\log n}$ vertices.
There exists a spectral expander graph $H$ (with $n$ vertices) such that if a random 3-coloring $P$ is planted in $H$ and monochromatic edges are dropped, then with high probability the resulting graph $G$ contains a vertex induced copy of $Q$.
\end{prop}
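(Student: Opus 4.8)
The plan is to build $H$ explicitly as the blow-up of a carefully chosen \emph{scaffold} graph. Write $m:=n/k=4n^{\epsilon}\log n$ and fix a proper $3$-colouring $\chi_{Q}:V(Q)\to\{1,2,3\}$; by replacing $Q$ with three disjoint copies and cyclically permuting colours (which only rescales $k$ by a constant and preserves all hypotheses) I may assume $\chi_{Q}$ puts exactly $k/3$ vertices in each colour class $c_{1},c_{2},c_{3}$. Let $D$ be a large absolute constant (chosen at the end) and let $\hat Q:=Q\cup R_{1}\cup R_{2}$ on $V(Q)$, where $R_{1}$ is a $D$-regular expander placed inside each $c_{i}$, and $R_{2}$ is a biregular ``$3$-partite'' expander giving every vertex $D-2$ neighbours in each of the two other classes. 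One checks, using that $Q$ is $4$-regular with a balanced proper colouring so that the number of $Q$-edges between any two classes equals $2k/3$, that $\hat Q$ is a $3D$-regular $O(\sqrt D)$-expander: the only eigenvectors of $A_{\hat Q}$ with large Rayleigh quotient come from the eigenvalue-$(\pm 4)$ eigenvectors of $A_{Q}$ (and contribute at most $4+O(\sqrt D)$), while the ``colour-cut'' direction $\mathrm{span}\{\mathbf 1_{c_{1}},\mathbf 1_{c_{2}},\mathbf 1_{c_{3}}\}\ominus\mathbf 1$ is neutralised by the balance built into $R_{2}$.

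Now let $H$ be obtained from $\hat Q$ by replacing each vertex $v$ by an independent set $I_{v}$ of size $m$ and each scaffold edge $(u,v)$ by a pseudorandom regular bipartite graph between $I_{u}$ and $I_{v}$, of degree $d_{Q}:=\lceil m^{3/4}\log n\rceil$ if $(u,v)\in E(Q)$, of degree $d_{R_{2}}:=\lceil m^{0.9}/D\rceil$ if $(u,v)\in E(R_{2})$, and of the degree $d_{R_{1}}:=\tfrac{2d_{Q}}{D}+\tfrac{(D-2)d_{R_{2}}}{D}$ on $E(R_{1})$ that makes $H$ regular. Then $H$ is $d$-regular with $d=4d_{Q}+Dd_{R_{1}}+(2D-4)d_{R_{2}}=\Theta(m^{0.9})$, so $C<d<n$ for large $n$. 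For the spectral bound I would split $\mathbb R^{n}$ into the (invariant) ``blob-constant'' subspace $W_{0}$, on which $A_{H}$ acts as $d_{Q}A_{Q}+d_{R_{1}}A_{R_{1}}+d_{R_{2}}A_{R_{2}}$, and its orthogonal complement $W_{1}$, on which a blockwise application of the bipartite mixing bound gives $\|A_{H}|_{W_{1}}\|=O(\sqrt{dD})$. On $W_{0}$ the colour-cut direction has Rayleigh quotient exactly $0$ by the choice of $d_{R_{1}}$ (here one uses $\psi_{0}^{\top}A_{Q}\psi_{0}=-2$ for the balanced colouring), the $\pm 4$-eigenvectors of $A_{Q}$ contribute only $\pm 4d_{Q}=o(d)$, and the $R_{i}$-parts contribute $O(d_{R_{i}}\sqrt D)$; altogether $\lambda(H)=O(d/\sqrt D)$. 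Taking $D$ large enough makes $\lambda(H)\le cd$ for the constant $c$ of Theorem~\ref{thm:A/R}, so $H$ is a spectral expander (and consequently $G$ is even $3$-colourable in polynomial time, which is the point of the section).

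It remains, for one fixed $H$ with good expansion (a property holding with high probability over the pseudorandom pieces, so one such $H$ exists), to show that a random $3$-colouring $P$ leaves an induced copy of $Q$ in $G=P(H)$ with high probability. I would look for a copy supported on one vertex $v^{\ast}\in I_{v}$ per blob, chosen from $J_{v}:=\{w\in I_{v}:col_{P}(w)=\chi_{Q}(v)\}$ and realising every $Q$-edge inside the $Q$-bipartite part $B$ of $H$. If such $S=\{v^{\ast}\}$ exists then $G[S]\cong Q$: a $Q$-edge survives since its endpoints get the distinct colours $\chi_{Q}(u),\chi_{Q}(v)$; an $R_{1}$-edge is monochromatic within $S$ hence dropped; a non-scaffold pair has no $H$-edge at all; so the only edges surviving inside $S$ are the $Q$-edges, provided we also avoid the $R_{2}$-edges, which are the genuine obstacle. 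Produce $S$ greedily along a BFS order of $\hat Q$, revealing the colouring blob by blob: when $v_{t}$ is processed, $J_{v_{t}}$ is a fresh $\mathrm{Bin}(m,1/3)$-random subset of $I_{v_{t}}$, the candidate set $A_{t}=\bigcap_{u}N_{B}(u^{\ast})\cap I_{v_{t}}$ (intersection over the $\le 4$ already-placed $Q$-neighbours) has size $\ge\tfrac12 d_{Q}^{4}/m^{3}=\Omega(\log^{4}n)$ for a pseudorandom $B$, and the forbidden set $F_{t}=\bigcup_{w}N_{H}(w^{\ast})\cap I_{v_{t}}$ (union over the $\le 2D$ already-placed $R_{2}$-neighbours) has size $\le 2Dd_{R_{2}}=O(m^{0.9})=o(m)$.

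The step I expect to be the crux is that, although $|F_{t}|$ may dwarf $|A_{t}|$, for a pseudorandom $H$ the sets $A_{t}$ and $F_{t}$ are ``independent'', so $|A_{t}\setminus F_{t}|\ge(1-o(1))|A_{t}|=\Omega(\log^{4}n)$ \emph{uniformly over all choices} of the $u^{\ast}$'s and $w^{\ast}$'s; this is a statement about $H$ alone, proved by a Chernoff bound on $|A_{t}\cap F_{t}|$ plus a union bound over the $\le m^{O(D)}$ relevant tuples, which costs only $e^{O(\log n)}$ and is beaten by $e^{-\Omega(\log^{4}n)}$. Having fixed such an $H$, $|(A_{t}\setminus F_{t})\cap J_{v_{t}}|\sim\mathrm{Bin}(\Omega(\log^{4}n),1/3)$ is nonempty with probability $1-e^{-\Omega(\log^{4}n)}$; a union bound over the $k\le n$ blobs completes the argument. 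More broadly, the main obstacle throughout is the tension between the two demands: $Q$ itself can be a very poor expander (disconnected, or bipartite with $\lambda(A_{Q})=4$) and its edge structure must sit densely inside $H$ so that the random colouring can carve out a copy, yet $\lambda(H)\le cd$ is required. The resolution is (i) to embed $Q$ into the larger scaffold $\hat Q$, splitting the auxiliary edges into a within-class part (killed by the colouring for free) and a between-class part whose blow-up is kept to total degree $o(m)$ per blob, and (ii) to tune the blow-up degrees so that the inevitable $\pm 4d_{Q}$ spectrum inherited from $Q$, and the colour-cut eigenvalue, become a vanishing fraction of $d$. Verifying that $\hat Q$ and then $H$ really are $O(\sqrt D)$- and $O(d/\sqrt D)$-expanders, rather than merely hoping the random pieces cooperate, is where most of the work lies.
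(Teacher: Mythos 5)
Your proposal, while considerably heavier, appears to be workable; but it takes a genuinely different route from the paper's, and the difference is instructive. The paper does not blow up a scaffold at all. Instead it fixes an arbitrary $d$-regular spectral expander $H'$ on $n$ vertices with $d=n^{\epsilon}$, picks an independent set $S\subset V(H')$ of size $3k\log n$, partitions $S$ into $k$ parts $S_1,\dots,S_k$ of size $3\log n$, and adds \emph{all} $9\log^2 n$ edges between $S_i$ and $S_j$ whenever $(i,j)\in E_Q$. The spectral claim is then a one-line application of the Weyl perturbation inequality (Inequality~\ref{eq:Sums of Eigenvalues} in the paper): each vertex gains at most $4\cdot 3\log n=12\log n$ edges, which is $o(d)$, so $\lambda(H)\le\lambda(H')+O(\log n)$. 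The embedding claim is an immediate union bound: with high probability every $S_i$ contains a ``faithful'' vertex $v$ with $P(v)=\chi(i)$, since the failure probability per part is $(2/3)^{3\log n}=n^{-\Omega(1)}$ and there are $k=o(n)$ parts; the faithful vertices then induce exactly $Q$, because every $Q$-pair is joined by a complete bipartite block while every non-$Q$-pair of parts has no edge ($S$ was independent in $H'$). Your version needs the auxiliary expanders $R_1,R_2$, a tuned degree schedule for the blow-up, a bespoke $W_0/W_1$ spectral analysis with cancellation on the colour-cut subspace, and the BFS-greedy embedding with the $A_t$, $F_t$ bookkeeping backed by a Chernoff-plus-union-bound pseudorandomness argument over $m^{O(D)}$ tuples --- all of which exists only because you chose the $Q$-edge blow-up degree to be $o(m)$ rather than complete bipartite. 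The observation you missed is that $Q$ need not be realised as a macroscopic scaffold of $H$: it suffices to graft a spectrally negligible gadget onto a pre-existing expander, and then one can afford complete bipartite blocks, so any faithful vertex per part works and the embedding bookkeeping disappears entirely. (The one thing your construction buys for free is exact regularity of $H$, whereas the paper's $H$ is only nearly regular and must be patched, a technicality the paper acknowledges.)
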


\begin{proof}
(Sketch.) For simplicity, the host graph $H$ in our reduction will not be regular, but rather only nearly regular. However, it is not difficult to modify the reduction so that $H$ is regular.


 Based on $Q$, the host graph $H$ is constructed in two steps:

\begin{enumerate}

\item Select an arbitrary $d$-regular spectral expander $H'(V,E)$ on $n$ vertices, with $d = n^{\epsilon}$.

\item Let $S \subset V$ be an arbitrary independent set in $H'$ on $3k\log n$ vertices. Partition $S$ into $k$ equal size parts $S_1, \ldots, S_k$. For each $1 \le i < j \le k$, if $(i,j)\in E_Q$ then add to $E$ all $9\log^2 n$ edges between $S_i$ and $S_j$.

\end{enumerate}

$H$ constructed above is a spectral expander. This follows from Inequality~\ref{eq:Sums of Eigenvalues} in Section~\ref{sec:On the hardness of coloring expander graphs with adversarial planting}, which implies that no eigenvalue of $H'$ is shifted by more than $12\log n$ (the maximum number of edges added to a vertex in $H'$ in order to construct $H$).

%
%

Let $\chi$ be a legal 3-coloring of $Q$. Then with high probability over the choice of planted coloring in $H$, for every $1 \le i \le k$ there is some {\em faithful} vertex $v \in S_i$ such that $P(v) =  \chi(i)$.  Edges in $H$ connecting faithful vertices are not dropped in $G$. Hence taking one faithful vertex from each part gives a copy of $Q$.

\end{proof}

The above proposition shows that any 3-colorable 4-regular graph $Q$ can be embedded as a vertex induced subgraph in a graph $G$ generated by the $H_A/P_R$ model. In general, such graphs $Q$ are NP-hard to 3-color. Nevertheless our Theorem~\ref{thm:A/R} shows that the graph $G$ can be 3-colored in polynomial time. As $Q$ is a subgraph of $G$, this also produces a legal 3-coloring of $Q$, thus solving an NP-hard problem. However, this of course does not imply that P=NP. Rather, it implies that it is NP-hard to find in $G$ a subgraph that is isomorphic to $Q$.


\section{\label{sec:Useful-Facts}Useful facts}
In this section we state several well known theorems which we use throughout this manuscript.

The following theorem is due to~\cite{chernoff1952measure}.
\begin{thm}
\label{thm:=00005BChernoff=00005D}{[}Chernoff{]}. Let $X_{1},...,X_{n}$
be independent random variables.. Assume that $0\le X_{i}\le1$ always,
for each $i$. Let $X=\sum_{i=1}^{n}X_{i}$ and $\mu=\mathbb{E}\left[X\right]$.
For any $\epsilon\ge0$
\[
\Pr\left[X\ge\left(1+\epsilon\right)\mu\right]\le\exp\left(-\frac{\epsilon^{2}}{2+\epsilon}\mu\right)\,,
\]
 and
\[
\Pr\left[X\le\left(1-\epsilon\right)\mu\right]\le\exp\left(-\frac{\epsilon^{2}}{2}\mu\right)\,.
\]

\end{thm}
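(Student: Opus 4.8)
The plan is to prove the bound by the exponential moment method: apply Markov's inequality to $e^{tX}$ for a parameter $t$ to be optimized, factor the moment generating function using independence, and then tune $t$.

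For the upper tail, fix $t>0$. Markov's inequality gives
\[
\Pr\left[X\ge(1+\epsilon)\mu\right]=\Pr\left[e^{tX}\ge e^{t(1+\epsilon)\mu}\right]\le e^{-t(1+\epsilon)\mu}\,\mathbb{E}\left[e^{tX}\right].
\]
By independence, $\mathbb{E}\left[e^{tX}\right]=\prod_{i=1}^{n}\mathbb{E}\left[e^{tX_{i}}\right]$. Since $0\le X_{i}\le1$ and $z\mapsto e^{tz}$ is convex, the pointwise bound $e^{tX_{i}}\le 1+(e^{t}-1)X_{i}$ holds, so $\mathbb{E}\left[e^{tX_{i}}\right]\le 1+(e^{t}-1)\mathbb{E}[X_{i}]\le\exp\left((e^{t}-1)\mathbb{E}[X_{i}]\right)$, using $1+z\le e^{z}$. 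Taking the product over $i$ gives $\mathbb{E}\left[e^{tX}\right]\le\exp\left((e^{t}-1)\mu\right)$, hence
\[
\Pr\left[X\ge(1+\epsilon)\mu\right]\le\exp\left(\left(e^{t}-1-t(1+\epsilon)\right)\mu\right).
\]
The exponent is minimized at $t=\ln(1+\epsilon)>0$, which yields the classical form $\Pr\left[X\ge(1+\epsilon)\mu\right]\le\left(e^{\epsilon}/(1+\epsilon)^{1+\epsilon}\right)^{\mu}$.

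To reach the cleaner stated inequality it then suffices to prove the one-variable estimate $\epsilon-(1+\epsilon)\ln(1+\epsilon)\le-\epsilon^{2}/(2+\epsilon)$ for all $\epsilon\ge0$. I would do this by setting $g(\epsilon)=(1+\epsilon)\ln(1+\epsilon)-\epsilon-\epsilon^{2}/(2+\epsilon)$, observing $g(0)=0$, computing $g'(\epsilon)=\ln(1+\epsilon)-\epsilon(4+\epsilon)/(2+\epsilon)^{2}$, and checking that $g'(0)=0$ while $g''(\epsilon)=1/(1+\epsilon)-8/(2+\epsilon)^{3}\ge0$ (the last inequality reducing to $(2+\epsilon)^{3}\ge8(1+\epsilon)$, i.e.\ $\epsilon(4+6\epsilon+\epsilon^{2})\ge0$); thus $g'\ge0$ and $g\ge0$. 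The lower tail is symmetric: for $0\le\epsilon<1$ apply Markov to $e^{-sX}$ with $s=-\ln(1-\epsilon)>0$; the same three steps produce $\Pr\left[X\le(1-\epsilon)\mu\right]\le\left(e^{-\epsilon}/(1-\epsilon)^{1-\epsilon}\right)^{\mu}$, and the elementary bound $-\epsilon-(1-\epsilon)\ln(1-\epsilon)\le-\epsilon^{2}/2$ finishes it — valid since $\psi(\epsilon):=-\epsilon-(1-\epsilon)\ln(1-\epsilon)+\epsilon^{2}/2$ has $\psi(0)=0$ and $\psi'(\epsilon)=\ln(1-\epsilon)+\epsilon\le0$. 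For $\epsilon\ge1$ and $\mu>0$ the event $X\le(1-\epsilon)\mu$ is empty because $X\ge0$, so the bound is trivial, and the case $\mu=0$ is trivial throughout.

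The probabilistic content is entirely in the two displayed lines of the exponential moment method together with the independence factorization, all of which is routine; the only slightly delicate part is the pair of one-variable logarithmic inequalities used to convert $\left(e^{\epsilon}/(1+\epsilon)^{1+\epsilon}\right)^{\mu}$ and $\left(e^{-\epsilon}/(1-\epsilon)^{1-\epsilon}\right)^{\mu}$ into the stated closed forms, and that is where I expect the main (still elementary) work to lie.
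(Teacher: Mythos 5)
The paper does not actually prove this statement: it is listed under ``Useful facts'' (Section~\ref{sec:Useful-Facts}) with a bare citation to Chernoff's 1952 paper, so there is no in-paper proof to compare your argument against. That said, your proposal is the standard and correct exponential-moment-method proof. The upper-tail derivation (Markov on $e^{tX}$, independence factorization, the convexity bound $e^{tX_i}\le 1+(e^t-1)X_i$ for $X_i\in[0,1]$, then $1+z\le e^z$, then optimizing at $t=\ln(1+\epsilon)$) is exactly right, and your two one-variable calculus lemmas converting $\left(e^{\epsilon}/(1+\epsilon)^{1+\epsilon}\right)^{\mu}$ and $\left(e^{-\epsilon}/(1-\epsilon)^{1-\epsilon}\right)^{\mu}$ to the stated closed forms check out: $g''(\epsilon)=\frac{1}{1+\epsilon}-\frac{8}{(2+\epsilon)^3}\ge 0$ reduces to $\epsilon(4+6\epsilon+\epsilon^2)\ge 0$, and $\psi'(\epsilon)=\ln(1-\epsilon)+\epsilon\le 0$ follows from concavity of $\ln$.

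One small slip: you claim that for $\epsilon\ge 1$ and $\mu>0$ the event $X\le(1-\epsilon)\mu$ is empty. That is true only for $\epsilon>1$; at $\epsilon=1$ the event is $X\le 0$, i.e.\ $X=0$, which can have positive probability. The fix is immediate: let $s\to\infty$ in the Markov step (or take $\epsilon\to 1^-$ in your closed form) to get $\Pr[X=0]\le e^{-\mu}\le e^{-\mu/2}$, which is what the stated inequality requires at $\epsilon=1$. With that boundary case patched, the proof is complete.
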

%
The following theorem can be found in~\cite{mcdiarmid1989method}.
\begin{thm}
\label{thm:=00005BMcDiarmid=00005D}{[}McDiarmid{]}. Let $X_{1},...,X_{m}\in\mathcal{X}^{m}$
be a set of $m\ge1$ independent random variables and assume that
there exist $c>0$ such that $f:\mathbb{R}^{n}\rightarrow\mathbb{R}$
satisfies the following conditions:
\[
\left|f\left(x_{1},...,x_{i},...,x_{m}\right)-f\left(x_{1},...,x'_{i},...,x_{m}\right)\right|\le c\,,
\]
for all $i\in\left[1,m\right]$ and any points $x_{1},...,x_{m},x'_{i}\in\mathcal{X}$
. Let $f(S)$ denote the random variable $f(X_{1},...,X_{m})$, then,
for all $\epsilon>0$, the following inequalities hold:
\[
\Pr\left[\left|f(S)-\mathbb{E}\left[f(S)\right]\right|\ge\epsilon\right]\le2\exp\left(-\frac{2\epsilon^{2}}{mc^{2}}\right)
\]

\end{thm}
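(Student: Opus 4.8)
The plan is to derive the bounded differences inequality from the Azuma--Hoeffding martingale inequality applied to the Doob martingale of $f$. First I would fix $f$ and the independent variables $X_{1},\dots,X_{m}$ and define, for $0\le i\le m$, the random variables $Y_{i}:=\mathbb{E}\left[f(X_{1},\dots,X_{m})\mid X_{1},\dots,X_{i}\right]$. Then $Y_{0}=\mathbb{E}[f(S)]$ is a constant, $Y_{m}=f(S)$, and by the tower property $(Y_{i})$ is a martingale with respect to the filtration $\mathcal{F}_{i}=\sigma(X_{1},\dots,X_{i})$. Writing $D_{i}:=Y_{i}-Y_{i-1}$ for its increments, we have $f(S)-\mathbb{E}[f(S)]=\sum_{i=1}^{m}D_{i}$ and $\mathbb{E}[D_{i}\mid\mathcal{F}_{i-1}]=0$.

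The key step is to show that, conditioned on $\mathcal{F}_{i-1}$, each $D_{i}$ takes values in an interval of length at most $c$. Here independence of the coordinates is used: the conditional expectations can be written explicitly as averages, $Y_{i}=g_{i}(X_{1},\dots,X_{i})$ with $g_{i}(x_{1},\dots,x_{i}):=\mathbb{E}[f(x_{1},\dots,x_{i},X_{i+1},\dots,X_{m})]$, and $Y_{i-1}=\mathbb{E}_{X_{i}}\left[g_{i}(X_{1},\dots,X_{i-1},X_{i})\right]$. Hence, given $\mathcal{F}_{i-1}$, the increment $D_{i}$ lies between $\inf_{x}g_{i}(X_{1},\dots,X_{i-1},x)-Y_{i-1}$ and $\sup_{x}g_{i}(X_{1},\dots,X_{i-1},x)-Y_{i-1}$, and the bounded differences hypothesis --- changing only the $i$-th coordinate changes $f$, and therefore its average $g_{i}$, by at most $c$ --- forces the length of this interval to be at most $c$.

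With this in hand I would run the standard exponential moment argument. By Hoeffding's lemma, any random variable $Z$ with $\mathbb{E}[Z\mid\mathcal{F}_{i-1}]=0$ supported on an interval of length at most $c$ satisfies $\mathbb{E}\left[e^{tZ}\mid\mathcal{F}_{i-1}\right]\le e^{t^{2}c^{2}/8}$ for every $t$. Applying this to $Z=D_{i}$ and peeling increments off one at a time gives
\[
\mathbb{E}\left[e^{t\left(f(S)-\mathbb{E}[f(S)]\right)}\right]=\mathbb{E}\left[e^{t\sum_{i=1}^{m-1}D_{i}}\,\mathbb{E}\left[e^{tD_{m}}\mid\mathcal{F}_{m-1}\right]\right]\le e^{t^{2}c^{2}/8}\,\mathbb{E}\left[e^{t\sum_{i=1}^{m-1}D_{i}}\right]\le\cdots\le e^{mt^{2}c^{2}/8}.
\]
Markov's inequality then yields $\Pr\left[f(S)-\mathbb{E}[f(S)]\ge\epsilon\right]\le e^{-t\epsilon+mt^{2}c^{2}/8}$, and the choice $t=4\epsilon/(mc^{2})$ makes the exponent $-2\epsilon^{2}/(mc^{2})$. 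Applying the same reasoning to $-f$ and taking a union bound over the two tails produces the stated inequality with its factor of $2$.

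The main obstacle is the middle step: establishing that the conditional support of each martingale increment has width bounded by the single constant $c$. This is where the independence of the $X_{i}$ and the coordinate-wise Lipschitz assumption must be combined carefully --- one must integrate out exactly the ``future'' coordinates $X_{i+1},\dots,X_{m}$ in $Y_{i}$ and additionally $X_{i}$ in $Y_{i-1}$, and then invoke the hypothesis for the single coordinate $i$. Once that is done, Hoeffding's lemma, the telescoping of conditional exponential moments, and the optimization over $t$ are all routine.
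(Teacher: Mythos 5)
The paper does not actually prove this statement: McDiarmid's inequality appears in the ``Useful facts'' appendix merely with a citation to McDiarmid's original survey, so there is no in-paper proof to compare against. Your argument is the standard and correct derivation via the Doob martingale and Azuma--Hoeffding. The construction of $Y_i = \mathbb{E}[f \mid X_1,\dots,X_i]$, the observation that independence lets you write $Y_i = g_i(X_1,\dots,X_i)$ with $g_i$ an average of $f$ over the future coordinates so that $D_i = Y_i - Y_{i-1}$ has conditional support of width at most $c$, Hoeffding's lemma on each increment, telescoping the conditional MGF bound, and the optimization $t = 4\epsilon/(mc^2)$ all check out, including the arithmetic giving exponent $-2\epsilon^2/(mc^2)$ and the factor of $2$ from the union bound over the two tails. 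This is exactly McDiarmid's own route, so it matches the cited source rather than diverging from it.
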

The following theorem is well known,  see for example~\cite{horn2012matrix} (p.185).
\begin{thm}
\label{thm:=00005BCauchy-interlacing-theorem=00005D.}{[}Cauchy interlacing
theorem{]}. Let $A$ be a symmetric $n\times n$ matrix. Let $B$
be a $m\times m$ sub-matrix of $A$ , where $m\le n$.

If the eigenvalues of $A$ are $\alpha_{1}\le...\le\alpha_{n}$ and
those of $B$ are $\beta_{1}\le...\le\beta_{j}\le...\le\beta_{m}$,
then for all $j\le m$,
\[
\alpha_{j}\leq\beta_{j}\leq\alpha_{n-m+j}.
\]

\end{thm}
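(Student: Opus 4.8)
The plan is to deduce both inequalities from the Courant--Fischer (min--max) characterization of the eigenvalues of a symmetric matrix, using only the fact that $B$ is the principal submatrix of $A$ indexed by some set $J \subseteq \{1,\dots,n\}$ with $|J| = m$. First I would fix notation: let $J = \{j_1 < \dots < j_m\}$, let $W \subseteq \mathbb{R}^n$ be the coordinate subspace spanned by $\{e_{j_1},\dots,e_{j_m}\}$, and let $\iota \colon \mathbb{R}^m \to W$ be the isometry sending the $r$-th coordinate of $y$ to the $j_r$-th coordinate. The only algebraic input needed is the identity $y^\top B y = (\iota y)^\top A (\iota y)$ together with $\|y\|_2 = \|\iota y\|_2$, valid for all $y \in \mathbb{R}^m$; this is immediate because $B$ is obtained from $A$ by deleting the rows and columns outside $J$. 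In particular, for any subspace $V \subseteq \mathbb{R}^m$, the image $\iota(V)$ has the same dimension as $V$ and the Rayleigh quotient of $A$ on $\iota(V)$ takes exactly the same set of values as the Rayleigh quotient of $B$ on $V$.

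For the lower bound $\alpha_j \le \beta_j$ I would use the form $\alpha_j = \min_{\dim U = j}\,\max_{0 \ne x \in U} (x^\top A x)/(x^\top x)$ (minimum over $j$-dimensional subspaces of $\mathbb{R}^n$) and the analogous statement for $\beta_j$ over $j$-dimensional subspaces of $\mathbb{R}^m$. Picking an optimal $V^\star$ of dimension $j$ for $\beta_j$, the subspace $\iota(V^\star) \subseteq \mathbb{R}^n$ has dimension $j$ and maximal Rayleigh quotient equal to $\beta_j$, so the minimum defining $\alpha_j$ is at most $\beta_j$. For the upper bound $\beta_j \le \alpha_{n-m+j}$ I would use the dual (max--min) form: $\beta_j = \max_{\dim V = m-j+1}\,\min_{0 \ne y \in V}(y^\top B y)/(y^\top y)$ and $\alpha_{n-m+j} = \max_{\dim U = m-j+1}\,\min_{0\ne x \in U}(x^\top A x)/(x^\top x)$ --- the codimension count $m-j+1$ is exactly what pairs $\beta_j$ with $\alpha_{n-m+j}$. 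Taking $U = \iota(V^\star)$ for the optimal $(m-j+1)$-dimensional $V^\star$ shows $\alpha_{n-m+j} \ge \beta_j$. (Equivalently, the second inequality follows from the first applied to $-A$ and $-B$, using that the $k$-th smallest eigenvalue of $-A$ is $-\alpha_{n-k+1}$ and similarly for $-B$.)

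Since the proof is essentially bookkeeping, the only thing requiring care is getting the index shift in the max--min form of Courant--Fischer correct, i.e.\ verifying that the $k$-th eigenvalue (in increasing order) of an $N \times N$ symmetric matrix equals the maximum over $(N-k+1)$-dimensional subspaces of the minimal Rayleigh quotient, and then substituting $N = n$, $k = n-m+j$ on one side and $N = m$, $k = j$ on the other. Checking the extreme cases ($j = 1$ gives $\alpha_1 \le \beta_1$, and $j = m$ gives $\beta_m \le \alpha_n$) is a useful sanity check. I would also note explicitly that $\iota$ being injective is what guarantees $\dim \iota(V) = \dim V$, so no dimensions are lost in the reduction between the two quadratic forms.
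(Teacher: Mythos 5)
The paper does not prove this theorem---it appears in the ``Useful facts'' appendix and is simply cited from Horn and Johnson's \emph{Matrix Analysis}, so there is no in-paper argument to compare against. Your proof is correct and is the standard textbook derivation of Cauchy interlacing from the Courant--Fischer min--max and max--min characterizations. The two places where such proofs usually go wrong are both handled cleanly: the isometry $\iota$ is injective (so $\dim\iota(V)=\dim V$ and no dimension is lost when lifting candidate subspaces from $\mathbb{R}^m$ to $\mathbb{R}^n$), and the codimension bookkeeping $m-j+1 = n-(n-m+j)+1$ in the max--min form is exactly right, which is what pairs $\beta_j$ with $\alpha_{n-m+j}$. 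One small point worth stating explicitly (you implicitly assume it): the paper's phrasing ``$m\times m$ sub-matrix'' must be read as \emph{principal} submatrix---same row and column indices deleted---for the identity $y^{\top}By=(\iota y)^{\top}A(\iota y)$ to hold; interlacing fails for a general submatrix. The alternative route you mention for the upper bound, applying the lower bound to $-A$ and $-B$ and reindexing, is also valid and slightly cleaner since it avoids invoking the max--min form at all.
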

%
Recall that for $S,T\subseteq V\left(G\right)$, $E_{G}\left(S,T\right)$ denotes
the number of edges between $S$ and $T$ in $G$. If $S,T$ are not
disjoint then the edges in the induced sub-graph of $S\cap T$ are
counted twice.
The following lemma can be found in~\cite{alon1988explicit}.
\begin{lem}
\label{lem:=00005BMixing Lemma=00005D}{[}Expander Mixing Lemma{]}.
Let $G=(V,E)$ be a $d$-regular $\lambda$-expander graph with $n$
vertices, see Definition~\ref{def:expander}. Then for any two, not
necessarily disjoint, subsets $S,T\subseteq V$ the following holds:

\[
\left|E_{G}(S,T)-\frac{d\cdot|S|\cdot|T|}{n}\right|\leq\lambda\sqrt{|S|\cdot|T|}\,.
\]

\end{lem}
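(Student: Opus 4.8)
The plan is to give the standard spectral proof (the lemma is quoted from~\cite{alon1988explicit}, but the argument is short enough to sketch in full). Let $A$ be the adjacency matrix of $G$, and for $U\subseteq V$ let $\mathbf{1}_{U}\in\mathbb{R}^{n}$ denote its indicator vector. With the double-counting convention stated just before the lemma, $E_{G}(S,T)=\mathbf{1}_{S}^{t}A\,\mathbf{1}_{T}$, so it suffices to estimate this bilinear form. First I would split each indicator along the top eigenvector of $A$: since $G$ is $d$-regular, $\vec{1}_{n}$ is an eigenvector of $A$ with eigenvalue $d=\lambda_{1}(G)$, so I write $\mathbf{1}_{S}=\frac{|S|}{n}\vec{1}_{n}+f_{S}$ and $\mathbf{1}_{T}=\frac{|T|}{n}\vec{1}_{n}+f_{T}$ with $f_{S},f_{T}\perp\vec{1}_{n}$.

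Next I would expand the bilinear form using this decomposition. Because $A$ is symmetric and $\vec{1}_{n}$ is an eigenvector, $A$ maps both $\text{span}(\vec{1}_{n})$ and $\vec{1}_{n}^{\perp}$ into themselves, so the two cross terms vanish and
\[
\mathbf{1}_{S}^{t}A\,\mathbf{1}_{T}=\frac{|S|\,|T|}{n^{2}}\,\vec{1}_{n}^{t}A\,\vec{1}_{n}+f_{S}^{t}Af_{T}=\frac{d\,|S|\,|T|}{n}+f_{S}^{t}Af_{T}.
\]
It remains to bound $|f_{S}^{t}Af_{T}|$. Since $f_{S},f_{T}\in\vec{1}_{n}^{\perp}$ and the restriction of $A$ to $\vec{1}_{n}^{\perp}$ has operator norm $\max(\lambda_{2}(G),|\lambda_{n}(G)|)\le\lambda$ by Definition~\ref{def:expander}, the Cauchy--Schwarz inequality gives $|f_{S}^{t}Af_{T}|\le\lambda\,\|f_{S}\|_{2}\,\|f_{T}\|_{2}$. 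Finally $\|f_{S}\|_{2}^{2}=\|\mathbf{1}_{S}\|_{2}^{2}-\frac{|S|^{2}}{n}=|S|\bigl(1-\tfrac{|S|}{n}\bigr)\le|S|$, and likewise $\|f_{T}\|_{2}^{2}\le|T|$, so $|f_{S}^{t}Af_{T}|\le\lambda\sqrt{|S|\,|T|}$, which is precisely the claimed inequality.

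I do not expect a genuine obstacle here: this is a textbook computation. The only points needing a little care are (i) checking that the double-counting convention makes $E_{G}(S,T)=\mathbf{1}_{S}^{t}A\,\mathbf{1}_{T}$ hold verbatim when $S\cap T\neq\emptyset$, and (ii) the operator-norm step, where one uses that $d=\lambda_{1}$ is the unique top eigenvalue (with eigenvector $\vec{1}_{n}$) precisely because $G$ is regular, so that on $\vec{1}_{n}^{\perp}$ every eigenvalue is at most $\lambda$ in absolute value. Regularity is the only structural hypothesis being exploited (for both the eigenvector fact and the norm bound); no expansion beyond the stated eigenvalue bound, and no randomness, enters.
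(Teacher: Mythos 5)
Your proof is correct; it is the standard spectral argument for the expander mixing lemma. The paper itself does not reprove the lemma but simply cites it to Alon and Chung~\cite{alon1988explicit}, so there is nothing to compare against beyond noting that your argument is the usual one and is complete: the identity $E_{G}(S,T)=\mathbf{1}_{S}^{t}A\mathbf{1}_{T}$ does match the paper's double-counting convention for $S\cap T\neq\emptyset$, the cross terms vanish exactly as you say, and the norm computation $\|f_{S}\|_{2}^{2}=|S|(1-|S|/n)$ is right. One small remark on your closing point (ii): you do not actually need $d=\lambda_{1}$ to be a \emph{unique} top eigenvalue (which would require connectedness, not just regularity); all that the argument uses is that $A$ is symmetric with $\bar{1}_{n}$ an eigenvector, so one may choose an orthonormal eigenbasis containing $\bar{1}_{n}$, making $\vec{1}_{n}^{\perp}$ an $A$-invariant subspace spanned by eigenvectors with eigenvalues $\lambda_{2},\ldots,\lambda_{n}$, whose absolute values are bounded by $\lambda$ by Definition~\ref{def:expander}.
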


The following lemma is well known, a proof can be found for example in~\cite{vadhan2012pseudorandomness} (Chapter 4).
\begin{lem}
\label{lem:=00005BSpectral to vertex expansion.=00005D}{[}Spectral
to Vertex Expansion{]}. If $G$ is a $d$-regular $\lambda$-expander
graph, then, for every $\alpha\in\left[0,1\right]$, $G$ is an $\left(\alpha n,\frac{1}{\left(1-\alpha\right)\left(\frac{\lambda}{d}\right)^{2}+\alpha}\right)$
vertex expander, see Definition~\ref{def:=00005BVertex-expander=00005D.}.
\end{lem}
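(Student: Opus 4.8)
This is the standard spectral argument for vertex expansion, which I would run in its ``second moment'' form in order to recover precisely the constant $\frac{1}{(1-\alpha)(\lambda/d)^{2}+\alpha}$. Write $\gamma:=\lambda/d$, and note that $\gamma\le 1$ since $\lambda=\max(\lambda_{2},|\lambda_{n}|)\le\lambda_{1}=d$ for the adjacency matrix of a $d$-regular graph. Hence $(1-x)\gamma^{2}+x=\gamma^{2}+x(1-\gamma^{2})$ is nondecreasing in $x\in[0,1]$, so it suffices to establish the sharper per-set inequality: for every nonempty $S\subseteq V(G)$, setting $\sigma:=|S|/n$, one has $|N(S)|\ge\frac{|S|}{(1-\sigma)\gamma^{2}+\sigma}$. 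Indeed, for any $S$ with $|S|\le\alpha n$ we have $\sigma\le\alpha$ and therefore $\frac{|S|}{(1-\sigma)\gamma^{2}+\sigma}\ge\frac{|S|}{(1-\alpha)\gamma^{2}+\alpha}$, which is exactly the asserted bound; the case $S=\emptyset$ is vacuous.

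To prove the per-set inequality, fix $S$ with $|S|=\sigma n$ and let $M:=\frac{1}{d}A_{G}$ be the normalized adjacency matrix. Since $G$ is $d$-regular, $M$ is symmetric with top eigenvalue $1$ (eigenvector $\frac{1}{\sqrt{n}}\vec{1}$), and every eigenvalue of $M$ orthogonal to $\vec{1}$ has absolute value at most $\gamma$. Decompose the indicator vector $\vec{1}_{S}=\sigma\vec{1}+f$ with $f\perp\vec{1}$, so that $\|f\|_{2}^{2}=\|\vec{1}_{S}\|_{2}^{2}-\sigma^{2}n=|S|-\sigma^{2}n=\sigma(1-\sigma)n$. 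Because $M$ fixes $\vec{1}$ and maps $\vec{1}^{\perp}$ into itself with operator norm at most $\gamma$ there, $M\vec{1}_{S}=\sigma\vec{1}+Mf$ with $\|Mf\|_{2}\le\gamma\|f\|_{2}$, and therefore
\[
\|M\vec{1}_{S}\|_{2}^{2}=\sigma^{2}n+\|Mf\|_{2}^{2}\le\sigma^{2}n+\gamma^{2}\sigma(1-\sigma)n=\sigma n\bigl(\sigma+(1-\sigma)\gamma^{2}\bigr).
\]

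Finally I would link this norm bound to $N(S)$. The coordinate $(M\vec{1}_{S})_{v}=\frac{1}{d}|N(v)\cap S|$ is nonzero exactly when $v$ has a neighbor in $S$, so $M\vec{1}_{S}$ is supported on $N(S)$; moreover $\|M\vec{1}_{S}\|_{1}=\frac{1}{d}\sum_{u\in S}\deg(u)=|S|=\sigma n$. Cauchy--Schwarz over the support gives $(\sigma n)^{2}=\|M\vec{1}_{S}\|_{1}^{2}\le|N(S)|\cdot\|M\vec{1}_{S}\|_{2}^{2}\le|N(S)|\cdot\sigma n(\sigma+(1-\sigma)\gamma^{2})$, and rearranging yields $|N(S)|\ge\frac{\sigma n}{\sigma+(1-\sigma)\gamma^{2}}=\frac{|S|}{(1-\sigma)(\lambda/d)^{2}+\sigma}$, as needed. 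There is no genuine obstacle in this proof; the only points worth a sentence of justification are that $M$ restricted to $\vec{1}^{\perp}$ has operator norm at most $\gamma$ (immediate from $d$-regularity and the definition of $\lambda$) and the monotonicity reduction at the start. As a remark, one could instead apply the expander mixing lemma (Lemma~\ref{lem:=00005BMixing Lemma=00005D}) to the edge-free pair $S$, $V\setminus N(S)$ and obtain $|N(S)|\ge n-\frac{\gamma^{2}n}{\sigma}$, i.e.\ expansion factor $\frac{\sigma-\gamma^{2}}{\sigma^{2}}$; this is slightly weaker than the stated constant, which is why the second moment computation above is the one to use.
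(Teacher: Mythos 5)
Your proof is correct, and it is the standard second–moment (Tanner-type) argument that the paper itself defers to by citing Vadhan's notes rather than giving a proof; the decomposition $\vec{1}_S=\sigma\vec{1}+f$, the $\ell_2$-bound on $M\vec{1}_S$, and the Cauchy--Schwarz step over the support are exactly the ingredients of that standard proof. You also correctly handle the monotonicity reduction and the fact that, under the paper's convention, $N(S)$ is precisely the set of vertices with a neighbor in $S$ (possibly overlapping $S$), which is what makes the support identification $\operatorname{supp}(M\vec{1}_S)=N(S)$ exact.
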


\end{document}